\newtheorem{theorem}{Theorem}
\newtheorem{example}{Example}
\newtheorem{definition}{Definition}
\newtheorem{remark}{Remark}
\newtheorem{lemma}{Lemma}
\newtheorem{proposition}{Proposition}
\begin{document}
	\title{Self-orthogonal codes from plateaued functions and their applications in quantum codes and LCD codes$^{\dag}$}
	\author{Yadi Wei, Jiaxin Wang, Fang-Wei Fu
		\IEEEcompsocitemizethanks{\IEEEcompsocthanksitem Yadi Wei and Fang-Wei Fu  are with Chern Institute of Mathematics and LPMC, Nankai University, Tianjin 300071, China, Emails:  wydecho@mail.nankai.edu.cn,  fwfu@nankai.edu.cn.
			Jiaxin Wang is with School of Mathematics, Hefei University of Technology, Hefei 230601, China, Email: wjiaxin@hfut.edu.cn.}
		
		\thanks{$^\dag$This research is supported by the National Key Research and Development Program of China (Grant No. 2022YFA1005000), the National Natural Science Foundation of China (Grant Nos. 12141108, 62371259, 12411540221), the Fundamental Research Funds for the Central Universities of China (Nankai University), the Nankai Zhide Foundation.  }
	}

	\maketitle

	\begin{abstract}
		Self-orthogonal codes have received great attention due to their important applications in quantum codes, LCD codes and lattices. Recently, several families of self-orthogonal codes containing the all-$1$ vector were constructed by augmentation technique. In this paper, utilizing plateaued functions, we construct some classes of linear codes which do not contain the all-$1$ vector. We also investigate their punctured codes. The weight distributions of the constructed codes are explicitly determined. Under certain conditions, these codes are proved to be self-orthogonal. Furthermore, some classes of optimal linear codes are obtained from their duals. Using the self-orthogonal punctured codes, we also construct several new families of at least almost optimal quantum codes and optimal LCD codes. 
	\end{abstract}

	\begin{IEEEkeywords}
		Linear codes; Self-orthogonal codes; Quantum codes; LCD codes; Plateaued functions
	\end{IEEEkeywords}
	
	\section{Introduction}\label{Section1}
	Throughout this paper, let $p$ be an odd prime and  $n$ be a positive integer. Let $\mathbb{F}_{p}^{n}$ be the vector space of the $n$-tuples over $\mathbb{F}_{p}$, $\mathbb{F}_{p^n}$ be the finite field with $p^n$ elements, $V_{n}^{(p)}$ be an $n$-dimensional vector space over $\mathbb{F}_{p}$, and  $\langle \cdot \rangle_{n}$ denote a (non-degenerate) inner product of $V_{n}^{(p)}$. In this paper, when $V_{n}^{(p)}=\mathbb{F}_{p}^{n}$, let $\langle a, b\rangle_{n}=a \cdot b=\sum_{i=1}^{n}a_{i}b_{i}$, where $a=(a_{1}, \dots, a_{n}), b=(b_{1}, \dots, b_{n})\in \mathbb{F}_{p}^{n}$; when $V_{n}^{(p)}=\mathbb{F}_{p^n}$, let $\langle a, b\rangle_{n}=Tr_{1}^{n}(ab)$, where $a, b \in \mathbb{F}_{p^n}$, $Tr_{1}^{n}(\cdot)$ denotes the trace function from $\mathbb{F}_{p^n}$ to $\mathbb{F}_{p}$; when $V_{n}^{(p)}=V_{n_{1}}^{(p)}\times \dots \times V_{n_{t}}^{(p)} (n=\sum_{i=1}^{t}n_{i})$, let $\langle a, b\rangle_{n}=\sum_{i=1}^{t}\langle a_{i}, b_{i}\rangle_{n_{i}}$, where $a=(a_{1}, \dots, a_{t}), b=(b_{1}, \dots, b_{t})\in V_{n}^{(p)}$.
	
	Linear codes have a wide range applications in secret sharing schemes \cite{Anderson,Carlet1,Ding1,Yuan}, authentication codes \cite{Ding2}, association schemes \cite{Calderbank1}, and strongly regular graphs \cite{Calderbank2}. A linear code is called a self-orthogonal code if it is contained in its dual. Self-orthogonal codes have many applications including quantum codes \cite{Ling}, LCD codes \cite{Massey1} and  lattices \cite{Wan}. A linear code $\mathcal{C}$ is called a linear complementary dual code (LCD code for short) if it intersects its dual code
	trivially, i.e., $\mathcal{C}\cap\mathcal{C}^{\bot}=\{0\}$. LCD codes have garnered extensive attention due to their applications
	in communications and cryptography. In \cite{Carlet}, Carlet \emph{et al.} proved that any linear code over $\mathbb{F}_q$ is
	equivalent to a Euclidean LCD code for $q\ge 4$, where $q$ is a prime power. This also motivates researchers to study Euclidean LCD codes over $\mathbb{F}_2$ and $\mathbb{F}_3$. Quantum error-correcting codes (quantum codes for short) are designed to protect quantum information from decoherence. The construction of quantum codes with new and good parameters is an interesting topic. In \cite{Yu}, all the binary quantum codes of distance 3 were given. However, nonbinary quantum codes of distance 3 have not been totally constructed.
	
	There are a number of methods to construct linear codes, one of which is based on functions. Two generic constructions of linear codes from functions, called the first and the second generic constructions, have been distinguished from others and widely used by researchers. The first generic construction is obtained by
	\begin{equation}
		\mathcal{C}_{F}=\{c(a,b)=(Tr_1^m(aF(x))+\langle b,x\rangle_n)_{x\in {V_n^{(p)}}\setminus\{0\}}: a\in\mathbb{F}_{p^m},b\in V_n^{(p)}\},
	\end{equation}
	where $m$ is a positive integer and $F(x)$ is a function from $V_n^{(p)}$ to $\mathbb{F}_{p^m}$. The second generic construction, which is also called defining-set construction, is obtained by
	\begin{equation}
		\mathcal{C}_D=\{c(a)=(\langle a,x_1\rangle_n, \langle a, x_2\rangle_n, \cdots, \langle a,x_m\rangle_n): a\in V_n^{(p)}\},
	\end{equation}
	where $m$ is a positive integer and $D=\{x_1,x_2,\cdots,x_m\}\subseteq V_n^{(p)}$ is called the defining set of $\mathcal{C}_{D}$.
	
	Due to the properties of cryptographic functions, it's a good way to construct linear codes from cryptographic functions. By the first and the second generic constructions, several linear codes have been constructed from cryptographic functions such as bent functions \cite{Mesnager1,Ozbudak,Pelen,Tang1,Wang1,Zhou}, perfect nonlinear (PN) functions \cite{Carlet1,Li,Yuan1} and plateaued functions \cite{Cheng,Mesnager2,Mesnager3,Sinak,Yang}. In \cite{Mesnager1,Mesnager2}, based on the first generic construction, Mesnager \emph{et al.} constructed three-weight linear codes from weakly regular bent and plateaued functions. A function $f$ from $V_n^{(p)}$ to $\mathbb{F}_p$ is called an $l$-form if $f(cx)=c^lf(x)$ for all $c\in\mathbb{F}_p^*, x\in V_n^{(p)}$, where $1\le l\le p-1$ is an integer. In \cite{Mesnager3,Tang1},  based on the second generic construction, some linear codes with two or three weights were constructed from weakly regular bent and plateaued functions of $l$-form. However, the self-orthogonality of the linear codes in \cite{Mesnager1,Mesnager2,Mesnager3,Tang1} were not investigated. 
	Recently, in \cite{Heng1,Heng2}, using augmentation technique,  Heng \emph{et al.} constructed some self-orthogonal codes containing the all-$1$ vector from weakly regular bent functions.
	In \cite{Heng3}, they proved that for a linear code $\mathcal{C}$ over $\mathbb{F}_{p^n}$, which contains the all-$1$ vector, if all its codewords
	have weights divisible by $p$, then $\mathcal{C}$ is self-orthogonal.
	Later, in \cite{Cakmak,Mesnager4,Wang2,Wang3,Xie}, by augmentation technique, some self-orthogonal linear codes containing the all-$1$ vector were also constructed from vectorial dual-bent functions and weakly regular plateaued functions. Therefore, it is an interesting problem to study the self-orthogonality of linear codes obtained by the first and the second generic constructions and to construct some new self-orthogonal codes from these codes by other techniques.
	
	In this paper, based on the first and the second generic construction, using weakly regular and non-weakly regular plateaued functions, we first construct some linear codes and study the punctured codes of them. It is worth to mention that when using the second generic construction to construct linear codes, the non-weakly regular plateaued functions we used do not have to be $l$-form. The parameters and weight distributions of the constructed codes are determined. Under certain conditions, these codes are proved to be self-orthogonal. Meanwhile, we also determine the parameters of their dual codes and obtain some classes of optimal linear codes from the dual codes. Finally, we construct some new families of at least almost optimal  quantum codes and optimal LCD codes from the self-orthogonal punctured codes. The optimal linear codes we obtained are listed below:\\
	$(1)$ A family of $p$-ary $[(p-1)(p^{n-1}-\epsilon_0p^{\frac{n+s}{2}-1}), (p-1)(p^{n-1}-\epsilon_0p^{\frac{n+s}{2}-1})-n-1,3]$ linear codes, where $\epsilon_0=\pm 1$, $0\le s\le n-2$, $n+s\ge 4$, and $n-s\ge 4$ if $p=3$ and $\epsilon_0=1$.\\
	$(2)$ A family of $p$-ary $[(p-1)p^{n-1}, (p-1)p^{n-1}-n-1,3]$ linear codes, where $n\ge 2$.\\
	$(3)$ A family of $p$-ary $[\frac{p^{n-1}+(p-1)p^{\frac{n+s}{2}-1}-1}{p-1}, \frac{p^{n-1}+(p-1)p^{\frac{n+s}{2}-1}-1}{p-1}-n, 3]$ linear codes, where $0\le s \le n-4$. \\
	$(4)$ A family of $p$-ary  $[\frac{p^{n-1}-\epsilon_0p^{\frac{n+s}{2}-1}}{2},\frac{p^{n-1}-\epsilon_0p^{\frac{n+s}{2}-1}}{2}-n,3]$ linear codes, where $\epsilon_0=\pm 1$, $0\le s \le n-4$, and $p>3$ if $\epsilon_0=1$.\\
	$(5)$ A family of $p$-ary $[\frac{p^{n-1}+\epsilon_0p^{\frac{n+s-1}{2}}}{2},\frac{p^{n-1}+\epsilon_0p^{\frac{n+s-1}{2}}}{2}-n,3]$ linear codes, where $\epsilon_0=\pm 1$, $0\le s\le n-3$, and $p>3$ if $\epsilon_0=-1$.\\
	$(6)$ A family of $p$-ary $[(p-1)(p^{n-1}-\epsilon_0p^{\frac{n+s}{2}-1})+n+1,(p-1)(p^{n-1}-\epsilon_0p^{\frac{n+s}{2}-1}),3]$ LCD codes, where $\epsilon_0=\pm 1$, $0\le s\le n-2$, $n+s\ge 6$ for $p=3$, $n+s\ge 4$ for $p\ne 3$ , and $n-s\ge 4$ if $p=3$ and $\epsilon_0=1$.\\
	\noindent$(7)$ A family of $p$-ary $[(p-1)p^{n-1}+n+1, (p-1)p^{n-1},3]$ LCD codes, where $n\ge 3$ for $p=3$ and $n\ge 2$ for $p\ne 3$.\\
	$(8)$ A family of ternary $[\frac{3^{n-1}+2\times 3^{\frac{n+s}{2}-1}-1}{2}+n, \frac{3^{n-1}+2\times 3^{\frac{n+s}{2}-1}-1}{2},3]$ LCD codes, where $0\le s\le n-4$, $n+s\ge 6$.\\
	$(9)$ A family of $p$-ary $[\frac{p^{n-1}-\epsilon_0p^{\frac{n+s}{2}-1}}{2}+n, \frac{p^{n-1}-\epsilon_0p^{\frac{n+s}{2}-1}}{2},3]$ LCD codes, where $\epsilon_0=\pm 1$, $0\le s\le n-4$, $n+s\ge 6$ for $p=3$, and $p>3$ if $\epsilon_0=1$.\\
	$(10)$ A family of $p$-ary $[\frac{p^{n-1}+\epsilon_0p^{\frac{n+s-1}{2}}}{2}+n, \frac{p^{n-1}+\epsilon_0p^{\frac{n+s-1}{2}}}{2},3]$ LCD codes, where $\epsilon_0=\pm 1$, $0\le s\le n-3$, $n+s\ge 5$ for $p=3$, and $p>3$ if $\epsilon_0=-1$.
	
	The rest of the paper is organized as follows. In Section 2, we introduce the needed preliminaries. In Section 3, we present some auxiliary results. In Section 4, we construct several  families of self-orthogonal codes from weakly regular and non-weakly regular plateaued functions. In Section 5, we construct some new quantum codes and LCD codes. In Section 6, we make a conclusion.
	\section{Preliminaries}\label{Section2}
	In this section, we introduce some preliminaries which will be used for the subsequent sections. We begin this section by setting some basic notation. \\
	$\bullet$ $\#D$: The cardinality of a set $D$.\\
	$\bullet$ $\mathbb{F}_p^*$: The multiplicative group of the finite field $\mathbb{F}_p$.\\
		$\bullet$ $SQ$ and $NSQ$: The set of squares and nonsquares in $\mathbb{F}_p^{*}$, respectively.\\
		$\bullet$ $\eta$: The quadratic character of $\mathbb{F}_p^{*}$. For convenience, set $\eta(0)=0$. \\
		$\bullet$ $p^*$: $p^*=\eta(-1)p=(-1)^{\frac{(p-1)}{2}}p$.\\
		$\bullet$ $\xi_p=e^{\frac{2\pi \sqrt{-1}}{p}} $: The $p$-th complex primitive root of unity.\\
		$\bullet$ $\delta_0$: The indicator function, i.e., $\delta_0(0)=1$ and $\delta_0(a)=0$ for any $a\in V_n^{(p)}\setminus\{0\}$.\\
		$\bullet$ $e_i$: The $i$-th canonical basis vector in $\mathbb{F}_p^n$.
	\subsection{Cyclotomic field\ $\mathbb{Q}(\xi_p)$}
	The cyclotomic field $\mathbb{Q}(\xi_p)$ is obtained from the rational field $\mathbb{Q}$ by adjoining $\xi_p$. The ring of algebraic integers in $\mathbb{Q}(\xi_p)$ is $\mathcal{O}_{ {\mathbb{Q}_{(\xi_p)}} }=\mathbb{Z}[\xi_p]$ and  the set $\{\xi_p^j:1\le j\le p-1\}$ is an integral basis of it. The field extension $\mathbb{Q}(\xi_p)/\mathbb{Q}$ is Galois extension of degree $p-1$, and $Gal(\mathbb{Q}(\xi_p)/\mathbb{Q})=\{\sigma_a:\ a\in \mathbb{F}_p^{*}\}$ is the Galois group, where the automorphism $\sigma_a$ of $\mathbb{Q}(\xi_p)$ is defined by $\sigma_a(\xi_p)=\xi_p^a.$ The cyclotomic field $\mathbb{Q}(\xi_p)$ has a unique quadratic subfield $\mathbb{Q}(\sqrt{p^{*}})$, and so $Gal(\mathbb{Q}(\sqrt{p^{*}})/\mathbb{Q})=\{1,\ \sigma_\gamma\}$, where $\gamma$ is a nonsquare in $\mathbb{F}_p^{*}.$ Obviously, for $a\in \mathbb{F}_p^{*}$ and $b \in \mathbb{F}_p$, we have that $\sigma_a(\xi_p^b)=\xi_p^{ab}$ and $\sigma_a(\sqrt{p^*})=\eta(a)\sqrt{p^*}$. For more information on cyclotomic field, the reader is referred to \cite{Ireland}.
	\subsection{Exponential sums}
	\begin{lemma}[\cite{Lidl}]\label{Le 1}With the defined notation, we have\\
		$\mathrm{(1)}$$\sum\limits_{y\in \mathbb{F}_p^{*}}\eta(y)=0$.\\
		$\mathrm{(2)}$ $\sum\limits_{y\in \mathbb{F}_p^{*}}\eta(y)\xi_p^y=\sqrt{p^*}$\\
		$\mathrm{(3)}$ For any $a \in \mathbb{F}_p^{*}$, $\sum\limits_{y\in SQ}\xi_p^{ya}=\frac{\eta(a)\sqrt{p^*}-1}{2}$ and  $\sum\limits_{y\in NSQ}\xi_p^{ya}=\frac{-\eta(a)\sqrt{p^*}-1}{2}.$\\
		$\mathrm{(4)}$ For any $\alpha\in V_n^{(p)}$, $\sum\limits_{x\in V_n^{(p)}}\xi_p^{\langle x, \alpha\rangle_n}=\delta_0(\alpha)p^n.$ 
	\end{lemma}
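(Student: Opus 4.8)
\emph{Proof plan.} The statement is the standard package of character-sum identities over $\mathbb{F}_p$, so the plan is to reduce everything to two basic facts: the orthogonality of characters and the evaluation of the quadratic Gauss sum. For (1), I would note that since $p$ is odd, $\mathbb{F}_p^*$ is cyclic of even order, so $\eta$ is a well-defined nontrivial quadratic multiplicative character, equal to $+1$ on the $(p-1)/2$ elements of $SQ$ and $-1$ on the $(p-1)/2$ elements of $NSQ$; the sum then cancels to $0$ (equivalently, a nontrivial character of a finite abelian group sums to zero over the group).

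For (4), which I would treat next since (3) uses (2) together with a special case of (4), the map $x\mapsto\langle x,\alpha\rangle_n$ is $\mathbb{F}_p$-linear on $V_n^{(p)}$. If $\alpha=0$ all summands are $1$ and the sum is $p^n$. If $\alpha\ne 0$, non-degeneracy of the inner product forces this linear functional to be nonzero, hence surjective onto $\mathbb{F}_p$ with each fibre of size $p^{n-1}$, so $\sum_{x\in V_n^{(p)}}\xi_p^{\langle x,\alpha\rangle_n}=p^{n-1}\sum_{c\in\mathbb{F}_p}\xi_p^{c}=0$ because $\xi_p$ is a primitive $p$-th root of unity. In particular $\sum_{y\in\mathbb{F}_p^*}\xi_p^{ay}=-1$ for every $a\in\mathbb{F}_p^*$.

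For (2), set $g=\sum_{y\in\mathbb{F}_p^*}\eta(y)\xi_p^y$. I would first establish $g\overline{g}=p$ by expanding the product, substituting $y=zt$, and invoking (1) together with the additive-character sum just proved; combined with $\overline{g}=\eta(-1)g$ (from $\overline{\xi_p^y}=\xi_p^{-y}$) this gives $g^2=\eta(-1)p=p^*$, hence $g=\pm\sqrt{p^*}$. The one genuinely nontrivial point — that the correct sign is $+$, i.e. $g=\sqrt{p^*}$ for the standard branch of the square root — is Gauss's classical determination of the sign of the quadratic Gauss sum, for which I would cite \cite{Lidl}; this is the main obstacle and the only part that is not a one-line character computation. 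Given (2), part (3) is immediate: re-indexing by $y\mapsto a^{-1}y$ and using $\eta(a^{-1})=\eta(a)$ yields $\sum_{y\in\mathbb{F}_p^*}\eta(y)\xi_p^{ay}=\eta(a)\sqrt{p^*}$, while (4) gives $\sum_{y\in\mathbb{F}_p^*}\xi_p^{ay}=-1$; writing these two sums as $\sum_{SQ}\xi_p^{ay}\mp\sum_{NSQ}\xi_p^{ay}$ and solving the resulting $2\times2$ linear system produces exactly the two claimed values.
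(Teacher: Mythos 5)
Your proposal is correct: the paper gives no proof of this lemma, citing it directly from Lidl--Niederreiter, and your argument is precisely the standard derivation found there (orthogonality of characters for (1) and (4), the Gauss-sum computation $g^2=p^*$ with the sign determined by Gauss's theorem for (2), and the $2\times 2$ linear system combining (2) and (4) for (3)). Appropriately, you isolate the sign of the Gauss sum as the one step that must itself be cited rather than rederived in a line.
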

	\begin{lemma}[\cite{Lidl}]\label{Le 2}Let $f(x)=a_2x^2+a_1x+a_0\in\mathbb{F}_p[x]$ with $a_2\ne 0$, then
		\[\sum_{c \in \mathbb{F}_p}\xi_p^{f(c)}=\xi_p^{a_0-a_1^2(4a_2)^{-1}}\eta(a_2)\sqrt{p^*}.\]
	\end{lemma}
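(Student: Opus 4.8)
\noindent\emph{Proof proposal.} The plan is to reduce the general quadratic exponential sum to the basic quadratic Gauss sum $\sum_{d\in\mathbb{F}_p}\xi_p^{a_2d^2}$ by completing the square, and then to evaluate that Gauss sum via Lemma~\ref{Le 1}. First I would use that $p$ is odd, so that $2$ and hence $4a_2$ are invertible in $\mathbb{F}_p$, and write
\[
f(c)=a_2c^2+a_1c+a_0=a_2\bigl(c+a_1(2a_2)^{-1}\bigr)^2+\bigl(a_0-a_1^2(4a_2)^{-1}\bigr).
\]
Since $c\mapsto c+a_1(2a_2)^{-1}$ is a bijection of $\mathbb{F}_p$ onto itself, substituting $d=c+a_1(2a_2)^{-1}$ gives
\[
\sum_{c\in\mathbb{F}_p}\xi_p^{f(c)}=\xi_p^{a_0-a_1^2(4a_2)^{-1}}\sum_{d\in\mathbb{F}_p}\xi_p^{a_2d^2},
\]
so it remains to prove $\sum_{d\in\mathbb{F}_p}\xi_p^{a_2d^2}=\eta(a_2)\sqrt{p^*}$.

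For the Gauss sum I would use that the map $d\mapsto d^2$ sends $\mathbb{F}_p^*$ two-to-one onto $SQ$, whence
\[
\sum_{d\in\mathbb{F}_p}\xi_p^{a_2d^2}=1+2\sum_{y\in SQ}\xi_p^{a_2y}.
\]
Applying part~(3) of Lemma~\ref{Le 1} with $a=a_2\in\mathbb{F}_p^*$, the inner sum equals $\tfrac{\eta(a_2)\sqrt{p^*}-1}{2}$, and substituting back yields $1+\eta(a_2)\sqrt{p^*}-1=\eta(a_2)\sqrt{p^*}$, as desired. Alternatively, one can write the Gauss sum as $\sum_{t\in\mathbb{F}_p}(1+\eta(t))\xi_p^{a_2t}$ (using that $t$ has $1+\eta(t)$ square roots in $\mathbb{F}_p$), split it into two sums, observe that the first vanishes by part~(4) of Lemma~\ref{Le 1}, and that after the substitution $u=a_2t$ (noting $\eta(a_2^{-1})=\eta(a_2)$) the second equals $\eta(a_2)\sum_{u\in\mathbb{F}_p^*}\eta(u)\xi_p^u=\eta(a_2)\sqrt{p^*}$ by part~(2) of Lemma~\ref{Le 1}.

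I do not expect a genuine obstacle here: the argument is entirely elementary once completion of the square is legitimate, which is precisely where the hypotheses $p$ odd and $a_2\neq0$ enter. The only nontrivial ingredient is the evaluation of the quadratic Gauss sum $\sum_{d}\xi_p^{a_2d^2}$, and that is already contained in Lemma~\ref{Le 1}; everything else is bookkeeping of the constant factor $\xi_p^{a_0-a_1^2(4a_2)^{-1}}$.
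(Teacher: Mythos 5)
Your proof is correct. The paper itself gives no proof of this lemma --- it is quoted directly from Lidl--Niederreiter with a citation --- so there is nothing to compare against; your argument (complete the square using that $p$ is odd and $a_2\neq 0$, then evaluate the resulting quadratic Gauss sum $\sum_{d}\xi_p^{a_2d^2}=\eta(a_2)\sqrt{p^*}$ via Lemma~\ref{Le 1}) is exactly the standard derivation found in that reference, and both of your routes to the Gauss sum evaluation are sound.
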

	\subsection{Plateaued functions}
	
	Let $f(x):V_n^{(p)}\longrightarrow\mathbb{F}_p$ be a $p$-ary function. The  \emph{Walsh\  transform} of $f(x)$ is defined by 
	\[\widehat{\chi_f}(\alpha)=\sum\limits_{x\in V_n^{(p)}} \xi_p^{f(x)-\langle \alpha, x\rangle_n},\ \alpha \in V_n^{(p)}.\]  
	The function $f(x)$ is said to be \emph{balanced} over $\mathbb{F}_p$ if $f(x)$ takes every element of $\mathbb{F}_p$ with the same number $p^{n-1}$ of pre-images, i.e., $\widehat{\chi_f}(0)=0$. Otherwise, $f(x)$ is 
	\emph{unbalanced}.
	If $|\widehat{\chi_f} (\alpha)|\in\{0,p^{\frac{n+s}{2} }\} $ for any $\alpha\in V_n^{(p)}$, where $0\le s\le n$, then $f(x)$ is said to be \emph{$s$-plateaued}. Specially, when $s=0$, a $0$-plateaued function is called a \emph{bent} function. The \emph{Walsh support} of an $s$-plateaued function $f(x)$ is defined by $\mathrm{Supp}(\widehat{\chi_f})=\{\alpha\in V_n^{(p)}:\  |\widehat{\chi_f}(\alpha)|=p^{\frac{n+s}{2}} \}$. By the Parseval identity $\sum_{\alpha\in V_n^{(p)}}|\widehat{\chi_f}(\alpha)|^2=p^{2n}$, we have that  $\#\mathrm{Supp}(\widehat{\chi_f})=p^{n-s}$. The Walsh transform of an $s$-plateaued function $f(x)$ at $\alpha\in V_n^{(p)}$ is given as follows \cite{Hyun}.
	
	\[\widehat{\chi_f}(\alpha)=
	\begin{cases}
		\pm p^{\frac{n+s}{2}}\xi_{p}^{f^{*}(\alpha)}, 0, & \text{if } p^{n+s}\equiv 1\ (\mathrm{mod}\ 4);\\ 
		\pm \sqrt{-1}p^{\frac{n+s}{2}}\xi_{p}^{f^{*}(\alpha)}, 0,& \text{ if }  p^{n+s}\equiv 3\ (\mathrm{mod}\ 4),
	\end{cases}\] where $f^*(x)$ is a function from $\mathrm {Supp}(\widehat{\chi_f})$ to $\mathbb{F}_p$ called the \emph{dual} of $f(x)$.
	An $s$-plateaued function $f(x)$ is called \emph{weakly\ regular}, if for all $\alpha \in \mathrm {Supp}(\widehat{\chi_f})$,   $p^{-\frac{n+s}{2}}\widehat{\chi_f}(\alpha)=\epsilon\xi_p^{f^*(x)}$, where $\epsilon\in \{\pm 1,\pm \sqrt{-1}\}$ is independent of $\alpha$, otherwise it is called \emph{non-weakly  regular}. Specially, if $\epsilon=1$, $f(x)$ is called \emph{regular}. The following definition is given in \cite{Ozbudak1}.
	
	\begin{definition}
		Let $S$ be a subset of $V_n^{(p)}$ with $\#S=N$ and $f(x)$ be a function from $S$ to $\mathbb{F}_p$. If $|\widehat{\chi_f}(\alpha)|=N^{\frac{1}{2} }$ for all $\alpha\in V_n^{(p)}$, then $f(x)$ is called bent relative to $S$, where $\widehat{\chi_f}(\alpha)=\sum_{x\in S}\xi_p^{f(x)-\langle \alpha,x\rangle_n}.$
	\end{definition}
	\begin{remark}\label{Re 1}
		$\mathrm{(1)}$ For an $s$-plateaued function $f(x)$ from $V_n^{(p)}$ to $\mathbb{F}_p$, if its dual $f^*(x)$ is bent relative to $\mathrm{Supp}(\widehat{\chi_f})$, then for any $\alpha\in V_n^{(p)}$, the value of $\widehat{\chi_{f^*}}(\alpha)$  is given by
		
		\[\widehat{\chi_{f^*}}(\alpha)=
		\begin{cases}
			\pm p^{\frac{n-s}{2}}\xi_{p}^{f^{**}(\alpha)}, & \text{if } p^{n-s}\equiv 1\ (\mathrm{mod}\ 4);\\ 
			\pm \sqrt{-1}p^{\frac{n-s}{2}}\xi_{p}^{f^{**}(\alpha)}, & \text{ if }  p^{n-s}\equiv 3\ (\mathrm{mod}\ 4),
		\end{cases}\]
		where $f^{**}(x)$ from $V_n^{(p)}$ to $\mathbb{F}_p$ is the dual of $f^*(x)$. Similarly, the dual $f^*(x)$ is called weakly regular bent relative to $\mathrm{Supp}(\widehat{\chi_f})$, if for all $\alpha\in V_n^{(p)}$, $p^{-\frac{n-s}{2}}\widehat{\chi_{f^*}}(\alpha)=\epsilon\xi_p^{f^{**}(\alpha)}$, where $\epsilon\in\{\pm1,\pm\sqrt{-1}\}$ is independent of $
		\alpha$, otherwise it is called non-weakly regular bent relative to $\mathrm{Supp}(\widehat{\chi_f})$. Specially, if $\epsilon=1$, $f^*(x)$ is called regular bent relative to $\mathrm{Supp}(\widehat{\chi_f})$.\\
		$\mathrm{(2)}$	In \cite{Ozbudak1}, {\"O}zbudak et al. 
		proved that the dual $f^*(x)$ of a weakly regular $s$-plateaued function $f(x)$ is weakly regular bent relative to $\mathrm{Supp}(\widehat{\chi_f})$ and $f^{**}(x)=f(-x)$. Moreover, they also proved that if $f(x)$ is a non-weakly regular $s$-plateaued function such that its dual $f^*(x)$ is  bent relative to $\mathrm{Supp}(\widehat{\chi_f})$, then $f^*(x)$ is non-weakly regular bent relative to  $\mathrm{Supp}(\widehat{\chi_f})$ and satisfies $f^{**}(x)=f(-x)$.
	\end{remark}
	For a function $f(x)$ from $V_n^{(p)}$ to $\mathbb{F}_p$, 
	define $D_{f,i}=\{x\in V_n^{(p)}:f(x)=i\}$ for any $i\in\mathbb{F}_p$, $D_{f,sq}=\{x\in V_n^{(p)}:f(x)\in SQ\}$ and $D_{f,nsq}=\{x\in V_n^{(p)}:f(x)\in NSQ\}$.
	Let $\mu=1$ if $p^{n+s}\equiv1$ (mod $4$) and  $\mu=\sqrt{-1}$ if $p^{n+s}\equiv3$ (mod $4$). For an $s$-plateaued function $f(x):V_n^{(p)}\longrightarrow\mathbb{F}_p$, we define $B_+(f)\ \text{and}\ B_{-}(f)$ as follows.
	 \begin{align*}
		B_+(f)&=\{\alpha\in \mathrm{Supp}(\widehat{\chi_f}):\ \widehat{\chi_f}(\alpha)=\mu p^{\frac{n+s}{2}}\xi_p^{f^*(\alpha)}\},\\
		B_-(f)&=\{\alpha\in \mathrm{Supp}(\widehat{\chi_f}):\ \widehat{\chi_f}(\alpha)=-\mu p^{\frac{n+s}{2}}\xi_p^{f^*(\alpha)}\}.
	\end{align*}
	For any $\alpha\in\mathrm{Supp}(\widehat{\chi_f})$, define $\epsilon_{\alpha}=1$ (respectively $-1$) if $\alpha\in B_+(f)$ (respectively $B_-(f)$). If $f(x)$ is unbalanced, we define the type of $f(x)$ as 
	$f(x)\  \text{is of}\  type\  (+) \ \text{if\  }\  \epsilon_0=1\ \text{and of}\  type\ (-)\ \text{if\ }\\  \epsilon_0=-1.$ Besides, if the dual $f^*(x)$ of $f(x)$ is bent relative to $\mathrm{Supp}(\widehat{\chi_f})$, then we define $B_+(f^*)$ and $B_-(f^*)$ as follows.
	
	\begin{align*}
		B_+(f^*)&=\{\alpha\in V_n^{(p)}:\ \widehat{\chi_{f^*}}(\alpha)=\mu p^{\frac{n-s}{2}}\xi_p^{f(-\alpha)}\},\\
		B_-(f^*)&=\{\alpha\in V_n^{(p)}:\ \widehat{\chi_{f^*}}(\alpha)=-\mu p^{\frac{n-s}{2}}\xi_p^{f(-\alpha)}\}.
	\end{align*}
	Meanwhile, for any $\alpha\in V_{n}^{(p)}$, define $\epsilon_\alpha^*=1$ (respectively $-1$) if $\alpha\in B_+(f^*)$ (respectively $B_-(f^*)$). Define  the type of $f^*(x)$ as 
	$f^*(x)\  \text{is of}\  type\  (+)\  \text{if\  }\  \epsilon_0^*=1\ \text{and of}\  type\ (-)\ \text{if\ }\  \epsilon_0^*=-1.$ 
	\begin{remark}\label{Re 2} 
		By the result in \cite[Proposition 4.1]{Ozbudak1}, for an unbalanced weakly regular $s$-plateaued function $f(x)$ from $V_n^{(p)}$ to $\mathbb{F}_p$, we easily have that for $p^{n+s} \equiv1\ (\mathrm{mod}\ 4)$, the types of $f(x)$ and $f^*(x)$ are the same, and for $p^{n+s}\equiv3\ (\mathrm{mod}\ 4)$, the types of $f(x)$ and $f^*(x)$ are different.
	\end{remark}
	
	The following lemma gives the value distributions of unbalanced $p$-ary $s$-plateaued functions.
	\begin{lemma} [\cite{Ozbudak1}]\label{Le 3} Let $f(x):V_n^{(p)}\longrightarrow\mathbb{F}_p$ be an unbalanced $p$-ary $s$-plateaued function with $f^{*}(0)=j_0$. For any $j\in \mathbb{F}_p$, define $N_j(f)=\#D_{f,j}.$ \\                                                                                                                                                                                                                                                                                                                                                                                                                                                                                                                                                                                                                                                                                                                                                                                                                                                                                                                                                                                                                                                                                                                                                                                                                                                                                                                                                                                                                                                                                                                                                                                                                                                                                                                                                                                                                                                                                                                                                                                                                                                                                                                                                                                                                                                                                                                                                                                                                                                                                                                                                                                                                                                                                                                                                                                                                                                                                                                                                                                                                                                                                                                                                                                                                                                                                                                                                                                                                                                                                                                                                                                                                                                                                                                                                                                                                                                                                                                                                                                                                                                                                                                                                                                                                                                                                                                                                                                                                                                                                                                                                                                                                                                                                                                                                                                                                                                                                                                                                                                                                                                                                                                                                                                                                                                                                                                                                                                                                                                                                                                                                                                                                                                                                                                                                                                                                                                                                                                                                                                                                                                                                                                                                                                                                                                                                                                                                                                                                                                                                                                                                                                                                                                                                                                                                                                                                                                                                                                                                                                                                                                                                                                                                                                                                                                                                                                                                                                                                                                                                                                                                                                                                                                                                                                                                                                                                                                                                                                                                                                                                                                                                                                                                                                                                                                                                                                                                                                                                                                                                                                                                                                                                                                                                                                                                                                                                                                                                                                                                                                                                                                                                                                                                                                                                                                                                                                                                                                                                                           
$\bullet$ When $n+s$ is even, we have $N_{j}(f)=p^{n-1}+\epsilon_0(\delta_0(j-j_0)p-1)p^{\frac{n+s}{2}-1}$ for any  $j\in \mathbb{F}_p$.\\
			$\bullet$ When $n+s$ is odd, we have
			$N_{j}(f)=p^{n-1}+\epsilon_0\eta(j-j_0)p^{\frac{n+s-1}{2}}$ for any $j\in\mathbb{F}_p$.
	\end{lemma}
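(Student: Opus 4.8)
The plan is to turn $N_j(f)=\#D_{f,j}$ into an exponential sum and then substitute the known value of $\widehat{\chi_f}(0)$. First I would use the orthogonality of the additive characters of $\mathbb{F}_p$ (the one‑dimensional case of Lemma~\ref{Le 1}(4)) to write
\[
N_j(f)=\frac{1}{p}\sum_{y\in\mathbb{F}_p}\ \sum_{x\in V_n^{(p)}}\xi_p^{\,y(f(x)-j)}=p^{n-1}+\frac{1}{p}\sum_{y\in\mathbb{F}_p^{*}}\xi_p^{-yj}\,S(y),\qquad S(y):=\sum_{x\in V_n^{(p)}}\xi_p^{\,yf(x)}.
\]
The crucial remark is that $S(y)=\sigma_y\!\bigl(\widehat{\chi_f}(0)\bigr)$ for every $y\in\mathbb{F}_p^{*}$, since $\widehat{\chi_f}(0)=\sum_{x}\xi_p^{f(x)}\in\mathbb{Z}[\xi_p]$ and $\sigma_y$ is the automorphism of $\mathbb{Q}(\xi_p)$ with $\sigma_y(\xi_p)=\xi_p^{y}$, applied termwise. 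So it suffices to know $\widehat{\chi_f}(0)$ and to transport it by the Galois action.

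Next I would feed in the shape of $\widehat{\chi_f}(0)$. Because $f$ is unbalanced, $\widehat{\chi_f}(0)\neq 0$, hence $0\in\mathrm{Supp}(\widehat{\chi_f})$ and $\widehat{\chi_f}(0)=\epsilon_0\,\mu\,p^{\frac{n+s}{2}}\xi_p^{j_0}$ with $j_0=f^{*}(0)$. A small case check on $p^{n+s}\bmod 4$ rewrites this constant: when $n+s$ is even, $p^{n+s}$ is an odd square, so $\mu=1$ and $\mu p^{\frac{n+s}{2}}=p^{\frac{n+s}{2}}$; when $n+s$ is odd, $p^{n+s}\equiv p\ (\mathrm{mod}\ 4)$ and $\mu\sqrt{p}=\sqrt{p^{*}}$ (checking $p\equiv 1$ and $p\equiv 3$ separately against $p^{*}=\eta(-1)p$), so $\mu p^{\frac{n+s}{2}}=p^{\frac{n+s-1}{2}}\sqrt{p^{*}}$. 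Applying $\sigma_y$, which fixes powers of $p$, sends $\sqrt{p^{*}}$ to $\eta(y)\sqrt{p^{*}}$ and $\xi_p^{j_0}$ to $\xi_p^{yj_0}$, I obtain $S(y)=\epsilon_0 p^{\frac{n+s}{2}}\xi_p^{yj_0}$ in the even case and $S(y)=\epsilon_0 p^{\frac{n+s-1}{2}}\eta(y)\sqrt{p^{*}}\,\xi_p^{yj_0}$ in the odd case.

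Finally I would plug these back in and evaluate the remaining sum over $y\in\mathbb{F}_p^{*}$. In the even case $\sum_{y\in\mathbb{F}_p^{*}}\xi_p^{y(j_0-j)}=p\,\delta_0(j-j_0)-1$ by Lemma~\ref{Le 1}(4), which gives $N_j(f)=p^{n-1}+\epsilon_0\bigl(\delta_0(j-j_0)p-1\bigr)p^{\frac{n+s}{2}-1}$. In the odd case, the substitution $y\mapsto y(j_0-j)^{-1}$ (valid when $j\neq j_0$, and the case $j=j_0$ is trivial since $\eta(0)=0$) together with Lemma~\ref{Le 1}(2) gives $\sum_{y\in\mathbb{F}_p^{*}}\eta(y)\xi_p^{y(j_0-j)}=\eta(j_0-j)\sqrt{p^{*}}$; then $\sqrt{p^{*}}\cdot\sqrt{p^{*}}=p^{*}=\eta(-1)p$ and $\eta(-1)\eta(j_0-j)=\eta(j-j_0)$ produce $N_j(f)=p^{n-1}+\epsilon_0\,\eta(j-j_0)\,p^{\frac{n+s-1}{2}}$, as claimed.

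The argument is essentially bookkeeping; the one step that needs care is the reduction of $\mu p^{\frac{n+s}{2}}$ to either $p^{\frac{n+s}{2}}$ or $p^{\frac{n+s-1}{2}}\sqrt{p^{*}}$, i.e. keeping the definition of $\mu$ (through $p^{n+s}\bmod 4$) consistent with $p^{*}=\eta(-1)p$ and with the Gauss‑sum normalization of Lemma~\ref{Le 1}(2). Once that identity is settled, everything else follows from the Galois action on $\mathbb{Q}(\xi_p)$ and the elementary character sums in Lemma~\ref{Le 1}.
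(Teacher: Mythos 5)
Your proposal is correct: the character-sum expansion, the identification $S(y)=\sigma_y(\widehat{\chi_f}(0))$, the reduction of $\mu p^{\frac{n+s}{2}}$ to $p^{\frac{n+s}{2}}$ or $\sqrt{p^{*}}\,p^{\frac{n+s-1}{2}}$, and the final evaluations via Lemma~\ref{Le 1} all check out. The paper imports this lemma from \cite{Ozbudak1} without proof, but your argument is exactly the one the paper itself uses to prove the analogous Lemma~\ref{Le 4} for $N_j(f^*)$, so this is essentially the same approach.
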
	
	When the dual $f^*(x)$ of an $s$-plateaued function $f(x)$ is bent relative to $\mathrm{Supp}(\widehat{\chi_f})$, we give the value distributions of $f^*(x)$ in the following lemma.
	\begin{lemma}\label{Le 4}
		Let $f(x):V_n^{(p)}\longrightarrow\mathbb{F}_p$ be an $s$-plateaued function whose dual $f^*(x)$ is bent relative to $\mathrm{Supp}(\widehat{\chi_f})$ and $f(0)=j_0$. For any $j \in \mathbb{F}_p$, define $N_j(f^*)=\#\{x\in \mathrm{Supp}(\widehat{\chi_f}):\ f^*(x)=j\}.$   \\ 
		$\bullet$ When $n+s$ is even, we have $N_{j}(f^*)=p^{n-s-1}+\epsilon_0^*(\delta_0(j-j_0)p-1) p^{\frac{n-s}{2}-1}$ for any $j\in \mathbb{F}_p$.\\
			$\bullet$ When $n+s$ is odd, we have
			$N_{j}(f^*)=p^{n-s-1}+\epsilon_0^*\eta(j-j_0)p^{\frac{n-s-1}{2}}$ for any $j\in\mathbb{F}_p$.
	\end{lemma}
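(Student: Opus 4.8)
The plan is to run, for the dual function $f^*$, the same orthogonality‑plus‑character‑sum computation that underlies the proof of Lemma \ref{Le 3}. Write $S=\mathrm{Supp}(\widehat{\chi_f})$, so $\#S=p^{n-s}$, and recall from Remark \ref{Re 1} that since $f^*$ is bent relative to $S$ we have $\widehat{\chi_{f^*}}(\alpha)=\epsilon_\alpha^*\mu p^{\frac{n-s}{2}}\xi_p^{f^{**}(\alpha)}$ for every $\alpha\in V_n^{(p)}$, with $f^{**}(x)=f(-x)$; in particular $\widehat{\chi_{f^*}}(0)=\epsilon_0^*\mu p^{\frac{n-s}{2}}\xi_p^{j_0}$ because $f^{**}(0)=f(0)=j_0$. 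Note that $p^{n-s}\equiv p^{n+s}\pmod 4$ (since $p^{2s}\equiv1\pmod4$), so the constant $\mu$ defined via $p^{n+s}$ is indeed the right one for $\widehat{\chi_{f^*}}$.

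Fix $j\in\mathbb{F}_p$. The first step is the standard reduction via the orthogonality relation of additive characters:
\[
N_j(f^*)=\frac1p\sum_{x\in S}\sum_{y\in\mathbb{F}_p}\xi_p^{y(f^*(x)-j)}=\frac1p\Bigl(p^{n-s}+\sum_{y\in\mathbb{F}_p^*}\xi_p^{-yj}\sum_{x\in S}\xi_p^{yf^*(x)}\Bigr).
\]
The second step is to observe that, since $f^*$ takes values in $\mathbb{F}_p$, for $y\in\mathbb{F}_p^*$ one has $\sum_{x\in S}\xi_p^{yf^*(x)}=\sigma_y\bigl(\sum_{x\in S}\xi_p^{f^*(x)}\bigr)=\sigma_y\bigl(\widehat{\chi_{f^*}}(0)\bigr)$, where $\sigma_y$ is the element of $\mathrm{Gal}(\mathbb{Q}(\xi_p)/\mathbb{Q})$ with $\sigma_y(\xi_p)=\xi_p^y$. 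Plugging in the value of $\widehat{\chi_{f^*}}(0)$ from the previous paragraph and using $\sigma_y(\sqrt{p^*})=\eta(y)\sqrt{p^*}$, we get $\sigma_y(\widehat{\chi_{f^*}}(0))=\epsilon_0^*p^{\frac{n-s}{2}}\xi_p^{yj_0}$ when $n+s$ is even (here $\mu=1$ and $p^{\frac{n-s}{2}}\in\mathbb{Z}$), and $\sigma_y(\widehat{\chi_{f^*}}(0))=\epsilon_0^*p^{\frac{n-s-1}{2}}\eta(y)\sqrt{p^*}\,\xi_p^{yj_0}$ when $n+s$ is odd (here $\mu p^{\frac{n-s}{2}}=p^{\frac{n-s-1}{2}}\sqrt{p^*}$).

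The third step is to finish with the character‑sum evaluations in Lemma \ref{Le 1}. In the even case, $N_j(f^*)=\frac1p\bigl(p^{n-s}+\epsilon_0^*p^{\frac{n-s}{2}}\sum_{y\in\mathbb{F}_p^*}\xi_p^{y(j_0-j)}\bigr)$, and $\sum_{y\in\mathbb{F}_p^*}\xi_p^{y(j_0-j)}=p\delta_0(j-j_0)-1$ by Lemma \ref{Le 1}(4), giving $N_j(f^*)=p^{n-s-1}+\epsilon_0^*(p\delta_0(j-j_0)-1)p^{\frac{n-s}{2}-1}$. In the odd case, $N_j(f^*)=\frac1p\bigl(p^{n-s}+\epsilon_0^*p^{\frac{n-s-1}{2}}\sqrt{p^*}\sum_{y\in\mathbb{F}_p^*}\eta(y)\xi_p^{y(j_0-j)}\bigr)$; a short substitution together with Lemma \ref{Le 1}(1),(2) gives $\sum_{y\in\mathbb{F}_p^*}\eta(y)\xi_p^{y(j_0-j)}=\eta(j_0-j)\sqrt{p^*}$, and then $\sqrt{p^*}\cdot\eta(j_0-j)\sqrt{p^*}=p^*\eta(j_0-j)=p\,\eta(j-j_0)$ since $p^*=\eta(-1)p$, yielding $N_j(f^*)=p^{n-s-1}+\epsilon_0^*\eta(j-j_0)p^{\frac{n-s-1}{2}}$.

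I do not expect a genuine obstacle: the statement is, up to notation, Lemma \ref{Le 3} applied to $f^*$ viewed as an unbalanced function on $S$ whose dual is $f^{**}$. The only point requiring a little care is the mod‑$4$ bookkeeping in the second step — confirming that $\mu$ is the same constant for $\widehat{\chi_{f^*}}$ as for $\widehat{\chi_f}$, and that $\widehat{\chi_{f^*}}(0)$ really has the shape $\pm(\text{a real algebraic number})\cdot\xi_p^{j_0}$ so that applying $\sigma_y$ produces exactly the sums one wants — after which the two evaluations via Lemma \ref{Le 1} are immediate.
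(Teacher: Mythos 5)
Your proposal is correct and follows essentially the same route as the paper's proof: orthogonality of additive characters to reduce $N_j(f^*)$ to $p^{-1}\sum_{y\in\mathbb{F}_p^*}\sigma_y(\widehat{\chi_{f^*}}(0))\xi_p^{-yj}+p^{n-s-1}$, then substituting $\widehat{\chi_{f^*}}(0)=\epsilon_0^*\mu p^{\frac{n-s}{2}}\xi_p^{j_0}$ and evaluating with Lemma \ref{Le 1}. Your extra check that $p^{n-s}\equiv p^{n+s}\pmod 4$, so the same $\mu$ governs $\widehat{\chi_{f^*}}$, is a correct piece of bookkeeping that the paper leaves implicit.
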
	
	\begin{proof} For any $j\in\mathbb{F}_p$, we have
		{\small 	\begin{align*}
				N_j(f^*)&=p^{-1}\sum\limits_{x\in\mathrm{Supp}(\widehat{\chi_f})}\sum\limits_{y\in\mathbb{F}_p}\xi_p^{y(f^*(x)-j)}\\
				&=p^{-1}\sum\limits_{y\in\mathbb{F}_p^{*}}\sum\limits_{x\in\mathrm{Supp}(\widehat{\chi_f})}\xi_p^{y(f^*(x)-j)}+p^{n-s-1}\\
				&=p^{-1}\sum\limits_{y\in\mathbb{F}_p^{*}}\sigma_y(\widehat{\chi_{f^*}}(0))\xi_p^{-yj}+p^{n-s-1}.
		\end{align*}}
		Since $f^*(x)$ is bent relative to $\mathrm{Supp}(\widehat{\chi_f})$ and $f^{**}(0)=f(0)=j_0$, then by Lemma \ref{Le 1}, we have that if $n-s$ is even, then for any $j\in\mathbb{F}_p$, 
		{\small 	\[
			N_j(f^*)=\sum\limits_{y\in\mathbb{F}_p^{*}}\epsilon_0^*p^{\frac{n-s}{2}-1}\xi_p^{y(j_0-j)}+p^{n-s-1}
			=\epsilon_0^*(\delta_0(j-j_0)p-1)p^{\frac{n-s}{2}-1}+p^{n-s-1}.\]}
		If $n-s$ is odd, then for any $j\in\mathbb{F}_p$, 
		
		{\small \[N_{j}(f^*)=\sum\limits_{y\in\mathbb{F}_p^{*}}\epsilon_0^*p^{\frac{n-s-3}{2}}\sqrt{p^*}\eta(y)\xi_p^{-y(j-j_0)}+p^{n-s-1}\\
			=\epsilon_0^*\eta(j-j_0)p^{\frac{n-s-1}{2}}+p^{n-s-1}.\]}
		The proof is now completed.
	\end{proof}
	
	From the results of Theorem 4.2 in \cite{Ozbudak1}, we have the following lemma.
 	\begin{lemma}\label{Le 5} Let $f(x):V_n^{(p)}\longrightarrow\mathbb{F}_p$ be an $s$-plateaued function whose dual $f^*(x)$ is bent relative to $\mathrm{Supp}(\widehat{\chi_f})$ and $f(0)=j_0$, $\#B_{+}(f)=k$ with $0\le k\le p^{n-s}$. For any $j \in \mathbb{F}_p$, define $c_j(f^*)=\#\{x\in B_+(f):\ f^*(x)=j\}$ and $d_j(f^*)=\#\{x\in B_-(f):\ f^*(x)=j\}$ .  \\
		$\bullet$
			If $n+s$ is even, then for any $j\in\mathbb{F}_p$, 
			$c_j(f^*)=\frac{k}{p}+\frac{\epsilon_0^*+1}{2}(\delta_0(j-j_0)p-1)p^{\frac{n-s}{2}-1}$ and $
			d_j(f^*)=p^{n-s-1}-\frac{k}{p}+\frac{\epsilon_0^*-1}{2}(\delta_0(j-j_0)p-1)p^{\frac{n-s}{2}-1}.$\\
			$\bullet$ If $n+s$ is odd, then for any $j\in\mathbb{F}_p$, $
			c_j(f^*)=\frac{k}{p}+\frac{\epsilon_0^*+\eta(-1)}{2}\eta(j-j_0)p^{\frac{n-s-1}{2}}$ and $
			d_j(f^*)=p^{n-s-1}-\frac{k}{p}+\frac{\epsilon_0^*-\eta(-1)}{2}\eta(j-j_0)p^{\frac{n-s-1}{2}}.$
	\end{lemma}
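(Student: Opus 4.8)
The plan is to obtain $c_j(f^*)$ and $d_j(f^*)$ by computing their sum and their difference separately and then solving the resulting $2\times 2$ linear system. The sum is free: $B_+(f)$ and $B_-(f)$ partition $\mathrm{Supp}(\widehat{\chi_f})$, so $c_j(f^*)+d_j(f^*)=N_j(f^*)$, which is exactly the quantity evaluated in Lemma~\ref{Le 4}. Thus the real work is the signed count $c_j(f^*)-d_j(f^*)$. Writing $\epsilon_x\in\{\pm1\}$ for the sign attached to $x\in\mathrm{Supp}(\widehat{\chi_f})$ (so that $\widehat{\chi_f}(x)=\epsilon_x\mu p^{\frac{n+s}{2}}\xi_p^{f^*(x)}$ by the definition of $B_\pm(f)$) and using $\delta_0(a)=p^{-1}\sum_{y\in\mathbb{F}_p}\xi_p^{ya}$, one expands
\[
c_j(f^*)-d_j(f^*)=\sum_{x\in\mathrm{Supp}(\widehat{\chi_f})}\epsilon_x\,\delta_0(f^*(x)-j)=\frac1p\sum_{y\in\mathbb{F}_p}\xi_p^{-yj}\sum_{x\in\mathrm{Supp}(\widehat{\chi_f})}\epsilon_x\xi_p^{yf^*(x)}.
\]
The $y=0$ term contributes $p^{-1}(\#B_+(f)-\#B_-(f))=p^{-1}(2k-p^{n-s})$, using $\#\mathrm{Supp}(\widehat{\chi_f})=p^{n-s}$ from Parseval; it then remains to evaluate $T_y:=\sum_{x\in\mathrm{Supp}(\widehat{\chi_f})}\epsilon_x\xi_p^{yf^*(x)}$ for $y\in\mathbb{F}_p^*$.

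The key idea, and the heart of the argument, is that $T_y$ is a Galois conjugate of a Fourier-inversion value. By Lemma~\ref{Le 1}(4), $\sum_{x\in V_n^{(p)}}\widehat{\chi_f}(x)=\sum_{u\in V_n^{(p)}}\xi_p^{f(u)}\sum_{x\in V_n^{(p)}}\xi_p^{-\langle x,u\rangle_n}=p^n\xi_p^{f(0)}=p^n\xi_p^{j_0}$, and since $\widehat{\chi_f}$ vanishes off its support this equals $\sum_{x\in\mathrm{Supp}(\widehat{\chi_f})}\epsilon_x\mu p^{\frac{n+s}{2}}\xi_p^{f^*(x)}$. Applying the automorphism $\sigma_y\in Gal(\mathbb{Q}(\xi_p)/\mathbb{Q})$ to both sides, and using that $\sigma_y$ fixes $\mathbb{Q}$, sends $\xi_p^{f^*(x)}$ to $\xi_p^{yf^*(x)}$, and sends $\sqrt{p^*}$ to $\eta(y)\sqrt{p^*}$, one gets $\sigma_y(\mu p^{\frac{n+s}{2}})\,T_y=p^n\xi_p^{yj_0}$. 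I would then split on the parity of $n+s$: when $n+s$ is even, $\mu=1$ and $p^{\frac{n+s}{2}}\in\mathbb{Q}$, giving $T_y=p^{\frac{n-s}{2}}\xi_p^{yj_0}$; when $n+s$ is odd, $\mu p^{\frac{n+s}{2}}=p^{\frac{n+s-1}{2}}\sqrt{p^*}$ (because $\mu\sqrt p=\sqrt{p^*}$), so $\sigma_y(\mu p^{\frac{n+s}{2}})=\eta(y)p^{\frac{n+s-1}{2}}\sqrt{p^*}$, whence, after clearing $\sqrt{p^*}$ via $p^*=\eta(-1)p$, $T_y=\eta(y)\eta(-1)p^{\frac{n-s-1}{2}}\sqrt{p^*}\,\xi_p^{yj_0}$.

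Finally I would substitute $T_y$ back into the sum over $y\in\mathbb{F}_p^*$, which reduces to the elementary Gauss-sum evaluations of Lemma~\ref{Le 1}: in the even case $\sum_{y\in\mathbb{F}_p^*}\xi_p^{y(j_0-j)}=\delta_0(j-j_0)p-1$ yields $c_j(f^*)-d_j(f^*)=\frac{2k}{p}-p^{n-s-1}+(\delta_0(j-j_0)p-1)p^{\frac{n-s}{2}-1}$, while in the odd case $\sum_{y\in\mathbb{F}_p^*}\eta(y)\xi_p^{y(j_0-j)}=\eta(j_0-j)\sqrt{p^*}$ together with $\eta(j_0-j)=\eta(-1)\eta(j-j_0)$ yields $c_j(f^*)-d_j(f^*)=\frac{2k}{p}-p^{n-s-1}+\eta(-1)\eta(j-j_0)p^{\frac{n-s-1}{2}}$. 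Taking half the sum and half the difference of this with $c_j(f^*)+d_j(f^*)=N_j(f^*)$ from Lemma~\ref{Le 4} then gives the asserted closed forms in both parity cases. I expect the main obstacle to be precisely the bookkeeping for $\sigma_y(\mu p^{\frac{n+s}{2}})$: one must carefully track how $\mu$, $\sqrt{p^*}$ and the sign $\eta(-1)$ interact according to the parity of $n+s$ and to whether $p^{n+s}\equiv1$ or $3\pmod 4$, since $B_+(f)$ and $B_-(f)$ are defined through the $\mu$-normalized Walsh values; once this is handled, everything else is routine.
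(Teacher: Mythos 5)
Your proposal is correct, and it reproduces exactly the identities the lemma asserts. The skeleton is the same as the paper's: both solve the $2\times 2$ system given by the sum $c_j(f^*)+d_j(f^*)=N_j(f^*)$ (Lemma \ref{Le 4}) and the signed difference $c_j(f^*)-d_j(f^*)$. The difference in route is that the paper does not prove the difference formula at all — it first disposes of the weakly regular case (where one of $B_\pm(f)$ is empty) via Remarks \ref{Re 1}, \ref{Re 2}, and in the non-weakly regular case simply imports the value of $t_j(f^*)=c_j(f^*)-d_j(f^*)$ from Theorem 4.2 of \cite{Ozbudak1}, split into the cases $p\equiv 1,3\pmod 4$. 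You instead derive that difference from scratch: Fourier inversion of the indicator $\delta_0(f^*(x)-j)$, the identity $\sum_{x\in V_n^{(p)}}\widehat{\chi_f}(x)=p^n\xi_p^{j_0}$, and the Galois action $\sigma_y(\sqrt{p^*})=\eta(y)\sqrt{p^*}$ to evaluate $T_y$. Your bookkeeping of $\mu p^{\frac{n+s}{2}}$ is right ($\mu\sqrt{p}=\sqrt{p^*}$ in the odd case), and your unified form $\eta(-1)\eta(j-j_0)$ correctly collapses the paper's two residue-class cases. What your version buys is a self-contained, uniform proof covering weakly regular and non-weakly regular functions simultaneously; what the paper's version buys is brevity at the cost of an external dependency.
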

\begin{proof}
	If $f(x)$ is weakly regular, then by Remarks \ref{Re 1}, \ref{Re 2} and Lemma \ref{Le 4}, we easily get the results. 
	
	If $f(x)$ is non-weakly regular, then for any $j \in \mathbb{F}_p$, according to the definitions of $c_j(f^*)$ and  $d_j(f^*)$, we have that $c_j(f^*)+d_j(f^*)=N_j(f^*)$.
	From the Theorem 4.2 in \cite{Ozbudak1}, we know that when $n+s$ is even,
	$t_j(f^*)=\frac{2k}{p}-p^{n-s-1}+(\delta_0(j-j_0)p-1)p^{\frac{n-s}{2}-1}$ for any $j\in\mathbb{F}_p$; when $p\equiv1\ \mathrm{(mod\  4)}$ and $n+s$ is odd,  $t_{j}(f^*)=\frac{2k}{p}-p^{n-s-1}+\eta
	(j-j_0)p^{\frac{n-s-1}{2}}$ for any $j\in\mathbb{F}_p$; when  $p\equiv3\ \mathrm{(mod\  4)}$ and $n+s$ is odd,  $t_{j}(f^*)=\frac{2k}{p}-p^{n-s-1}-\eta
	(j-j_0)p^{\frac{n-s-1}{2}}$ for any $j\in\mathbb{F}_p$. By Lemma \ref{Le 4}, the proof can be easily completed.
\end{proof}

	Let $\mathscr{F}$ be the set of $p$-ary $s$-plateaued functions satisfying the following conditions:\\
	$(1)$ $f(0)=0$;\\
	$(2)$ $f^*(x)$ is bent relative to $\mathrm{Supp}(\widehat{\chi_f})$;\\
	$(3)$ For any $x\in V_n^{(p)}$, $a\in\mathbb{F}_p^*$, if $x\in B_+(f^*)$ (respectively $B_-(f^*)$), then $ax\in B_+(f^*)$ (respectively $B_-(f^*)$).\\
	$(4)$ There exists a positive integer $t$ with $2\le t\le p-1$ and $\mathrm{gcd}(t-1,p-1)=1$ such that $f(ax)=a^tf(x)$ for any $a\in\mathbb{F}_p^*$ and $x\in B_+(f^*)$, and there exists a positive integer $t'$ with $2\le t'\le p-1$ and $\mathrm{gcd}(t'-1,p-1)=1$ such that $f(ax)=a^{t'}f(x)$  for any $a\in\mathbb{F}_p^*$ and $x\in B_-(f^*)$.
	\begin{remark}\label{Re 3} 	For an $s$-plateaued function $f(x)$ from $V_n^{(p)}$ to $\mathbb{F}_p$ belonging to $\mathscr{F}$, it is easy to see that  $f(x)=f(-x)$ for any $x\in V_n^{(p)}$. Since $f(0)=0$, then $f(x)$ is unbalanced, i.e., $0\in \mathrm{Supp}(\widehat{\chi_f})$. When $f(x)$ is a non-weakly regular plateaued function, if $t=t'$, then $f(x)$ is a $t$-form, otherwise there does not exist a positive integer $l$ such that $f(x)$ is an $l$-form.
	\end{remark}
	
	In the following, we present a construction of plateaued functions that belong to $\mathscr{F}$. Let $f^{(z)}$ be a weakly regular bent function from $V_{n_1}^{(p)}$ to $\mathbb{F}_p$ for any $z\in\mathbb{F}_{p^{n_2}}$. We consider the variant of generalized Maiorana-McFarland (GMMF  \cite{Cesmelioglu}) function  $F:V_{n_1}^{(p)}\times \mathbb{F}_{p^{n_2}}\times\mathbb{F}_{p^{n_2}}\longrightarrow\mathbb{F}_p$ defined by 
	\begin{equation}
		F(x,y,z)=f^{(z)}(x)+Tr_1^{n_2}(yz^{l-1}),
	\end{equation}
	where $l$ is a positive integers and $\mathrm{gcd}(l-1,p^{n_2}-1)=1$.
	Define 
	\begin{align*}
		W_+(F)&=\{z\in\mathbb{F}_{p^{n_2}}: f^{(z^e)}\ \text{is\ of type}(+)\},\ 
		W_-(F)=\{z\in\mathbb{F}_{p^{n_2}}: f^{(z^e)}\ \text{is\ of type}(-)\},\\
		W_+(F^*)&=\{z\in\mathbb{F}_{p^{n_2}}: f^{{(-z)}^*}\ \text{is\ of type}(+)\},\ 
		W_-(F^*)=\{z\in\mathbb{F}_{p^{n_2}}: f^{{(-z)}^*}\ \text{is\ of type}(-)\},
	\end{align*}
	where $e$ is a positive integer and $e(l-1)\equiv 1{(\mathrm{mod}\  p^{n_2}-1)}$. By the definition of Walsh transform and Remark \ref{Re 2}, we easily know that 
	$F^*(x,y,z)=f^{(y^e)^*}(x)-Tr_{1}^{n_2}(y^ez)$,
	and 
	\begin{equation*}
		B_{\pm}(F)=V_{n_1}^{(p)}\times W_{\pm}(F)\times\mathbb{F}_{p^{n_2}},\ 
		B_{\pm}(F^*)=V_{n_1}^{(p)}\times\mathbb{F}_{p^{n_2}}\times W_{\pm}(F^*).
	\end{equation*}
	\begin{remark}\label{Re 4}
		Let $f^{(0)}(0)=0$ and $f^{(az)}=f^{(z)}$ for any $a\in\mathbb{F}_p^*$, $z\in\mathbb{F}_{p^{n_2}}^*$. Let the types of $f^{(z)}$ be the same for all $z\in\mathbb{F}_{p^{n_2}}^*$. If the types of $f^{(0)}$ and $f^{(z)}$ are the same,  then $F(x,y,z)$ is weakly regular. If the types of $f^{(0)}$ and $f^{(z)}$ are different, then $F(x,y,z)$ is  non-weakly regular. Let $f^{(0)}$ be of $t$-form ($t'$-form) and $f^{(z)}$ of $t'$-form ($t$-form) for $z\in\mathbb{F}_{p^{n_2}}^*$, where $t'\equiv l\ (\mathrm{mod}\ p-1)$ ($t\equiv l\ (\mathrm{mod}\ p-1)$). By using known weakly regular bent functions, we can construct infinitely many weakly regular and non-weakly regular bent functions belonging to $\mathscr{F}$ by Equation (3). According to \cite[Examples 3, 4]{Ozbudak1}, we can also obtain infinitely many weakly regular and non-weakly regular $s$-plateaued functions belonging to $\mathscr{F}$ with $s\ne 0$.
	\end{remark}

	In the following lemmas, we give some results about  the plateaued functions belonging to $\mathscr{F}$.
	
	Define $S_0(\alpha)=\sum_{x\in B_+(f^*)}\xi_p^{f(x)+\langle \alpha,x\rangle_n}$ and $S_1(\alpha)=\sum_{x\in B_-(f^*)}\xi_p^{f(x)+\langle \alpha,x\rangle_n}$ for any $\alpha\in V_n^{(p)}$. Let $S_0(\alpha)=0$ (respectively $S_1(\alpha)=0$) if $B_+(f^*)=\emptyset$ (respectively $B_-(f^*)=\emptyset$). In the following lemma, we give the values of $S_0(\alpha)$ and $S_1(\alpha)$.
	\begin{lemma}\label{Le 6}
		Let $f(x):V_n^{(p)}\longrightarrow\mathbb{F}_p$ be an $s$-plateaued function belonging to $\mathscr{F}$ and $f^*(x)$ be its dual, then we have the following.\\
		$\bullet$ When $n+s$ is even, for $\alpha\in \mathrm{Supp}(\widehat{\chi_f})$, $S_0(\alpha)=\frac{\epsilon_\alpha+1}{2}p^{\frac{n+s}{2}}\xi_p^{f^*(\alpha)}$, $S_1(\alpha)=\frac{\epsilon_\alpha-1}{2}p^{\frac{n+s}{2}}\xi_p^{f^*(\alpha)}$, and for $\alpha\notin \mathrm{Supp}(\widehat{\chi_f})$, $S_0(\alpha)=S_1(\alpha)=0$.\\
			$\bullet$ When $n+s$ is odd, for $\alpha\in \mathrm{Supp}(\widehat{\chi_f})$, $S_0(\alpha)=\frac{\epsilon_\alpha+\eta(-1)}{2}\sqrt{p^*}p^{\frac{n+s-1}{2}}\xi_p^{f^*(\alpha)}$, $S_1(\alpha)=\frac{\epsilon_\alpha-\eta(-1)}{2}\sqrt{p^*}\\p^{\frac{n+s-1}{2}}\xi_p^{f^*(\alpha)}$, and for $\alpha\notin \mathrm{Supp}(\widehat{\chi_f})$, $S_0(\alpha)=S_1(\alpha)=0$.	
	\end{lemma}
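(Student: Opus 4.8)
The plan is to determine $S_0(\alpha)$ and $S_1(\alpha)$ by computing the sum $S_0(\alpha)+S_1(\alpha)$ and the difference $S_0(\alpha)-S_1(\alpha)$ separately and then solving the resulting $2\times 2$ linear system. First observe that, because $f\in\mathscr{F}$, its dual $f^*$ is bent relative to $\mathrm{Supp}(\widehat{\chi_f})$, so by Remark~\ref{Re 1} the value $\widehat{\chi_{f^*}}(\alpha)$ is nonzero for every $\alpha\in V_n^{(p)}$; hence $B_+(f^*)$ and $B_-(f^*)$ are disjoint and cover $V_n^{(p)}$. Therefore
\[S_0(\alpha)+S_1(\alpha)=\sum_{x\in V_n^{(p)}}\xi_p^{f(x)+\langle\alpha,x\rangle_n}=\widehat{\chi_f}(-\alpha).\]
By Remark~\ref{Re 3} we have $f(x)=f(-x)$, and the substitution $x\mapsto-x$ then gives $\widehat{\chi_f}(-\alpha)=\widehat{\chi_f}(\alpha)$, which by the description of the Walsh spectrum of an $s$-plateaued function equals $\epsilon_\alpha\mu p^{\frac{n+s}{2}}\xi_p^{f^*(\alpha)}$ when $\alpha\in\mathrm{Supp}(\widehat{\chi_f})$ and $0$ otherwise, where $\mu$ is as defined before Lemma~\ref{Le 3}.

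For the difference, write $S_0(\alpha)-S_1(\alpha)=\sum_{x\in V_n^{(p)}}\epsilon_x^*\,\xi_p^{f(x)+\langle\alpha,x\rangle_n}$. Since $p^{n+s}\equiv p^{n-s}\ (\mathrm{mod}\ 4)$, Remark~\ref{Re 1} together with the definitions of $B_\pm(f^*)$ gives $\widehat{\chi_{f^*}}(x)=\epsilon_x^*\,\mu p^{\frac{n-s}{2}}\xi_p^{f^{**}(x)}$ with $f^{**}(x)=f(-x)=f(x)$, so that $\epsilon_x^*=\mu^{-1}p^{-\frac{n-s}{2}}\xi_p^{-f(x)}\widehat{\chi_{f^*}}(x)$. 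Substituting this into the difference, the factor $\xi_p^{f(x)}$ cancels; expanding $\widehat{\chi_{f^*}}(x)=\sum_{y\in\mathrm{Supp}(\widehat{\chi_f})}\xi_p^{f^*(y)-\langle x,y\rangle_n}$ and interchanging the two summations leaves $\mu^{-1}p^{-\frac{n-s}{2}}\sum_{y\in\mathrm{Supp}(\widehat{\chi_f})}\xi_p^{f^*(y)}\sum_{x\in V_n^{(p)}}\xi_p^{\langle x,\alpha-y\rangle_n}$. By Lemma~\ref{Le 1}(4) the inner sum is $\delta_0(\alpha-y)p^n$, so only $y=\alpha$ contributes, yielding $S_0(\alpha)-S_1(\alpha)=\mu^{-1}p^{\frac{n+s}{2}}\xi_p^{f^*(\alpha)}$ when $\alpha\in\mathrm{Supp}(\widehat{\chi_f})$ and $0$ otherwise.

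It then remains to solve for $S_0(\alpha)$ and $S_1(\alpha)$. When $\alpha\notin\mathrm{Supp}(\widehat{\chi_f})$ both the sum and the difference vanish, giving $S_0(\alpha)=S_1(\alpha)=0$. When $\alpha\in\mathrm{Supp}(\widehat{\chi_f})$, one has $S_0(\alpha)=\frac12\big((S_0+S_1)+(S_0-S_1)\big)$ and $S_1(\alpha)=\frac12\big((S_0+S_1)-(S_0-S_1)\big)$. If $n+s$ is even then $p^{n+s}\equiv1\ (\mathrm{mod}\ 4)$, so $\mu=1$ and the two stated formulas follow immediately. If $n+s$ is odd, write $p^{\frac{n+s}{2}}=p^{\frac{n+s-1}{2}}\sqrt{p}$ and verify, by distinguishing $p\equiv1$ and $p\equiv3\ (\mathrm{mod}\ 4)$, that $\mu\sqrt{p}=\sqrt{p^*}$ and $\mu^{-1}\sqrt{p}=\eta(-1)\sqrt{p^*}$; then $S_0+S_1=\epsilon_\alpha\sqrt{p^*}\,p^{\frac{n+s-1}{2}}\xi_p^{f^*(\alpha)}$ and $S_0-S_1=\eta(-1)\sqrt{p^*}\,p^{\frac{n+s-1}{2}}\xi_p^{f^*(\alpha)}$, and taking half of their sum and difference gives the claimed expressions. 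The hard part will be exactly this last piece of bookkeeping — keeping $\mu$, $\sqrt{p^*}$, $\eta(-1)$ and the branch of $\sqrt{-1}$ consistent across the parity and residue cases — together with the appeal to $f^{**}=f$, which relies on $f$ being even (Remark~\ref{Re 3}, itself a consequence of the power-map condition in the definition of $\mathscr{F}$); everything else is orthogonality (Lemma~\ref{Le 1}(4)) and linear algebra.
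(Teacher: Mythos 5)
Your proposal is correct and follows essentially the same route as the paper: the sum $S_0+S_1$ is evaluated as $\widehat{\chi_f}(-\alpha)=\widehat{\chi_f}(\alpha)$ using $f(x)=f(-x)$, and the difference $S_0-S_1$ is obtained by comparing the two evaluations of $\sum_{x}\widehat{\chi_{f^*}}(x)\xi_p^{\langle\alpha,x\rangle_n}$ (your substitution $\epsilon_x^*=\mu^{-1}p^{-\frac{n-s}{2}}\xi_p^{-f(x)}\widehat{\chi_{f^*}}(x)$ is just a rearrangement of the paper's Equations (5)--(7)), after which averaging gives the stated formulas; the $\mu$ versus $\sqrt{p^*}$ bookkeeping you flag checks out in both residue classes.
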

	\begin{proof}Since $f(x)\in\mathscr{F}$, then $f(x)=f(-x)$ for any $x\in V_n^{(p)}$. According to Remark \ref{Re 2}, when $f(x)$ is weakly regular, the results obviously hold. When $f(x)$ is non-weakly regular, by the definition of Walsh transform, we have that 
		{\small \begin{equation}S_0(\alpha)+S_1(\alpha)=\begin{cases}
					0,&\text{if}\ \alpha\notin \mathrm{Supp}(\widehat{\chi_f});\\
					\epsilon_\alpha p^{\frac{n+s}{2}}\xi_p^{f^*(\alpha)},&\text{if}\ \alpha\in \mathrm{Supp}(\widehat{\chi_f})\ \text{and}\ n+s\ \text{is\ even};\\
					\epsilon_\alpha \sqrt{p^*}p^{\frac{n+s-1}{2}}\xi_p^{f^*(\alpha)},&\text{if}\ \alpha\in \mathrm{Supp}(\widehat{\chi_f})\ \text{and}\ n+s\ \text{is\ odd}.
				\end{cases}		
		\end{equation}}
		According to Lemma \ref{Le 1}, we know that 
		{\small \begin{equation}
				\begin{split}
					\sum\limits_{x\in V_n^{(p)}}\widehat{\chi_{f^*}}(x)\xi_p^{\langle\alpha,x\rangle_n}
					&=\sum\limits_{y\in\mathrm{Supp}(\widehat{\chi_f})}\xi_p^{f^*(y)}\sum\limits_{x\in V_n^{(p)}}\xi_p^{\langle \alpha-y, x\rangle_n}\\
					&=\begin{cases}
						0,&\text{if}\ \alpha\notin \mathrm{Supp}(\widehat{\chi_f});\\
						p^{n}\xi_p^{f^*(\alpha)},&\text{if}\ \alpha\in \mathrm{Supp}(\widehat{\chi_f}).
					\end{cases}
				\end{split}
		\end{equation}}
		
		One the other hand, since $f^{**}(x)=f(-x)=f(x)$ for any $x\in V_{n}^{(p)}$, then we have that
		
		{\small \begin{equation}
				\sum\limits_{x\in V_n^{(p)}}\widehat{\chi_{f^*}}(x)\xi_p^{\langle\alpha,x\rangle_n}=\begin{cases}
					\sum\limits_{x\in V_n^{(p)}}\epsilon_x^*p^{\frac{n-s}{2}}\xi_p^{f(x)+\langle\alpha,x\rangle_n},&\text{if}\ n+s\ \text{is\ even};\\
					\sum\limits_{x\in V_n^{(p)}}\epsilon_x^*\sqrt{p^*}p^{\frac{n-s-1}{2}}\xi_p^{f(x)+\langle\alpha,x\rangle_n},&\text{if}\ n+s\ \text{is\ odd}.
				\end{cases}
		\end{equation}}
		
		Hence, by Equations (5) and (6), we get that 
		{\small \begin{equation}S_0(\alpha)-S_1(\alpha)=\begin{cases}
					0,&\hspace{-0.3cm}\text{if}\ \alpha\notin \mathrm{Supp}(\widehat{\chi_f});\\
					p^{\frac{n+s}{2}}\xi_p^{f^*(\alpha)},&\hspace{-0.3cm}\text{if}\ \alpha\in \mathrm{Supp}(\widehat{\chi_f})\ \text{and}\ n+s\ \text{is\ even}; \\ \eta(-1)\sqrt{p^*}p^{\frac{n+s-1}{2}}\xi_p^{f^*(\alpha)},&\hspace{-0.3cm}\text{if}\ \alpha\in \mathrm{Supp}(\widehat{\chi_f})\ \text{and}\ n+s\ \text{is\ odd}.
				\end{cases}		
		\end{equation}}
		By Equations (4) and (7), we easily obtain the values of $S_0(\alpha)$ and $S_1(\alpha)$.\end{proof}
	
		\begin{lemma}\label{Le 8}
		Let $f(x):V_n^{(p)}\longrightarrow\mathbb{F}_p$ be an $s$-plateaued function  and $f^*(x)$ be its dual. If $f(0)=0$ and $f(x)=f(-x)$ for any $x\in V_n^{(p)}$, then $f^*(0)=0$.
	\end{lemma}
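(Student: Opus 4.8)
The plan is to extract the value of $f^*(0)$ from the Walsh transform of $f^*$ evaluated at $0$, using the hypotheses $f(0)=0$ and $f(x)=f(-x)$. First I would recall that since $f(x)=f(-x)$, the Walsh transform of $f$ satisfies $\widehat{\chi_f}(\alpha)=\widehat{\chi_f}(-\alpha)$ for all $\alpha$, so $\mathrm{Supp}(\widehat{\chi_f})$ is symmetric and $0\in\mathrm{Supp}(\widehat{\chi_f})$ precisely when $f$ is unbalanced; but we do not even need to decide that here. The key computation is to evaluate $\sum_{\alpha\in\mathrm{Supp}(\widehat{\chi_f})}\widehat{\chi_f}(\alpha)$ in two ways. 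On one hand, exchanging the order of summation, $\sum_{\alpha\in V_n^{(p)}}\widehat{\chi_f}(\alpha)=\sum_{x\in V_n^{(p)}}\xi_p^{f(x)}\sum_{\alpha\in V_n^{(p)}}\xi_p^{-\langle\alpha,x\rangle_n}=p^n\xi_p^{f(0)}=p^n$ by Lemma \ref{Le 1}(4) and $f(0)=0$; and the sum over $V_n^{(p)}$ equals the sum over $\mathrm{Supp}(\widehat{\chi_f})$ since $\widehat{\chi_f}$ vanishes off its support.

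On the other hand, using the explicit form of $\widehat{\chi_f}(\alpha)$ for an $s$-plateaued function, we have for each $\alpha\in\mathrm{Supp}(\widehat{\chi_f})$ that $\widehat{\chi_f}(\alpha)=\pm\mu\, p^{\frac{n+s}{2}}\xi_p^{f^*(\alpha)}$ where $\mu\in\{1,\sqrt{-1}\}$ depends only on $p^{n+s}\bmod 4$. Hence $p^n=\sum_{\alpha\in\mathrm{Supp}(\widehat{\chi_f})}\widehat{\chi_f}(\alpha)$ is of the form $\mu\, p^{\frac{n+s}{2}}\cdot\theta$ where $\theta=\sum_\alpha \epsilon_\alpha\xi_p^{f^*(\alpha)}\in\mathbb{Z}[\xi_p]$. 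The point is now a rationality/Galois argument: $p^n/(\mu p^{\frac{n+s}{2}})$ must lie in $\mathbb{Z}[\xi_p]$, and in fact it is a rational number (a power of $p$ times possibly $\sqrt{p^*}^{-1}$), so it lies in $\mathbb{Q}\cap\mathbb{Z}[\xi_p]=\mathbb{Z}$ when $n+s$ is even, and in $\mathbb{Q}(\sqrt{p^*})$ when $n+s$ is odd. Applying $\sigma_a$ for $a\in\mathbb{F}_p^*$ and comparing, one sees that $\theta$ must be fixed by the subgroup of the Galois group that fixes the relevant field; writing $\theta=\sum_{j\in\mathbb{F}_p}(c_j-d_j)\xi_p^j$ with $c_j=\#\{f^*=j,\ \alpha\in B_+\}$, $d_j=\#\{f^*=j,\ \alpha\in B_-\}$, Galois-invariance forces $c_j-d_j$ to be constant over $j\in\mathbb{F}_p^*$, so $\theta = (c_0-d_0) + C\sum_{j\in\mathbb{F}_p^*}\xi_p^j = (c_0-d_0-C)$ for some constant $C$. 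Thus $\theta$ is a rational integer, and comparing with $p^n=\mu p^{\frac{n+s}{2}}\theta$ (taking absolute values, or using that the left side is real positive) pins down the sign and shows that the contribution isolating $\xi_p^{f^*(0)}$ forces $f^*(0)=0$: indeed, rerunning the same computation but weighting by $\xi_p^{-\langle\alpha,x\rangle_n}$ only picks out $x=0$, and the only way the identity $p^n = \mu p^{\frac{n+s}{2}}\xi_p^{f^*(0)}\cdot(\text{integer})+(\text{multiple of }\sum_{j\ne f^*(0)}\xi_p^{j})$ can hold with $p^n\in\mathbb{Z}$ is if $\xi_p^{f^*(0)}$ carries no nontrivial root-of-unity part, i.e. $f^*(0)=0$.

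I expect the main obstacle to be bookkeeping the two cases $n+s$ even versus odd cleanly: in the odd case $\mu p^{\frac{n+s}{2}}$ involves $\sqrt{p^*}$, so the right-hand side of $p^n = \mu p^{\frac{n+s}{2}}\theta$ lives a priori in $\mathbb{Q}(\sqrt{p^*})$ rather than $\mathbb{Q}$, and one must use that $\sum_{y\in\mathbb{F}_p^*}\eta(y)\xi_p^y=\sqrt{p^*}$ (Lemma \ref{Le 1}(2)) together with the symmetry $f^*(\alpha)=f^*(-\alpha)$ — which follows from $f(x)=f(-x)$ via $\widehat{\chi_f}(-\alpha)=\overline{\widehat{\chi_f}(\alpha)}$ comparisons — to see that the $\sqrt{p^*}$-components match up so that $\theta$ effectively reduces the equation back to an integer equation that forces $f^*(0)=0$. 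An alternative, possibly cleaner, route: directly compute $\sum_{\alpha\in\mathrm{Supp}(\widehat{\chi_f})}\widehat{\chi_f}(\alpha)\,\overline{\xi_p^{f^*(\alpha)}}$, or work with $N_{j}(f^*)$-type counts as in Lemma \ref{Le 4}; but the hypotheses here are weaker than those of Lemma \ref{Le 4} (no assumption that $f^*$ is bent relative to its support), so I would stick with the direct Gauss-sum/Galois argument above, which only uses the basic shape of $\widehat{\chi_f}$ and $f(0)=f(-0)=0$.
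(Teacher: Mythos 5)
Your proposal has a genuine gap at its final step. The global identity $\sum_{\alpha\in V_n^{(p)}}\widehat{\chi_f}(\alpha)=p^n\xi_p^{f(0)}=p^n$, rewritten as $p^n=\mu p^{\frac{n+s}{2}}\theta$ with $\theta=\sum_{j\in\mathbb{F}_p}(c_j-d_j)\xi_p^j$, only constrains the aggregate counts $c_j-d_j$ over the whole Walsh support: the rationality/Galois step shows they are constant on $\mathbb{F}_p^*$ when $n+s$ is even (and constant on $SQ$ and on $NSQ$ when $n+s$ is odd), but this conclusion is symmetric in the nonzero values of $j$ and carries no information that singles out the value of $f^*$ at the particular point $\alpha=0$. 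The decomposition you then invoke, $p^n=\mu p^{\frac{n+s}{2}}\xi_p^{f^*(0)}\cdot(\text{integer})+(\text{multiple of }\sum_{j\ne f^*(0)}\xi_p^j)$, would require $c_j-d_j$ to be constant on $\mathbb{F}_p\setminus\{f^*(0)\}$, which is what the Galois step delivers only when $f^*(0)=0$; so that step presupposes the conclusion. As written, nothing in your argument would break if $f^*(0)$ were some $j_0\ne 0$.

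The missing ingredient is a parity argument coming from the involution $x\mapsto -x$, and that is the entire content of the paper's (three-line) proof: since $p$ is odd, $x\mapsto -x$ pairs up the nonzero elements of each level set $D_{f,j}$, so $N_j(f)$ is even for every $j\ne 0$ while $N_0(f)$ is odd (it contains the fixed point $0$); on the other hand, Lemma \ref{Le 3} --- which is nothing but the expansion of the single value $\widehat{\chi_f}(0)=\sum_{j}N_j(f)\xi_p^j=\epsilon_0\mu p^{\frac{n+s}{2}}\xi_p^{f^*(0)}$ in the integral basis --- shows that $N_{f^*(0)}(f)$ is the unique level set of odd cardinality, whence $f^*(0)=0$. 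Your route can be repaired in the same spirit: you already observe that $f(x)=f(-x)$ gives $\widehat{\chi_f}(-\alpha)=\widehat{\chi_f}(\alpha)$, hence $f^*(-\alpha)=f^*(\alpha)$ and $\epsilon_{-\alpha}=\epsilon_\alpha$, so $c_j-d_j$ is even for all $j\ne f^*(0)$ and odd for $j=f^*(0)$; combining this parity with the Galois-forced relations among the $c_j-d_j$ on $\mathbb{F}_p^*$ (a set of at least two elements) rules out $f^*(0)\ne 0$. But the proposal stops short of this, and the concluding sentence does not follow from the computations that precede it; working with $\widehat{\chi_f}(0)$ and the $N_j(f)$ directly, as the paper does, is both shorter and avoids the even/odd bookkeeping for $\sqrt{p^*}$ entirely.
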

	\begin{proof}
		Since $f(0)=0$ and $f(x)=f(-x)$ for any $x\in V_n^{(p)}$, then $N_0(f)$ is odd and $N_j(f)$ is even for any $j\in\mathbb{F}_p^*$. Assume that $f^*(0)=i$, $i\ne 0$, then by Lemma \ref{Le 3}, we have that $N_i(f)$ is odd, which gives a contradiction. Thus, $f^*(0)=0$.
	\end{proof}
	\begin{lemma}\label{Le 7}
		Let $f(x):V_n^{(p)}\longrightarrow\mathbb{F}_p$ be an $s$-plateaued function belonging to $\mathscr{F}$,  $\#B_+(f)=k$ with $0\le k\le p^{n-s}$, and $f^*(x)$ be its dual, then\\
		$(1)$ For any $a\in\mathbb{F}_p^*$ and $\alpha\in V_n^{(p)}$, if $\alpha\notin \mathrm{Supp}(\widehat{\chi_f})$, then $a\alpha\notin \mathrm{Supp}(\widehat{\chi_f})$, and if $\alpha\in B_+(f)$ (respectively $B_-(f)$), then $a\alpha\in B_+(f)$ (respectively $B_-(f)$).\\
		$(2)$ There exists a positive integer $h$ with $\mathrm{gcd}(h-1,p-1)=1$ such that $f^*(a\alpha)=a^hf^*(\alpha)$ for any $a\in\mathbb{F}_p^*$, $\alpha\in B_+(f)$, and there exists a positive integer $h'$ with $\mathrm{gcd}(h'-1,p-1)=1$ such that $f^*(a\alpha)=a^{h'}f^*(\alpha)$ for any $a\in\mathbb{F}_p^*$, $\alpha\in B_-(f)$.\\
		$(3)$ The types of $f(x)$ and $f^*(x)$ are the same if $p^{n+s}\equiv1\ (\mathrm{mod}\ 4)$, and different if $p^{n+s}\equiv3\ (\mathrm{mod}\ 4)$.
	\end{lemma}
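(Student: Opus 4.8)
The plan is to route everything through the partial Walsh sums $S_0(\alpha)=\sum_{x\in B_+(f^*)}\xi_p^{f(x)+\langle\alpha,x\rangle_n}$ and $S_1(\alpha)=\sum_{x\in B_-(f^*)}\xi_p^{f(x)+\langle\alpha,x\rangle_n}$, whose values on and off $\mathrm{Supp}(\widehat{\chi_f})$ are already pinned down by Lemma \ref{Le 6}. Since $f\in\mathscr{F}$ we have $f(0)=0$ and $f(x)=f(-x)$ (Remark \ref{Re 3}), hence $0\in\mathrm{Supp}(\widehat{\chi_f})$ and $f^*(0)=0$ (Lemma \ref{Le 8}), and $f^{**}(x)=f(-x)=f(x)$ (Remark \ref{Re 1}); moreover $V_n^{(p)}=B_+(f^*)\sqcup B_-(f^*)$ because $f^*$ is bent relative to $\mathrm{Supp}(\widehat{\chi_f})$, each piece is stable under multiplication by $\mathbb{F}_p^*$ (condition $(3)$ of $\mathscr{F}$), and the change of variables $x\mapsto -x$ together with $f(-x)=f(x)$ gives $\widehat{\chi_f}(\alpha)=S_0(\alpha)+S_1(\alpha)$ for every $\alpha$. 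The engine of the argument is one identity: fix $a\in\mathbb{F}_p^*$ and pick positive integers $e',e''$ with $e'(t-1)\equiv e''(t'-1)\equiv 1\pmod{p-1}$ (so $\gcd(e',p-1)=\gcd(e'',p-1)=1$ and, as $p-1$ is even, $e',e''$ are odd). Substituting $x=a^{e'}u$ in $S_0(a\alpha)$ and applying condition $(4)$ of $\mathscr{F}$ on $B_+(f^*)$ in the form $f(a^{e'}u)=a^{e't}f(u)=a^{1+e'}f(u)$ together with bilinearity of $\langle\cdot,\cdot\rangle_n$ yields $S_0(a\alpha)=\sigma_{a^{1+e'}}\bigl(S_0(\alpha)\bigr)$ for all $\alpha$; symmetrically $S_1(a\alpha)=\sigma_{a^{1+e''}}\bigl(S_1(\alpha)\bigr)$.

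From this, $(1)$ and $(2)$ follow. If $\alpha\notin\mathrm{Supp}(\widehat{\chi_f})$ then $S_0(\alpha)=S_1(\alpha)=0$ by Lemma \ref{Le 6}, so $\widehat{\chi_f}(a\alpha)=S_0(a\alpha)+S_1(a\alpha)=0$, i.e.\ $a\alpha\notin\mathrm{Supp}(\widehat{\chi_f})$. If $\alpha\in\mathrm{Supp}(\widehat{\chi_f})$, Lemma \ref{Le 6} shows that exactly one of $S_0(\alpha),S_1(\alpha)$ vanishes while the other equals $\widehat{\chi_f}(\alpha)=\epsilon_\alpha\mu p^{\frac{n+s}{2}}\xi_p^{f^*(\alpha)}$, after a short case analysis in the parities of $n+s$ and of $p$ (using that $\mu p^{\frac{n+s}{2}}$ equals the integer $p^{\frac{n+s}{2}}$ when $n+s$ is even and $\sqrt{p^*}\,p^{\frac{n+s-1}{2}}$ when $n+s$ is odd). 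Applying the relevant $\sigma_{a^{1+e'}}$ or $\sigma_{a^{1+e''}}$, and using $\eta(a^{1+e'})=\eta(a^{1+e''})=1$ because the exponents are even, we obtain $\widehat{\chi_f}(a\alpha)=\epsilon_\alpha\mu p^{\frac{n+s}{2}}\xi_p^{c f^*(\alpha)}$ with $c\in\{a^{1+e'},a^{1+e''}\}$. This is nonzero, so $a\alpha\in\mathrm{Supp}(\widehat{\chi_f})$; comparing with $\widehat{\chi_f}(a\alpha)=\epsilon_{a\alpha}\mu p^{\frac{n+s}{2}}\xi_p^{f^*(a\alpha)}$ and using that $-1$ is not a power of $\xi_p$ for odd $p$ forces $\epsilon_{a\alpha}=\epsilon_\alpha$ (completing $(1)$) together with $f^*(a\alpha)=c f^*(\alpha)$. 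Taking $h$ (resp.\ $h'$) to be whichever of $1+e'$, $1+e''$ occurs for $\alpha\in B_+(f)$ (resp.\ $\alpha\in B_-(f)$) gives $(2)$, since $h-1\in\{e',e''\}$ is coprime to $p-1$.

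For $(3)$ I would evaluate $\sum_{\alpha\in V_n^{(p)}}\widehat{\chi_f}(\alpha)^2$ in two ways. Expanding the definition and using Lemma \ref{Le 1}(4) and $f(-x)=f(x)$ gives $p^n\sum_{x}\xi_p^{2f(x)}=p^n\sigma_2\bigl(\widehat{\chi_f}(0)\bigr)$, whereas the plateaued shape $\widehat{\chi_f}(\alpha)=\epsilon_\alpha\mu p^{\frac{n+s}{2}}\xi_p^{f^*(\alpha)}$ on $\mathrm{Supp}(\widehat{\chi_f})$ together with $\epsilon_\alpha^2=1$ gives $\mu^2 p^{n+s}\sigma_2\bigl(\widehat{\chi_{f^*}}(0)\bigr)$. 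Applying $\sigma_2^{-1}$ yields $\widehat{\chi_f}(0)=\mu^2 p^{s}\widehat{\chi_{f^*}}(0)$. Since $f^*(0)=0$ and $f^{**}(0)=f(0)=0$, we have $\widehat{\chi_f}(0)=\epsilon_0\mu p^{\frac{n+s}{2}}$ and $\widehat{\chi_{f^*}}(0)=\epsilon_0^*\mu p^{\frac{n-s}{2}}$ (the root-of-unity factor attached to $\widehat{\chi_{f^*}}$ coincides with $\mu$ because $p^{2n}\equiv 1\pmod{4}$), so the identity collapses to $\epsilon_0=\epsilon_0^*\mu^2$. As $\mu^2=1$ exactly when $p^{n+s}\equiv 1\pmod{4}$, this says that the types of $f$ and $f^*$ agree precisely in that case, which is $(3)$.

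I expect the main obstacle to be the non-weakly-regular case, where $f$ need not be an $l$-form but only an $a^t$-multiplier on $B_+(f^*)$ and an $a^{t'}$-multiplier on $B_-(f^*)$, so a single change of variables on all of $V_n^{(p)}$ is unavailable; this is exactly what forces the split into $S_0,S_1$, the bookkeeping of which of them carries $\widehat{\chi_f}(\alpha)$ in each parity regime, and the use of $\gcd(t-1,p-1)=\gcd(t'-1,p-1)=1$ and closure condition $(3)$ of $\mathscr{F}$. (For weakly regular $f$ one of $B_\pm(f^*)$ is empty and everything, including $(3)$, reduces to Remarks \ref{Re 1} and \ref{Re 2}.) A secondary point is that $(3)$ genuinely needs the second-moment identity above and is not a consequence of $(1)$--$(2)$ alone.
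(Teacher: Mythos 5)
Your proof of parts (1) and (2) is correct and follows essentially the same route as the paper: the paper also works with $S_0,S_1$, sets $h=m+1$, $h'=m'+1$ with $m(t-1)\equiv m'(t'-1)\equiv 1\pmod{p-1}$, derives $S_0(a\alpha)=\sigma_{a^h}(S_0(\alpha))$ and $S_1(a\alpha)=\sigma_{a^{h'}}(S_1(\alpha))$ (citing the analogous computation in \cite[Lemma 6]{Mesnager3} rather than writing out the substitution $x=a^{e'}u$ as you do), and then reads off (1) and (2) from Lemma \ref{Le 6}; your explicit bookkeeping of which of $S_0,S_1$ carries $\widehat{\chi_f}(\alpha)$ in each parity regime, and of the resulting assignment of $1+e'$ versus $1+e''$ to $B_+(f)$ versus $B_-(f)$, is exactly the detail the paper leaves implicit. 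For part (3), however, you take a genuinely different route. The paper argues by parity: since $f(x)=f(-x)$ and $\epsilon_x=\epsilon_{-x}$, the set $B_+(f)$ is symmetric under negation, so $k=\#B_+(f)$ is odd when $f$ is of type $(+)$, and the counts $c_j(f^*)$, $d_j(f^*)$ from Lemma \ref{Le 5} for $j\in\mathbb{F}_p^*$ must be even (their fibers pair up under $x\mapsto -x$), which contradicts the parities forced by the formulas in Lemma \ref{Le 5} under the wrong type assumption. Your argument instead evaluates $\sum_{\alpha}\widehat{\chi_f}(\alpha)^2$ two ways to get $\widehat{\chi_f}(0)=\mu^2p^s\widehat{\chi_{f^*}}(0)$ and hence $\epsilon_0=\epsilon_0^*\mu^2$ directly; this is clean, avoids the counting formulas of Lemma \ref{Le 5} altogether (and their $k/p$ integrality subtleties), and only uses $f(-x)=f(x)$, $f(0)=f^*(0)=0$ and the bentness of $f^*$ relative to $\mathrm{Supp}(\widehat{\chi_f})$. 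Both approaches are valid; yours for (3) is arguably more transparent, while the paper's stays within the combinatorial framework it has already set up in Lemma \ref{Le 5}.
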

	\begin{proof}
		 	Since $\mathrm{gcd}(t-1,p-1)=1$ and $\mathrm{gcd}(t'-1,p-1)=1$, then there exist two positive integers $m$ and $m'$ such that $m(t-1)\equiv 1(\mathrm{mod}\ p-1)$ and $m'(t'-1)\equiv 1(\mathrm{mod}\ p-1)$. Let $h=m+1$ and $h'=m'+1$. For any $a\in\mathbb{F}_p^*$, since $B_+(f^*)=aB_+(f^*)$ and $B_-(f^*)=aB_-(f^*)$, 
		 by the similar discussion in \cite[Lemma 6]{Mesnager3}, we have that for any  $\alpha\in V_n^{(p)}$, $S_0(a\alpha)=\sigma_{a^h}(S_0(\alpha))$ and $S_1(a\alpha)=\sigma_{a^{h'}}(S_1(\alpha))$. According to  Lemma \ref{Le 6}, we easily get the results of (1) and (2).
		 
		  When $f(x)$ is weakly regular, then the result directly follows from Remark \ref{Re 2}. When $f(x)$ is non-weakly regular, we only prove the case when $f(x)$ is of type $(+)$, as the case when $f(x)$ is of type $(-)$ is similar. Note that for any $x\in V_n^{(p)}$, if $x\in B_+(f)$, then $-x\in B_+(f)$, so  $k$ is odd. 
		When $p^{n+s}\equiv1\ (\mathrm{mod}\ 4)$, 
		assume that $f^*(x)$ is of type $(-)$. By Lemma \ref{Le 5}, for any $j\in\mathbb{F}_p^*$, $c_j(f^*)$ and $d_j(f^*)$ are odd. Note that $f(x)=f(-x)$ for any $x\in V_n^{(p)}$, then $f^*(0)=0$. Since  $\epsilon_x=\epsilon_{-x}$, $f^*(x)=f^*(-x)$ for any $x\in V_n^{(p)}$,  then for any $j\in\mathbb{F}_p^*$, $c_j(f^*)$ and $d_j(f^*)$ should be even, which gives a contradiction. Hence, $f^*(x)$ is of type $(+)$. When $p^{n+s}\equiv3\ (\mathrm{mod}\ 4)$, by the similar discussion, we easily have that $f^*(x)$ is of type $(-)$. The proof of $(3)$ is now completed.
	\end{proof}
	\subsection{Linear codes and quantum codes}
A $p$-ary $[n, k, d]$ linear code $\mathcal{C}$ is a $k$-dimensional subspace of $\mathbb{F}_p^n$ with minimum Hamming distance $d$. A \emph{generator matrix} for $\mathcal{C}$ is any $k\times n$ matrix $G$ whose
rows form a basis for $\mathcal{C}$. Let $T$ be a set of $t$ coordinate positions in $\mathcal{C}$. The \emph{punctured\ code} of $\mathcal{C}$, defined by $\widetilde{\mathcal{C}}$, can be obtained by puncturing $\mathcal{C}$ on $T$. The linear code defined by $\mathcal{C}^{\bot}=\{x\in\mathbb{F}^{n}_{p}:\ x\cdot y=0\ \text{for all}\  y\in \mathcal{C}\}$ is called the \emph{dual\ code} of $\mathcal{C}$. 
Let $A_j$ be the number of codewords with Hamming weight $j$ in $\mathcal{C}$ for $0\le j\le n$. Then $(1,\ A_1,\cdots,\ A_n)$ is called the \emph{weight\  distribution} of $\mathcal{C}$ and $1+A_1z+\cdots+A_nz^n$ is the \emph{weight\  enumerator} of $\mathcal{C}$. The \emph{sphere packing bound} on  an $[n, k, d]$ linear code over $\mathbb{F}_p$ is given by 
	\[p^{n}\ge p^k\sum_{j=0}^{\lfloor\frac{d-1}{2}\rfloor}\binom{n}{j}(p-1)^j.\]
	An $[n, k, d]$ linear code is said to be \emph{optimal} if there does not exist $[n, k +1, d]$ or $[n, k, d +1]$ linear code. An $[n, k, d]$ linear code is said to be \emph{almost optimal} if there exists an $[n, k +1, d]$ or $[n, k, d +1]$ optimal linear code.
	
	For a linear code $\mathcal{C}$ over $\mathbb{F}_p$, if $\mathcal{C}\subseteq \mathcal{C}^{\bot}$, then $\mathcal{C}$ is called \emph{self-orthogonal}. In particular, if $\mathcal{C}=\mathcal{C}^{\bot}$,
	then $\mathcal{C}$ is said to be \emph{self-dual}. When $p=3$,  $\mathcal{C}$ is self-orthogonal if and
	only if every codeword of $\mathcal{C}$ has weight divisible by three \cite{Huffman}. For a 
	general $p$-ary linear code $\mathcal{C}$, the following proposition gives a way to judge whether $\mathcal{C}$ is self-orthogonal or not.
	\begin{proposition}[\cite{Wan}]\label{Po 3}
		Let $\mathcal{C}$ be a $p$-ary linear code, then $\mathcal{C}$ is self-orthogonal if and only if $c\cdot c=0$ for all $c\in \mathcal{C}$.
	\end{proposition}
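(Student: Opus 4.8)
The plan is to establish the two implications separately, with essentially all of the (very small) content in the ``if'' direction via a polarization argument. First I would dispose of the ``only if'' direction, which is immediate from the definition of the dual code: if $\mathcal{C}\subseteq\mathcal{C}^{\bot}$, then every $c\in\mathcal{C}$ satisfies $c\cdot y=0$ for all $y\in\mathcal{C}$, and specializing $y=c$ gives $c\cdot c=0$.

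For the converse I would assume $c\cdot c=0$ for all $c\in\mathcal{C}$ and aim to upgrade this to $c\cdot c'=0$ for every pair $c,c'\in\mathcal{C}$, which is exactly the assertion $\mathcal{C}\subseteq\mathcal{C}^{\bot}$. The key step is to use that $\mathcal{C}$ is \emph{linear}, so $c+c'\in\mathcal{C}$, whence the hypothesis applies to the codeword $c+c'$ as well. Expanding via the bilinearity and symmetry of the standard inner product on $\mathbb{F}_{p}^{n}$,
\[
0=(c+c')\cdot(c+c')=c\cdot c+2\,(c\cdot c')+c'\cdot c'=2\,(c\cdot c'),
\]
since the outer two terms vanish by hypothesis. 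As $p$ is an odd prime, $2$ is a unit in $\mathbb{F}_{p}$, so dividing by $2$ yields $c\cdot c'=0$, which finishes the proof.

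I do not expect any genuine obstacle: the argument is a one-line polarization identity. The only point worth flagging is that the division by $2$ really does use $p\neq 2$, consistent with the paper's standing assumption that $p$ is odd. One may also note that this recovers the criterion quoted just before the proposition in the case $p=3$: for a ternary vector $c$ one has $c\cdot c=\sum_{i}c_{i}^{2}=\mathrm{wt}(c)$ in $\mathbb{F}_{3}$, because $a^{2}\in\{0,1\}$ for every $a\in\mathbb{F}_{3}$, so $c\cdot c=0$ for all $c\in\mathcal{C}$ is equivalent to every codeword having weight divisible by three.
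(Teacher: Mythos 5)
Your proof is correct: the ``only if'' direction is immediate and the ``if'' direction is the standard polarization argument $(c+c')\cdot(c+c')=2\,(c\cdot c')$, which uses exactly the standing hypothesis that $p$ is odd. The paper itself states this proposition without proof, citing \cite{Wan}, and your argument is precisely the standard one given there, so there is nothing to compare beyond noting agreement.
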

	If $\mathcal{C}\cap\mathcal{C}^{\bot}=\{0\}$, then $\mathcal{C}$ is called a \emph{linear complementary dual code} (LCD code). Obviously, the dual code $\mathcal{C}^{\bot}$ of an LCD code $\mathcal{C}$ is also an LCD code. The following proposition presents a sufficient and necessary condition for a linear code $\mathcal{C}$ to be an LCD code.
	\begin{proposition}[\cite{Massey}]\label{Po 4}
		Let $\mathcal{C}$ be a linear code with a generator matrix $G$ over $\mathbb{F}_p$. Then $\mathcal{C}$ is
		an LCD code if and only if $GG^{T}$ is nonsingular, where $G^{T}$ denotes the transpose of
		$G$.
	\end{proposition}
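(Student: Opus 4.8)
The plan is to reduce the condition $\mathcal{C}\cap\mathcal{C}^{\bot}=\{0\}$ to a statement about the kernel of the square matrix $GG^{T}$, and then read off nonsingularity. Write $G$ as a $k\times n$ matrix whose rows $g_1,\dots,g_k$ form a basis of $\mathcal{C}$ (so $k=\dim\mathcal{C}$), and consider the linear map $\varphi\colon\mathbb{F}_p^{k}\to\mathcal{C}$, $\varphi(x)=xG$; since the rows of $G$ are linearly independent, $\varphi$ is a bijection. The first step is to characterise when a codeword $c=xG$ also lies in $\mathcal{C}^{\bot}$: because $g_1,\dots,g_k$ span $\mathcal{C}$, we have $c\in\mathcal{C}^{\bot}$ if and only if $c\cdot g_i=0$ for all $i$, which in matrix form is $Gc^{T}=0$, i.e. $G(xG)^{T}=GG^{T}x^{T}=0$. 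Hence
\[
\mathcal{C}\cap\mathcal{C}^{\bot}=\{\,xG:\ x\in\mathbb{F}_p^{k},\ GG^{T}x^{T}=0\,\}=\varphi\big(\ker(GG^{T})\big).
\]

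The second step is immediate from the injectivity of $\varphi$: since $\varphi$ is one-to-one, $\mathcal{C}\cap\mathcal{C}^{\bot}=\{0\}$ holds if and only if $\ker(GG^{T})=\{0\}$, that is, if and only if the homogeneous system $GG^{T}x^{T}=0$ admits only the trivial solution $x=0$. As $GG^{T}$ is a $k\times k$ matrix over $\mathbb{F}_p$, this is precisely the assertion that $GG^{T}$ is nonsingular, which gives both implications of the claimed equivalence at once.

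There is essentially no analytic obstacle here; the only point that deserves care — and the reason the criterion is not vacuous — is that over $\mathbb{F}_p$ the Gram matrix $GG^{T}$ may be singular even though $G$ has full row rank $k$ (unlike the real case with a positive-definite inner product), so one cannot shortcut the argument by invoking $\mathrm{rank}(GG^{T})=\mathrm{rank}(G)$. The genuine content lies in the elementary reduction used in the first step, namely that orthogonality of a vector to all of $\mathcal{C}$ is equivalent to its orthogonality to the basis rows $g_1,\dots,g_k$, combined with the bijectivity of $x\mapsto xG$. One could alternatively phrase the conclusion through the direct-sum decomposition $\mathbb{F}_p^{n}=\mathcal{C}\oplus\mathcal{C}^{\bot}$, using $\dim\mathcal{C}+\dim\mathcal{C}^{\bot}=n$, but the kernel computation above yields the stated matrix criterion most directly.
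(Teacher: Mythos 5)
Your proof is correct: the identification $\mathcal{C}\cap\mathcal{C}^{\bot}=\varphi(\ker(GG^{T}))$ via the bijection $x\mapsto xG$ and the observation that orthogonality to $\mathcal{C}$ reduces to orthogonality to the basis rows is exactly the standard argument for Massey's criterion, and your caveat that $GG^{T}$ can be singular over $\mathbb{F}_p$ despite $G$ having full row rank is the right point to flag. The paper itself gives no proof (it cites \cite{Massey}), so there is nothing to compare against; your argument matches the classical one.
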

	
	A $p$-ary $[[n, k, d]]_p$ quantum code $Q$ with length $n$ and minimum distance $d$ is a $K$-dimensional subspace of the $p^n$-dimensional Hilbert space $\mathbb{C}^{p^n}$, where $k=\mathrm{log}_pK$. For any  two codewords $v_1, v_2 \in Q$ and any quantum error operator 
	$E$ on $\mathbb{C}^{p^n}$ with $1\le w_Q(E)\le  d-1$,  if $\langle v_1|E|v_2\rangle=0$, where $w_Q(E)$ is the  quantum weight of $E$ and  
	$\langle v_1|E|v_2\rangle$ is the Hermitian inner product of $|v_1\rangle$ and $E|v_2\rangle$, then $Q$ is called a \emph{pure quantum code}. The \emph{quantum Hamming bound} on an $[[n,k,d]]_p$ pure quantum code
	is given by 
	\[p^{n-k}\ge \sum_{j=0}^{\lfloor\frac{d-1}{2}\rfloor}\binom{n}{j}(p^2-1)^j.\]
	An $[[n, k, d]]_p$ quantum code is said to be \emph{optimal} if there does not exist $[[n, k, d +1]]_p$ quantum code. An $[[n, k, d]]_p$ quantum code is said to be \emph{almost optimal} if there exists  an $[[n, k, d +1]]_p$ optimal quantum code. For more information on quantum codes, the reader is referred to \cite{Feng}. The following proposition gives the generalized Steane's enlargement 
	construction by which the quantum codes can be obtained by self-orthogonal linear codes.
	\begin{proposition}[\cite{Ling}]\label{Po 5}
		Let $\mathcal{C}_1$  and $\mathcal{C}_2$ respectively be $[n,k_1,d_1]$ and $[n,k_2,d_2]$ linear codes over $\mathbb{F}_p$. If $\mathcal{C}_1^{\bot}\subseteq \mathcal{C}_1\subseteq \mathcal{C}_2$ and $k_2\ge k_1+2$, then a pure $p$-ary quantum code of parameters $\left[\left[n,k_1+k_2-n,\mathrm{min}\{d_1,\lceil(1+\frac{1}{p})d_2\rceil \}\right]\right]_p$ can be constructed.
	\end{proposition}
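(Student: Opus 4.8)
The plan is to realize the claimed code through the stabilizer (symplectic) formalism over $\mathbb{F}_p$ and to follow the blueprint of Steane's enlargement construction. Concretely, I would reduce the statement to: (a) exhibiting a symplectic self-orthogonal $\mathbb{F}_p$-subspace $S\subseteq\mathbb{F}_p^{2n}$ with $\dim S=2n-k_1-k_2$, and (b) lower-bounding the minimum symplectic weight $\mathrm{wt}(a\mid b)=\#\big(\mathrm{supp}(a)\cup\mathrm{supp}(b)\big)$ over $(a\mid b)\in S^{\perp}\setminus S$; the quantum code is then the stabilizer code of $S$, with length $n$ and dimension $n-\dim S=k_1+k_2-n$ (positive, and in fact $\ge 2$, since $\mathcal{C}_1^{\perp}\subseteq\mathcal{C}_1$ forces $2k_1\ge n$ and $k_2\ge k_1+2$). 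First I would fix a generator matrix $G_1$ of $\mathcal{C}_1$ extended by $k_2-k_1$ further rows $G_{12}$ to a generator matrix of $\mathcal{C}_2$, together with generator matrices of $\mathcal{C}_1^{\perp}$ and $\mathcal{C}_2^{\perp}$, and record the chain $\mathcal{C}_2^{\perp}\subseteq\mathcal{C}_1^{\perp}\subseteq\mathcal{C}_1\subseteq\mathcal{C}_2$, which is exactly what the hypotheses $\mathcal{C}_1^{\perp}\subseteq\mathcal{C}_1\subseteq\mathcal{C}_2$ provide and which will be used both for self-orthogonality and for the distance.

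The source of the factor $1+\tfrac1p$ is an elementary counting lemma that I would isolate first: if $U\subseteq\mathbb{F}_p^{n}$ is a $2$-dimensional subspace all of whose nonzero vectors have Hamming weight at least $d$, then $\#\,\mathrm{supp}(U)\ge\lceil(1+\tfrac1p)d\rceil$. Indeed, choosing one nonzero representative $v_j$ from each of the $p+1$ lines of $U$, one has $\sum_{j=1}^{p+1}\mathrm{wt}(v_j)\ge(p+1)d$; on the other hand, for each coordinate $i$ the evaluation map $u\mapsto u_i$ on $U$ is $\mathbb{F}_p$-linear, so it is either identically $0$ (then all $v_j$ vanish at $i$) or has a $1$-dimensional kernel (then exactly one $v_j$ vanishes at $i$, so $p$ of them do not), whence $\sum_{j=1}^{p+1}\mathrm{wt}(v_j)=p\cdot\#\,\mathrm{supp}(U)$; combining the two relations gives the bound, and $\mathrm{supp}(U)$ is the union of the supports of any basis of $U$.

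Next I would construct $S$: it should consist of the CSS-type generators coming from $\mathcal{C}_1^{\perp}$ (in one symplectic half) and from $\mathcal{C}_2^{\perp}$ (in the other), modified by a \emph{twisted} block built from $G_{12}$ and a fixed scalar (or small matrix) $\lambda$, calibrated so that: $\dim S=(n-k_1)+(n-k_2)=2n-k_1-k_2$; $S$ is symplectic self-orthogonal — the $\mathcal{C}_1^{\perp}$-against-$\mathcal{C}_1^{\perp}$ and $\mathcal{C}_2^{\perp}$-against-$\mathcal{C}_2^{\perp}$ symplectic products vanish because $\mathcal{C}_2^{\perp}\subseteq\mathcal{C}_1^{\perp}\subseteq\mathcal{C}_1=(\mathcal{C}_1^{\perp})^{\perp}$, and the twist is chosen to keep the remaining products zero; and, crucially, so that every nontrivial logical $(a\mid b)\in S^{\perp}\setminus S$ splits into exactly two types. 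Type one: one of $a,b$ is a nonzero codeword of $\mathcal{C}_1$, so $\mathrm{wt}(a\mid b)\ge d(\mathcal{C}_1)=d_1$. Type two (the genuinely enlarged logicals, of which there are at least two independent ones since $k_2\ge k_1+2$): $a$ and $b$ span a $2$-dimensional subspace $U\subseteq\mathcal{C}_2$, so by the lemma $\mathrm{wt}(a\mid b)=\#\,\mathrm{supp}(U)\ge\lceil(1+\tfrac1p)d_2\rceil$. Since these two bounds exhaust $S^{\perp}\setminus S$, the code is pure with minimum distance at least $\min\{d_1,\lceil(1+\tfrac1p)d_2\rceil\}$, giving the stated parameters $[[n,k_1+k_2-n,\min\{d_1,\lceil(1+\tfrac1p)d_2\rceil\}]]_p$.

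The main obstacle is the calibration of the twisted block in the previous step. Plain CSS choices only yield distance $\min\{d_1,d_2\}$, because they admit \emph{degenerate} logicals of the form $(u\mid\beta u)$ with $u\in\mathcal{C}_2\setminus\mathcal{C}_1$, whose weight is merely $\mathrm{wt}(u)\ge d_2$ and which would destroy the $1+\tfrac1p$ improvement; the twist must be designed precisely so that any logical of this proportional shape is forced to have $u\in\mathcal{C}_1$ (hence falls into type one), while simultaneously the dimension count and the self-orthogonality of $S$ are preserved — this is where the hypothesis $\mathcal{C}_1^{\perp}\subseteq\mathcal{C}_1$ and the interplay with $G_{12}$ and $\lambda$ do the real work. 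Once that structural dichotomy is established, the rest is routine linear algebra over $\mathbb{F}_p$ together with the counting lemma above.
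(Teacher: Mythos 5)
This proposition is quoted in the paper directly from \cite{Ling} and is not proved there, so there is no in-paper argument to compare against; what you have written has to be judged as a reconstruction of the Ling--Luo--Xing generalization of Steane's enlargement, and at the level of architecture it is the correct one. Your counting lemma is right and correctly proved: a $2$-dimensional $U\subseteq\mathbb{F}_p^{n}$ has $p+1$ lines, each coordinate functional on $U$ is either zero or has a $1$-dimensional kernel, so summing weights of one representative per line gives $p\cdot\#\,\mathrm{supp}(U)\ge (p+1)d$, which is exactly the source of the factor $1+\frac1p$. Your dimension and positivity checks ($2k_1\ge n$, hence $k_1+k_2-n\ge 2$) are also correct.

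The gap is that the one step carrying all the content --- the ``calibration of the twisted block'' --- is described by its desired effect but never performed, and without it the dichotomy of logical operators is an assertion, not a proof. Concretely, the stabilizer is generated by $(h\mid 0)$ and $(0\mid h)$ for $h$ ranging over a basis of $\mathcal{C}_2^{\perp}$, together with rows $(g\mid Ag)$ where $g$ runs over a basis of a complement of $\mathcal{C}_2^{\perp}$ inside $\mathcal{C}_1^{\perp}$ (so $k_2-k_1$ rows, giving $\dim S=2(n-k_2)+(k_2-k_1)=2n-k_1-k_2$), and $A$ is an invertible $(k_2-k_1)\times(k_2-k_1)$ matrix with \emph{no eigenvalue in} $\mathbb{F}_p$ --- e.g.\ the companion matrix of an irreducible polynomial of degree $k_2-k_1$ with nonzero constant term. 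This is precisely where the hypothesis $k_2\ge k_1+2$ enters: such an $A$ exists only in size at least $2$. Symplectic self-orthogonality then follows from $\mathcal{C}_2^{\perp}\subseteq\mathcal{C}_1^{\perp}\subseteq\mathcal{C}_1\subseteq\mathcal{C}_2$ together with a symmetry condition on the twisted block, and the absence of eigenvalues of $A$ in $\mathbb{F}_p$ is what kills every proportional logical $(u\mid\beta u)$ with $u\in\mathcal{C}_2\setminus\mathcal{C}_1$, for \emph{all} scalars $\beta$ simultaneously, not just $\beta=1$. Your sketch gestures at this but does not verify that these two requirements (self-orthogonality of $S$ and exclusion of all proportional logicals) can be met by the same $A$, which is the actual work of the proof; until that verification is written out, the argument is a correct plan rather than a proof.
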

	\section{Auxiliary results}
	In this section, we give some exponential sums related to plateaued functions, which play a key role in constructing self-orthogonal codes.
	\begin{lemma}\label{Le 10}
		Let $f(x):V_n^{(p)}\longrightarrow \mathbb{F}_p$ be an $s$-plateaued function and $f^*(x)$ be its dual. Define 
		$M_{a,b}=\#\{x\in V_n^{(p)}:af(x)+\langle b,x\rangle_n=0\}$ for any $a\in\mathbb{F}_p$, $b\in V_n^{(p)}$,
		then we have the following.
		
		{\small \[M_{a,b}=\begin{cases}
				p^{n},&\text{if}\ a=b=0;\\
				p^{n-1},&\text{if}\ a=0, b\ne 0\ \text{or}\  a\ne 0, -a^{-1}b\notin\mathrm{Supp}(\widehat{\chi_f});\\
				\epsilon_{(-a^{-1}b)}(\delta_0(f^*(-a^{-1}b))p-1)p^{\frac{n+s}{2}-1}+p^{n-1},&\text{if}\ n+s\ \text{is even},\ a\ne 0,-a^{-1}b\in\mathrm{Supp}(\widehat{\chi_f});\\
				\epsilon_{(-a^{-1}b)}\eta(f^*(-a^{-1}b))p^{\frac{n+s-3}{2}}p^{*}+p^{n-1},&\text{if}\ n+s\ \text{is odd},\ a\ne 0,-a^{-1}b\in\mathrm{Supp}(\widehat{\chi_f}).
			\end{cases}\]}
	\end{lemma}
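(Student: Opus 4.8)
The plan is to evaluate $M_{a,b}$ by additive-character orthogonality over $\mathbb{F}_p$. Applying Lemma~\ref{Le 1}(4) to the one-dimensional space $\mathbb{F}_p$, I would write
\[M_{a,b}=\frac{1}{p}\sum_{x\in V_n^{(p)}}\sum_{y\in\mathbb{F}_p}\xi_p^{y(af(x)+\langle b,x\rangle_n)}=p^{n-1}+\frac{1}{p}\sum_{y\in\mathbb{F}_p^*}\sum_{x\in V_n^{(p)}}\xi_p^{yaf(x)+\langle yb,x\rangle_n},\]
so everything reduces to understanding the inner sum $T(y):=\sum_{x\in V_n^{(p)}}\xi_p^{yaf(x)+\langle yb,x\rangle_n}$. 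The case $a=0$ is immediate, since then $T(y)=\delta_0(yb)p^n=\delta_0(b)p^n$ by Lemma~\ref{Le 1}(4), giving $M_{0,0}=p^n$ and $M_{0,b}=p^{n-1}$ for $b\neq0$. For $a\neq0$ the key point is that $T(y)$ is a Galois conjugate of a Walsh coefficient of $f$: putting $\alpha=-a^{-1}b$, one has $\sigma_{ya}\!\big(\widehat{\chi_f}(\alpha)\big)=\sum_{x}\xi_p^{yaf(x)-\langle ya\alpha,x\rangle_n}=T(y)$ because $ya\alpha=-yb$.

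Next I would split according to whether $\alpha=-a^{-1}b$ lies in $\mathrm{Supp}(\widehat{\chi_f})$. If it does not, then $\widehat{\chi_f}(\alpha)=0$, hence $T(y)=0$ for all $y\in\mathbb{F}_p^*$ and $M_{a,b}=p^{n-1}$. If it does, I would substitute the explicit form $\widehat{\chi_f}(\alpha)=\epsilon_\alpha\mu p^{\frac{n+s}{2}}\xi_p^{f^*(\alpha)}$ and treat the two parities of $n+s$ separately. When $n+s$ is even we have $\mu=1$ and $p^{\frac{n+s}{2}}\in\mathbb{Q}$ is fixed by every $\sigma_{ya}$, so $T(y)=\epsilon_\alpha p^{\frac{n+s}{2}}\xi_p^{yaf^*(\alpha)}$; then $\sum_{y\in\mathbb{F}_p^*}\xi_p^{yaf^*(\alpha)}=\delta_0(f^*(\alpha))p-1$ (distinguishing $f^*(\alpha)=0$ from $f^*(\alpha)\neq0$), and dividing by $p$ gives the third line. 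When $n+s$ is odd I would first record that $\mu p^{\frac{n+s}{2}}=p^{\frac{n+s-1}{2}}\sqrt{p^*}$, verified by comparing the residues $p\equiv1$ and $p\equiv3\pmod4$ with the definition of $\mu$; then $\sigma_{ya}(\sqrt{p^*})=\eta(ya)\sqrt{p^*}$ yields $T(y)=\epsilon_\alpha p^{\frac{n+s-1}{2}}\eta(ya)\sqrt{p^*}\,\xi_p^{yaf^*(\alpha)}$. The substitution $z=ya$ turns $\sum_{y\in\mathbb{F}_p^*}\eta(ya)\xi_p^{yaf^*(\alpha)}$ into $\sum_{z\in\mathbb{F}_p^*}\eta(z)\xi_p^{zf^*(\alpha)}$, which equals $\eta(f^*(\alpha))\sqrt{p^*}$ by Lemma~\ref{Le 1}(1)--(2); combining with $\sqrt{p^*}\cdot\sqrt{p^*}=p^*$ and $\tfrac{n+s-1}{2}-1=\tfrac{n+s-3}{2}$ produces the fourth line.

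This is essentially a routine orthogonality computation, and I do not expect a serious obstacle. The only point that needs care is the bookkeeping of the unit root-of-unity and square-root factors: verifying $\mu p^{\frac{n+s}{2}}=p^{\frac{n+s-1}{2}}\sqrt{p^*}$ in the odd case, and checking that $\delta_0(af^*(\alpha))=\delta_0(f^*(\alpha))$ (as $a\neq0$) and that the quadratic character $\eta(ya)$ transforms correctly under $z=ya$. Once these are settled the four cases fall out by simply collecting terms.
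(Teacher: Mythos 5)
Your proposal is correct and follows essentially the same route as the paper: orthogonality of additive characters to reduce $M_{a,b}$ to $p^{n-1}+p^{-1}\sum_{y\in\mathbb{F}_p^*}\sigma_{ya}(\widehat{\chi_f}(-a^{-1}b))$, then a case split on membership in $\mathrm{Supp}(\widehat{\chi_f})$ and the parity of $n+s$, with the Gauss-sum evaluation $\sum_{z\in\mathbb{F}_p^*}\eta(z)\xi_p^{zc}=\eta(c)\sqrt{p^*}$ handling the odd case. The bookkeeping points you flag (the identity $\mu p^{\frac{n+s}{2}}=p^{\frac{n+s-1}{2}}\sqrt{p^*}$ and the Galois action on $\sqrt{p^*}$) all check out.
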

	\begin{proof}
	According to the definition of $M_{a,b}$, if $a=b=0$, then $M_{a,b}=p^n$. If $a=0$ and $b\ne 0$, since $\langle b,x\rangle_n$ is a balanced function, then $M_{a,b}=p^{n-1}$. If $a\ne 0$, then we have that
	{\small \begin{align*}
			M_{a,b}&=p^{-1}\sum\limits_{x\in V_n^{(p)}}\sum\limits_{y\in\mathbb{F}_p}\xi_p^{y(af(x)+\langle b,x \rangle_n)}\\
			&=p^{-1}\sum\limits_{y\in\mathbb{F}_p^*}\sigma_{ya}(\sum\limits_{x\in V_n^{(p)}}\xi_p^{f(x)+\langle a^{-1}b, x\rangle_n})+p^{n-1}\\
			&=p^{-1}\sum\limits_{y\in \mathbb{F}_p^*}\sigma_{ya}(\widehat{\chi_f}(-a^{-1}b))+p^{n-1}.
	\end{align*}}
	Obviously, if $-a^{-1}b\notin\mathrm{Supp}(\widehat{\chi_f})$, then $M_{a,b}=p^{n-1}$. If $-a^{-1}b\in \mathrm{Supp}(\widehat{\chi_f})$ and $n+s$ is even, then by Lemma \ref{Le 1}, we get that
	{\small \[
			M_{a,b}
			=\epsilon_{(-a^{-1}b)}p^{\frac{n+s}{2}-1}\sum\limits_{y\in\mathbb{F}_p^*}\xi_p^{yaf^*(-a^{-1}b)}+p^{n-1}
			=\epsilon_{(-a^{-1}b)}p^{\frac{n+s}{2}-1}(\delta_0(f^*(-a^{-1}b))p-1)+p^{n-1}.
	\]}
	If $-a^{-1}b\in \mathrm{Supp}(\widehat{\chi_f})$ and $n+s$ is odd, then by Lemma \ref{Le 1}, we get that
	{\small  \[
			M_{a,b}
			=\epsilon_{(-a^{-1}b)}p^{\frac{n+s-3}{2}}\sqrt{p^*}\sum\limits_{y\in\mathbb{F}_p^*}\eta(ya)\xi_p^{yaf^*(-a^{-1}b)}+p^{n-1}
			=\epsilon_{(-a^{-1}b)}p^{\frac{n+s-3}{2}}p^*\eta(f^*(-a^{-1}b))+p^{n-1}.
\]}
	This completes the proof.\end{proof}
	\begin{lemma}\label{Le 11}
		Let $f(x):V_n^{(p)}\longrightarrow \mathbb{F}_p$ be an $s$-plateaued function belonging to $\mathscr{F}$ and $f^*(x)$ be its dual. Define
		$N_{(0,a)}=\#\{x\in D_{f,0}: \langle a,x\rangle_n=0\}$ for any $a\in V_n^{(p)}$,
		then we have the following.
		\begin{itemize}
			\item [$\bullet$] If $n+s$ is even, then
			{\small \[N_{(0,a)}=\begin{cases}
					p^{n-1}+\epsilon_0(p-1)p^{\frac{n+s}{2}-1},&\text{if}\ a=0;\\
					\epsilon_0(p-1)p^{\frac{n+s}{2}-2}+p^{n-2},&\text{if}\ a\ne 0, a\notin \mathrm{Supp}(\widehat{\chi_f});\\
					(p-1)(\epsilon_a(\delta_0(f^*(a))p-1)+\epsilon_0)p^{\frac{n+s}{2}-2}+p^{n-2},&\text{if}\ a\ne 0, a\in \mathrm{Supp}(\widehat{\chi_f}).
				\end{cases}\]}
			\item[$\bullet$] If $n+s$ is odd, then 
			{\small \[N_{(0,a)}=\begin{cases}
					p^{n-1},&\text{if}\ a=0;\\
					p^{n-2},&\text{if}\ a\ne 0, a\notin\mathrm{Supp}(\widehat{\chi_f});\\
					\epsilon_a(p-1)p^{\frac{n+s-5}{2}}p^*\eta{(f^*(a))}+p^{n-2},&\text{if}\ a\ne 0, a\in \mathrm{Supp}(\widehat{\chi_f}).
				\end{cases}\]}
		\end{itemize}
	\end{lemma}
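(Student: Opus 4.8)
The plan is to evaluate $N_{(0,a)}$ by the orthogonality relation for additive characters, in the spirit of the proof of Lemma \ref{Le 10}, reducing it to sums of Walsh transform values of $f$ taken over lines through the origin, and then to invoke the structural facts about functions in $\mathscr{F}$ recorded in Lemmas \ref{Le 3}, \ref{Le 7} and \ref{Le 8}. First I would dispatch the case $a=0$ immediately: since $\langle 0,x\rangle_n=0$ for every $x$, we have $N_{(0,0)}=\#D_{f,0}=N_0(f)$, and since $f(0)=0$ and, by Remark \ref{Re 3}, $f(x)=f(-x)$, Lemma \ref{Le 8} gives $f^*(0)=0$; substituting $j=j_0=0$ into Lemma \ref{Le 3} then yields $p^{n-1}+\epsilon_0(p-1)p^{\frac{n+s}{2}-1}$ for $n+s$ even and $p^{n-1}$ (because $\eta(0)=0$) for $n+s$ odd, matching the stated values.

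For $a\ne 0$ I would start from
$$N_{(0,a)}=\frac{1}{p^2}\sum_{y\in\mathbb{F}_p}\sum_{z\in\mathbb{F}_p}\sum_{x\in V_n^{(p)}}\xi_p^{yf(x)+\langle za,x\rangle_n}$$
and split on $y$. The $y=0$ part equals $\frac{1}{p^2}\sum_{z\in\mathbb{F}_p}\delta_0(za)p^n=p^{n-2}$ since $a\ne 0$. For the $y\ne 0$ part I would rewrite $\sum_{x}\xi_p^{yf(x)+\langle za,x\rangle_n}$ as $\sigma_y(\widehat{\chi_f}(-y^{-1}za))$ and reindex with $c=-y^{-1}z$, obtaining
$$N_{(0,a)}=p^{n-2}+\frac{1}{p^2}\sum_{y\in\mathbb{F}_p^*}\sigma_y(\widehat{\chi_f}(0))+\frac{1}{p^2}\sum_{y\in\mathbb{F}_p^*}\sum_{c\in\mathbb{F}_p^*}\sigma_y(\widehat{\chi_f}(ca)).$$
Using $\widehat{\chi_f}(0)=\epsilon_0\mu p^{\frac{n+s}{2}}$ together with $f^*(0)=0$, the first of these two sums equals $(p-1)\epsilon_0 p^{\frac{n+s}{2}}$ when $n+s$ is even, and it vanishes when $n+s$ is odd (then $\mu p^{\frac{n+s}{2}}=\sqrt{p^*}\,p^{\frac{n+s-1}{2}}$ and $\sum_{y\in\mathbb{F}_p^*}\eta(y)=0$).

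The heart of the argument is the remaining double sum, which I would treat according to whether $a\in\mathrm{Supp}(\widehat{\chi_f})$. If $a\notin\mathrm{Supp}(\widehat{\chi_f})$, then by Lemma \ref{Le 7}(1) no nonzero multiple $ca$ lies in the Walsh support either, so $\widehat{\chi_f}(ca)=0$ and the double sum is $0$; combined with the previous paragraph this gives the stated values in this subcase. If $a\in\mathrm{Supp}(\widehat{\chi_f})$, then Lemma \ref{Le 7}(1) shows all the $ca$, $c\in\mathbb{F}_p^*$, belong to the same one of $B_+(f)$, $B_-(f)$, and Lemma \ref{Le 7}(2) supplies an integer $h$ (or $h'$) with $\mathrm{gcd}(h-1,p-1)=1$ and $f^*(ca)=c^h f^*(a)$. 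Here I would use the elementary observation that, since $p-1$ is even, $h-1$ must be odd, hence $h$ is even, so that $\eta(c^h)=1$ and $\delta_0(c^h f^*(a))=\delta_0(f^*(a))$ for all $c\ne 0$. Substituting $\widehat{\chi_f}(ca)=\epsilon_a\mu p^{\frac{n+s}{2}}\xi_p^{c^h f^*(a)}$ and applying $\sigma_y$, the inner sum over $y$ collapses via Lemma \ref{Le 1} to $\delta_0(f^*(a))p-1$ when $n+s$ is even and to $\eta(f^*(a))\sqrt{p^*}$ when $n+s$ is odd; summing over $c\in\mathbb{F}_p^*$ then yields $\epsilon_a(p-1)(\delta_0(f^*(a))p-1)p^{\frac{n+s}{2}}$ and $\epsilon_a(p-1)\eta(f^*(a))p^*\,p^{\frac{n+s-1}{2}}$, respectively.

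Finally I would substitute these evaluations into the displayed expression for $N_{(0,a)}$ and collect powers of $p$; this reproduces the four stated formulas exactly. The step needing the most care is the parity claim that $h$ (and $h'$) is even: it is precisely this that makes $\eta(c^h)$, and hence the whole $c$-dependence of the double sum, trivial, so that the sum over $c\in\mathbb{F}_p^*$ merely contributes a factor $p-1$; without it the case $n+s$ odd would come out incorrectly. Everything else is routine bookkeeping with the exponential sum evaluations of Lemma \ref{Le 1} and the value distribution of Lemma \ref{Le 3}.
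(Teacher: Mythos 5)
Your proposal is correct and follows essentially the same route as the paper: the same orthogonality decomposition into the $y=0$, $z=0$, and double-sum pieces, the same reduction of $\widehat{\chi_f}(ca)$ via Lemma \ref{Le 7}, and the same crucial observation that $h$ (resp.\ $h'$) is even so that the $c$-dependence disappears and the sum over $c\in\mathbb{F}_p^*$ contributes only a factor $p-1$. The only difference is cosmetic (your sign convention in the character sum and your explicit justification that $\mathrm{gcd}(h-1,p-1)=1$ with $p-1$ even forces $h$ even, which the paper uses without comment).
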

	\begin{proof}
	According to the definition of $N_{(0,a)}$, if $a=0$, then $N_{(0,a)}=N_0(f)$. By Lemma \ref{Le 3}, we easily obtain the value of $N_{(0,a)}$. If $a\ne 0$, then we have that
	{\small \begin{align*}
			N_{(0,a)}&=p^{-2}\sum\limits_{x\in V_n^{(p)}}\sum\limits_{y\in \mathbb{F}_p}\sum\limits_{z\in\mathbb{F}_p}\xi_p^{yf(x)-z\langle a, x\rangle_n}\\
			&=p^{-2}(\sum\limits_{z\in\mathbb{F}_p^*}\sum\limits_{y\in\mathbb{F}_p^*}\sum\limits_{x\in V_n^{(p)}}\xi_p^{yf(x)-\langle za,x\rangle_n}+\sum\limits_{y\in\mathbb{F}_p^*}\sum\limits_{x\in V_n^{(p)}}\xi_p^{yf(x)}+\sum\limits_{z\in\mathbb{F}_p^*}\sum\limits_{x\in V_n^{(p)}}\xi_p^{-\langle za,x\rangle_n})+p^{n-2}.
	\end{align*}}

	Let $N_1=\sum_{z\in\mathbb{F}_p^*}\sum_{y\in\mathbb{F}_p^*}\sum_{x\in V_n^{(p)}}\xi_p^{yf(x)-\langle za,x\rangle_n}$, $N_2=\sum_{y\in\mathbb{F}_p^*}\sum_{x\in V_n^{(p)}}\xi_p^{yf(x)}$ and $N_3=\sum_{z\in\mathbb{F}_p^*}\\\sum_{x\in V_n^{(p)}}\xi_p^{-\langle za,x\rangle_n}$, then $N_{(0,a)}=p^{-2}(N_1+N_2+N_3)+p^{n-2}$. In the following, we give the values of $N_1$, $N_2 $ and $N_3$.
	
	Note that $N_1=\sum_{z\in\mathbb{F}_p^*}\sum_{y\in\mathbb{F}_p^*}\sigma_y(\widehat{\chi_f}(y^{-1}za)).$
	According to Lemma \ref{Le 7}, we know that for any $y,z\in \mathbb{F}_p^*$, if $a\notin\mathrm{Supp}(\widehat{\chi_f})$, then $y^{-1}za\notin\mathrm{Supp}(\widehat{\chi_f})$, and if  $a\in\mathrm{Supp}(\widehat{\chi_f})$, then $\epsilon_a=\epsilon _{y^{-1}za}$ and $f^*(y^{-1}za)=y^{-h_a}z^{h_a}f^*(a)$, where $h_a=h$ if $a\in B_+(f)$ and $h_a=h'$ if $a\in B_-(f)$. Thus, if $a\notin\mathrm{Supp}(\widehat{\chi_f})$, then $N_1=0$. If $a\in \mathrm{Supp}(\widehat{\chi_f})$ and $n+s$ is even, then we have that
	{\small \[N_1=\sum\limits_{z\in\mathbb{F}_p^*}\sum\limits_{y\in\mathbb{F}_p^*}\sigma_y(\epsilon_ap^{\frac{n+s}{2}}\xi_p^{y^{-h_a}z^{h_a}f^*(a)})=\epsilon_ap^{\frac{n+s}{2}}\sum\limits_{z\in\mathbb{F}_p^*}\sum\limits_{y\in\mathbb{F}_p^*}\xi_p^{y^{1-h_a}z^{h_a}f^*(a)}.\]}
	Since $\mathrm{gcd}(h_a-1,p-1)=1$, then when $y$ runs through $\mathbb{F}_p^*$, we know that $y^{1-h_a}$ runs through $\mathbb{F}_p^*$. By Lemma \ref{Le 1}, we easily get $N_1=\epsilon_ap^{\frac{n+s}{2}}(p-1)(\delta_0(f^*(a))p-1)$. If $a\in \mathrm{Supp}(\widehat{\chi_f})$ and $n+s$ is odd, then we have that 
	{\small \[
			N_1=\sum\limits_{z\in\mathbb{F}_p^*}\sum\limits_{y\in\mathbb{F}_p^*}\sigma_y(\epsilon_ap^{\frac{n+s-1}{2}}\sqrt{p^*}\xi_p^{y^{-h_a}z^{h_a}f^*(a)})
			=\epsilon_ap^{\frac{n+s-1}{2}}\sqrt{p^*}\sum\limits_{z\in\mathbb{F}_p^*}\sum\limits_{y\in\mathbb{F}_p^*}\eta(y)\xi_p^{y^{1-h_a}z^{h_a}f^*(a)}.
	\]}
	Since $h_a$ is even, then $\eta(y)=\eta(y^{1-h_a})$ and $\eta(z^{h_a})=1$ for any $y, z\in\mathbb{F}_p^*$. By Lemma \ref{Le 1}, we have that $N_1=\epsilon_a(p-1)p^{\frac{n+s-1}{2}}p^*\eta(f^*(a))$.
	
	For the value of $N_2$, we have that 
	$N_2=\sum_{y\in\mathbb{F}_p^*}\sigma_y(\widehat{\chi_f}(0)).$ Since $f(0)=0$ and $f(x)=f(-x)$
	
	\noindent for any $x\in V_n^{(p)}$, then $f^*(0)=0$. Note that 
 $0\in\mathrm{Supp}(\widehat{\chi_f})$, then by Lemma \ref{Le 1}, when $n+s$ is even, 
	$N_2=\epsilon_0(p-1)p^{\frac{n+s}{2}}$;
	when $n+s$ is odd, $N_2=0$.
	
	Finally, by Lemma \ref{Le 1}, we easily get that if $a\ne 0$, then $N_3=0$.
	The proof is now completed. \end{proof}
	\begin{lemma}\label{Le 12}
		Let $f(x):V_n^{(p)}\longrightarrow \mathbb{F}_p$ be an $s$-plateaued function belonging to $\mathscr{F}$ and $f^*(x)$ be its dual. Define
		$N_{(sq,a)}=\#\{x\in D_{f,sq}: \langle a,x\rangle_n=0\}$ and $N_{(nsq,a)}=\#\{x\in D_{f,nsq}: \langle a,x\rangle_n=0\}$ for any $a\in V_n^{(p)}$,
		then we have the following.
		\begin{itemize}
			\item [$\bullet$] If $n+s$ is even, then
			{\small \[N_{(sq,a)}=\begin{cases}
					\frac{(p-1)}{2}(p^{n-1}-\epsilon_0p^{\frac{n+s}{2}-1}),&\text{if}\ a=0;\\
					\frac{(p-1)}{2}(p^{n-2}-\epsilon_0p^{\frac{n+s}{2}-2}),&\text{if}\ a\ne 0, a\notin\mathrm{Supp}(\widehat{\chi_f});\\
					\frac{(p-1)}{2}(p^{n-2}+(\epsilon_a(p(\eta(f^*(a))-\delta_0(f^*(a)))+1)-\epsilon_0)p^{\frac{n+s}{2}-2}),&\text{if}\ a\ne 0, a\in \mathrm{Supp}(\widehat{\chi_f}).
					
				\end{cases}
				\]}
			{\small \[N_{(nsq,a)}=\begin{cases}
					\frac{(p-1)}{2}(p^{n-1}-\epsilon_0p^{\frac{n+s}{2}-1}),&\text{if}\ a=0;\\
					\frac{(p-1)}{2}(p^{n-2}-\epsilon_0p^{\frac{n+s}{2}-2}),&\text{if}\ a\ne 0, a\notin\mathrm{Supp}(\widehat{\chi_f});\\
					\frac{(p-1)}{2}(p^{n-2}-(\epsilon_a(p(\eta(f^*(a))+\delta_0(f^*(a)))-1)+\epsilon_0)p^{\frac{n+s}{2}-2}),&\text{if}\ a\ne 0, a\in \mathrm{Supp}(\widehat{\chi_f}).
				\end{cases}
				\]}
			\item[$\bullet$] If $n+s$ is odd, then
			{\small \[N_{(sq,a)}=\begin{cases}
					\frac{(p-1)}{2}(p^{n-1}+\epsilon_0p^{\frac{n+s-1}{2}}),&\text{if}\ a=0;\\
					\frac{(p-1)}{2}(p^{n-2}+\epsilon_0p^{\frac{n+s-3}{2}}),&\text{if}\ a\ne 0, a\notin \mathrm{Supp}(\widehat{\chi_f});\\
					\frac{(p-1)}{2}(p^{n-2}+(\epsilon_a(p\delta_0(f^*(a))-\eta(-f^*(a))-1)+\epsilon_0)p^{\frac{n+s-3}{2}}),&\text{if}\ a\ne 0, a\in \mathrm{Supp}(\widehat{\chi_f}).
				\end{cases}
				\]}
			{\small \[N_{(nsq,a)}=\begin{cases}
					\frac{(p-1)}{2}(p^{n-1}-\epsilon_0p^{\frac{n+s-1}{2}}),&\text{if}\ a=0;\\
					\frac{(p-1)}{2}(p^{n-2}-\epsilon_0p^{\frac{n+s-3}{2}}),&\text{if}\ a\ne 0, a\notin \mathrm{Supp}(\widehat{\chi_f});\\
					\frac{(p-1)}{2}(p^{n-2}-(\epsilon_a(p\delta_0(f^*(a))+\eta(-f^*(a))-1)+\epsilon_0)p^{\frac{n+s-3}{2}}),&\text{if}\ a\ne 0, a\in \mathrm{Supp}(\widehat{\chi_f}).
				\end{cases}
				\]}
		\end{itemize}
	\end{lemma}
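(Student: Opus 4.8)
The plan is to determine $N_{(sq,a)}$ and $N_{(nsq,a)}$ from their sum and their difference. Since $N_{(sq,a)}+N_{(nsq,a)}$ counts those $x$ with $\langle a,x\rangle_n=0$ and $f(x)\neq 0$, we have $N_{(sq,a)}+N_{(nsq,a)}=\#\{x\in V_n^{(p)}:\langle a,x\rangle_n=0\}-N_{(0,a)}$, where the first term is $p^n$ if $a=0$ and $p^{n-1}$ otherwise and $N_{(0,a)}$ is given by Lemma \ref{Le 11}. (Recall that $f\in\mathscr{F}$ forces $f(0)=0$ and $f(x)=f(-x)$, hence $f^*(0)=0$ by Lemma \ref{Le 8} and $0\in\mathrm{Supp}(\widehat{\chi_f})$.) Thus it remains to compute the difference
\[T(a):=N_{(sq,a)}-N_{(nsq,a)}=\sum_{\substack{x\in V_n^{(p)}\\ \langle a,x\rangle_n=0}}\eta(f(x)),\]
after which $N_{(sq,a)}=\tfrac12\big(N_{(sq,a)}+N_{(nsq,a)}+T(a)\big)$ and $N_{(nsq,a)}=\tfrac12\big(N_{(sq,a)}+N_{(nsq,a)}-T(a)\big)$.

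To compute $T(a)$, I would use $\eta(c)=\frac{1}{\sqrt{p^*}}\sum_{w\in\mathbb{F}_p^*}\eta(w)\xi_p^{wc}$ (valid for all $c\in\mathbb{F}_p$ by Lemma \ref{Le 1}(2)) and detect the constraint $\langle a,x\rangle_n=0$ by an extra sum over $z\in\mathbb{F}_p$. Since $\sigma_w$ is a ring automorphism with $\sum_{x\in V_n^{(p)}}\xi_p^{wf(x)-z\langle a,x\rangle_n}=\sigma_w\big(\widehat{\chi_f}(w^{-1}za)\big)$ for $z\neq 0$, this expresses $T(a)$, for $a\neq 0$, as $p^{-1}(A+B)$ with
\[A=\frac{1}{\sqrt{p^*}}\sum_{w\in\mathbb{F}_p^*}\eta(w)\,\sigma_w(\widehat{\chi_f}(0)),\qquad B=\frac{1}{\sqrt{p^*}}\sum_{w\in\mathbb{F}_p^*}\sum_{z\in\mathbb{F}_p^*}\eta(w)\,\sigma_w\big(\widehat{\chi_f}(w^{-1}za)\big),\]
while $T(0)=A$. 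Because $f^*(0)=0$, the quantity $\widehat{\chi_f}(0)$ equals the integer $\epsilon_0 p^{\frac{n+s}{2}}$ when $n+s$ is even and $\epsilon_0 p^{\frac{n+s-1}{2}}\sqrt{p^*}$ when $n+s$ is odd; applying $\sigma_w$ (which fixes integers and sends $\sqrt{p^*}$ to $\eta(w)\sqrt{p^*}$) and using $\sum_w\eta(w)=0$, $\sum_w\eta(w)^2=p-1$ gives $A=0$ for $n+s$ even and $A=\epsilon_0(p-1)p^{\frac{n+s-1}{2}}$ for $n+s$ odd.

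The main work, and the only delicate point, is $B$, where Lemma \ref{Le 7} is essential. If $a\notin\mathrm{Supp}(\widehat{\chi_f})$ (in particular if $a=0$), then by Lemma \ref{Le 7}(1) each $w^{-1}za$ with $w,z\in\mathbb{F}_p^*$ lies outside $\mathrm{Supp}(\widehat{\chi_f})$, so $B=0$. If $a\in\mathrm{Supp}(\widehat{\chi_f})$, Lemma \ref{Le 7}(1),(2) give $\epsilon_{w^{-1}za}=\epsilon_a$ and $f^*(w^{-1}za)=(w^{-1}z)^{h_a}f^*(a)$ for a fixed $h_a\in\{h,h'\}$ with $\gcd(h_a-1,p-1)=1$; since $p-1$ is even, $h_a$ is even. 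Substituting the explicit value of $\widehat{\chi_f}(w^{-1}za)$ and applying $\sigma_w$ turns $B$ into $\epsilon_a p^{\frac{n+s}{2}}\frac{1}{\sqrt{p^*}}\sum_{w,z}\eta(w)\xi_p^{w^{1-h_a}z^{h_a}f^*(a)}$ for $n+s$ even and into $\epsilon_a p^{\frac{n+s-1}{2}}\sum_{w,z}\eta(w)^2\xi_p^{w^{1-h_a}z^{h_a}f^*(a)}$ for $n+s$ odd (the extra $\eta(w)$ there coming from $\sigma_w(\sqrt{p^*})$). Since $w\mapsto w^{1-h_a}$ is a bijection of $\mathbb{F}_p^*$ with $\eta(w^{1-h_a})=\eta(w)$ ($1-h_a$ odd) and $\eta(z^{h_a})=1$ ($h_a$ even), Lemma \ref{Le 1} now evaluates the inner sum: for each $z$ it equals $\eta(f^*(a))\sqrt{p^*}$ in the even case and $p\delta_0(f^*(a))-1$ in the odd case. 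Summing over $z\in\mathbb{F}_p^*$ yields $B=\epsilon_a(p-1)\eta(f^*(a))p^{\frac{n+s}{2}}$ (even) and $B=\epsilon_a(p-1)(p\delta_0(f^*(a))-1)p^{\frac{n+s-1}{2}}$ (odd).

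Assembling the pieces, $T(a)=\epsilon_a(p-1)\eta(f^*(a))p^{\frac{n+s}{2}-1}$ in the even case and $T(a)=(p-1)p^{\frac{n+s-3}{2}}\big(\epsilon_0+\epsilon_a(p\delta_0(f^*(a))-1)\big)$ in the odd case for $a\in\mathrm{Supp}(\widehat{\chi_f})$, with the corresponding degenerate values when $a=0$ or $a\notin\mathrm{Supp}(\widehat{\chi_f})$. Substituting these and the value of $N_{(sq,a)}+N_{(nsq,a)}$ from Lemma \ref{Le 11} into the half-sum/half-difference formulas, and rewriting the leftover factors $\eta(-1)$ through $p^*=\eta(-1)p$ and $\eta(-f^*(a))=\eta(-1)\eta(f^*(a))$ (which is exactly what produces the $\eta(-f^*(a))$ in the stated expressions for odd $n+s$), one obtains all the claimed formulas. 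The chief obstacle is purely bookkeeping: keeping the two homogeneity exponents — $\epsilon_0$ from $\widehat{\chi_f}(0)$ and $\epsilon_a$ from $\widehat{\chi_f}(w^{-1}za)$ — apart, tracking the quadratic character through the substitutions $w\mapsto w^{1-h_a}$ and $z\mapsto z^{h_a}$, and handling the dichotomy between $\eta(f^*(a))$ and $\delta_0(f^*(a))$ according to whether $f^*(a)=0$.
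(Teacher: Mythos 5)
Your proposal is correct, and every step checks out (I verified the resulting formulas against all cases of the statement, including the $a=0$ and $a\notin\mathrm{Supp}(\widehat{\chi_f})$ degenerations and the appearance of $\eta(-f^*(a))$ via $p^*=\eta(-1)p$). It is, however, organized differently from the paper's proof. The paper computes $N_{(sq,a)}$ directly as a triple orthogonality sum $p^{-2}\sum_{i\in SQ}\sum_{x}\sum_{y}\sum_{z}\xi_p^{y(f(x)-i)-z\langle a,x\rangle_n}$, using Lemma \ref{Le 1}(3) to evaluate $\sum_{i\in SQ}\xi_p^{-yi}=\frac{\eta(-y)\sqrt{p^*}-1}{2}$ inside the sum, and then runs the same Galois-action/Lemma \ref{Le 7} machinery on the resulting $N_1,N_2,N_3$; it repeats this for $N_{(nsq,a)}$. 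You instead determine the pair $(N_{(sq,a)},N_{(nsq,a)})$ from its sum and difference: the sum is read off from Lemma \ref{Le 11} (so that part of the computation is not redone), and the difference $T(a)=\sum_{\langle a,x\rangle_n=0}\eta(f(x))$ is a single, cleaner weighted character sum handled by the identity $\eta(c)=\frac{1}{\sqrt{p^*}}\sum_{w}\eta(w)\xi_p^{wc}$. The two arguments are algebraically equivalent --- the paper's factor $\frac{\eta(-y)\sqrt{p^*}-1}{2}$ splits exactly into the "$-\frac12$" piece (your sum, i.e.\ Lemma \ref{Le 11}) and the "$\frac{\eta(-y)\sqrt{p^*}}{2}$" piece (your difference) --- but yours buys reuse of an already-proved lemma and isolates the genuinely new computation in one sum, at the cost of a final half-sum/half-difference bookkeeping step; the paper's is self-contained case by case. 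Both rely on the same essential inputs: Lemma \ref{Le 7} for $\epsilon_{w^{-1}za}=\epsilon_a$ and $f^*(w^{-1}za)=w^{-h_a}z^{h_a}f^*(a)$ with $h_a$ even and $\gcd(h_a-1,p-1)=1$, the fact $f^*(0)=0$ from Lemma \ref{Le 8}, and the Gauss-sum evaluations of Lemma \ref{Le 1}.
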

	\begin{proof} We only give the proof for $N_{(sq,a)}$, the case for $N_{(nsq,a)}$ is similar.
	
	According to the definition of $N_{(sq,a)}$, if $a=0$, then $N_{(sq,a)}=\sum_{i\in SQ}N_i(f)$. By Lemma \ref{Le 3}, we easily obtain the value of $N_{(sq,a)}$. If $a\ne 0$, then we have 
	{\small \begin{align*}
			N_{(sq,a)}&=p^{-2}\sum\limits_{i\in SQ}\sum\limits_{x\in V_n^{(p)}}\sum\limits_{y\in\mathbb{F}_p}\sum\limits_{z\in \mathbb{F}_p}\xi_p^{y(f(x)-i)-z\langle a, x\rangle_n}\\
			&=p^{-2}(\sum\limits_{z\in\mathbb{F}_p^*}\sum\limits_{y\in\mathbb{F}_p^*}\sum\limits_{x\in V_n^{(p)}}\xi_p^{yf(x)-\langle za,x\rangle_n}\sum\limits_{i\in SQ}\xi_p^{-yi}+\sum\limits_{y\in\mathbb{F}_p^*}\sum\limits_{x\in V_n^{(p)}}\xi_p^{yf(x)}\sum\limits_{i\in SQ}\xi_p^{-yi}+\sum\limits_{i\in SQ}\sum\limits_{z\in\mathbb{F}_p^*}\sum\limits_{x\in V_n^{(p)}}\xi_p^{-\langle za, x\rangle_n})\\&+\frac{(p-1)}{2}p^{n-2}.
	\end{align*}}
	Let $N_1=\sum_{z\in\mathbb{F}_p^*}\sum_{y\in\mathbb{F}_p^*}\sum_{x\in V_n^{(p)}}\xi_p^{yf(x)-\langle za,x\rangle_n}\sum_{i\in SQ}\xi_p^{-yi}$, $N_2=\sum_{y\in\mathbb{F}_p^*}\sum_{x\in V_n^{(p)}}\xi_p^{yf(x)}\sum_{i\in SQ}\xi_p^{-yi}$ 
	 and $N_3=\sum_{i\in SQ}\sum_{z\in\mathbb{F}_p^*}\sum_{x\in V_n^{(p)}}\xi_p^{-\langle za, x\rangle_n}$, then $N_{(sq,a)}=p^{-2}(N_1+N_2+N_3)+\frac{p-1}{2}p^{n-2}$. In the following, we give the values of $N_1$, $N_2$ and $N_3$.
	
	By Lemma \ref{Le 1}, we have that $N_1=\sum_{z\in\mathbb{F}_p^*}\sum_{y\in\mathbb{F}_p^*}\sigma_y(\widehat{\chi_f}(y^{-1}za))\frac{\eta(-y)\sqrt{p^*}-1}{2}.$
	According to Lemma \ref{Le 7}, we know that for any $y,z\in \mathbb{F}_p^*$, if $a\notin\mathrm{Supp}(\widehat{\chi_f})$, then $y^{-1}za\notin\mathrm{Supp}(\widehat{\chi_f})$, and if  $a\in\mathrm{Supp}(\widehat{\chi_f})$, then $\epsilon_a=\epsilon _{y^{-1}za}$ and $f^*(y^{-1}za)=y^{-h_a}z^{h_a}f^*(a)$, where $h_a=h$ if $a\in B_+(f)$ and $h_a=h'$ if $a\in B_-(f)$. Thus, if $a\notin\mathrm{Supp}(\widehat{\chi_f})$, then $N_1=0$. If $a\in \mathrm{Supp}(\widehat{\chi_f})$ and $n+s$ is even, then we have that 
	$N_1=\epsilon_ap^{\frac{n+s}{2}}\sum_{z\in\mathbb{F}_p^*}\sum_{y\in\mathbb{F}_p^*}\xi_p^{y^{1-h_a}z^{h_a}f^*(a)}\frac{\eta(-y)\sqrt{p^*}-1}{2}.$
	Note that $\mathrm{gcd}(h_a-1,p-1)=1$ and $h_a$ is even, then $\eta(y)=\eta(y^{1-h_a})$, $\eta(z^{h_a})=1$, and when 
	$y$ runs 
	through $\mathbb{F}_p^*$, 
	$y^{1-h_a}$ runs through $\mathbb{F}_p^*$ for any $y,z\in \mathbb{F}_p^*$. By Lemma \ref{Le 1}, we get that $N_1=\epsilon_a\frac{(p-1)}{2}p^{\frac{n+s}{2}}(p(\eta(f^*(a))-\delta_0(f^*(a)))+1)$. If $a\in \mathrm{Supp}(\widehat{\chi_f})$ and $n+s$ is odd, then we have that 
	$N_1
	=\epsilon_ap^{\frac{n+s-1}{2}}\sqrt{p^*}\sum_{z\in\mathbb{F}_p^*}\sum_{y\in\mathbb{F}_p^*}\eta(y)\xi_p^{y^{1-h_a}z^{h_a}f^*(a)}\frac{\eta(-y)\sqrt{p^*}-1}{2}.$
	Again by Lemma \ref{Le 1}, we get that
	$N_1=\epsilon_a\frac{(p-1)}{2}p^{\frac{n+s+1}{2}}(p\delta_0(f^*(a))-1-\eta(-f^*(a)))$.
	
	For the value of $N_2$, we have that 
	$N_2=\sum_{y\in\mathbb{F}_p^*}\sigma_y(\widehat{\chi_f}(0))\frac{\eta(-y)\sqrt{p^*}-1}{2}.$
	Since $f(0)=0$ and $f(x)=f(-x)$ for any $x\in V_n^{(p)}$, then $f^*(0)=0$. Note that
	$0\in\mathrm{Supp}(\widehat{\chi_f})$, then by Lemma \ref{Le 1}, when $n+s$ is even, $N_2=-\epsilon_0\frac{(p-1)}{2}p^{\frac{n+s}{2}}$; when $n+s$ is odd, $N_2=\epsilon_0\frac{(p-1)}{2}p^{\frac{n+s+1}{2}}$.
	
	Finally, by Lemma \ref{Le 1}, we easily get that
	if $a\ne 0$, then $N_3=0$. 
	The proof is now completed.\end{proof}
	
	\begin{lemma}\label{Le 13}
		Let $f(x):V_n^{(p)}\longrightarrow \mathbb{F}_p$ be an $s$-plateaued function belonging to $\mathscr{F}$ with $t=t'=2$ and $f^*(x)$ be its dual. Define
		$N_{(i,a,b)}=\#\{x\in D_{f,i}: \langle a,x\rangle_n=b\}$  for any $i\in \mathbb{F}_p^*$, $a\in V_n^{(p)}$ and $b\in\mathbb{F}_p^*$,
		then we have the following.
		\begin{itemize}
			\item [$\bullet$] If $n+s$ is even, then\\
			{\small \[N_{(i,a,b)}=\begin{cases}
					0,&\text{if}\ a=0;\\
					p^{n-2}-\epsilon_0p^{\frac{n+s}{2}-2},&\text{if}\ a\ne 0, a\notin\mathrm{Supp}(\widehat{\chi_f});\\
					(\epsilon_a-\epsilon_0)p^{\frac{n+s}{2}-2}+p^{n-2},&\text{if}\ a\ne 0, a\in \mathrm{Supp}(\widehat{\chi_f}), f^*(a)=0;\\
					(\epsilon_ap\eta(b^24^{-1}+if^*(a))+(\epsilon_a-\epsilon_0))p^{\frac{n+s}{2}-2}+p^{n-2}, &\text{if}\ a\ne 0, a\in \mathrm{Supp}(\widehat{\chi_f}),
					 f^*(a)\ne 0.
				\end{cases}\]}
			\item[$\bullet$] If $n+s$ is odd, then
			{\small \[N_{(i,a,b)}=\begin{cases}
					0,&\text{if}\ a=0;\\
					\epsilon_0\eta(i)p^{\frac{n+s-3}{2}}+p^{n-2},&\text{if}\ a\ne 0, a\notin \mathrm{Supp}(\widehat{\chi_f});\\
					(\epsilon_0-\epsilon_a)\eta(i)p^{\frac{n+s-3}{2}}+p^{n-2},&\text{if}\ a\ne 0, a\in \mathrm{Supp}(\widehat{\chi_f}),f^*(a)=0;\\
					(\epsilon_a\eta(-f^*(a))(\delta_0(if^*(a)+b^24^{-1})p-1)+(\epsilon_0-\epsilon_a)\eta(i))&\text{if}\ a\ne 0, a\in \mathrm{Supp}(\widehat{\chi_f}),f^*(a)\ne 0.\\p^{\frac{n+s-3}{2}}+p^{n-2},	\end{cases}
				\]}
		\end{itemize}
	\end{lemma}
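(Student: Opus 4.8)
The plan is to follow the same strategy as in Lemmas~\ref{Le 11} and~\ref{Le 12}: expand $N_{(i,a,b)}$ as a double exponential sum over $\mathbb{F}_p\times\mathbb{F}_p$ and evaluate its pieces. First, if $a=0$ then $\langle a,x\rangle_n=0\ne b$ for every $x$, so $N_{(i,a,b)}=0$; hence assume $a\ne 0$ throughout. Using the orthogonality relations of $\xi_p$,
\[
N_{(i,a,b)}=p^{-2}\sum_{x\in V_n^{(p)}}\sum_{y\in\mathbb{F}_p}\sum_{z\in\mathbb{F}_p}\xi_p^{y(f(x)-i)-z(\langle a,x\rangle_n-b)},
\]
and split the sum according to whether $y$ and $z$ are zero. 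The contribution of $y=z=0$ is $p^{n-2}$; that of $y=0,\ z\ne 0$ is $p^{-2}\sum_{z\ne 0}\xi_p^{zb}\sum_{x}\xi_p^{-\langle za,x\rangle_n}=0$ by Lemma~\ref{Le 1}, since $a\ne 0$.

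For the contribution of $y\ne 0,\ z=0$ we get $p^{-2}\sum_{y\ne 0}\xi_p^{-yi}\,\sigma_y(\widehat{\chi_f}(0))$. Since $f\in\mathscr{F}$ satisfies $f(0)=0$ and $f(x)=f(-x)$, Lemma~\ref{Le 8} gives $f^*(0)=0$, and $0\in\mathrm{Supp}(\widehat{\chi_f})$, so $\widehat{\chi_f}(0)=\epsilon_0 p^{\frac{n+s}{2}}$ when $n+s$ is even and $\widehat{\chi_f}(0)=\epsilon_0\sqrt{p^*}\,p^{\frac{n+s-1}{2}}$ when $n+s$ is odd (using $\mu p^{\frac{n+s}{2}}=\sqrt{p^*}\,p^{\frac{n+s-1}{2}}$ in the odd case). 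Applying Lemma~\ref{Le 1}, together with $\sqrt{p^*}^{\,2}=p^*=\eta(-1)p$ and $i\ne 0$, this piece equals $-\epsilon_0 p^{\frac{n+s}{2}-2}$ for $n+s$ even and $\epsilon_0\eta(i)p^{\frac{n+s-3}{2}}$ for $n+s$ odd; these already yield the stated values in the rows where the remaining piece vanishes.

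The crux is the contribution $N_1$ of $y\ne 0,\ z\ne 0$. Here $\sum_{x}\xi_p^{yf(x)-\langle za,x\rangle_n}=\sigma_y(\widehat{\chi_f}(y^{-1}za))$. By Lemma~\ref{Le 7}(1), if $a\notin\mathrm{Supp}(\widehat{\chi_f})$ then $y^{-1}za\notin\mathrm{Supp}(\widehat{\chi_f})$, so $N_1=0$ and the ``$a\notin\mathrm{Supp}$'' rows follow at once. If $a\in\mathrm{Supp}(\widehat{\chi_f})$, then by Lemma~\ref{Le 7} we have $\epsilon_{y^{-1}za}=\epsilon_a$ and---crucially, because $t=t'=2$ forces $h=h'=2$ in Lemma~\ref{Le 7}(2)---$f^*(y^{-1}za)=y^{-2}z^2f^*(a)$. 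Substituting the value of $\widehat{\chi_f}$ on its support and applying $\sigma_y$ (which contributes an extra $\eta(y)$ exactly when $n+s$ is odd, since $\sigma_y(\sqrt{p^*})=\eta(y)\sqrt{p^*}$), $N_1$ becomes a constant multiple of $\sum_{y\ne 0}\eta(y)^{\kappa}\xi_p^{-yi}\sum_{z\ne 0}\xi_p^{y^{-1}f^*(a)z^2+bz}$, where $\kappa=0$ if $n+s$ is even and $\kappa=1$ if $n+s$ is odd. If $f^*(a)=0$ the inner sum over $z$ is elementary (it equals $-1$); if $f^*(a)\ne 0$ it is a quadratic exponential sum which Lemma~\ref{Le 2} evaluates, by completing the square, to $\eta(y^{-1}f^*(a))\sqrt{p^*}\,\xi_p^{-b^2y(4f^*(a))^{-1}}-1$. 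The outer sum over $y$ is then finished by Lemma~\ref{Le 1}: the indicator $\delta_0(if^*(a)+b^24^{-1})$ arises precisely from the alternative $i+b^2(4f^*(a))^{-1}=0$ vs.\ $\ne 0$ in that step, and collecting the constants ($\sqrt{p^*}^{\,2}=\eta(-1)p$, $\eta(-1)\eta(f^*(a))=\eta(-f^*(a))$, $\eta(f^*(a))\,\eta(i+b^2(4f^*(a))^{-1})=\eta(if^*(a)+b^24^{-1})$, and the like) yields exactly the claimed expressions. Adding the three nonzero pieces completes the proof.

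I expect the main difficulty to be purely the bookkeeping of this last step: keeping the Galois twist $\sigma_y$, the reduction $z^{h_a}=z^2$ (valid only because $t=t'=2$), the quadratic Gauss-sum evaluation via Lemma~\ref{Le 2}, and the two nested case splits ($f^*(a)=0$ vs.\ $f^*(a)\ne 0$, and whether the ``discriminant'' $if^*(a)+b^24^{-1}$ vanishes) all simultaneously consistent across the even- and odd-$(n+s)$ cases.
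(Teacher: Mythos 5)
Your proposal is correct and follows essentially the same route as the paper's own proof: the same orthogonality expansion, the same decomposition into the three sums according to whether $y$ and $z$ vanish, the same appeal to Lemma \ref{Le 7} (with $h_a=2$ forced by $t=t'=2$) to reduce $f^*(y^{-1}za)$ to $y^{-2}z^2f^*(a)$, and the same evaluation of the quadratic Gauss sum via Lemma \ref{Le 2} with the case split on $f^*(a)$ and on the vanishing of $if^*(a)+b^24^{-1}$. The only difference is cosmetic normalization of the $p^{-2}$ factor.
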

	\begin{proof}
	According to the definition of $N_{(i,a,b)}$, if $a=0$, we easily know that $N_{(i,a,b)}=0$. If $a\ne 0$, then we have that
	{\small \begin{align*}
			N_{(i,a,b)}&=p^{-2}\sum\limits_{x\in V_n^{(p)}}\sum\limits_{y\in\mathbb{F}_p}\sum\limits_{z\in\mathbb{F}_p}\xi_p^{y(f(x)-i)-z(\langle a,x\rangle_n -b)}\\
			&=p^{-2}(\sum\limits_{z\in\mathbb{F}_p^*}\sum\limits_{y\in\mathbb{F}_p^*}\sum\limits_{x\in V_n^{(p)}}\xi_p^{y(f(x)-i)-z(\langle a,x \rangle  _n-b)}+\sum\limits_{y\in\mathbb{F}_p^*}\sum\limits_{x\in V_n^{(p)}}\xi_p^{y(f(x)-i)}+\sum\limits_{z\in\mathbb{F}_p^*}\sum\limits_{x\in V_n^{(p)}}\xi_p^{z(\langle a,x\rangle_n-b)})+p^{n-2}.
	\end{align*}}
	Let $N_1=\sum_{z\in\mathbb{F}_p^*}\sum_{y\in\mathbb{F}_p^*}\sum_{x\in V_n^{(p)}}\xi_p^{y(f(x)-i)-z(\langle a,x \rangle  _n-b)}$, $N_2=\sum_{y\in\mathbb{F}_p^*}\sum_{x\in V_n^{(p)}}\xi_p^{y(f(x)-i)}$ and
	 $N_3=\sum_{z\in\mathbb{F}_p^*}\sum_{x\in V_n^{(p)}}\xi_p^{z(\langle a,x\rangle_n-b)}$, then $N_{(i,a,b)}=p^{-2}(N_1+N_2+N_3)+p^{n-2}$. In the following, we give

	  \noindent the values of 
	  $N_1$,
	   $N_2$ and $N_3$.
	   
	   	Note that 
	$N_1
	=\sum_{z\in\mathbb{F}_p^*}\sum_{y\in\mathbb{F}_p^*}\sigma_y(\widehat{\chi_f}(y^{-1}za))\xi_p^{-yi+zb}.$ 
	According to Lemma \ref{Le 7}, we know that for any $y,z\in \mathbb{F}_p^*$, if $a\notin\mathrm{Supp}(\widehat{\chi_f})$, then $y^{-1}za\notin\mathrm{Supp}(\widehat{\chi_f})$, and if  $a\in\mathrm{Supp}(\widehat{\chi_f})$, then $\epsilon_a=\epsilon _{y^{-1}za}$ and $f^*(y^{-1}za)=y^{-2}z^{2}f^*(a)$. Thus, if $a\notin\mathrm{Supp}(\widehat{\chi_f})$, then $N_1=0$. If $a\in \mathrm{Supp}(\widehat{\chi_f})$ and $n+s$ is even, we have that
	$N_1=\epsilon_ap^{\frac{n+s}{2}}\sum_{z\in\mathbb{F}_p^*}\sum_{y\in\mathbb{F}_p^*}\xi_p^{y^{-1}z^2f^*(a)-yi+zb}.$ 
	If $f^*(a)=0$, then by Lemma \ref{Le 1}, $N_1=\epsilon_ap^{\frac{n+s}{2}}$. If $f^*(a)\ne 0$, by Lemmas \ref{Le 1} and \ref{Le 2}, we get that 
	{\small\[
			N_1=\epsilon_ap^{\frac{n+s}{2}}\sum\limits_{y\in\mathbb{F}_p^*}\xi_p^{-yi}\sum\limits_{z\in\mathbb{F}_p^*}\xi_p^{y^{-1}f^*(a)z^2+zb}
			=\epsilon_ap^{\frac{n+s}{2}}(p\eta(b^24^{-1}+if^*(a))+1).
	\]}
	If $a\in \mathrm{Supp}(\widehat{\chi_f})$ and $n+s$ is odd, we have $
	N_1
	=\epsilon_ap^{\frac{n+s-1}{2}}\sqrt{p^*}\sum_{z\in\mathbb{F}_p^*}\sum_{y\in\mathbb{F}_p^*}\eta(y)\xi_p^{y^{-1}z^2f^*(a)-yi+zb}.$ 
	If $f^*(a)=0$, then by Lemma \ref{Le 1}, $N_1=-\epsilon_a\eta(i)p^{\frac{n+s+1}{2}}$. If $f^*(a)\ne 0$, by Lemmas \ref{Le 1} and \ref{Le 2}, we get that 
	{\small \[
			N_1  =\epsilon_ap^{\frac{n+s-1}{2}}\sqrt{p^*}\sum\limits_{y\in\mathbb{F}_p^*}\eta(y)\xi_p^{-yi}\sum\limits_{z\in\mathbb{F}_p^*}\xi_p^{y^{-1}f^*(a)z^2+zb} =\epsilon_ap^{\frac{n+s+1}{2}}(\eta(-f^*(a))(\delta_0(b^24^{-1}+if^*(a))p-1)-\eta(i)).
	\]}
	
	For the value of $N_2$, we have that
	$N_2=\sum_{y\in\mathbb{F}_p^*}\sigma_y(\widehat{\chi_f}(0))\xi_p^{-yi}.$
	Since $f^*(0)=0$ and
	$0\in\mathrm{Supp}(\widehat{\chi_f})$, then by Lemma \ref{Le 1}, when $n+s$ is even, $N_2=-\epsilon_0p^{\frac{n+s}{2}}$; when $n+s$ is odd, $N_2=\epsilon_0\eta(i)p^{\frac{n+s+1}{2}}$.	Finally, by Lemma \ref{Le 1}, we easily get that $N_3=0$.
	The proof is now completed. \end{proof}
	\section{Self-orthogonal codes from plateaued functions}
	In this section, based on the first and the second generic constructions, we construct some linear codes from plateaued functions and determine their weight distributions. Moreover, we prove that those codes are self-orthogonal under certain conditions. 
	\subsection{Self-orthogonal codes $\mathcal{C}_f$ and $\widetilde{\mathcal{C}}_f$}
	
	Let $f(x)$ be a function from $V_n^{(p)}$ to $\mathbb{F}_p$ with $f(0)=0$. In this subsection, we study the linear code $\mathcal{C}_f$ defined by 
	\begin{equation}
		\mathcal{C}_{f}=\{c(a,b)=(af(x)+\langle b,x\rangle_n)_{x\in {V_n^{(p)}}\setminus\{0\}}: a\in\mathbb{F}_p, b\in V_n^{(p)}\},
	\end{equation}
	and its punctured code $\widetilde{\mathcal{C}}_f$ defined by
	\begin{equation}
		\widetilde{\mathcal{C}}_{f}=\{\tilde{c}(a,b)=(af(x)+\langle b,x\rangle_n)_{x\in V_n^{(p)}\setminus D_{f,0}}: a\in\mathbb{F}_p,b\in V_n^{(p)}\}.
	\end{equation}
	Firstly, we give some results on the linear code $\mathcal{C}_f$ defined by (8).
	\begin{lemma}\label{Le 14}
		Let $f(x)$ be a function from $V_n^{(p)}$ to $\mathbb{F}_p$ with $f(0)=0$, where $p>3$ is an odd prime. If $f(x)=f(-x)$ for any $x\in V_n^{(p)}$, and $\sum_{i\in\mathbb{F}_p^*}i^2N_i(f)=0$, then the linear code $\mathcal{C}_f$ defined by $(8)$ is self-orthogonal.
	\end{lemma}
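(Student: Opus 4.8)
The plan is to apply Proposition~\ref{Po 3}: it suffices to show that $c(a,b)\cdot c(a,b)=0$ in $\mathbb{F}_p$ for every $a\in\mathbb{F}_p$ and $b\in V_n^{(p)}$, where by definition of $\mathcal{C}_f$ one has $c(a,b)\cdot c(a,b)=\sum_{x\in V_n^{(p)}\setminus\{0\}}\big(af(x)+\langle b,x\rangle_n\big)^2$. Since $f(0)=0$ and $\langle b,0\rangle_n=0$, the summand vanishes at $x=0$, so I may extend the sum to all of $V_n^{(p)}$. Expanding the square then splits the quantity into three pieces, all computed in $\mathbb{F}_p$: the quadratic part $a^2\sum_{x\in V_n^{(p)}}f(x)^2$, the cross part $2a\sum_{x\in V_n^{(p)}}f(x)\langle b,x\rangle_n$, and the linear part $\sum_{x\in V_n^{(p)}}\langle b,x\rangle_n^2$. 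I will show each of the three is zero.

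For the quadratic part, grouping $x$ by the value of $f(x)$ gives $\sum_{x\in V_n^{(p)}}f(x)^2=\sum_{i\in\mathbb{F}_p^*}i^2N_i(f)$ in $\mathbb{F}_p$, which is $0$ by hypothesis. For the cross part, I use the evenness $f(x)=f(-x)$: substituting $x\mapsto -x$ (a permutation of $V_n^{(p)}$) and using $\langle b,-x\rangle_n=-\langle b,x\rangle_n$ gives $\sum_{x}f(x)\langle b,x\rangle_n=\sum_{x}f(-x)\langle b,-x\rangle_n=-\sum_{x}f(x)\langle b,x\rangle_n$, so $2\sum_{x}f(x)\langle b,x\rangle_n=0$; since $p$ is odd, $2$ is invertible and this forces the cross part to vanish. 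For the linear part, if $b=0$ it is trivially $0$; if $b\ne 0$ then $x\mapsto\langle b,x\rangle_n$ is a nonzero linear form and hits each element of $\mathbb{F}_p$ exactly $p^{n-1}$ times, so $\sum_{x}\langle b,x\rangle_n^2=p^{n-1}\sum_{j\in\mathbb{F}_p}j^2$. This is where the hypothesis $p>3$ is used: since $p-1\nmid 2$, one has $\sum_{j\in\mathbb{F}_p}j^2=0$ in $\mathbb{F}_p$, so the linear part is $0$ too. Summing the three pieces gives $c(a,b)\cdot c(a,b)=0$ for all $a,b$, and Proposition~\ref{Po 3} finishes the proof.

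I do not expect a genuine obstacle here; the argument is short and elementary. The only points requiring a little care are the correct reduction of the three exponent sums modulo $p$ (rather than over $\mathbb Z$), the harmlessness of enlarging the index set from $V_n^{(p)}\setminus\{0\}$ to $V_n^{(p)}$ (guaranteed by $f(0)=0$), and recognizing that the sole purpose of the assumption $p>3$ is to make $\sum_{j\in\mathbb{F}_p}j^2$ vanish in $\mathbb{F}_p$ — a fact that fails for $p=3$, where self-orthogonality instead would be handled via the characterization that every codeword has weight divisible by $3$. It is also worth noting that this lemma makes no use of $f$ being plateaued; only $f(0)=0$, evenness of $f$, and the vanishing of $\sum_{i\in\mathbb{F}_p^*}i^2N_i(f)$ enter.
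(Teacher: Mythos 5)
Your proposal is correct and follows essentially the same route as the paper: expand $c(a,b)\cdot c(a,b)$ over all of $V_n^{(p)}$ (harmless since $f(0)=0$), split into the quadratic, cross, and linear pieces, kill the cross term by the $x\mapsto -x$ pairing from evenness of $f$, kill the linear term via balancedness of $\langle b,\cdot\rangle_n$ together with $\sum_{j\in\mathbb{F}_p}j^2=0$ for $p>3$, and kill the quadratic term by the hypothesis $\sum_{i\in\mathbb{F}_p^*}i^2N_i(f)=0$, then invoke Proposition~\ref{Po 3}. No substantive differences from the paper's argument.
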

	\begin{proof}
	For any $a\in\mathbb{F}_p$ and $b\in V_n^{(p)}$, since $f(0)=0$, we have that
	{\small 	\begin{align*}
			\langle c(a,b),c(a,b)\rangle_m&=\sum\limits_{x\in V_n^{(p)}}(af(x)+\langle b,x\rangle_n)(af(x)+\langle b,x\rangle_n)\\
			&=\sum\limits_{x\in V_n^{(p)}}(\langle b,x\rangle_n)^2+\sum\limits_{i\in\mathbb{F}_p^*}\sum\limits_{x\in D_{f,i}}(a^2i^2+2ai\langle b, x\rangle_n)\\
			&=\sum\limits_{x\in V_n^{(p)}}(\langle b,x\rangle_n)^2+\sum\limits_{i\in\mathbb{F}_p^*}a^2i^2N_i(f)+\sum\limits_{i\in\mathbb{F}_p^*}\sum\limits_{x\in D_{f,i}}2ai\langle b,x\rangle_n,
	\end{align*}}
	where $m=p^n-1$. 
	
	If $b=0$, then $\sum_{x\in V_n^{(p)}}(\langle b,x\rangle_n)^2=0$. If $b\ne 0$, since $\langle b,x\rangle_n$ is a balanced function, then
	$\sum_{x\in V_n^{(p)}}(\langle b,x\rangle_n)^2
	=p^{n-1}\sum_{i\in\mathbb{F}_p^*}i^2.$ Since $1^2+2^2+\cdots+(p-1)^2=\frac{(p-1)(2p-1)}{6}p\equiv 0\ (\mathrm{mod}\ p)$ for $p>3$, then $\sum_{i\in\mathbb{F}_p^*}i^2=0$. Thus, $
	\sum_{x\in V_n^{(p)}}(\langle b,x\rangle_n)^2=0.$ Since $f(x)=f(-x)
	$ for any $x\in V_n^{(p)}$, then $D_{f,i}=-D_{f,i}$ for any $i\in\mathbb{F}_p^*$. Thus, $\sum_{i\in\mathbb{F}_p^*}\sum_{x\in D_{f,i}}2ai\langle b,x\rangle_n=\sum_{i\in\mathbb{F}_p^*}\sum_{x\in D_{f,i}}ai(\langle b,x\rangle_n+\langle b,-x\rangle_n)=0$. Since $\sum_{i\in\mathbb{F}_p^*}i^2N_i(f)=0$, then we have that $\langle c(a,b),c(a,b)\rangle_m=0$. According to Proposition \ref{Po 3}, we easily get that $\mathcal{C}_f$ is self-orthogonal.\end{proof}
	\begin{theorem}\label{Th 1}
		Let $n,s$ be integers with $0\le s\le n-2$ for even $n+s$, and $0\le s\le n-1$ for odd $n+s$. Let $f(x):V_n^{(p)}\longrightarrow \mathbb{F}_p$ be an $s$-plateaued function whose dual $f^*(x)$ is bent relative to $\mathrm{Supp}(\widehat{\chi_f})$, $f(0)=0$ and $\#B_+(f)=k$. Then $\mathcal{C}_f$ defined by (8) is a $p$-ary $[p^n-1,n+1]$  linear code and its weight distribution is given by Table \uppercase\expandafter{\romannumeral1} in the Appendix for even $n+s$, and Table \uppercase\expandafter{\romannumeral2} in the Appendix for odd $n+s$, respectively. Moreover, if $n+s\ge 3$ when $p=3$, and $(n,s)\ne (1,0)$, $f(x)=f(-x)$ for any $x\in V_n^{(p)}$  when $p\ne3$, then $\mathcal{C}_f$ is self-orthogonal. 
	\end{theorem}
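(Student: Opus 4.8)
The plan is to establish, in order, the parameters $[p^n-1,n+1]$, then the weight distribution, and finally the self-orthogonality, reducing everything to the exponential-sum lemmas of Section~3. \emph{Parameters.} The length is $p^n-1=\#\bigl(V_n^{(p)}\setminus\{0\}\bigr)$, so it suffices to show that the $\mathbb{F}_p$-linear map $(a,b)\mapsto c(a,b)$ onto $\mathcal{C}_f$ is injective. If $c(a,b)=0$, then since $f(0)=0$ we get $af(x)+\langle b,x\rangle_n=0$ for all $x\in V_n^{(p)}$; were $a\ne 0$, this would make $f(x)=\langle -a^{-1}b,x\rangle_n$ linear, so $\widehat{\chi_f}(\alpha)=p^n\delta_0(\alpha+a^{-1}b)$ by Lemma~\ref{Le 1}(4), forcing $f$ to be $n$-plateaued and contradicting $s\le n-1$. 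Hence $a=0$, and non-degeneracy of $\langle\cdot,\cdot\rangle_n$ then gives $b=0$, so $\dim\mathcal{C}_f=n+1$.

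\emph{Weight distribution.} Since $f(0)=0$, the coordinate indexed by $0$ is identically $0$, whence $\mathrm{wt}(c(a,b))=(p^n-1)-(M_{a,b}-1)=p^n-M_{a,b}$ with $M_{a,b}$ as in Lemma~\ref{Le 10}. Substituting the four cases of Lemma~\ref{Le 10} produces the list of weights, distinguishing for $a\ne 0$ and $\alpha:=-a^{-1}b\in\mathrm{Supp}(\widehat{\chi_f})$ the sign $\epsilon_\alpha$ and the ``value class'' of $f^*(\alpha)$ (whether $f^*(\alpha)=0$, and, when $n+s$ is odd, whether it is a square). For the multiplicities, fix $a\ne 0$: then $b\mapsto\alpha$ is a bijection of $V_n^{(p)}$, so exactly $p^n-p^{n-s}$ choices of $b$ give $\alpha\notin\mathrm{Supp}(\widehat{\chi_f})$, and inside $\mathrm{Supp}(\widehat{\chi_f})$ the counts refined by $\epsilon_\alpha$ and the value class of $f^*(\alpha)$ are precisely the quantities $c_j(f^*)$, $d_j(f^*)$ of Lemma~\ref{Le 5} summed over $j=0$, over $j\in SQ$, and over $j\in NSQ$; here one uses that $j_0=f(0)=0$ in Lemmas~\ref{Le 3}, \ref{Le 4} and \ref{Le 5}, so the ``peak'' value class of $f^*$ is exactly $\{0\}$, in agreement with the tables. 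Adding the $a=0$ contributions (the zero codeword, and $p^n-1$ codewords of weight $(p-1)p^{n-1}$) and multiplying the $a\ne 0$ counts by $p-1$ gives Table~\uppercase\expandafter{\romannumeral1} for even $n+s$ and Table~\uppercase\expandafter{\romannumeral2} for odd $n+s$.

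\emph{Self-orthogonality.} I would split on $p$. If $p=3$: the hypothesis $n+s\ge 3$ and the range constraints force $n\ge 2$, and then every power of $3$ appearing in a weight of Table~\uppercase\expandafter{\romannumeral1} or \uppercase\expandafter{\romannumeral2} — namely $3^{n-1}$ and $3^{(n+s)/2-1}$ (even case) or $3^{(n+s-1)/2}$ (odd case) — is divisible by $3$, so every codeword weight is divisible by $3$ and $\mathcal{C}_f$ is self-orthogonal by the $p=3$ divisibility criterion recalled in Section~2. If $p>3$: I would invoke Lemma~\ref{Le 14}. By hypothesis $f(x)=f(-x)$, so Lemma~\ref{Le 8} gives $f^*(0)=0$, i.e.\ $j_0=0$ in Lemma~\ref{Le 3}, and $(n,s)\ne(1,0)$ with the range constraints forces $n\ge 2$ (and $n+s\ge 3$ when $n+s$ is odd). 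Reducing Lemma~\ref{Le 3} modulo $p$ then gives $N_i(f)\equiv 0\pmod p$ for all $i$, except in the lone corner case $(n,s)=(2,0)$, where $N_i(f)\equiv -\epsilon_0\pmod p$ is constant in $i$; in either case $\sum_{i\in\mathbb{F}_p^*}i^2N_i(f)\equiv c\sum_{i\in\mathbb{F}_p^*}i^2\equiv 0\pmod p$, since $\sum_{i\in\mathbb{F}_p^*}i^2=\tfrac{(p-1)p(2p-1)}{6}\equiv 0\pmod p$ for $p>3$. Hence the hypotheses of Lemma~\ref{Le 14} hold and $\mathcal{C}_f$ is self-orthogonal.

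The step I expect to be the main obstacle is the bookkeeping in the weight distribution: aligning the four-way case split of Lemma~\ref{Le 10} with the value distributions $c_j(f^*),d_j(f^*)$ of Lemma~\ref{Le 5} so that the multiplicities in the two tables come out exactly, and keeping track of which powers of $p$ survive modulo $p$ in the small-parameter corners ($(n,s)=(2,0)$ for $p>3$ and $n+s=3$ for $p=3$). There is no new idea needed — all the substance is already contained in the exponential-sum lemmas of Section~3, in Proposition~\ref{Po 3}, and in the $p=3$ weight-divisibility criterion.
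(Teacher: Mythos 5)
Your proposal is correct and follows essentially the same route as the paper: the weights come from $\mathrm{wt}(c(a,b))=p^n-M_{a,b}$ via Lemma \ref{Le 10}, the multiplicities from the counts $c_j(f^*),d_j(f^*)$ of Lemma \ref{Le 5}, self-orthogonality for $p=3$ from $3\mid\mathrm{wt}$, and for $p>3$ from Lemma \ref{Le 14} together with $f^*(0)=0$ (Lemma \ref{Le 8}), the form of $N_i(f)$ in Lemma \ref{Le 3}, and $\sum_{i\in\mathbb{F}_p^*}i^2=0$. Your extra injectivity argument for the dimension and the explicit treatment of the $(n,s)=(2,0)$ corner are just more detailed versions of steps the paper leaves implicit.
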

	\begin{proof} Note that $\mathrm{wt}(c(a,b))=p^{n}-M_{a,b}$, where $M_{a,b}=\#\{x\in V_n^{(p)}: af(x)+\langle b,x\rangle_n=0\}$, then by Lemmas \ref{Le 5} and \ref{Le 10}, we can get the weight distribution of $\mathcal{C}_f$. We also know that $\mathrm{wt}(c(a,b))=0$ if and only if $a=0$ and $b=0$, thus the dimension of $\mathcal{C}_{f}$ is $n+1$. 
	
	Now, we prove that $\mathcal{C}_f$ is self-orthogonal. When $p=3$, if $n+s\ge 3$, we have that  $3|\mathrm{wt}(c(a,b))$ for any $a\in\mathbb{F}_p$ and $b\in V_n^{(p)}$. Thus, $\mathcal{C}_f$ is self-orthogonal. When $p\ne 3$, if $f(x)=f(-x)$ for any $x\in V_n^{(p)}$, by Lemma \ref{Le 8}, we have that $f^*(0)=0$. According to Lemma \ref{Le 3}, when $n+s$ is even, we have that $\sum_{i\in\mathbb{F}_p^*}i^2N_i(f)=(p^{n-1}-\epsilon_0p^{\frac{n+s}{2}-1})\sum_{i\in\mathbb{F}_p^*}i^2$. By the proof of Lemma \ref{Le 14}, we know that $\sum_{i\in\mathbb{F}_p^*}i^2=0$, so $\sum_{i\in\mathbb{F}_p^*}i^2N_i(f)=0$. When $n+s$ is odd, if $n+s\ge 3$, then $N_i(f)$ is divisible by $p$ for any $i\in\mathbb{F}_p^*$. Thus, we also get that $\sum_{i\in\mathbb{F}_p^*}i^2N_i(f)=0$. According to Lemma \ref{Le 14}, we have that $\mathcal{C}_f$ is self-orthogonal.
	\end{proof}
	
	\begin{remark}$(1)$	If $k=0$ or $k=p^{n-s}$, then $f(x)$ is weakly regular, so the linear codes in \cite{Mesnager2} can be obtained by Theorem \ref{Th 1}.\\
		$(2)$ According to \cite[Lemma 2.1]{Ashikhmin}, we have that if $0\le s\le n-4$ for even $n+s$, and $0\le s\le n-3$ for odd $n+s$, then the linear code $\mathcal{C}_f$ constructed in Theorem \ref{Th 1} is minimal.\\	$(3)$ Let $d^{\bot}$ denote the minimum distance of $\mathcal{C}_f^{\bot}$. By the first four Pless power moments given in \cite[pp. 259-260]{Huffman} and the weight distribution of $\mathcal{C}_f$, we have that when $n+s$ is even, $d^{\bot}=3$ if $s=0$, $n=2$ and $0\in B_-(f^*)$, otherwise $d^{\bot}=2$. When $n+s$ is odd, $d^{\bot}=3$ if $s=0,n=1$ and $p\ne 3$ which corresponds to quadratic bent functions \cite[Theorem 4.6]{Hou}, or else $d^{\bot}=2$ if $n>1$.
	\end{remark}
	
	Let $f(x):V_n^{(p)}\longrightarrow \mathbb{F}_p$ be an $s$-plateaued function belonging to $\mathscr{F}$. We easily have that for any $a\in\mathbb{F}_p^*$, $\alpha\in V_n^{(p)}$, $\alpha\in D_{f,0}$ if and only if $a\alpha\in D_{f,0}$. Thus, we can select the subset  $\widetilde{M}$ of $V_n^{(p)}\setminus D_{f,0}$, such that  $\bigcup_{a\in\mathbb{F}_p^*}a\widetilde{M}$ is a partition of  $V_n^{(p)}\setminus D_{f,0}$. In the following theorems, we give some results on the linear code $\widetilde{\mathcal{C}}_f$ defined by (9).
	\begin{theorem}\label{Th 2}
		Let $n,s$ be integers, for even $n+s$, $0\le s\le n-2$, $n+s\ge 4$, and for odd $n+s$, $0\le s\le n-1$, $n+s\ge 3$. Let  $f(x):V_n^{(p)}\longrightarrow \mathbb{F}_p$ be an $s$-plateaued function belonging to $\mathscr{F}$ and $\#B_+(f)=k$. Then $\widetilde{\mathcal{C}}_f$ defined by (9) is a $p$-ary linear code with parameters  $[(p-1)(p^{n-1}-\epsilon_0p^{\frac{n+s}{2}-1}),n+1]$ when $n+s$ is even, and $[(p-1)p^{n-1},n+1]$ when $n+s$ is odd. The weight distribution of $\widetilde{\mathcal{C}}_f$ is given by Table \uppercase\expandafter{\romannumeral3} in the Appendix for even $n+s$, and Table \uppercase\expandafter{\romannumeral4} in the Appendix for odd $n+s$, respectively. The dual code ${\widetilde{\mathcal{C}}_f}^{\bot}$ is a $p$-ary linear code with parameters $[(p-1)(p^{n-1}-\epsilon_0p^{\frac{n+s}{2}-1}),(p-1)(p^{n-1}-\epsilon_0p^{\frac{n+s}{2}-1})-n-1,3]$ when $n+s$ is even, and $[(p-1)p^{n-1},(p-1)p^{n-1}-n-1,3]$ when $n+s$ is odd.  Moreover, if $p\ne 3$ or $n+s\ge 5$ for $p=3$, then $\widetilde{\mathcal{C}}_f$ is self-orthogonal. 
	\end{theorem}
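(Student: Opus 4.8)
The plan is to read everything off the structure of an $f\in\mathscr{F}$ together with the counting Lemmas~\ref{Le 3}, \ref{Le 5}, \ref{Le 7}, \ref{Le 8}, \ref{Le 10} and~\ref{Le 11}, viewing $\widetilde{\mathcal{C}}_f$ as $\mathcal{C}_f$ punctured on $D_{f,0}$. Since $f\in\mathscr{F}$ gives $f(0)=0$ and $f(-x)=f(x)$, Lemma~\ref{Le 8} yields $f^*(0)=0$, so Lemma~\ref{Le 3} gives $\#D_{f,0}=N_0(f)=p^{n-1}+\epsilon_0(p-1)p^{\frac{n+s}{2}-1}$ for even $n+s$ and $\#D_{f,0}=p^{n-1}$ for odd $n+s$; hence the length of $\widetilde{\mathcal{C}}_f$ is $p^n-\#D_{f,0}$, which equals the two claimed values. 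For the weights, I would use that on $D_{f,0}$ the entry $af(x)+\langle b,x\rangle_n$ collapses to $\langle b,x\rangle_n$, so for all $a\in\mathbb{F}_p$, $b\in V_n^{(p)}$,
\[
\mathrm{wt}(\tilde c(a,b))=\big(p^n-\#D_{f,0}\big)-\big(M_{a,b}-N_{(0,b)}\big),
\]
with $M_{a,b}$ as in Lemma~\ref{Le 10} and $N_{(0,b)}$ as in Lemma~\ref{Le 11}. By Lemma~\ref{Le 7}, $\mathrm{Supp}(\widehat{\chi_f})$ and $B_\pm(f)$ are $\mathbb{F}_p^*$-stable and $f^*(cb)=c^{h_b}f^*(b)$ with $h_b$ even (as $\gcd(h_b-1,p-1)=1$ and $p-1$ is even), so for $a\ne0$ the quantities attached to $-a^{-1}b$ that enter $M_{a,b}$ (namely $\epsilon_{-a^{-1}b}=\epsilon_b$, $\delta_0(f^*(-a^{-1}b))=\delta_0(f^*(b))$ and $\eta(f^*(-a^{-1}b))=\eta(f^*(b))$) depend only on $b$; thus $\mathrm{wt}(\tilde c(a,b))$ depends only on the class of $b$ ($b=0$; $b\notin\mathrm{Supp}(\widehat{\chi_f})$; or $b\in\mathrm{Supp}(\widehat{\chi_f})\cap B_\pm(f)$ with $f^*(b)$ zero, a square, or a nonsquare). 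Counting the sizes of these classes via $\#\mathrm{Supp}(\widehat{\chi_f})=p^{n-s}$, Lemma~\ref{Le 5}, and Lemma~\ref{Le 3} applied to $f^*$ on its Walsh support produces Tables~\uppercase\expandafter{\romannumeral3} and~\uppercase\expandafter{\romannumeral4}. The dimension is $n+1$: if $\tilde c(a,b)=0$ and $a=0$, then $\langle b,x\rangle_n=0$ on $V_n^{(p)}\setminus D_{f,0}$, a set of cardinality exceeding $p^{n-1}$ (by the hypotheses on $s$ and the value of $\#D_{f,0}$), hence not inside any hyperplane through $0$, so $b=0$; if $a\ne0$, adding the vanishing identity at $x$ and at $-x$ (note $D_{f,0}=-D_{f,0}$ and $f(-x)=f(x)$) gives $2af(x)=0$ on $V_n^{(p)}\setminus D_{f,0}$, forcing that set empty, impossible since $s\le n-2$ (resp.\ $s\le n-1$).

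For the dual code, its length is that of $\widetilde{\mathcal{C}}_f$ and its dimension is the length minus $n+1$. I would bound $d^{\bot}$ by inspecting the columns $(f(x),x)^{T}$, $x\in V_n^{(p)}\setminus D_{f,0}$, of a generator matrix: none is zero (each $x\ne0$), and no two are proportional, for if $(f(x),x)=\lambda(f(y),y)$ with $x\ne y$ then $x=\lambda y$ with $\lambda\ne1$, so $f(x)=f(\lambda y)=\lambda^{t_y}f(y)$ with $t_y\in\{t,t'\}$ (since $y\in B_+(f^*)\cup B_-(f^*)=V_n^{(p)}$ and $f(y)\ne0$), whence $\lambda^{t_y-1}=1$, which by $\gcd(t_y-1,p-1)=1$ forces $\lambda=1$, a contradiction. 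Hence $d^{\bot}\ge3$. Exactly as in the remark following Theorem~\ref{Th 1}, applying the first four Pless power moments (\cite[pp. 259-260]{Huffman}) to the weight distribution of $\widetilde{\mathcal{C}}_f$ gives $A_1^{\bot}=A_2^{\bot}=0$ and $A_3^{\bot}>0$, so $d^{\bot}=3$, which also pins down the dual parameters.

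For self-orthogonality I would combine Proposition~\ref{Po 3} with the identity (obtained by splitting $\sum_{x\in V_n^{(p)}}(af(x)+\langle b,x\rangle_n)^2$ and using $f\equiv0$ on $D_{f,0}$)
\[
\langle\tilde c(a,b),\tilde c(a,b)\rangle=\langle c(a,b),c(a,b)\rangle-\sum_{x\in D_{f,0}}\langle b,x\rangle_n^2 .
\]
The first term vanishes by Theorem~\ref{Th 1}, whose hypotheses hold here because $f\in\mathscr{F}$ gives $f(-x)=f(x)$, $(n,s)\ne(1,0)$ by $n+s\ge3$, and $n+s\ge3$ when $p=3$. For the second term, $D_{f,0}$ is $\mathbb{F}_p^*$-invariant, since $f\in\mathscr{F}$ gives $f(ax)=a^{t_x}f(x)$ with $t_x\in\{t,t'\}$ and hence $f(x)=0\Leftrightarrow f(ax)=0$; therefore $\#\{x\in D_{f,0}:\langle b,x\rangle_n=j\}$ has a common value $m_b$ for all $j\in\mathbb{F}_p^*$, and $\sum_{x\in D_{f,0}}\langle b,x\rangle_n^2=m_b\sum_{j\in\mathbb{F}_p^*}j^2$. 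When $p>3$ this is $0$ in $\mathbb{F}_p$ because $\sum_{j\in\mathbb{F}_p^*}j^2\equiv0\pmod p$, so $\langle\tilde c(a,b),\tilde c(a,b)\rangle=0$ and $\widetilde{\mathcal{C}}_f$ is self-orthogonal. When $p=3$ one gets $\langle\tilde c(a,b),\tilde c(a,b)\rangle\equiv m_b\pmod 3$ with $m_b=(\#D_{f,0}-N_{(0,b)})/2$; a short check using Lemma~\ref{Le 11} (equivalently, inspecting Table~\uppercase\expandafter{\romannumeral3} or~\uppercase\expandafter{\romannumeral4}) shows the extra hypothesis $n+s\ge5$ forces $3\mid m_b$ for every $b$, i.e.\ all weights are divisible by $3$, which is self-orthogonality for ternary codes.

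The conceptual steps (length, dimension, $d^{\bot}=3$ via the column argument, and the $p>3$ self-orthogonality via $\mathbb{F}_p^*$-invariance of $D_{f,0}$) are short; I expect the main obstacle to be the weight-distribution bookkeeping — keeping the $\mathbb{F}_p^*$-orbit structure, the split $B_+(f)$ versus $B_-(f)$, and the square/nonsquare split of the values of $f^*$ mutually consistent across all the cases of Lemmas~\ref{Le 10} and~\ref{Le 11} — together with the case-by-case $p=3$ divisibility verification from the resulting tables.
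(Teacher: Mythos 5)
Your proposal is correct and follows essentially the same route as the paper: length via Lemmas \ref{Le 8} and \ref{Le 3}, weights via $\mathrm{wt}(\tilde c(a,b))=(p^n-\#D_{f,0})-(M_{a,b}-N_{(0,b)})$ with Lemmas \ref{Le 5}, \ref{Le 7}, \ref{Le 10}, \ref{Le 11}, the dual parameters via the Pless power moments, and self-orthogonality from the $\mathbb{F}_p^*$-orbit structure together with $\sum_{j\in\mathbb{F}_p^*}j^2\equiv 0\pmod p$ for $p>3$ and weight divisibility for $p=3$. Your only departures are cosmetic but valid: an explicit column-independence argument for $d^{\bot}\ge 3$ (the paper reads $A_1^{\bot}=A_2^{\bot}=0$ off the power moments) and writing $\langle\tilde c,\tilde c\rangle=\langle c,c\rangle-\sum_{x\in D_{f,0}}\langle b,x\rangle_n^2$ instead of directly partitioning $V_n^{(p)}\setminus D_{f,0}$ into $\mathbb{F}_p^*$-orbits.
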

	\begin{proof}
	According to Lemma \ref{Le 8}, we have that $f^*(0)=0$. By Lemma \ref{Le 3}, we get that  the length of $\widetilde{\mathcal{C}}_f$ is $(p-1)(p^{n-1}-\epsilon_0p^{\frac{n+s}{2}-1})$ when $n+s$ is even, and $(p-1)p^{n-1}$ when $n+s$ is odd. Let $\widetilde{N}=\#\{x\in V_n^{(p)}\setminus D_{f,0}: af(x)+\langle b,x\rangle_n=0\}$. Since $\widetilde{N}=M_{a,b}-N_{(0,b)}$, where $M_{a,b}=\#\{x\in V_n^{(p)}:af(x)+\langle b,x\rangle_n=0\}$ and $N_{(0,b)}=\#\{x\in D_{f,0}:\langle b,x\rangle_n=0\}$, then by Lemmas \ref{Le 5}, \ref{Le 7}, \ref{Le 10} and \ref{Le 11}, we can get the weight distribution of $\widetilde{\mathcal{C}}_f$. We also know that $\mathrm{wt}(\tilde{c}(a,b))=0$ if and only if $a=0$ and $b=0$, then the dimension of $\widetilde{\mathcal{C}}_f$ is $n+1$. By the weight distribution of $\widetilde{\mathcal{C}}_f$ and the first four Pless power moments  given in \cite[pp. 259-260]{Huffman}, when $n+s$ is even, we have that $A_1^{\bot}=A_2^{\bot}=0$ and  $A_3^{\bot}=\frac{(p-1)^2(p-2)}{6}(p^{2n-3}(p^2-2p+1)+p^{n+s-2}(2p-3)+p^{\frac{n+s}{2}-1}-(2p-4)kp^{\frac{n+3s}{2}-3}-(3p^2-7p+5)p^{\frac{3n+s}{2}-3}-p^{n-1})>0$ for $0\in B_+(f)$, $A_3^{\bot}=\frac{(p-1)^2(p-2)}{6}(p^{2n-3}(p^2-2p+1)+p^{n+s-2}(2p-3)+(3p^2-5p+1)p^{\frac{3n+s}{2}-3}-(2p-4)kp^{\frac{n+3s}{2}-3}-p^{\frac{n+s}{2}-1}-p^{n-1})>0$ for $0\in B_-(f)$, so the minimal distance of ${\widetilde{\mathcal{C}}_f}^{\bot}$ is $3$. When $n+s$ is odd, we have that $A_1^{\bot}=A_2^{\bot}=0$ and $A_3^{\bot}=\frac{(p-1)^2(p-2)}{6}(p^{2n-3}+p^{2n-1}-2p^{2n-2}-p^{n+s-2}-p^{n-1})>0$, so the minimal distance of ${\widetilde{\mathcal{C}}_f}^{\bot}$ is also $3$.

	Now, we prove that $\widetilde{\mathcal{C}}_f$ is self-orthogonal. When $p=3$, by the weight distribution of $\widetilde{\mathcal{C}}_f$, we easily know that if $n+s\ge 5$, then  $3|\mathrm{wt}(\tilde{c}(a,b))$ for any $a\in\mathbb{F}_p$ and $b\in V_n^{(p)}$, thus $\widetilde{\mathcal{C}}_f$ is self-orthogonal. When $p\ne 3$,  we know that 
	{\small \[\langle \tilde{c}(a,b),\tilde{c}(a,b)\rangle_m=\sum\limits_{x\in V_n^{(p)}\setminus D_{f,0}}(\langle b,x\rangle_n)^2+\sum\limits_{i\in\mathbb{F}_p^*}a^2i^2N_i(f)+\sum\limits_{i\in\mathbb{F}_p^*}\sum\limits_{x\in D_{f,i}}2ai\langle b,x\rangle_n,\]}
	where $m=(p-1)(p^{n-1}-\epsilon_0p^{\frac{n+s}{2}-1})$ for even $n+s$ and $m=(p-1)p^{n-1}$ for odd $n+s$. 
	Since $f(x)$ belongs to $\mathscr{F}$, then $f(x)=f(-x)$ for any $x\in V_n^{(p)}$, by the proofs of Lemma \ref{Le 14} and Theorem \ref{Th 1}, we know that $\sum_{i\in\mathbb{F}_p^*}i^2N_i(f)=0$ and  $\sum_{i\in\mathbb{F}_p^*}\sum_{x\in D_{f,i}}2ai\langle b,x\rangle_n=0$ for any $a\in\mathbb{F}_p$, $b\in V_n^{(p)}$.  Note that for any $b\in V_n^{(p)}$,
	{\small \begin{equation*}\sum\limits_{x\in V_n^{(p)}\setminus D_{f,0}}(\langle b,x\rangle_n)^2=\sum\limits_{i\in\mathbb{F}_p^*}i^2\sum\limits_{x\in \widetilde{M}}(\langle b,x\rangle_n)^2.
	\end{equation*}}
	By the proof of Lemma \ref{Le 14}, we have  $\sum_{i\in\mathbb{F}_p^*}i^2=0$, so $\sum_{x\in V_n^{(p)}\setminus D_{f,0}}(\langle b,x\rangle_n)^2=0.$ Hence, $\langle \tilde{c}(a,b),\tilde{c}(a,b)\rangle_m=0$ for any $a\in\mathbb{F}_p$ and $b\in V_n^{(p)}$. According to Proposition \ref{Po 3}, we have that ${\widetilde{\mathcal{C}}}_f$ is self-orthogonal.
	This completes the proof.\end{proof}
	\begin{remark}
		According to the sphere packing bound, when $n+s$ is even, if $p=3$, $0\in B_+(f)$ and $n-s=2$, for fixed length and dimension,  ${\widetilde{\mathcal{C}}_f}^{\bot}$ is at least almost optimal; in other cases, for fixed length and minimum distance, ${\widetilde{\mathcal{C}}_f}^{\bot}$ is optimal.
		When $n+s$ is odd, for fixed length and minimum distance,  ${\widetilde{\mathcal{C}}_f}^{\bot}$ is optimal.
	\end{remark}
	\subsection{Self-orthogonal codes $\mathcal{C}_D$ and $\widetilde{\mathcal{C}}_D$}
	Let $f(x)$ be a function from $V_n^{(p)}$ to $\mathbb{F}_p$ with $f(0)=0$. Let $D$ be $D_{f,0}\setminus\{0\}$, $D_{f,sq}$ and $D_{f,nsq}$, respectively.  In the subsection, we study the linear codes $\mathcal{C}_D$ defined by (2) and their punctured codes $\widetilde{\mathcal{C}}_D$. We begin this section with the following lemma.
	\begin{lemma}\label{Le 15}
		Let $f(x)$ be a function from $V_n^{(p)}$ to $\mathbb{F}_p$ with $f(0)=0$, where $p>3$ is an odd prime. For any $a\in\mathbb{F}_p^*$, if there exists an even integer $l_a$ with $2\le l_a\le p-1$ such that $f(ax)=a^{l_a}f(x)$ for any $x\in V_n^{(p)}$, then the linear codes $\mathcal{C}_{D_{f,0}\setminus\{0\}}$, $\mathcal{C}_{D_{f,sq}}$ and $\mathcal{C}_{D_{f,nsq}}$ are self-orthogonal.
	\end{lemma}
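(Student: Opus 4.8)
The plan is to apply Proposition~\ref{Po 3}: a $p$-ary linear code is self-orthogonal if and only if $c\cdot c=0$ for every codeword $c$. For the defining-set construction (2), a codeword of $\mathcal{C}_D$ has the form $c(v)=(\langle v,x\rangle_n)_{x\in D}$ for $v\in V_n^{(p)}$, so $c(v)\cdot c(v)=\sum_{x\in D}\langle v,x\rangle_n^2$ (the sum taken in $\mathbb{F}_p$), and the entire statement reduces to proving the identity $\sum_{x\in D}\langle v,x\rangle_n^2=0$ for all $v\in V_n^{(p)}$, where $D$ is any one of $D_{f,0}\setminus\{0\}$, $D_{f,sq}$, $D_{f,nsq}$.

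First I would establish that each of these three defining sets is stable under scalar multiplication by $\mathbb{F}_p^*$ and avoids $0$. For $a\in\mathbb{F}_p^*$ the map $x\mapsto ax$ is a bijection of $V_n^{(p)}$; since $l_a$ is \emph{even}, $a^{l_a}=(a^{l_a/2})^2$ is a square in $\mathbb{F}_p^*$, so from $f(ax)=a^{l_a}f(x)$ we see that $f(ax)$ lies in the same one of the three classes $\{0\}$, $SQ$, $NSQ$ as $f(x)$. Hence multiplication by $a$ maps each of $D_{f,0}$, $D_{f,sq}$, $D_{f,nsq}$ into itself, and as these classes partition $V_n^{(p)}$ and the map is a bijection of the whole space, it restricts to a bijection of each class. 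Since $f(0)=0$ gives $0\notin D_{f,sq}\cup D_{f,nsq}$, and $0$ is explicitly deleted from $D_{f,0}$, in every case we have $aD=D$ for all $a\in\mathbb{F}_p^*$ and $0\notin D$. Therefore $D$ is a disjoint union of full punctured lines, $D=\bigsqcup_{y\in R}\mathbb{F}_p^*y$, over a set $R$ of orbit representatives, each orbit $\mathbb{F}_p^*y$ having exactly $p-1$ elements.

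With this decomposition the computation is short. Using the $\mathbb{F}_p$-bilinearity of $\langle\cdot,\cdot\rangle_n$,
\[
\sum_{x\in D}\langle v,x\rangle_n^2=\sum_{y\in R}\sum_{a\in\mathbb{F}_p^*}\langle v,ay\rangle_n^2=\Big(\sum_{a\in\mathbb{F}_p^*}a^2\Big)\sum_{y\in R}\langle v,y\rangle_n^2.
\]
As already noted in the proof of Lemma~\ref{Le 14}, $\sum_{a\in\mathbb{F}_p^*}a^2=\frac{(p-1)(2p-1)p}{6}\equiv 0\pmod p$ whenever $p>3$, because then $\gcd(p,6)=1$. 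Hence $\sum_{x\in D}\langle v,x\rangle_n^2=0$ for every $v\in V_n^{(p)}$, and Proposition~\ref{Po 3} yields the self-orthogonality of $\mathcal{C}_{D_{f,0}\setminus\{0\}}$, $\mathcal{C}_{D_{f,sq}}$ and $\mathcal{C}_{D_{f,nsq}}$ at once.

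There is no deep obstacle; the only delicate points are exactly the two places where the hypotheses are used in an essential way. The parity of $l_a$ (not merely the existence of some exponent) is what forces $a^{l_a}$ to be a square and thereby keeps the quadratic-residue classes $D_{f,sq}$ and $D_{f,nsq}$ invariant under $\mathbb{F}_p^*$; and the restriction $p>3$ is precisely what makes $\sum_{a\in\mathbb{F}_p^*}a^2$ vanish modulo $p$ (for $p=3$ this sum equals $2$), so both the proof and the statement genuinely require $p>3$.
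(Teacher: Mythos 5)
Your proof is correct and follows essentially the same route as the paper: both decompose each defining set into free $\mathbb{F}_p^*$-orbits (using that $l_a$ even keeps the classes $\{0\}$, $SQ$, $NSQ$ invariant), factor out $\sum_{a\in\mathbb{F}_p^*}a^2\equiv 0\pmod p$ for $p>3$, and conclude via Proposition~\ref{Po 3}. Your added remarks on exactly where the evenness of $l_a$ and the hypothesis $p>3$ are used are accurate but not a substantive departure from the paper's argument.
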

	\begin{proof} Since $l_a$ is an even integer for any $a\in\mathbb{F}_p^*$, then we have that $x\in D_{f,0}$ (respectively $D_{f,sq}$, $D_{f,nsq}$) if and only if $ax\in D_{f,0}$ (respectively $D_{f,sq}$, $D_{f,nsq}$) for any $x\in V_n^{(p)}$. Thus, we can select the subsets $\widetilde{D}_{f,0}$, $\widetilde{D}_{f,sq}$ and $\widetilde{D}_{f,nsq}$ such that $\bigcup_{a\in\mathbb{F}_p^*}a\widetilde{D}_{f,0}$, $\bigcup_{a\in\mathbb{F}_p^*}a\widetilde{D}_{f,sq}$ and $\bigcup_{a\in\mathbb{F}_p^*}a\widetilde{D}_{f,nsq}$ are partitions of $D_{f,0}$, $D_{f,sq}$ and $D_{f,nsq}$, respectively. For linear code $\mathcal{C}_{D_{f,0}\setminus\{0\}}$, we have that
	{\small \[\langle c(b),c(b)\rangle_m=\sum\limits_{x\in D_{f,0}}(\langle b,x\rangle_n)^2=\sum\limits_{i\in\mathbb{F}_p^*}i^2\sum\limits_{x\in \widetilde{D}_{f,0}}(\langle b,x\rangle_n)^2,\]}
	where $m=\#D_{f,0}-1$ and $b\in V_n^{(p)}$. 
	By the proof of Lemma \ref{Le 14}, we know that $\sum_{i\in\mathbb{F}_p^*}i^2=0$, so $\langle c(b),c(b)\rangle_m=0$ for any $b\in V_n^{(p)}$. According to Proposition \ref{Po 3}, we have that the linear code $\mathcal{C}_{D_{f,0}\setminus\{0\}}$ is self-orthogonal. Similarly, we can get that the linear codes $\mathcal{C}_{D_{f,sq}}$ and $\mathcal{C}_{D_{f,nsq}}$ are both self-orthogonal.\end{proof}
	
	\begin{theorem}\label{Th 3}
		Let $n,s$ be  integers with $0\le s\le n-4$ for even $n+s$, and $0\le s\le n-3$ for odd $n+s$. Let  $f(x):V_n^{(p)}\longrightarrow \mathbb{F}_p$ be an $s$-plateaued function belonging to $\mathscr{F}$ and $\#B_+(f)=k$. Then $\mathcal{C}_{D_{f,0}\setminus\{0\}}$ is a $p$-ary linear code with parameters $[p^{n-1}+\epsilon_0(p-1)p^{\frac{n+s}{2}-1}-1,n]$ when $n+s$ is even, and  $[p^{n-1}-1,n]$ when $n+s$ is odd. The weight distribution of $\mathcal{C}_{D_{f,0}\setminus\{0\}}$ is given by Table \uppercase\expandafter{\romannumeral5} in the Appendix for even $n+s$, and Table \uppercase\expandafter{\romannumeral6} in the Appendix for odd $n+s$, respectively. 
		Moreover, if $p\ne 3$ or $n+s\ge 5
		$ for $p=3$, then $\mathcal{C}_{D_{f,0}\setminus\{0\}}$ is self-orthogonal.
	\end{theorem}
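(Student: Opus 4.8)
\emph{Proof proposal.} The plan is to compute the length from $\#(D_{f,0}\setminus\{0\})$, obtain every nonzero codeword weight via the counting function $N_{(0,b)}$ of Lemma~\ref{Le 11}, read off the weight distribution by counting how many $b$ fall into each subcase (using Lemmas~\ref{Le 4}, \ref{Le 5}, \ref{Le 7}), and finally prove self-orthogonality separately for $p=3$ (from the weight distribution) and for $p>3$ (by a scaling argument as in Lemma~\ref{Le 15}). First, since $f\in\mathscr{F}$ we have $f(0)=0$ and $f(x)=f(-x)$, so $f^*(0)=0$ by Lemma~\ref{Le 8}; taking $j=j_0=0$ in Lemma~\ref{Le 3} gives $N_0(f)=p^{n-1}+\epsilon_0(p-1)p^{\frac{n+s}{2}-1}$ for even $n+s$ and $N_0(f)=p^{n-1}$ for odd $n+s$, hence the length $\#(D_{f,0}\setminus\{0\})=N_0(f)-1$. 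For $b\neq 0$ the number of zero coordinates of $c(b)$ is $N_{(0,b)}-1$ (the count $N_{(0,b)}=\#\{x\in D_{f,0}:\langle b,x\rangle_n=0\}$ also records the harmless coordinate $x=0$), so $\mathrm{wt}(c(b))=N_0(f)-N_{(0,b)}$; here the bound $0\le s\le n-4$ (resp. $0\le s\le n-3$) is precisely what forces this quantity to be strictly positive for every $b\neq 0$, so $b\mapsto c(b)$ is injective and $\dim\mathcal{C}_{D_{f,0}\setminus\{0\}}=n$.

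Turning the three cases of Lemma~\ref{Le 11} into the explicit tables is the main computational effort. The $p^n-p^{n-s}$ vectors outside $\mathrm{Supp}(\widehat{\chi_f})$ (using $\#\mathrm{Supp}(\widehat{\chi_f})=p^{n-s}$ by Parseval) all give the same weight; for $b\in\mathrm{Supp}(\widehat{\chi_f})$ the weight is governed by the sign $\epsilon_b$ and, when $n+s$ is even, by whether $f^*(b)=0$, whereas for odd $n+s$ it is governed by $\epsilon_b$ and $\eta(f^*(b))$. The frequencies then come from Lemmas~\ref{Le 4} and~\ref{Le 5}, which count the $b\in B_+(f)$ and $b\in B_-(f)$ with $f^*(b)$ equal to $0$, to a fixed square, or to a fixed nonsquare, and from Lemma~\ref{Le 7}(3) to rewrite the result in terms of the type $\epsilon_0$ of $f$; summing over $j=0$, $j\in SQ$, and $j\in NSQ$ gives Tables V and VI. I expect this bookkeeping --- particularly the odd $n+s$ case, where squares and nonsquares must be tracked separately --- to be the only laborious part, but it is entirely mechanical.

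For self-orthogonality, when $p=3$ one inspects the resulting tables: under $0\le s\le n-4$ (resp. $0\le s\le n-3$) together with $n+s\ge 5$, every power of $3$ occurring in $N_0(f)$ and $N_{(0,b)}$ --- including the extra factor of $3$ supplied by $p^*=\pm 3$ in the odd case --- carries exponent at least $1$, so each weight is divisible by $3$ and $\mathcal{C}_{D_{f,0}\setminus\{0\}}$ is self-orthogonal by the classical $\mathbb{F}_3$-criterion. When $p>3$, conditions $(3)$ and $(4)$ of $\mathscr{F}$, together with the fact that $B_+(f^*)\sqcup B_-(f^*)=V_n^{(p)}$ (because $f^*$ is bent relative to $\mathrm{Supp}(\widehat{\chi_f})$), show that for all $x\in V_n^{(p)}$ and $a\in\mathbb{F}_p^*$ one has $f(ax)=a^tf(x)$ or $f(ax)=a^{t'}f(x)$; in particular $f(x)=0\iff f(ax)=0$, so $D_{f,0}$ is closed under multiplication by $\mathbb{F}_p^*$. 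Choosing a transversal $\widetilde{D}_{f,0}$ with $D_{f,0}=\bigcup_{a\in\mathbb{F}_p^*}a\widetilde{D}_{f,0}$, we get for every $b\in V_n^{(p)}$
\[
c(b)\cdot c(b)=\sum_{x\in D_{f,0}}(\langle b,x\rangle_n)^2=\Big(\sum_{a\in\mathbb{F}_p^*}a^2\Big)\sum_{x\in\widetilde{D}_{f,0}}(\langle b,x\rangle_n)^2=0,
\]
since $\sum_{a\in\mathbb{F}_p^*}a^2=\tfrac{(p-1)p(2p-1)}{6}\equiv 0\pmod p$ for $p>3$; this is exactly the computation in the proof of Lemma~\ref{Le 15}, so Proposition~\ref{Po 3} yields self-orthogonality. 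The one point needing care is verifying that the divisibility thresholds in the $p=3$ case hold throughout the stated range of $(n,s)$.
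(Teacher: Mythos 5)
Your proposal is correct and follows essentially the same route as the paper: length and $f^*(0)=0$ from Lemmas \ref{Le 3} and \ref{Le 8}, weights via $N_{(0,b)}$ from Lemma \ref{Le 11}, frequencies from Lemmas \ref{Le 5} and \ref{Le 7}, divisibility of weights by $3$ for $p=3$, and the transversal/scaling computation of Lemma \ref{Le 15} for $p>3$. Your explicit check that conditions $(3)$--$(4)$ of $\mathscr{F}$ make $D_{f,0}$ invariant under $\mathbb{F}_p^*$ is a welcome clarification of the paper's bare citation of Lemma \ref{Le 15}, whose hypothesis as stated assumes a single exponent $l_a$ valid on all of $V_n^{(p)}$.
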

	\begin{proof} According to Lemma \ref{Le 8}, we have that $f^*(0)=0$. By Lemma \ref{Le 3}, we get that the length of  $\mathcal{C}_{D_{f,0}\setminus\{0\}}$ is $p^{n-1}+\epsilon_0(p-1)p^{\frac{n+s}{2}-1}-1$ when $n+s$ is even, and $p^{n-1}-1$ when $n+s$ is odd. Thus, by Lemmas \ref{Le 5}, \ref{Le 7} and \ref{Le 11}, we can get the weight distribution of $\mathcal{C}_{D_{f,0}\setminus\{0\}}$. We also know that $\mathrm{wt}(c(a))=0$ if and only if $a=0$, then the dimension of $\mathcal{C}_{D_{f,0}\setminus\{0\}}$ is $n$. 
	
	Now, we prove that $\mathcal{C}_{D_{f,0}\setminus\{0\}}$ is self-orthogonal. When $p=3$, by the weight distribution of $\mathcal{C}_{D_{f,0}\setminus\{0\}}$, we know that if $n+s\ge 5$, then $3|\mathrm{wt}(c(a))$ for any $a\in V_n^{(p)}$, thus $\mathcal{C}_{D_{f,0}\setminus\{0\}}$ is self-orthogonal. When $p\ne 3$, since $f(x)\in\mathscr{F}$, by Lemma \ref{Le 15}, we easily get that $\mathcal{C}_{D_{f,0}\setminus\{0\}}$ is self-orthogonal. \end{proof}
	
	\begin{theorem}\label{Th 4}
		Let $n,s$ be integers with $0\le s\le n-4$ for even $n+s$, and $0\le s\le n-3$ for odd $n+s$. Let  $f(x):V_n^{(p)}\longrightarrow \mathbb{F}_p$ be an $s$-plateaued function belonging to $\mathscr{F}$ and $\#B_+(f)=k$. When $n+s$ is even,  $\mathcal{C}_{D_{f,sq}}$ and $\mathcal{C}_{D_{f,nsq}}$ are $p$-ary $[\frac{(p-1)}{2}(p^{n-1}-\epsilon_0p^{\frac{n+s}{2}-1}),n]$  linear codes. When $n+s$ is odd,  $\mathcal{C}_{D_{f,sq}}$ is a $p$-ary $[\frac{(p-1)}{2}(p^{n-1}+\epsilon_0p^{\frac{n+s-1}{2}}), n]$ linear code, and 
		$\mathcal{C}_{D_{f,nsq}}$ is a $p$-ary $[\frac{(p-1)}{2}(p^{n-1}-\epsilon_0p^{\frac{n+s-1}{2}}),n]$ linear code.  The weight distributions of $\mathcal{C}_{D_{f,sq}}$ and $\mathcal{C}_{D_{f,nsq}}$ are given by Tables \uppercase\expandafter{\romannumeral7}, \uppercase\expandafter{\romannumeral8} and \uppercase\expandafter{\romannumeral9} in the Appendix. 
		Moreover, if $p\ne 3$ or $n+s\ge 5$ for $p=3$, then $\mathcal{C}_{D_{f,sq}}$ and $\mathcal{C}_{D_{f,nsq}}$ are self-orthogonal.
	\end{theorem}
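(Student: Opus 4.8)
The plan is to mirror the proofs of Theorems~\ref{Th 1}--\ref{Th 3}, carrying the argument out for $\mathcal{C}_{D_{f,sq}}$ in full and noting that $\mathcal{C}_{D_{f,nsq}}$ is entirely parallel (replace $SQ$ by $NSQ$, and the formulas of Lemma~\ref{Le 12} for $N_{(sq,a)}$ by those for $N_{(nsq,a)}$). Since $f\in\mathscr{F}$, Remark~\ref{Re 3} gives $f(0)=0$ and $f(x)=f(-x)$, so $f^*(0)=0$ by Lemma~\ref{Le 8}; hence in Lemma~\ref{Le 3} we may take $j_0=0$. Summing $N_j(f)$ over the $\frac{p-1}{2}$ values $j\in SQ$ (respectively $j\in NSQ$) then produces the asserted lengths: $\frac{p-1}{2}(p^{n-1}-\epsilon_0 p^{\frac{n+s}{2}-1})$ for both codes when $n+s$ is even, and $\frac{p-1}{2}(p^{n-1}+\epsilon_0 p^{\frac{n+s-1}{2}})$ and $\frac{p-1}{2}(p^{n-1}-\epsilon_0 p^{\frac{n+s-1}{2}})$ for $\mathcal{C}_{D_{f,sq}}$ and $\mathcal{C}_{D_{f,nsq}}$ respectively when $n+s$ is odd.

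For the weight distribution I would write $\mathrm{wt}(c(a))=\#D_{f,sq}-N_{(sq,a)}$ with $N_{(sq,a)}=\#\{x\in D_{f,sq}:\langle a,x\rangle_n=0\}$, which is exactly the quantity evaluated in Lemma~\ref{Le 12}. After substituting, the weight of $c(a)$ depends only on whether $a=0$, $a\notin\mathrm{Supp}(\widehat{\chi_f})$, or $a\in\mathrm{Supp}(\widehat{\chi_f})$, and in the last case on $\epsilon_a$ and on which residue class ($0$, $SQ$, or $NSQ$) contains $f^*(a)$. To get the frequencies one counts how many $a$ fall into each class: Lemma~\ref{Le 7}(1) shows $\mathrm{Supp}(\widehat{\chi_f})$ and $B_\pm(f)$ are unions of $\mathbb{F}_p^*$-orbits (each nonzero $a$ lying in an orbit of size $p-1$), and Lemma~\ref{Le 5}, applied to $f^*$ (which is bent relative to $\mathrm{Supp}(\widehat{\chi_f})$ by the definition of $\mathscr{F}$, with $\#B_+(f)=k$ and the type $\epsilon_0^*$ of $f^*$ fixed by Lemma~\ref{Le 7}(3) in terms of $\epsilon_0$ and $p^{n+s}$ modulo $4$), gives the number of $\alpha\in B_+(f)$ and $\alpha\in B_-(f)$ with $f^*(\alpha)$ in each of these residue classes. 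Assembling these counts yields Tables~\uppercase\expandafter{\romannumeral7}--\uppercase\expandafter{\romannumeral9}. The same weight formulas, together with the hypotheses $0\le s\le n-4$ ($n+s$ even) and $0\le s\le n-3$ ($n+s$ odd)---which keep $p^{\frac{n+s}{2}-1}$, resp.\ $p^{\frac{n+s-1}{2}}$, small compared with $p^{n-1}$---show $\mathrm{wt}(c(a))>0$ for every $a\ne 0$; equivalently $D_{f,sq}$ spans $V_n^{(p)}$, so $c(a)=0$ forces $a=0$ and $\dim\mathcal{C}_{D_{f,sq}}=\dim\mathcal{C}_{D_{f,nsq}}=n$.

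It remains to prove self-orthogonality. When $p=3$, inspection of Tables~\uppercase\expandafter{\romannumeral7}--\uppercase\expandafter{\romannumeral9} shows that under the extra hypothesis $n+s\ge 5$ every weight is divisible by $3$, so both codes are self-orthogonal by the criterion of~\cite{Huffman}. When $p>3$, the constraints $\gcd(t-1,p-1)=\gcd(t'-1,p-1)=1$ force $t-1$ and $t'-1$ to be odd, so $t$ and $t'$ are even; combined with conditions~(3)--(4) defining $\mathscr{F}$ this gives $x\in D_{f,sq}\Leftrightarrow ax\in D_{f,sq}$ and $x\in D_{f,nsq}\Leftrightarrow ax\in D_{f,nsq}$ for every $a\in\mathbb{F}_p^*$, so the hypothesis of Lemma~\ref{Le 15} is met and both codes are self-orthogonal. (Equivalently, partition $D_{f,sq}$ and $D_{f,nsq}$ into $\mathbb{F}_p^*$-orbits and use $\sum_{i\in\mathbb{F}_p^*}i^2\equiv 0\ (\mathrm{mod}\ p)$ for $p>3$, exactly as in Lemma~\ref{Le 14}.)

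I expect the routine-but-delicate step to be the frequency bookkeeping in the second paragraph: one must simultaneously track $\epsilon_a$, the residue class of $f^*(a)$, and the relevant parity constraints, and feed the correct value of $k$ and the correct type $\epsilon_0^*$ into Lemma~\ref{Le 5}. Once the weight formulas are assembled, the dimension count and the self-orthogonality argument are immediate.
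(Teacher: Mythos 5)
Your proposal is correct and follows essentially the same route as the paper: length from Lemmas \ref{Le 8} and \ref{Le 3}, weight distribution from Lemmas \ref{Le 5}, \ref{Le 7} and \ref{Le 12}, dimension from the positivity of the nonzero weights, and self-orthogonality via divisibility by $3$ when $p=3$ and via Lemma \ref{Le 15} when $p>3$. Your explicit remark that $\gcd(t-1,p-1)=\gcd(t'-1,p-1)=1$ forces $t,t'$ even, so that $D_{f,sq}$ and $D_{f,nsq}$ are unions of $\mathbb{F}_p^*$-orbits, is in fact slightly more careful than the paper's bare citation of Lemma \ref{Le 15}.
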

	\begin{proof}
	We only prove the case of $\mathcal{C}_{D_{f,sq}}$ and the case of $\mathcal{C}_{D_{f,nsq}}$ is similar.
	
	According to Lemma \ref{Le 8}, we have that $f^*(0)=0$. By Lemma \ref{Le 3}, we get that the length of $\mathcal{C}_{D_{f,sq}}$ is $\frac{(p-1)}{2}(p^{n-1}-\epsilon_0p^{\frac{n+s}{2}-1})$ when $n+s$ is even, and $\frac{(p-1)}{2}(p^{n-1}+\epsilon_0p^{\frac{n+s-1}{2}})$ when $n+s$ is odd. Thus, by Lemmas \ref{Le 5}, \ref{Le 7} and \ref{Le 12},  we can get the weight distribution of $\mathcal{C}_{D_{f,sq}}$. We also know that $\mathrm{wt}(c(a))=0$ if and only if $a=0$, then the dimension of $\mathcal{C}_{D_{f,sq}}$ is $n$. 
	
	Now, we prove that $\mathcal{C}_{D_{f,sq}}$ is self-orthogonal. When $p=3$, by the weight distribution of $\mathcal{C}_{D_{f,sq}}$, we know that if $n+s\ge 5$, then $3|\mathrm{wt}(c(a))$ for any $a\in V_n^{(p)}$, thus $\mathcal{C}_{D_{f,sq}}$ is self-orthogonal. When $p\ne 3$, since $f(x)\in\mathscr{F}$, by Lemma \ref{Le 15}, we easily get that $\mathcal{C}_{D_{f,sq}}$ is self-orthogonal.\end{proof}
	\begin{remark}
		Let $d^{\bot}$ be the minimum distance of $\mathcal{C}_{D_{f,0}\setminus\{0\}}^{\bot}$. Obviously, $d^{\bot} >1$. Due to $f(x)\in\mathscr{F}$, then $f(x)=f(-x)$ for any $x\in V_n^{(p)}$. Thus, if $x\in D_{f,0}\setminus\{0\}$, then $-x\in D_{f,0}\setminus\{0\}$.   Note that $\langle x, a\rangle_n+\langle (-x),a\rangle_n=0$ for any $a\in V_n^{(p)}$, which implies that $d^{\bot}=2$. By the similar discussion, we also have that the minimum distances of $\mathcal{C}_{D_{f,sq}}^{\bot}$ and $\mathcal{C}_{D_{f,nsq}}^{\bot}$ are both $2$.
	\end{remark}
	
	In Equation $(2)$, let $D$ be $\widetilde{D}_{f,0}\setminus\{0\}$, $\widetilde{D}_{f,sq}$ and $\widetilde{D}_{f,nsq}$ (given in the proof of Lemma \ref{Le 15}). In the following, we study the punctured codes $\widetilde{\mathcal{C}}_{D_{f,0}\setminus\{0\}}$, $\widetilde{\mathcal{C}}_{D_{f,sq}}$ and $\widetilde{\mathcal{C}}_{D_{f,nsq}}$ of $\mathcal{C}_{D_{f,0}\setminus\{0\}}$, $\mathcal{C}_{D_{f,sq}}$ and $\mathcal{C}_{D_{f,nsq}}$, respectively, and give some results on them.
	\begin{theorem}\label{Th 5}
		Let $n,s$ be integers with $0\le s\le n-4$ for even $n+s$, and $0\le s\le n-3$ for odd $n+s$. Let  $f(x):V_n^{(p)}\longrightarrow \mathbb{F}_p$ be an $s$-plateaued function belonging to $\mathscr{F}$ and $\#B_+(f)=k$. Then $\widetilde{\mathcal{C}}_{D_{f,0}\setminus\{0\}}$  is a $p$-ary linear code with parameters $[\frac{p^{n-1}+\epsilon_0(p-1)p^{\frac{n+s}{2}-1}-1}{p-1}, n]$ when $n+s$ is even, and $[\frac{p^{n-1}-1}{p-1}, n]$ when $n+s$ is odd.  The weight distribution of $\widetilde{\mathcal{C}}_{D_{f,0}\setminus\{0\}}$ is given by Table \uppercase\expandafter{\romannumeral10} in the Appendix for even $n+s$, and Table \uppercase\expandafter{\romannumeral11} in the Appendix for odd $n+s$, respectively. 
		Moreover, if $p=3$ and $n+s\ge 5$, then $\widetilde{\mathcal{C}}_{D_{f,0}\setminus\{0\}}$ is self-orthogonal.
	\end{theorem}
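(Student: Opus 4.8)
The plan is to mirror the proofs of Theorems~\ref{Th 3} and \ref{Th 4}: express the weights of $\widetilde{\mathcal{C}}_{D_{f,0}\setminus\{0\}}$ through the counting quantities already computed in Section~3, read off the weight distribution, and then verify divisibility by $3$ in the ternary case. First I would record that $f^*(0)=0$ by Lemma~\ref{Le 8} (since $f\in\mathscr{F}$ forces $f(0)=0$ and $f(x)=f(-x)$), so Lemma~\ref{Le 3} gives $\#D_{f,0}=p^{n-1}+\epsilon_0(p-1)p^{\frac{n+s}{2}-1}$ for even $n+s$ and $\#D_{f,0}=p^{n-1}$ for odd $n+s$.

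The key structural point, exactly as in the proof of Lemma~\ref{Le 15}, is that $D_{f,0}$ is closed under multiplication by $\mathbb{F}_p^*$ (if $f(x)=0$ then $f(ax)=a^tf(x)$ or $a^{t'}f(x)$ vanishes, using $B_+(f^*)\cup B_-(f^*)=V_n^{(p)}$), and that for any fixed $a\in V_n^{(p)}$ the subset $\{x\in D_{f,0}\setminus\{0\}:\langle a,x\rangle_n=0\}$ is likewise $\mathbb{F}_p^*$-closed since $\langle a,cx\rangle_n=c\langle a,x\rangle_n$. Hence both of these sets split into $\mathbb{F}_p^*$-orbits of size $p-1$, which yields the length $\#(\widetilde{D}_{f,0}\setminus\{0\})=\frac{\#D_{f,0}-1}{p-1}$ (matching the claimed lengths) together with
\[
\mathrm{wt}(c(a))=\frac{\#D_{f,0}-1}{p-1}-\frac{N_{(0,a)}-1}{p-1}=\frac{\#D_{f,0}-N_{(0,a)}}{p-1},\qquad a\in V_n^{(p)}.
\]
Substituting the case analysis of $N_{(0,a)}$ from Lemma~\ref{Le 11}, and counting how many $a$ fall into each case — using $\#\mathrm{Supp}(\widehat{\chi_f})=p^{n-s}$ and splitting $\mathrm{Supp}(\widehat{\chi_f})$ according to whether $a\in B_+(f)$ or $B_-(f)$ and whether $f^*(a)$ is $0$, a square, or a nonsquare, via Lemmas~\ref{Le 7} and \ref{Le 5} — produces Tables~\uppercase\expandafter{\romannumeral10} and \uppercase\expandafter{\romannumeral11}. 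For the dimension I would observe that $\mathrm{span}(\widetilde{D}_{f,0}\setminus\{0\})=\mathrm{span}(D_{f,0}\setminus\{0\})$ (rescaling the columns does not change the column span), so $\dim\widetilde{\mathcal{C}}_{D_{f,0}\setminus\{0\}}=\dim\mathcal{C}_{D_{f,0}\setminus\{0\}}=n$ by Theorem~\ref{Th 3}.

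Finally, for the self-orthogonality I would set $p=3$ and inspect the resulting entries of Tables~\uppercase\expandafter{\romannumeral10} and \uppercase\expandafter{\romannumeral11}: every nonzero weight is, up to sign, of the form $3^{n-2}+c\cdot 3^{e}$ with $c\in\mathbb{Z}$ and $e\in\{\frac{n+s}{2}-2,\frac{n+s}{2}-1\}$ in the even case and $e=\frac{n+s-3}{2}$ in the odd case (the power of $3$ in the odd case coming from the factor $p^*=\eta(-1)p$ in Lemma~\ref{Le 11}). Under the hypotheses $0\le s\le n-4$, resp.\ $0\le s\le n-3$, together with $n+s\ge 5$, one has $n-2\ge 1$ and $e\ge 1$, so each such weight is divisible by $3$; since a ternary linear code is self-orthogonal if and only if all of its weights are divisible by $3$, this gives $\widetilde{\mathcal{C}}_{D_{f,0}\setminus\{0\}}$ self-orthogonal. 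I expect the only delicate part of the whole argument to be the bookkeeping in the weight-distribution step — keeping the signs $\epsilon_a$, $\epsilon_0$, $\epsilon_0^*$, the factor $p^*$ in the odd case, and the multiplicities coming from Lemma~\ref{Le 5} all aligned with the correct weight — whereas the length, dimension and self-orthogonality conclusions are then routine.
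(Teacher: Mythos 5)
Your proposal is correct and follows essentially the same route as the paper: the paper obtains Table X from Table V by observing that puncturing $D_{f,0}\setminus\{0\}$ down to a set of $\mathbb{F}_p^*$-orbit representatives divides every weight by $p-1$ (which is exactly your identity $\mathrm{wt}(c(a))=\frac{\#D_{f,0}-N_{(0,a)}}{p-1}$), and then checks divisibility by $3$ of the resulting weights for the self-orthogonality claim. Your rederivation of the weight distribution directly from Lemma~\ref{Le 11} rather than by reading off Table V is the same underlying computation, since Table V was itself produced from Lemma~\ref{Le 11} in the proof of Theorem~\ref{Th 3}.
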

	\begin{proof}
	The weight distribution of $\widetilde{\mathcal{C}}_{D_{f,0}\setminus\{0\}}$ can be easily obtained by Tables \uppercase\expandafter{\romannumeral5} and \uppercase\expandafter{\romannumeral6} in the Appendix. When $p=3$, by the weight distribution of $\widetilde{\mathcal{C}}_{D_{f,0}\setminus\{0\}}$, we easily know that if $n+s\ge 5$, then  $\widetilde{\mathcal{C}}_{D_{f,0}\setminus\{0\}}$ is self-orthogonal.\end{proof}
	\begin{proposition}\label{Po 6}
		Let $\widetilde{\mathcal{C}}_{D_{f,0}\setminus\{0\}}$ be the linear code given by Theorem \ref{Th 5}, and $d^{\bot}$ denote the minimum distance of the dual code ${\widetilde{\mathcal{C}}_{D_{f,0}\setminus\{0\}}}^{\bot}$. When $n+s$ is even, ${\widetilde{\mathcal{C}}_{D_{f,0}\setminus\{0\}}}^{\bot}$ is a $p$-ary   $[\frac{p^{n-1}+\epsilon_0(p-1)p^{\frac{n+s}{2}-1}-1}{p-1},\\ \frac{p^{n-1}+\epsilon_0(p-1)p^{\frac{n+s}{2}-1}-1}{p-1}-n]$ linear code and $d^{\bot}=4$ if $s=0$, $n=4$ and $k=0$, otherwise $d^{\bot}=3$. When $n+s$ is odd, the dual code $\widetilde{\mathcal{C}}_{D_{f,0}\setminus\{0\}}^{\bot}$ is a $p$-ary $[\frac{p^{n-1}-1}{p-1}, \frac{p^{n-1}-1}{p-1}-n]$ linear code and $d^{\bot}=4$ if  $s=0$ and $n=3$, otherwise $d^{\bot}=3$.
	\end{proposition}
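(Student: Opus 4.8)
The plan is as follows. First, the parameters of $\widetilde{\mathcal{C}}_{D_{f,0}\setminus\{0\}}^{\bot}$ follow immediately from Theorem \ref{Th 5}: the code $\widetilde{\mathcal{C}}_{D_{f,0}\setminus\{0\}}$ has dimension $n$, so its dual has the same length $\frac{p^{n-1}+\epsilon_0(p-1)p^{\frac{n+s}{2}-1}-1}{p-1}$ (resp.\ $\frac{p^{n-1}-1}{p-1}$) and dimension equal to that length minus $n$. It remains to pin down $d^{\bot}$. For the lower bound, recall that by the construction of the defining set $\widetilde{D}_{f,0}\setminus\{0\}$ (one representative per $\mathbb{F}_p^{*}$-orbit, with $0$ removed), the columns of a generator matrix of $\widetilde{\mathcal{C}}_{D_{f,0}\setminus\{0\}}$ are nonzero and pairwise linearly independent; hence $A_1^{\bot}=A_2^{\bot}=0$, i.e.\ $d^{\bot}\ge 3$.

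Next I would compute $A_3^{\bot}$. Using the weight distribution of $\widetilde{\mathcal{C}}_{D_{f,0}\setminus\{0\}}$ from Tables \uppercase\expandafter{\romannumeral10} and \uppercase\expandafter{\romannumeral11} together with the first four Pless power moments \cite[pp.~259--260]{Huffman}, and substituting $A_0^{\bot}=1$, $A_1^{\bot}=A_2^{\bot}=0$, one obtains $A_3^{\bot}$ as an explicit polynomial in $p$, $n$, $s$ (and, in the even case, in $k$, since the $B_+(f)$/$B_-(f)$ split enters the weight frequencies). I would then show this polynomial is strictly positive for all admissible parameters except precisely $(s,n)=(0,4)$ with $k=0$ when $n+s$ is even, and $(s,n)=(0,3)$ when $n+s$ is odd, where $A_3^{\bot}=0$. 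Whenever $A_3^{\bot}>0$, this gives $d^{\bot}=3$. In the two exceptional cases $d^{\bot}\ge 4$, and I would then compute $A_4^{\bot}$ from the fifth Pless power moment and verify $A_4^{\bot}>0$, which forces $d^{\bot}=4$. (Geometrically, these two exceptions are exactly the situations where $\widetilde{D}_{f,0}$ plays the role of an elliptic quadric/ovoid in $\mathrm{PG}(3,p)$, resp.\ a conic in $\mathrm{PG}(2,p)$ -- point sets with no three collinear points but admitting minimal four-point dependencies -- but the Pless-moment computation reaches the same conclusion without having to justify these geometric facts for non-quadratic $f$.)

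The routine parts -- the parameters and the bound $d^{\bot}\ge 3$ -- are immediate. The \emph{main obstacle} is the bookkeeping in the $A_3^{\bot}$ computation (and, in the two boundary cases, the $A_4^{\bot}$ computation): the tables carry several distinct nonzero weights whose multiplicities depend on $\epsilon_0$, on $k$, and on the parity of $n+s$, so extracting the closed form of $A_3^{\bot}$ and proving its positivity outside the two boundary cases, together with the positivity of $A_4^{\bot}$ inside them, is where the real effort lies.
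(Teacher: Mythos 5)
Your proposal is correct and follows essentially the same route as the paper: the parameters come from Theorem \ref{Th 5}, and the minimum distance is determined by computing $A_3^{\bot}$ (and, in the two boundary cases, $A_4^{\bot}$) from the weight distribution via the first five Pless power moments, verifying positivity outside the exceptional parameters $(s,n,k)=(0,4,0)$ for even $n+s$ and $(s,n)=(0,3)$ for odd $n+s$. The only cosmetic difference is that you get $A_1^{\bot}=A_2^{\bot}=0$ directly from the projective nature of the defining set rather than from the moment identities, which is equally valid.
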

	\begin{proof}
	The length and dimension of ${\widetilde{\mathcal{C}}_{D_{f,0}\setminus\{0\}}}^{\bot}$ are easily obtained by Theorem \ref{Th 5}. When $n+s$ is even,  by the weight distribution of $\widetilde{\mathcal{C}}_{D_{f,0}\setminus\{0\}}$ and  the first five Pless power moments  given in \cite[pp. 259-260]{Huffman}, we have that 
	when $0\in B_+(f)$, $A_1^{\bot}=A_2^{\bot}=0$ and $A_3^{\bot}=\frac{1}{6p^3}(p^{2n}-p^{n+2}(p+1)+p^4-p^{\frac{3n+s}{2}}(p^2-6p+5)+2kp^{\frac{n+3s}{2}}(p^2-3p+2)-p^{\frac{n+s}{2}+2}(p^2-1)+p^{n+s+2}(p-1))>0$, so $d^{\bot}=3$. When $0\in B_-(f)$, if $s=0$, $n=4$ and $k\ne 0,$ then  $A_1^{\bot}=A_2^{\bot}=0$ and $A_3^{\bot}=\frac{2kp^2(p^2-3p+2)}{6p^3}>0$, so $d^{\bot}=3$; if $s=0$, $n=4$ and $k=0$, then $A_1^{\bot}=A_2^{\bot}=A_{3}^{\bot}=0$ and $A_{4}^{\bot}=\frac{p^2(p-1)^2(p^4-p^3-p^2-p-2)}{24}>0$, so $d^{\bot}=4$; if  $n>4$, then $A_1^{\bot}=A_2^{\bot}=0$ and $A_3^{\bot}=\frac{1}{6p^3}(p^{2n}-p^{n+2}(p+1)+p^4
	-p^{\frac{3n+s}{2}}(p^2-1)+2kp^{\frac{n+3s}{2}}(p^2-3p+2)+p^{\frac{n+s}{2}}(p^4-p^2)+p^{n+s+2}(p-1))>0$, so $d^{\bot}=3$.
	
	When $n+s$ is odd,  by  the weight distribution of $\widetilde{\mathcal{C}}_{D_{f,0}\setminus\{0\}}$ and the first five Pless power moments  given in \cite[pp. 259-260]{Huffman}, we have that if $s=0$ and $n=3$, then $A_1^{\bot}=A_2^{\bot}=
	A_3^{\bot}=0$ and $A_4^{\bot}=\frac{p(p-1)^2(p^2-p-2)}{24}>0$, so $d^{\bot}=4$; if $n>3$, we have $A_1^{\bot}=A_2^{\bot}=0$ and $A_3^{\bot}=\frac{p^{2n}-p^{n+2}-p^{n+3}+p^{n+s+2}-p^{n+s+1}+p^4}{6p^3}>0$, so $d^{\bot}=3$. \end{proof}
	\begin{remark}
		$(1)$ According to the sphere packing bound, when $n+s$ is even, if $0\in B_+(f)$, then for fixed length and minimum distance, $\widetilde{\mathcal{C}}_{D_{f,0}\setminus\{0\}}^{\bot}$ is optimal; if $s=k=0$ and $n=4$, then for fixed length and dimension, $\widetilde{\mathcal{C}}_{D_{f,0}\setminus\{0\}}^{\bot}$ is optimal; in other cases, for fixed length and dimension, $\widetilde{\mathcal{C}}_{D_{f,0}\setminus\{0\}}^{\bot}$ is at least almost optimal. When $n+s$ is odd, for fixed length and dimension, $\widetilde{\mathcal{C}}_{D_{f,0}\setminus\{0\}}^{\bot}$ is at least almost optimal.\\
		$(2)$ For an $s$-plateaued function $f(x)$ belonging to $\mathscr{F}$, it is known that the set $\widetilde{D}_{f,0}\setminus\{0\}$ is not unique. When $p\ge 5$, for different sets $\widetilde{D}_{f,0}\setminus\{0\}$, the self-orthogonality of linear codes $\widetilde{\mathcal{C}}_{D_{f,0}\setminus\{0\}}$ may be different. By Magma program, when $p=5$, $n=3$ and $s=0$, let $f(x_1,x_2,x_3)=x_1^2+x_2^2+x_3^2$ and $\widetilde{D}_{f,0}\setminus\{0\}=\{(0,1,2),(0,1,3),(2,0,4),(2,0,1),(1,2,0),(1,3,0)\}$, we have that the linear code $\widetilde{\mathcal{C}}_{D_{f,0}\setminus\{0\}}$ is self-orthogonal. When $p=7$, $n=3$ and $s=0$, let $f(x_1,x_2,x_3)=x_1^2+x_2^2+x_3^2$, we can't find a set $\widetilde{D}_{f,0}\setminus\{0\}$ such that the linear code $\widetilde{\mathcal{C}}_{D_{f,0}\setminus\{0\}}$ is self-orthogonal. Then, it is open to study the relationship between the function $f(x)$, the set $\widetilde{D}_{f,0}\setminus\{0\}$ and the self-orthogonality of linear code $\widetilde{\mathcal{C}}_{D_{f,0}\setminus\{0\}}$ for $p\ge 5$
		. 
	\end{remark}
	
	Let $f(x):V_n^{(p)}\longrightarrow \mathbb{F}_p$ be an $s$-plateaued function belonging to $\mathscr{F}$ with $t=t'=2$.  For any $i\in\mathbb{F}_p^*$, $x\in V_n^{(p)}$, we have that if $x\in D_{f,i}$, then $-x\in D_{f,i}$. Thus, there exists a subset $\widetilde{D}_{f,i}$ of $D_{f,i}$ such that $\widetilde{D}_{f,i}\bigcup-\widetilde{D}_{f,i}$ is a partition of $D_{f,i}$. Note that for any $x\in \widetilde{D}_{f,i}$, there doesn't exist $a\in\mathbb{F}_p^*$ with $a\ne 1$ such that $ax\in\widetilde{D}_{f,i}$. Hence, $D_{f,sq}=\bigcup _{a\in\mathbb{F}_p^*}a\widetilde{D}_{f,i}$, where $i\in SQ$, and $D_{f,nsq}=\bigcup _{a\in\mathbb{F}_p^*}a\widetilde{D}_{f,i}$, where $i\in NSQ$. Then, we have the following theorem.
	\begin{theorem}\label{Th 6}
		Let $n,s$ be integers with $0\le s\le n-4$ for even $n+s$, and $0\le s\le n-3$ for odd $n+s$. Let  $f(x):V_n^{(p)}\longrightarrow \mathbb{F}_p$ be an $s$-plateaued function belonging to $\mathscr{F}$ and $\#B_+(f)=k$. When $n+s$ is even, $\widetilde{\mathcal{C}}_{D_{f,sq}}$ and $\widetilde{\mathcal{C}}_{D_{f,nsq}}$ are $p$-ary $[\frac{p^{n-1}-\epsilon_0p^{\frac{n+s}{2}-1}}{2}, n]$ linear codes. When $n+s$ is odd, $\widetilde{\mathcal{C}}_{D_{f,sq}}$ is a $p$-ary $[\frac{p^{n-1}+\epsilon_0p^{\frac{n+s-1}{2}}}{2},n]$ linear code, and 
		$\widetilde{\mathcal{C}}_{D_{f,nsq}}$ is a  $p$-ary $[\frac{p^{n-1}-\epsilon_0p^{\frac{n+s-1}{2}}}{2}, n]$ linear code. The weight distributions of $\widetilde{\mathcal{C}}_{D_{f,sq}}$ and $\widetilde{\mathcal{C}}_{D_{f,nsq}}$ are given by Tables \uppercase\expandafter{\romannumeral12}, \uppercase\expandafter{\romannumeral13} and \uppercase\expandafter{\romannumeral14} in the Appendix.
		Moreover, if $n+s\ge 5$ for $p=3$, and $t=t'=2$, $\widetilde{D}_{f,sq}=\widetilde{D}_{f,i}\ (i\in SQ)$ and $\widetilde{D}_{f,nsq}=\widetilde{D}_{f,i}\ (i\in NSQ)$ for $p\ne3$, then $\widetilde{\mathcal{C}}_{D_{f,sq}}$ and $\widetilde{\mathcal{C}}_{D_{f,nsq}}$ are self-orthogonal.
	\end{theorem}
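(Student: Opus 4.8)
The plan is to reduce everything to the already-established Theorem \ref{Th 4} for $\mathcal{C}_{D_{f,sq}}$ and $\mathcal{C}_{D_{f,nsq}}$, supplemented by Lemma \ref{Le 13}; I would write up only $\widetilde{\mathcal{C}}_{D_{f,sq}}$, the case of $\widetilde{\mathcal{C}}_{D_{f,nsq}}$ being verbatim with $i\in NSQ$ replacing $i\in SQ$. First, fixing $i\in SQ$, I would use the decomposition $D_{f,sq}=\bigcup_{a'\in\mathbb{F}_p^*}a'\widetilde{D}_{f,i}$ recorded just before the theorem; it is a disjoint union because no element of $\widetilde{D}_{f,i}$ is a nontrivial $\mathbb{F}_p^*$-multiple of another and each $a'\widetilde{D}_{f,i}$ has the same size as $\widetilde{D}_{f,i}$. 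Hence every coordinate $\langle a,x\rangle_n$ ($x\in\widetilde{D}_{f,i}$) of a codeword $\widetilde c(a)$ of $\widetilde{\mathcal{C}}_{D_{f,sq}}$ produces exactly the $p-1$ coordinates $a'\langle a,x\rangle_n$ ($a'\in\mathbb{F}_p^*$) of the codeword $c(a)$ of $\mathcal{C}_{D_{f,sq}}$, so $\mathrm{wt}(c(a))=(p-1)\,\mathrm{wt}(\widetilde c(a))$ and $\widetilde c(a)=0\iff c(a)=0\iff a=0$. Combining this with Lemma \ref{Le 8} (which gives $f^*(0)=0$) and Lemma \ref{Le 3}, the length of $\widetilde{\mathcal{C}}_{D_{f,sq}}$ equals $\tfrac1{p-1}\#D_{f,sq}=\tfrac{p^{n-1}-\epsilon_0p^{\frac{n+s}{2}-1}}{2}$ for even $n+s$ and $\tfrac{p^{n-1}+\epsilon_0p^{\frac{n+s-1}{2}}}{2}$ for odd $n+s$, the dimension is $n$, and the weight distribution is obtained from Tables \uppercase\expandafter{\romannumeral7}--\uppercase\expandafter{\romannumeral9} (i.e.\ from Theorem \ref{Th 4}) by dividing each weight by $p-1$ and retaining the frequencies, which gives Tables \uppercase\expandafter{\romannumeral12}--\uppercase\expandafter{\romannumeral14}.

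For self-orthogonality I would split on $p$. When $p=3$, the weight distribution just found shows that every nonzero weight of $\widetilde{\mathcal{C}}_{D_{f,sq}}$ is divisible by $3$ as soon as $n+s\ge5$, so the ternary criterion recalled in Section \ref{Section2} applies. When $p\ne3$ and $\widetilde{D}_{f,sq}=\widetilde{D}_{f,i}$ with $i\in SQ$, I would use Proposition \ref{Po 3}, so it suffices to show $\langle\widetilde c(a),\widetilde c(a)\rangle_m=0$ in $\mathbb{F}_p$ for every $a\in V_n^{(p)}$. Since $D_{f,i}=\widetilde{D}_{f,i}\cup(-\widetilde{D}_{f,i})$ is a disjoint union and $\langle a,-x\rangle_n^2=\langle a,x\rangle_n^2$, we get $\langle\widetilde c(a),\widetilde c(a)\rangle_m=\sum_{x\in\widetilde{D}_{f,i}}\langle a,x\rangle_n^2=\tfrac12\sum_{b\in\mathbb{F}_p^*}b^2N_{(i,a,b)}$, and it remains to compute $\sum_{b\in\mathbb{F}_p^*}b^2N_{(i,a,b)}$ modulo $p$ via Lemma \ref{Le 13}. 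In the three cases $a=0$, $a\notin\mathrm{Supp}(\widehat{\chi_f})$, and $a\in\mathrm{Supp}(\widehat{\chi_f})$ with $f^*(a)=0$, the quantity $N_{(i,a,b)}$ is independent of $b$, so the sum is a constant times $\sum_{b\in\mathbb{F}_p^*}b^2$, which is $0$ in $\mathbb{F}_p$ for $p>3$ (as used in Lemma \ref{Le 14}). In the remaining case $a\in\mathrm{Supp}(\widehat{\chi_f})$, $f^*(a)\ne0$, the only $b$-dependent part of $N_{(i,a,b)}$ is $\epsilon_ap^{\frac{n+s}{2}-1}\eta(b^24^{-1}+if^*(a))$ for even $n+s$, respectively $\epsilon_a\eta(-f^*(a))p^{\frac{n+s-1}{2}}\delta_0(if^*(a)+b^24^{-1})$ for odd $n+s$; the hypotheses $0\le s\le n-4$ (even case) and $0\le s\le n-3$ (odd case) force $n+s\ge4$, resp.\ $n+s\ge3$, so $p\mid p^{\frac{n+s}{2}-1}$, resp.\ $p\mid p^{\frac{n+s-1}{2}}$, and after multiplying by $b^2$ and summing over $b$ this integer contribution vanishes in $\mathbb{F}_p$, while the remaining $b$-free part is again a multiple of $\sum_{b\in\mathbb{F}_p^*}b^2=0$. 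Thus $\langle\widetilde c(a),\widetilde c(a)\rangle_m=0$ for all $a$, and Proposition \ref{Po 3} gives the self-orthogonality of $\widetilde{\mathcal{C}}_{D_{f,sq}}$; the same computation with $i\in NSQ$ handles $\widetilde{\mathcal{C}}_{D_{f,nsq}}$.

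The main obstacle is the last case of the $p\ne3$ self-orthogonality argument: one must verify that the $\eta$-type term (even $n+s$) and the $\delta_0$-type term (odd $n+s$) produced by Lemma \ref{Le 13} are killed modulo $p$ for every admissible $a$, which works precisely because the range conditions $s\le n-4$ and $s\le n-3$ secure the needed power of $p$ in $p^{\frac{n+s}{2}-1}$, resp.\ $p^{\frac{n+s-1}{2}}$. One should also note that the whole mod-$p$ cancellation rests on $\sum_{b\in\mathbb{F}_p^*}b^2\equiv0$, which breaks down for $p=3$; this is exactly why $p=3$ is handled separately by weight divisibility and why that case carries the extra hypothesis $n+s\ge5$.
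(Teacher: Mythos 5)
Your proposal is correct and follows essentially the same route as the paper: the weight distributions, length and dimension are read off from Theorem \ref{Th 4} via the partition $D_{f,sq}=\bigcup_{a'\in\mathbb{F}_p^*}a'\widetilde{D}_{f,i}$ (so weights scale by $p-1$), the case $p=3$ is settled by weight divisibility, and for $p\ne 3$ self-orthogonality is reduced to $\sum_{b\in\mathbb{F}_p^*}b^2N_{(i,a,b)}=0$ via Lemma \ref{Le 13} and $\sum_{b\in\mathbb{F}_p^*}b^2=0$. Your explicit check that the $b$-dependent terms of $N_{(i,a,b)}$ (the $\eta$- and $\delta_0$-type contributions) are killed modulo $p$ because the range conditions on $s$ force $n+s\ge 4$ (resp.\ $n+s\ge 3$) is a useful elaboration of a step the paper leaves implicit, but the argument is the same.
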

	\begin{proof}
	The weight distributions of $\widetilde{\mathcal{C}}_{D_{f,sq}}$ and $\widetilde{\mathcal{C}}_{D_{f,nsq}}$ can be easily obtained by Tables \uppercase\expandafter{\romannumeral7}, \uppercase\expandafter{\romannumeral8} and \uppercase\expandafter{\romannumeral9} in the Appendix. Now, we prove that $\widetilde{\mathcal{C}}_{D_{f,sq}}$ and $\widetilde{\mathcal{C}}_{D_{f,nsq}}$ are self-orthogonal.
	When $p=3$, by the weight distributions of $\widetilde{\mathcal{C}}_{D_{f,sq}}$ and $\widetilde{\mathcal{C}}_{D_{f,nsq}}$, we easily know that if $n+s\ge 5$, then  $\widetilde{\mathcal{C}}_{D_{f,sq}}$ and $\widetilde{\mathcal{C}}_{D_{f,nsq}}$ are self-orthogonal. When $p\ne3$, let 
	\[c'(a)=(\langle a,x_1\rangle_n, \langle a, x_2\rangle_n, \cdots, \langle a,x_r\rangle_n),\] where $a\in V_n^{(p)}$ and $\{x_1,x_2,\cdots,x_r\}=D_{f,i}\ (i\in \mathbb{F}_p^*)$. Then, we have that for any $a\in V_n^{(p)}$, $\langle c'(a),c'(a)\rangle_r=\sum_{b\in\mathbb{F}_p^*}b^2N_{i,a,b}$, where $N_{i,a,b}=\#\{x \in D_{f,i}: \langle a ,x \rangle_n=b\}$. By the proof of Lemma \ref {Le 14}, we have that $\sum_{b\in\mathbb{F}_p^*}b^2=0$. Thus, when $t=t'=2$, by Lemma \ref{Le 13}, we have that $\langle c'(a),c'(a)\rangle_r=0$. On the other hand, $\langle c'(a),c'(a)\rangle_r=2\sum_{x\in \widetilde{D}_{f,i}}(\langle a ,x \rangle _n)^2,$ which implies that  $\sum_{x\in\widetilde{D}_{f,i}}(\langle a, x \rangle_n)^2=0$. Therefore, by Proposition \ref{Po 3}, if $\widetilde{D}_{f,sq}=\widetilde{D}_{f,i}\ (i\in SQ)$ and $\widetilde{D}_{f,nsq}=\widetilde{D}_{f,i}\ (i\in NSQ)$, the linear codes  $\widetilde{\mathcal{C}}_{D_{f,sq}}$ and $\widetilde{\mathcal{C}}_{D_{f,nsq}}$ are self-orthogonal. \end{proof}
	\begin{proposition}\label{Po 7}
		Let $\widetilde{\mathcal{C}}_{D_{f,sq}}$ and $\widetilde{\mathcal{C}}_{D_{f,nsq}}$ be the linear codes given by Theorem \ref{Th 6}. When $n+s$ is even, the dual codes ${\widetilde{\mathcal{C}}_{D_{f,sq}}}^{\bot}$ and ${\widetilde{\mathcal{C}}_{D_{f,nsq}}}^{\bot}$ are $p$-ary $[\frac{p^{n-1}-\epsilon_0p^{\frac{n+s}{2}-1}}{2}, \frac{p^{n-1}-\epsilon_0p^{\frac{n+s}{2}-1}}{2}-n, 3]$ linear codes. When $n+s$ is odd, the dual code ${\widetilde{\mathcal{C}}_{D_{f,sq}}}^{\bot}$ is a $p$-ary $[\frac{p^{n-1}+\epsilon_0p^{\frac{n+s-1}{2}} }{2},\ \frac{p^{n-1}+\epsilon_0p^{\frac{n+s-1}{2}} }{2}-n,\ 3]$ linear code except for $p=3$, $n=3$ and $0\in B_-(f)$, and the dual code ${\widetilde{\mathcal{C}}_{D_{f,nsq}}}^{\bot}$ is a $p$-ary  $[\frac{p^{n-1}-\epsilon_0p^{\frac{n+s-1}{2}}}{2},\ \frac{p^{n-1}-\epsilon_0p^{\frac{n+s-1}{2}}}{2}-n,\ 3]$ linear code except for $p= 3$, $n=3$ and $0\in B_+(f)$.
	\end{proposition}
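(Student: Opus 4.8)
The plan is to read off the length and dimension directly from Theorem~\ref{Th 6} and then pin down only the minimum distance $d^{\bot}$. Since passing to the dual preserves length, ${\widetilde{\mathcal{C}}_{D_{f,sq}}}^{\bot}$ and ${\widetilde{\mathcal{C}}_{D_{f,nsq}}}^{\bot}$ have the lengths asserted, and because $\widetilde{\mathcal{C}}_{D_{f,sq}}$ and $\widetilde{\mathcal{C}}_{D_{f,nsq}}$ both have dimension $n$ by Theorem~\ref{Th 6}, their duals have dimension (length)$\,-\,n$, which matches the claimed values. So everything reduces to showing $d^{\bot}=3$ outside the stated exceptions.

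For the minimum distance I would proceed exactly as in the proof of Proposition~\ref{Po 6}: substitute the explicit weight distributions of $\widetilde{\mathcal{C}}_{D_{f,sq}}$ and $\widetilde{\mathcal{C}}_{D_{f,nsq}}$ (given in Theorem~\ref{Th 6}) into the first four Pless power moments of \cite[pp.~259--260]{Huffman} and solve for $A_1^{\bot},A_2^{\bot},A_3^{\bot}$. One obtains $A_1^{\bot}=A_2^{\bot}=0$ in every case --- which is also transparent geometrically, since by construction the defining sets $\widetilde{D}_{f,sq}$ and $\widetilde{D}_{f,nsq}$ contain no zero coordinate and no two coordinates that are $\mathbb{F}_p^{*}$-multiples of each other, so no column of the generator matrix vanishes or is proportional to another --- together with a closed-form expression for $A_3^{\bot}$ whose positivity must then be checked. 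The argument splits into the sub-cases $0\in B_+(f)$ versus $0\in B_-(f)$ and $n+s$ even versus odd, because $\epsilon_0$ (hence the length) depends on these; in each sub-case one verifies $A_3^{\bot}>0$ under the standing hypotheses $0\le s\le n-4$ for even $n+s$ and $0\le s\le n-3$ for odd $n+s$, and concludes $d^{\bot}=3$.

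Finally I would isolate the degenerate cases. When $p=3$ and $n=3$ (so $s=0$ and $n+s$ is odd), the length of $\widetilde{\mathcal{C}}_{D_{f,sq}}$ is $\frac{3^{2}+\epsilon_0 3}{2}$, which equals $3=n$ precisely when $\epsilon_0=-1$, i.e.\ when $0\in B_-(f)$; then $\widetilde{\mathcal{C}}_{D_{f,sq}}=\mathbb{F}_3^{3}$, its dual is the zero code, and $d^{\bot}$ is undefined --- this is the first exclusion. Symmetrically, the length of $\widetilde{\mathcal{C}}_{D_{f,nsq}}$ is $\frac{3^{2}-\epsilon_0 3}{2}=3$ exactly when $0\in B_+(f)$, giving the second exclusion; for all remaining small parameters the length strictly exceeds $n$, so no such degeneracy arises. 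The hard part will be the positivity verification for $A_3^{\bot}$: it is routine but has to be carried out separately in each of the several sub-cases above, and controlling the various mixed powers of $p$ appearing in $A_3^{\bot}$ genuinely needs the restrictions on $s$ --- this is where the bookkeeping is heaviest.
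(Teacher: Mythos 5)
Your proposal matches the paper's proof: both read the length and dimension off Theorem~\ref{Th 6} and then compute $A_1^{\bot},A_2^{\bot},A_3^{\bot}$ from the weight distributions via the first four Pless power moments, splitting into the sub-cases $0\in B_+(f)$ versus $0\in B_-(f)$ and $n+s$ even versus odd, and checking $A_3^{\bot}>0$ outside the stated exceptions. Your identification of the excluded cases as the degenerate ones where the code has length $3=n$ (so the dual is trivial) is a slightly more explicit gloss than the paper's, which simply observes that its $A_3^{\bot}$ formula fails to be positive when $p=3$ and $n=3$, but the argument is the same.
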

	\begin{proof}
	The lengths and dimensions of $\widetilde{\mathcal{C}}_{D_{f,sq}}^{\bot}$ and $\widetilde{\mathcal{C}}_{D_{f,nsq}}^{\bot}$ are easily obtained by Theorem \ref{Th 6}. When $n+s$ is even, by the weight distributions of $\widetilde{\mathcal{C}}_{D_{f,sq}}$ and $\widetilde{\mathcal{C}}_{D_{f,nsq}}$, and the first four Pless power moments  given in \cite[pp. 259-260]{Huffman}, for linear codes $\widetilde{\mathcal{C}}_{D_{f,sq}}$ and $\widetilde{\mathcal{C}}_{D_{f,nsq}}$, we have $A_1^{\bot}=A_2^{\bot}=0$, $A_3^{\bot}=\frac{1}{48p^3} (p-1)(4kp^{\frac{3s+n}{2}}(p+1)-(3p^2-4p+5)p^{\frac{n+s}{2}+n}+(p-1)^2p^{2n}-(4p^3-8p^2)(p^n-p^{\frac{n+s}{2}})+(2p^2-6p)p^{n+s})>0$ for $0\in B_+(f)$ and $A_3^{\bot}=\frac{1}{48p^3}(p-1)(p^{\frac{3n+s}{2}}(3p^2-8p+1)+4kp^{\frac{3s+n}{2}}(p+1)+(p-1)^2p^{2n}-(4p^3-8p^2)(p^n+p^{\frac{n+s}{2}})+(2p^2-6p)p^{s+n})>0$ for $0\in B_-(f)$, so the minimum distances of ${\widetilde{\mathcal{C}}_{D_{f,sq}}}^{\bot}$ and ${\widetilde{\mathcal{C}}_{D_{f,nsq}}}^{\bot}$ are 3. 
	
	When $n+s$ is odd, by the weight distributions of $\widetilde{\mathcal{C}}_{D_{f,sq}}$ and $\widetilde{\mathcal{C}}_{D_{f,nsq}}$, and the first four Pless power moments  given in \cite[pp. 259-260]{Huffman}, for linear code ${\widetilde{\mathcal{C}}_{D_{f,sq}}}^{\bot}$, when $0\in B_+(f)$, we have $A_1^{\bot}=A_2^{\bot}=0$ and $A_3^{\bot}=\frac{1}{48}(p-1)((3p^2-6p-1)p^{n+s-2}+(4p-8)(p^{\frac{3n+s-3}{2}}-p^{\frac{n+s-1}{2}}-p^{n-1})+(p-1)^2p^{2n-3}-2(p^{n-s}-k)p^{\frac{n+3s-5}{2}}(p^2-2p-3))>0$; when $0\in B_-(f)$, we have $A_1^{\bot}=A_2^{\bot}=0$ and $A_3^{\bot}=\frac{1}{48}(p-1)((3p^2-6p-1)p^{n+s-2}-(4p-8)(p^{\frac{3n+s-3}{2}}+p^{n-1}-p^{\frac{n+s-1}{2} })+(p-1)^2p^{2n-3}+2k(p^2-2p-3)p^{\frac{n+3s-5}{2}})>0$  except for $p=3$ and $n=3$, so we get that the minimum distance of ${\widetilde{\mathcal{C}}_{D_{f,sq}}}^{\bot}$ is $3$. For linear code ${\widetilde{\mathcal{C}}_{D_{f,nsq}}}^{\bot}$, when $0\in B_+(f)$, we have $A_1^{\bot}=A_2^{\bot}=0$ and $A_3^{\bot}=\frac{1}{48}(p-1)((3p^2-6p-1)p^{n+s-2}+(p-1)^2p^{2n-3}-(4p-8)(p^{n-1}-p^{\frac{n+s-1}{2}})-2k(p^2-2p-3)p^{\frac{n+3s-5}{2}}-(2p^2-4p+6)p^{\frac{3n+s-5}{2}})>0$ except for $p=3$ and $n=3$; when $0\in B_-(f)$, we have $A_1^{\bot}=A_2^{\bot}=0$ and $A_3^{\bot}=\frac{1}{48}(p-1)((3p^2-6p-1)p^{n+s-2}+(4p-8)(p^{\frac{3n+s-3}{2}}-p^{\frac{n+s-1}{2}}-p^{n-1})+(p-1)^2p^{2n-3}-2kp^{\frac{n+3s-5}{2}}(p^2-2p-3))>0$, so we get that the minimum distance of ${\widetilde{\mathcal{C}}_{D_{f,nsq}}}^{\bot}$is $3$. \end{proof}
	\begin{remark}
		$(1)$	According to the sphere packing bound, when $n+s$ is even, if $0\in B_+(f)$ and $p>3$, or $0\in B_-(f)$, then for fixed lengths and minimum distances, ${\widetilde{\mathcal{C}}_{D_{f,sq}}}^{\bot}$ and ${\widetilde{\mathcal{C}}_{D_{f,nsq}}}^{\bot}$  are optimal; if $0\in B_+(f)$ and $p=3$, then for fixed lengths and dimensions, ${\widetilde{\mathcal{C}}_{D_{f,sq}}}^{\bot}$ and ${\widetilde{\mathcal{C}}_{D_{f,nsq}}}^{\bot}$  are at least almost optimal. When $n+s$ is odd, if $0\in B_-(f)$ (respectively $0\in B_+(f)$) and $p>3$, or $0\in B_+(f)$ (respectively $0\in B_-(f)$), then for fixed length and minimum distance, ${\widetilde{\mathcal{C}}_{D_{f,sq}}}^{\bot}$ (respectively ${\widetilde{\mathcal{C}}_{D_{f,nsq}}}^{\bot}$) is optimal; if $0\in B_-(f)$ (respectively $0\in B_+(f)$) and $p=3$, then for fixed length and dimension, ${\widetilde{\mathcal{C}}_{D_{f,sq}}}^{\bot}$ (respectively ${\widetilde{\mathcal{C}}_{D_{f,nsq}}}^{\bot}$) is at least almost optimal. \\
		$(2)$ If $k=0$ or $k=p^{n-s}$, then $f(x)$ is weakly regular, so the linear codes in \cite{Mesnager3} can be obtained by Theorems \ref{Th 3}, \ref{Th 4}, \ref{Th 5} and \ref{Th 6}.\\
		$(3)$  According to \cite[Lemma 2.1]{Ashikhmin}, we have that if $n+s$ is even with $0\le s\le n-6$ or $n+s$ is odd with $0\le s\le n-5$, then the linear codes $\mathcal{C}_{D_{f,0}\setminus\{0\}}$, $\mathcal{C}_{D_{f,sq}}$, $\mathcal{C}_{D_{f,nsq}}$, $\widetilde{\mathcal{C}}_{D_{f,0}\setminus\{0\}}$, $\widetilde{\mathcal{C}}_{D_{f,sq}}$ and $\widetilde{\mathcal{C}}_{D_{f,nsq}}$ constructed in Theorems \ref{Th 3}, \ref{Th 4}, \ref{Th 5} and \ref{Th 6} are minimal.
	\end{remark}
	
	In the following, we give two examples by Magma program to  verify the results given in this section.
	\begin{example}
		Let $f(x):\mathbb{F}_5^6\longrightarrow \mathbb{F}_5$, $f(x_1,x_2,x_3,x_4,x_5,x_6)=x_1^2x_4^4+4x_1x_2^3x_4^4+x_1x_2^3+2x_2^2x_4^4+x_3x_4$.
		Then
		\begin{itemize}
			\item[$\bullet$] $f(x)$ is a non-weakly regular $2$-plateaued function of type $(+)$.
			\item[$\bullet$] $B_+(f)=\mathbb{F}_5^2\times\{0\}\times \mathbb{F}_5\times (0,0)$ and  $B_-(f)=\mathbb{F}_5^2\times \mathbb{F}_5^*\times \mathbb{F}_5\times (0,0).$
			\item[$\bullet$] $f^*(x_1,x_2,x_3,x_4,0,0)=x_1^3x_2x_3^4+4x_1^3x_2+x_1^2x_3^4+3x_2^2x_3^
			4+4x_3x_4$, $B_+(f^*)=\mathbb{F}_5^2\times\mathbb{F}_5\times\{0\}\times\mathbb{F}_5^2$, $B_-(f^*)=\mathbb{F}_5^2\times\mathbb{F}_5\times\mathbb{F}_5^*\times\mathbb{F}_5^2$ and $t=4$, $t'=2$.
			\item[$\bullet$] $\mathcal{C}_f$ is a $[15624,7,12000]$ linear code with weight enumerator $1+180z^{12000}+1600z^{12375}+75624z^{12500}+320z^{12625}+400z^{13000}$ and $\mathcal{C}_f^{\bot}$ is a $[15624,15617,2]$ linear code. $\widetilde{\mathcal{C}}_f$ is a $[12000,7,9200]$ linear code with weight enumerator $1+100z^{9200}+256z^{9500}+1600z^{9575}+75000z^{9600}+320z^{9625}+800z^{9700}+44z^{10000}+4z^{12000}$ and ${\widetilde{\mathcal{C}}_f}^{\bot}$ is a $[12000,11993,3]$ linear code. $\mathcal{C}_{D_{f,0}\setminus\{0\}}$ is a $[3624,6,2500]$ linear code with weight enumerator $1+44z^{2500}+400z^{2800}+15000z^{2900}+80z^{3000}+100z^{3300}$ and $\mathcal{C}_{D_{f,0}\setminus\{0\}}^\bot$ is a $[3624,3618,2]$ linear code. $\mathcal{C}_{D_{f,sq}}$ and $\mathcal{C}_{D_{f,nsq}}$ are $[6000,6,4500]$ linear codes with weight enumerators $1+40z^{4500}+300z^{4600}+15000z^{4800}+84z^{5000}+200z^{5100}$, and $\mathcal{C}_{D_{f,sq}}^{\bot}$ and $\mathcal{C}_{D_{f,nsq}}^{\bot}$ are $[6000,5994,2]$ linear codes. $\widetilde{\mathcal{C}}_{D_{f,0}\setminus\{0\}}$ is a $[906,6,625]$ linear code with weight enumerator $1+44z^{625}+400z^{700}+15000z^{725}+80z^{750}+100z^{825}$ and ${\widetilde{\mathcal{C}}_{D_{f,0}\setminus\{0\}}}^{\bot}$ is a $[906,900,3]$ linear code. $\widetilde{\mathcal{C}}_{D_{f,sq}}$ and $\widetilde{\mathcal{C}}_{D_{f,nsq}}$ are $[1500,6,1125]$ linear codes with weight enumerators $1+40z^{1125}+300z^{1150}+15000z^{1200}+84z^{1250}+200z^{1275}$, and ${\widetilde{\mathcal{C}}_{D_{f,sq}}}^{\bot}$ and ${\widetilde{\mathcal{C}}_{D_{f,nsq}}}^{\bot}$ are $[1500,1494,3]$ linear codes.   
		\end{itemize}
	\end{example}
	\begin{example}
		Let $f(x):\mathbb{F}_3^6\longrightarrow \mathbb{F}_3$, $f(x_1,x_2,x_3,x_4,x_5,x_6)=x_1^2x_5^2+x_1^2+x_2^2+x_3^2+x_4x_5$. Then
		\begin{itemize}
			\item [$\bullet$] $f(x)$ is a non-weakly regular $1$-plateaued  function of type $(-)$.
			\item[$\bullet$] $B_+(f)=\mathbb{F}_3^3\times\mathbb{F}_3^*\times\mathbb{F}_3\times\{0\}$ and $B_-(f)=\mathbb{F}_3^3\times\{0\}\times\mathbb{F}_3\times\{0\}$.
			\item[$\bullet$] $f^*(x_1,x_2,x_3,x_4,x_5,0)=2x_1^2x_4^2+2x_1^2+2x_2^2+2x_3^2+2x_4x_5$, $B_+(f^*)=\mathbb{F}_3^3\times\mathbb{F}_3\times\{0\}\times\mathbb{F}_3$, $B_-(f^*)=\mathbb{F}_3^3\times\mathbb{F}_3\times\mathbb{F}_3^*\times\mathbb{F}_3$ and $t=t'=2$.
			\item[$\bullet$] $\mathcal{C}_f$ is a $[728,7,459]$ linear code with weight enumerator $1+180z^{459}+1862z^{486}+144z^{513}$ and $\mathcal{C}_f^{\bot}$ is a $[728,721,2]$ linear code. $\widetilde{\mathcal{C}}_f$ is a $[486,7,306]$ linear code with weight enumerator $1+72z^{306}+180z^{315}+1698z^{324}+144z^{333}+90z^{342}+2z^{486}$ and ${\widetilde{\mathcal{C}}_f}^{\bot}$ is a $[486,479,3]$ linear code. $\mathcal{C}_{D_{f,0}\setminus\{0\}}$ is a $[242, 6, 144]$ linear code with weight enumerator $1+90z^{144}+566z^{162}+72z^{180}$ and  $\mathcal{C}_{D_{f,0}\setminus\{0\}}^{\bot}$ is a $[242,236,2]$ linear code. $\mathcal{C}_{D_{f,sq}}$ is a $[216,6,126]$ linear code with weight enumerator $1+72z^{126}+576z^{144}+80z^{162}$ and $\mathcal{C}_{D_{f,sq}}^{\bot}$ is a $[216,210,2]$ linear code. $\mathcal{C}_{D_{f,nsq}}$ is a $[270,6,162]$ linear code with weight enumerator $1+80z^{162}+558z^{180}+90z^{198}$ and  $\mathcal{C}_{D_{f,nsq}}^{\bot}$ is a $[270,264,2]$ linear code. $\widetilde{\mathcal{C}}_{D_{f,0}\setminus\{0\}}$ is a $[121,6,72]$ linear code with weight enumerator $1+90z^{72}+566z^{81}+72z^{90}$ and ${\widetilde{\mathcal{C}}_{D_{f,0}\setminus\{0\}}}^{\bot}$ is a $[121,115,3]$ linear code, which is optimal according to the Code Table at http://www.codetables.de/. $\widetilde{\mathcal{C}}_{D_{f,sq}}$ is a $[108,6,63]$ linear code with weight enumerator $1+72z^{63}+576
			z^{72}+80z^{81}$ and ${\widetilde{\mathcal{C}}_{D_{f,sq}}}^{\bot}$ is a $[108,102,3]$ linear code,  which is optimal according to the Code Table at http://www.codetables.de/. $\widetilde{\mathcal{C}}_{D_{f,nsq}}$ is a $[135,6,81]$ linear code with weight enumerator $1+80z^{81}+558z^{90}+90z^{99}$ and ${\widetilde{\mathcal{C}}_{D_{f,nsq}}}^{\bot}$ is a $[135,129,3]$ linear code,  which is optimal according to the Code Table at http://www.codetables.de/. 
			\begin{remark}
				In Table \uppercase\expandafter{\romannumeral15} in the Appendix, we give some linear codes constructed by Theorems \ref{Th 2}, \ref{Th 5} and \ref{Th 6}, which are optimal according to the Code Table at http://www.codetables.de/.
			\end{remark}

		\end{itemize}
	\end{example}
	
	\section{Quantum codes and LCD codes from self-orthogonal codes}
	
	In this section, we will construct some quantum codes and LCD codes from the self-orthogonal codes given in Section 4.
	\subsection{Quantum codes from self-orthogonal codes}
	In the following theorem, we give some pure quantum codes constructed by $\widetilde{\mathcal{C}}_f$. 
	\begin{theorem}\label{Th 7}
		Let $f(x): V_n^{(p)}\longrightarrow \mathbb{F}_p$ be an $s$-plateaued function belonging to $\mathscr{F}$. When $n+s$ is an even integer with $0\le s\le n-2$, $n+s\ge 6$ for $p=3$ and $n+s\ge 4$ for $p\ne 3$, there exists a family of $[[(p-1)(p^{n-1}-\epsilon_0 p^{\frac{n+s}{2}-1}), (p-1)(p^{n-1}-\epsilon_0 p^{\frac{n+s}{2}-1})-n-2, 3]]_p$ pure quantum codes, which are at least almost optimal according to the quantum Hamming bound. When $n+s$ is an odd integer with $0\le s\le n-1$, $n+s\ge 5$ for $p=3$ and $n+s\ge 3$ for $p\ne 3$, there exists a family of $[[(p-1)p^{n-1}, (p-1)p^{n-1}-n-2, 3]]_p$ pure quantum codes, which are at least almost optimal according to the quantum Hamming bound.
	\end{theorem}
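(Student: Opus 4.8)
The plan is to apply the generalized Steane enlargement of Proposition \ref{Po 5} to the pair $\mathcal{C}_1=\widetilde{\mathcal{C}}_f^{\bot}$ and $\mathcal{C}_2=v^{\bot}$ for a suitably chosen full-weight codeword $v\in\widetilde{\mathcal{C}}_f$. Write $N$ for the length of $\widetilde{\mathcal{C}}_f$, so $N=(p-1)(p^{n-1}-\epsilon_0p^{\frac{n+s}{2}-1})$ when $n+s$ is even and $N=(p-1)p^{n-1}$ when $n+s$ is odd. I first observe that the numerical ranges imposed on $(p,n,s)$ in the statement are exactly those under which Theorem \ref{Th 2} both applies and yields that $\widetilde{\mathcal{C}}_f$ is self-orthogonal (for $p=3$ the conditions $n+s\ge 4$ and $n+s\ge 5$ force the even/odd thresholds $n+s\ge 6$, $n+s\ge 5$; for $p\ne 3$ no extra constraint is needed). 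Hence $\widetilde{\mathcal{C}}_f\subseteq\widetilde{\mathcal{C}}_f^{\bot}$, i.e. $\mathcal{C}_1^{\bot}=\widetilde{\mathcal{C}}_f\subseteq\widetilde{\mathcal{C}}_f^{\bot}=\mathcal{C}_1$, and by Theorem \ref{Th 2} the code $\mathcal{C}_1$ has parameters $[N,N-n-1,3]$.

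Next I would produce $\mathcal{C}_2$. Take $v:=\tilde{c}(1,0)=(f(x))_{x\in V_n^{(p)}\setminus D_{f,0}}\in\widetilde{\mathcal{C}}_f$; by the very definition of $D_{f,0}$ we have $f(x)\ne 0$ for every coordinate $x\in V_n^{(p)}\setminus D_{f,0}$, so $v$ has full Hamming weight $N$ and in particular $v\ne 0$. Set $\mathcal{C}_2=v^{\bot}=\{x\in\mathbb{F}_p^{N}:x\cdot v=0\}$, a code of dimension $N-1$. Since $v$ has no zero coordinate, $\mathcal{C}_2$ contains no weight-one word, so its minimum distance satisfies $d_2\ge 2$. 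Because $v\in\widetilde{\mathcal{C}}_f=(\widetilde{\mathcal{C}}_f^{\bot})^{\bot}$, every codeword of $\mathcal{C}_1=\widetilde{\mathcal{C}}_f^{\bot}$ is orthogonal to $v$, whence $\mathcal{C}_1\subseteq\mathcal{C}_2$. Finally the dimension gap is $k_2-k_1=(N-1)-(N-n-1)=n$, and the hypotheses force $n\ge 2$ (in the even case $s\le n-2$ already gives $n\ge 2$; in the odd case $s\le n-1$ together with $n+s\ge 3$ gives $n\ge 2$), so $k_2\ge k_1+2$ as required by Proposition \ref{Po 5}.

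Now Proposition \ref{Po 5} produces a pure $p$-ary quantum code of parameters $[[N,\,k_1+k_2-N,\,\min\{d_1,\lceil(1+\tfrac1p)d_2\rceil\}]]_p$. Here $k_1+k_2-N=(N-n-1)+(N-1)-N=N-n-2$, while $d_1=3$ and, since $d_2\ge 2$ and $p\ge 3$, one has $\lceil(1+\tfrac1p)d_2\rceil\ge\lceil 2+\tfrac2p\rceil=3$; therefore the distance is $\min\{3,\lceil(1+\tfrac1p)d_2\rceil\}=3$. This yields precisely the asserted $[[N,N-n-2,3]]_p$ pure quantum codes, with $N$ as above in the even and odd cases. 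For the optimality claim I would substitute $N-k=n+2$ into the quantum Hamming bound: a $[[N,N-n-2,5]]_p$ (pure) code would require $p^{\,n+2}\ge 1+N(p^2-1)+\binom{N}{2}(p^2-1)^2$, and since $N$ is of order $p^{\,n-1}$ the right-hand side is of order $p^{\,2n}$, so this inequality fails in every case in the stated range; hence no quantum code with these length and dimension can have distance $\ge 5$, which makes the constructed codes at least almost optimal.

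Most of this is routine: the two inclusions $\mathcal{C}_1^{\bot}\subseteq\mathcal{C}_1\subseteq\mathcal{C}_2$, the dimension arithmetic, and the evaluation of the distance formula. The one genuinely load-bearing idea is the choice of $\mathcal{C}_2$: exhibiting a full-weight codeword of $\widetilde{\mathcal{C}}_f$ (namely $\tilde c(1,0)$) so that its orthogonal hyperplane simultaneously \emph{contains} $\widetilde{\mathcal{C}}_f^{\bot}$ and still has minimum distance $\ge 2$, which is exactly what keeps the enlarged distance equal to $3$. The only slightly fussy point is matching the hypotheses on $(p,n,s)$ to the combined requirements of Theorem \ref{Th 2} (existence of the weight distribution and self-orthogonality) and of the inequality $k_2\ge k_1+2$; I expect this bookkeeping, together with the Hamming-bound estimate, to be the main place where care is needed rather than any deep obstruction.
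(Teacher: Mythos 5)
Your proposal is correct and follows essentially the same route as the paper: the paper likewise takes $\mathcal{C}_1={\widetilde{\mathcal{C}}_f}^{\bot}$ and $\mathcal{C}_2$ the dual of the one-dimensional code spanned by the full-weight codeword $\tilde c(1,0)$ (written there in sorted form as the vector with $N_i(f)$ entries equal to $i$), then applies Proposition \ref{Po 5} and the quantum Hamming bound. Your additional checks (that $n\ge 2$ so $k_2\ge k_1+2$, and that $d_2\ge 2$ because the codeword has no zero coordinate) are exactly the details the paper leaves implicit.
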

	\begin{proof}
	When $n+s$ is even, by Theorem \ref{Th 2}, we have that ${\widetilde{\mathcal{C}}_f}^{\bot}$ is a $[(p-1)(p^{n-1}-\epsilon_0 p^{\frac{n+s}{2}-1}), (p-1)(p^{n-1}-\epsilon_0 p^{\frac{n+s}{2}-1})-n-1, 3]$ linear code and $\widetilde{\mathcal{C}}_f\subseteq {\widetilde{\mathcal{C}}_f}^{\bot}$. Note that the vector {\small \[\alpha=(\underbrace{1,\cdots,1,}_{N_1(f)\ \text{times}}\underbrace{2, \cdots, 2,}_{N_2(f)\ \text{times}}\underbrace{3,\cdots,3,}_{N_3(f)\ \text{times}}\cdots,\underbrace{p-1,\cdots,p-1}_{N_{p-1}(f)\ \text{times}})\in \widetilde{\mathcal{C}}_f.\]} Let $\mathcal{C}_1={\widetilde{\mathcal{C}}_f}^{\bot}$ and  $\mathcal{C}_2$ be the dual of the code $\{c\alpha:c\in \mathbb{F}_p\}$. We easily get that $\mathcal{C}_2$ is a $[(p-1)(p^{n-1}-\epsilon_0 p^{\frac{n+s}{2}-1}), (p-1)(p^{n-1}-\epsilon_0 p^{\frac{n+s}{2}-1})-1,2]$ linear code. Note that $\mathcal{C}_1^{\bot}\subseteq \mathcal{C}_1\subseteq\mathcal{C}_2$, according to Proposition \ref{Po 5}, there exists a family of $[[(p-1)(p^{n-1}-\epsilon_0 p^{\frac{n+s}{2}-1}), (p-1)(p^{n-1}-\epsilon_0 p^{\frac{n+s}{2}-1})-n-2, 3]]_p$ pure quantum codes. By the quantum Hamming bound, for fixed lengths and dimensions, we have that those quantum codes are at least almost optimal. 
	
	When $n+s$ is odd, the proof is similar, so we omit it.\end{proof}
	
	In the following, we consider the $s$-plateaued function $f:V_{n}^{(p)}=\mathbb{F}_p^{n_1}\times\mathbb{F}_p^{n_2}\times\mathbb{F}_p^{n_2}\times\mathbb{F}_p^s\longrightarrow\mathbb{F}_p$ defined by 
	\begin{equation}
		f(x,y,z,u)=f^{(z)}(x)+\sum\limits_{i=1}^{n_2}y_iz_i,
	\end{equation}
	where $y=(y_1,y_2,\dots, y_{n_2})$, $z=(z_1,z_2,\dots, z_{n_2})$, and $f^{(z)}(x)$ is a $2$-form weakly regular bent function satisfying $f^{{(cz)}}(x)=f^{(z)}(x)$ and $f^{(z)}(0)=0$ for any $z\in\mathbb{F}_p^{n_2}$, $c\in\mathbb{F}_p^*$. Thus, $f(x,y,z,u)\in\mathscr{F}$ and $t=t'=2$. From the punctured codes $\widetilde{\mathcal{C}}_{D_{f,0}\setminus\{0\}}$, $\widetilde{\mathcal{C}}_{D_{f,sq}}$ and $\widetilde{\mathcal{C}}_{D_{f,nsq}}$, we construct some pure quantum codes using the plateaued functions defined by (10).
	\begin{theorem}\label{Th 8}
		When $n+s$ is an even integer with $0\le s\le n-4$ and  $n+s\ge 6$, there exists a family of $[[\frac{3^{n-1}+2\epsilon_0\times3^{\frac{n+s}{2}-1}-1}{2}, \frac{3^{n-1}+2\epsilon_0\times3^{\frac{n+s}{2}-1}-1}{2}-2n+2, 3]]_3$ pure quantum codes, which are at least almost optimal according to the quantum Hamming bound.
		When $n+s$ is an odd integer with $0\le s\le n-3$ and $n+s\ge 5$, there exists a family of $[[\frac{3^{n-1}-1}{2}, \frac{3^{n-1}-1}{2}-2n+2, 3]]_3$ pure quantum codes, which are at least almost optimal according to the quantum Hamming bound.
		
	\end{theorem}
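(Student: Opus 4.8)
The plan is to run the generalized Steane enlargement of Proposition~\ref{Po 5} on a chain $\mathcal{C}_1^{\bot}\subseteq\mathcal{C}_1\subseteq\mathcal{C}_2$ built from $\widetilde{\mathcal{C}}_{D_{f,0}\setminus\{0\}}$. Put $N=\frac{3^{n-1}+2\epsilon_0\,3^{\frac{n+s}{2}-1}-1}{2}$ when $n+s$ is even and $N=\frac{3^{n-1}-1}{2}$ when $n+s$ is odd; by Theorem~\ref{Th 5} this is the length of $\widetilde{\mathcal{C}}_{D_{f,0}\setminus\{0\}}$, which has dimension $n$ and, since $p=3$ and $n+s\ge 5$, is self-orthogonal. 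Take $\mathcal{C}_1=\widetilde{\mathcal{C}}_{D_{f,0}\setminus\{0\}}^{\bot}$. Then $\mathcal{C}_1^{\bot}=\widetilde{\mathcal{C}}_{D_{f,0}\setminus\{0\}}\subseteq\widetilde{\mathcal{C}}_{D_{f,0}\setminus\{0\}}^{\bot}=\mathcal{C}_1$, and because the hypotheses $n+s\ge 6$ (even) and $n+s\ge 5$ (odd) rule out the exceptional cases $(s,n)\in\{(0,4),(0,3)\}$ of Proposition~\ref{Po 6}, $\mathcal{C}_1$ is an $[N,N-n,3]$ code, so $d_1=3$.

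Next I would construct $\mathcal{C}_2$ by dualizing a codimension-$2$ subcode of $\mathcal{C}_1^{\bot}$. Since $\mathcal{C}_1^{\bot}=\widetilde{\mathcal{C}}_{D_{f,0}\setminus\{0\}}=\{c(a)=(\langle a,x\rangle_n)_{x\in\widetilde{D}_{f,0}\setminus\{0\}}:a\in V_n^{(p)}\}$ and $a\mapsto c(a)$ is injective (the code has dimension $n$), I choose a two-dimensional subspace $W\subseteq V_n^{(p)}$ with $W\cap D_{f,0}=\{0\}$, set $H=W^{\bot}$, and define $\mathcal{C}_2=\bigl(\{c(a):a\in H\}\bigr)^{\bot}$. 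Then $\mathcal{C}_1=(\mathcal{C}_1^{\bot})^{\bot}\subseteq\mathcal{C}_2$ and $\dim\mathcal{C}_2=N-(n-2)=\dim\mathcal{C}_1+2$. The minimum distance $d_2$ of $\mathcal{C}_2$ is the least number of linearly dependent columns of a generator matrix of $\{c(a):a\in H\}$; the column indexed by $x$ is $(\langle a_1,x\rangle_n,\dots,\langle a_{n-2},x\rangle_n)$ for a basis $\{a_i\}$ of $H$, hence it vanishes exactly when $x\perp H$, i.e. $x\in W$. Because $f$ is a $2$-form, $D_{f,0}$ is a union of lines through the origin and $\widetilde{D}_{f,0}\setminus\{0\}$ is a transversal of its punctured lines, so $W\cap D_{f,0}=\{0\}$ gives $W\cap(\widetilde{D}_{f,0}\setminus\{0\})=\emptyset$, the generator matrix has no zero column, and $d_2\ge 2$. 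Proposition~\ref{Po 5} with $p=3$, $d_1=3$ and $d_2\ge 2$ (so $\lceil\tfrac{4}{3}d_2\rceil\ge 3$) then produces a pure $[[N,(N-n)+(N-n+2)-N,3]]_3=[[N,N-2n+2,3]]_3$ code, which is exactly the asserted parameter set in both parities.

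To exhibit $W$ I would exploit the shape of $f$ in~(10). Writing $V_n^{(p)}=\mathbb{F}_3^{n_1}\times\mathbb{F}_3^{n_2}\times\mathbb{F}_3^{n_2}\times\mathbb{F}_3^{s}$ and $f(x,y,z,u)=f^{(z)}(x)+\sum_{i=1}^{n_2}y_iz_i$, take $W=\langle v_1,v_2\rangle$ with $v_1=(x^0,0,0,0)$, where $f^{(0)}(x^0)\ne 0$ (such $x^0$ exists because $f^{(0)}$ is a non-constant $2$-form with $f^{(0)}(0)=0$), and $v_2=(0,y^0,z^0,0)$ with $z^0\ne 0$. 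Using $f^{(cz)}=f^{(z)}$ and the $2$-form property, one checks that $f$ is nonzero on $W\setminus\{0\}$ as soon as $\sum_{i=1}^{n_2}y^0_iz^0_i\notin\{0,-f^{(z^0)}(x^0)\}$; since $z^0\ne 0$ makes $y^0\mapsto\sum_i y^0_iz^0_i$ surjective onto $\mathbb{F}_3$ and the forbidden set has size at most $2<3=|\mathbb{F}_3|$, a valid $y^0$ exists, giving $W\cap D_{f,0}=\{0\}$. For the optimality claim I would invoke the quantum Hamming bound: a hypothetical $[[N,N-2n+2,5]]_3$ code would require $3^{2n-2}\ge 1+8N+64\binom{N}{2}$, which fails because $64\binom{N}{2}$ already exceeds $3^{2n-2}$; hence no distance-$5$ code exists and the constructed codes are at least almost optimal.

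The hard part is this last step, the production of the two-dimensional subspace $W$ missing $D_{f,0}\setminus\{0\}$ — equivalently, arranging $d_2\ge 2$; the mixed choice above works precisely because the $\sum y_iz_i$-block of~(10) supplies a free value with which to dodge the (at most two) forbidden ones, but one must still dispose of the degenerate parameter corners (for instance $n_1=0$, where instead $W$ should be taken anisotropic inside the $\sum y_iz_i$-block, which is available since then $2n_2=n-s\ge 4$). Everything else — the self-orthogonal chain, $d_1=3$ from Proposition~\ref{Po 6}, the dimension bookkeeping, and the Hamming-bound comparison — is routine.
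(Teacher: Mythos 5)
Your proposal is correct and follows essentially the same route as the paper: take $\mathcal{C}_1=\widetilde{\mathcal{C}}_{D_{f,0}\setminus\{0\}}^{\bot}$ (so $\mathcal{C}_1^{\bot}\subseteq\mathcal{C}_1$ by self-orthogonality and $d_1=3$ by Proposition~\ref{Po 6}), enlarge it to $\mathcal{C}_2$ by dualizing the restriction of the defining-set code to the annihilator of a two-dimensional subspace on which $f$ does not vanish away from the origin, check $d_2\ge 2$, and invoke Proposition~\ref{Po 5}. The only cosmetic difference is how that subspace is exhibited: the paper writes it down explicitly for the diagonal quadratics of $(10)$ (spanned by $e_1$ and $e_{n_1+1}-cve_{n_1+n_2+1}$ with $c$ fixed by a square/non-square condition), whereas you obtain it from the same mixed $x$-block/$(y,z)$-block choice together with a three-versus-two counting argument over $\mathbb{F}_3$.
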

	\begin{proof} When $n+s$ is even, according to Theorem \ref{Th 5} and Proposition \ref{Po 6}, we have that for $p=3$, ${\widetilde{\mathcal{C}}_{D_{f,0}\setminus\{0\}}}^{\bot}$ is a $[\frac{3^{n-1}+2\epsilon_0\times 3^{\frac{n+s}{2}-1}-1}{2},\frac{3^{n-1}+2\epsilon_0\times 3^{\frac{n+s}{2}-1}-1}{2}-n,3]$ linear code and $\widetilde{\mathcal{C}}_{D_{f,0}\setminus\{0\}}\subseteq {\widetilde{\mathcal{C}}_{D_{f,0}\setminus\{0\}}}^{\bot}$.
	Now, we construct a subcode $\mathcal{C}$ of $\widetilde{\mathcal{C}}_{D_{f,0}\setminus\{0\}}$ such that the minimum distance of $\mathcal{C}^{\bot}$, denoted by $d^{\bot}$, is at least 2.
	
	  According to the Walsh transform of the quadratic bent function given in \cite[Theorem 1]{Ozbudak}, let $f^{(0)}(x)=vx_1^2+x_2^2+\cdots+x_{n_1}^2$,  $f^{(z)}(x)=wx_1^2+x_2^2+\cdots+x_{n_1}^2$ for any $z\in\mathbb{F}_3^{n_2}\setminus\{0\}$,  $\alpha=e_1$ and $\beta=e_{n_1+1}-cve_{n_1+n_2+1}$, where $v,w,c \in \mathbb{F}_3^*$.
	Let $H$ be the linear space over $\mathbb{F}_3$ spanned by $\alpha$ and $\beta$. If $vw\in SQ$, let $c\in NSQ$; if $vw\in NSQ$, let $c\in SQ$. For any nonzero vector $\gamma=k_1\alpha+k_2\beta$, where  $k_1,k_2\in\mathbb{F}_3$, we have that $f(\gamma)\ne 0$. Define $\mathcal{C}=\{(\langle a,x_1\rangle_n, \langle a,x_2\rangle_n,\cdots, \langle a,x_m\rangle_n):  a \in H^{\bot}\},$ where $\{x_1,x_2,\cdots,x_m\}=\widetilde{D}_{f,0}\setminus\{0\}$.  Obviously, $\mathcal{C}$ is a $[\frac{3^{n-1}+2\epsilon_0\times 3^{\frac{n+s}{2}-1}-1}{2},n-2]$ linear code. Assume that the minimum distance of $\mathcal{C}^{\bot}$ is $1$, then there exist $\tau\in\mathbb{F}_3^*$ and $x\in\widetilde{D}_{f,0}\setminus\{0\}$ such that $\langle a,\tau x\rangle_n=0$ for any $a\in H^{\bot}$. Then $x\in (H^{\bot})^{\bot}=H$, which implies a contradiction. Thus, $d^{\bot}\ge 2$. Note that $\mathcal{C}^{\bot}$ is a $[\frac{3^{n-1}+2\epsilon_0\times 3^{\frac{n+s}{2}-1}-1}{2},\frac{3^{n-1}+2\epsilon_0\times 3^{\frac{n+s}{2}-1}-1}{2}-n+2]$ linear code, and $\widetilde{\mathcal{C}}_{D_{f,0}\setminus\{0\}}\subseteq {\widetilde{\mathcal{C}}_{D_{f,0}\setminus\{0\}}}^{\bot}\subseteq \mathcal{C}^{\bot}$. By Proposition \ref{Po 5}, there exists a family of $[[\frac{3^{n-1}+2\epsilon_0\times 3^{\frac{n+s}{2}-1}-1}{2}, \frac{3^{n-1}+2\epsilon_0\times 3^{\frac{n+s}{2}-1}-1}{2}-2n+2, 3]]_3$ pure quantum codes. According to the quantum Hamming bound, for fixed lengths and dimensions, those quantum codes are at least almost optimal. 
	
	When $n+s$ is odd, the proof is similar, so we omit it.\end{proof}
	
	\begin{theorem}\label{Th 9}
		Let $n$, $s$ be non-negative integers with $s\ge 1$ for $p=3$.
		When $n=4+s$, there exists a family of $[[\frac{p^{n-1}-\epsilon_0p^{n-3}}{2}, \frac{p^{n-1}-\epsilon_0p^{n-3}}{2}-n-2,3]]_p$ pure quantum codes, which are at least almost optimal according to the quantum Hamming bound. When $n=3+s$, there exists a family of $[[\frac{p^{n-1}+\epsilon_0p^{n-2}}{2}, \frac{p^{n-1}+\epsilon_0p^{n-2}}{2}-n-2,3]]_p$ pure quantum codes, which are at least almost optimal according to the quantum Hamming bound.
	\end{theorem}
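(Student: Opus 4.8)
The plan is to obtain these codes from the self-orthogonal punctured codes of Theorem~\ref{Th 6} via the generalized Steane enlargement of Proposition~\ref{Po 5}, applied to a carefully chosen plateaued function $f$ of the form (10). I will take $\mathcal{C}_1=\widetilde{\mathcal{C}}_{D_{f,sq}}^{\bot}$ (or $\widetilde{\mathcal{C}}_{D_{f,nsq}}^{\bot}$), so that $\mathcal{C}_1^{\bot}=\widetilde{\mathcal{C}}_{D_{f,sq}}\subseteq\mathcal{C}_1$ by self-orthogonality, and $\mathcal{C}_2=H^{\bot}$ for a well-chosen $2$-dimensional subcode $H$ of $\widetilde{\mathcal{C}}_{D_{f,sq}}$. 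The objective is to reach $k_1=N-n$, $k_2=N-2$, $d_1=3$ and $d_2\ge 2$, where $N$ is the length in the statement; since $\lceil(1+\tfrac{1}{p})\cdot 2\rceil=3$, Proposition~\ref{Po 5} then outputs a pure $[[N,\,(N-n)+(N-2)-N,\,3]]_p=[[N,\,N-n-2,\,3]]_p$ quantum code, which is exactly the asserted family.

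\medskip

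\noindent First I would pin down $f$ by specializing (10) to a quadratic instance. For $n=4+s$, take $n_1=2$, $n_2=1$ with $f^{(z)}(x_1,x_2)=d_zx_1^2+x_2^2$, where $d_z$ depends only on whether $z=0$ and $\eta(d_z)$ is selected so that $f$ has the prescribed type $\epsilon_0$; put $K^{\bot}=\{x_2=0,\ y_1=0\}\subseteq V_n^{(p)}$, a subspace of codimension $2$, on which $f=d_zx_1^2$, so $f(K^{\bot})\subseteq\{0\}\cup(d_z\cdot SQ)$, whence $K^{\bot}$ misses $D_{f,nsq}$ if $d_z\in SQ$ and misses $D_{f,sq}$ if $d_z\in NSQ$. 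For $n=3+s$, take $n_1=1$, $n_2=1$ with $f^{(z)}(x_1)=d_zx_1^2$ (again $\eta(d_z)$ fixing the type), and $K^{\bot}=\{x_1=0,\ y_1=0\}$, on which $f\equiv 0$, so $K^{\bot}$ misses both $D_{f,sq}$ and $D_{f,nsq}$; here I use $\widetilde{\mathcal{C}}_{D_{f,sq}}$. In every case $f\in\mathscr{F}$, $t=t'=2$, $f(0)=0$, and $K^{\bot}$ is disjoint from the defining set $D$ of the punctured code chosen. The hypotheses of Theorem~\ref{Th 6} ($t=t'=2$; the choices $\widetilde{D}_{f,sq}=\widetilde{D}_{f,i}$, $\widetilde{D}_{f,nsq}=\widetilde{D}_{f,i}$; and $n+s\ge 5$ when $p=3$) all hold in the stated ranges, so $\widetilde{\mathcal{C}}_{D}$ is self-orthogonal; by Proposition~\ref{Po 7} its dual $\mathcal{C}_1$ is an $[N,N-n,3]$ code, $N$ being precisely the length in the statement (using $\tfrac{n+s}{2}-1=n-3$ for even $n+s$ and $\tfrac{n+s-1}{2}=n-2$ for odd $n+s$, and noting that for even $n+s$ both $\widetilde{\mathcal{C}}_{D_{f,sq}}$ and $\widetilde{\mathcal{C}}_{D_{f,nsq}}$ have that common length).

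\medskip

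\noindent Next I would build $\mathcal{C}_2$. Let $K=(K^{\bot})^{\bot}$ and $H=\{(\langle a,x\rangle_n)_{x\in D}:a\in K\}$. Since $D$ spans $V_n^{(p)}$ (Theorem~\ref{Th 6} gives $\dim\widetilde{\mathcal{C}}_{D}=n$), $H$ is $2$-dimensional and contained in $\widetilde{\mathcal{C}}_{D}=\mathcal{C}_1^{\bot}$, so $\mathcal{C}_2:=H^{\bot}$ has dimension $N-2$ and $\mathcal{C}_1\subseteq\mathcal{C}_2$. Moreover $d(\mathcal{C}_2)\ge 2$, because a weight-one codeword of $\mathcal{C}_2$ would be supported at a single $x\in D$ with $\langle a,x\rangle_n=0$ for all $a\in K$, i.e.\ $x\in K^{\bot}\cap D=\emptyset$. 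Thus $\mathcal{C}_1^{\bot}\subseteq\mathcal{C}_1\subseteq\mathcal{C}_2$ and $k_2-k_1=n-2\ge 2$ (here $n\ge 4$: either $n=4+s$, or $n=3+s$ with $s\ge1$), so Proposition~\ref{Po 5} gives the pure $[[N,N-n-2,3]]_p$ code. Finally, a comparison with the quantum Hamming bound, exactly as in the proofs of Theorems~\ref{Th 7} and~\ref{Th 8}, shows these codes are at least almost optimal.

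\medskip

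\noindent The main obstacle is the first step: exhibiting an explicit member of the family (10) that \emph{simultaneously} realizes the prescribed type $\epsilon_0$ and omits, on a whole codimension-$2$ subspace $K^{\bot}$, the quadratic-residue class defining the punctured code used --- this disjointness $K^{\bot}\cap D=\emptyset$ is exactly what forces $d(\mathcal{C}_2)\ge 2$, and hence (together with $d_1=3$) the distance $3$. A secondary point to treat separately is the boundary case $n=3+s$ with $s=0$ (possible only for $p\ne 3$), where $k_2-k_1=1$ is too small; there one should instead take $\mathcal{C}_1$ to be the dual of a $4$-dimensional self-orthogonal code containing $\widetilde{\mathcal{C}}_{D_{f,sq}}$ and $\mathcal{C}_2=\langle\alpha\rangle^{\bot}$ for a full-support codeword $\alpha$. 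All remaining checks (spanning of $D$, the values of the relevant $A_3^{\bot}$, and the Hamming-bound comparison for each $\epsilon_0$) are instances of the exponential-sum counting already carried out in Theorems~\ref{Th 5},~\ref{Th 6} and Propositions~\ref{Po 6},~\ref{Po 7}.
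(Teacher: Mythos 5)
Your proposal is correct and follows essentially the same route as the paper's proof: the paper likewise specializes (10) to quadratic $f^{(z)}$ (fixing the $x_1^2$-coefficients in $NSQ$ and adjusting the type via the free second coefficients, rather than your coupled choice of $d_z$ — an immaterial difference since for even $n+s$ both residue classes give the same length), uses the same codimension-$2$ subspaces $\{x_2=y_1=0\}$ and $\{x_1=y_1=0\}$ on which $f$ avoids the relevant residue class, and applies Proposition \ref{Po 5} with $\mathcal{C}_1={\widetilde{\mathcal{C}}_{D}}^{\bot}$ of dimension $N-n$ and $\mathcal{C}_2$ of codimension $2$. Your remark on the boundary case $n=3+s$ with $s=0$ and $p\ne 3$, where $k_2-k_1=1$ violates the hypothesis $k_2\ge k_1+2$ of Proposition \ref{Po 5}, identifies a genuine gap that the paper's own proof does not address (despite Table XVI listing such codes), so flagging it — even with only a sketched repair — is a point in your favour.
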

	\begin{proof}
 According to the Walsh transform of the quadratic bent function given in \cite[Theorem 1]{Ozbudak}, when $n=4+s$, let $n_1=2$, $n_2=1$, $f^{(0)}(x)=v_1x_1^2+v_2x_2^2$ and $f^{(z)}(x)=w_1x_1^2+w_2x_2^2$ for any $z\in\mathbb{F}_{p}^{n_2}\setminus\{0\}$, where $v_1,w_1\in NSQ$ and $v_2,w_2\in\mathbb{F}_p^*$. Let $\alpha_1=e_1, \alpha_2=e_4, \alpha_3=e_5, \alpha_4=e_6, \dots,\alpha_{2+s}=e_n$, $H$ be the linear space over $\mathbb{F}_p$ spanned by $\alpha_1, \dots, \alpha_{2+s}$. For any nonzero vector $\gamma=k_1\alpha_1+k_2\alpha_2+\cdots+k_{2+s}\alpha_{2+s}$, where $k_1,k_2,\dots, k_{2+s}\in\mathbb{F}_p$, we have that $f(\gamma)\notin SQ$. Define $\mathcal{C}=\{\langle a,x\rangle_n: x\in \widetilde{D}_{f,i}, a\in H^{\bot}\}$, where $i\in SQ$. By Theorem \ref{Th 6}, Proposition \ref{Po 7} and the proof of Theorem \ref{Th 8}, we easily obtain the result.
	
	When $n=3+s$, let $n_1=1$, $n_2=1$. Let $\alpha_1=e_3, \alpha_2=e_4, \alpha_3=e_5, \dots,\alpha_{1+s}=e_n$, $H$ be the linear space over $\mathbb{F}_p$ spanned by $\alpha_1, \dots, \alpha_{1+s}$. For any nonzero vector $\gamma=k_1\alpha_1+k_2\alpha_2+\cdots+k_{1+s}\alpha_{1+s}$, where $k_1,k_2,\dots, k_{1+s}\in\mathbb{F}_p$, we have that $f(\gamma)=0$. Define $\mathcal{C}=\{(\langle a,x_1\rangle_n, \langle a,x_2\rangle_n,\cdots, \langle a,x_m\rangle_n):  a \in H^{\bot}\},$ where $\{x_1,x_2,\cdots,x_m\}=\widetilde{D}_{f,i},\  i\in SQ$. By Theorem \ref{Th 6}, Proposition \ref{Po 7} and  the proof of Theorem \ref{Th 8}, we easily obtain the result. \end{proof}
	\begin{remark}
 $(1)$ When $n=4+s$, we can also consider using  $\mathcal{C}=\{(\langle a,x_1\rangle_n, \langle a,x_2\rangle_n,\cdots, \langle a,x_m\rangle_n):  a \in H^{\bot}\},$ where $\{x_1,x_2,\cdots,x_m\}=\widetilde{D}_{f,i},\  i\in NSQ$, to construct quantum codes. In this case, we just need to make $v_1,w_1\in SQ$. When $n=3+s$, the linear code $\mathcal{C}$ can also be defined as $\mathcal{C}=\{(\langle a,x_1\rangle_n, \langle a,x_2\rangle_n,\cdots, \langle a,x_m\rangle_n):  a \in H^{\bot}\},$ where $\{x_1,x_2,\cdots,x_m\}=\widetilde{D}_{f,i},\  i\in NSQ$.\\
		$(2)$ When $p=3$ and $p=5$, the parameters of our quantum codes are contained in \cite{Chen,Liang}. In Table \uppercase\expandafter{\romannumeral16} in the Appendix, we present some $7$-ary quantum codes constructed by Theorems \ref{Th 7} and \ref{Th 9}, which have higher rates than the known ones in \cite{Edel}.
	\end{remark}
	\subsection{LCD codes from self-orthogonal codes}
	Let $I_{n+1,n+1}$ be the identity matrix of size $(n+1)\times (n+1)$. In the following, we give some LCD codes constructed by $\widetilde{\mathcal{C}}_f$.
	\begin{theorem}\label{Th 10}
		Let $f(x): V_n^{(p)}\longrightarrow \mathbb{F}_p$ be an $s$-plateaued function belonging to $\mathscr{F}$ and $G$ be a generator matrix of $\widetilde{\mathcal{C}}_f
		$. When $n+s$ is an even integer with $0\le s\le n-2$, $n+s\ge 6$ for $p=3$ and $n+s\ge 4$ for $p\ne 3$, $G'=[I_{n+1,n+1}, G]$ generates a $p$-ary  $[(p-1)(p^{n-1}-\epsilon_0p^{\frac{n+s}{2}-1})+n+1, n+1]$ LCD code $\mathcal{C}$, and its dual code $\mathcal{C}^{\bot}$ is a $p$-ary  $[(p-1)(p^{n-1}-\epsilon_0p^{\frac{n+s}{2}-1})+n+1, (p-1)(p^{n-1}-\epsilon_0p^{\frac{n+s}{2}-1}),3]$ LCD code. When  $n+s$ is an odd integer with $0\le s\le n-1$, $n+s\ge 5$ for $p=3$ and $n+s\ge 3$ for $p\ne 3$, $G'=[I_{n+1,n+1}, G]$ generates a $p$-ary $[(p-1)p^{n-1}+n+1, n+1]$ LCD code $\mathcal{C}$, and its dual code $\mathcal{C}^{\bot}$ is a $p$-ary  $[(p-1)p^{n-1}+n+1, (p-1)p^{n-1},3]$ LCD code.
	\end{theorem}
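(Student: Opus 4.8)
The plan is to reduce the statement to the self-orthogonality of $\widetilde{\mathcal{C}}_f$ established in Theorem \ref{Th 2} and to the matrix criterion for LCD codes in Proposition \ref{Po 4}. First I would check that the hypotheses here force those of Theorem \ref{Th 2}: for even $n+s$ the condition ``$n+s\ge 6$ for $p=3$'' coincides with ``$n+s\ge 5$ for $p=3$'' because $n+s$ is even, and the remaining inequalities are exactly what Theorem \ref{Th 2} needs; the odd case is analogous. Hence $\widetilde{\mathcal{C}}_f$ is self-orthogonal. I would take $G$ to be the generator matrix arising from the defining evaluation, so that (since the evaluation map $(a,b)\mapsto\tilde c(a,b)$ is injective by Theorem \ref{Th 2}) it is a genuine $(n+1)\times m$ generator matrix, and its column indexed by $x\in V_n^{(p)}\setminus D_{f,0}$ is $(f(x),\langle\beta_1,x\rangle_n,\dots,\langle\beta_n,x\rangle_n)^{T}$ for a fixed $\mathbb{F}_p$-basis $\beta_1,\dots,\beta_n$ of $V_n^{(p)}$. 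Since the rows of $G$ are codewords of the self-orthogonal code $\widetilde{\mathcal{C}}_f$, they are pairwise orthogonal, i.e. $GG^{T}=0$; therefore
\[G'(G')^{T}=I_{n+1,n+1}I_{n+1,n+1}^{T}+GG^{T}=I_{n+1,n+1},\]
which is nonsingular, and Proposition \ref{Po 4} gives that $\mathcal{C}$ is an LCD code. The $I_{n+1,n+1}$-block makes the $n+1$ rows of $G'$ linearly independent, so $\dim\mathcal{C}=n+1$, the length of $\mathcal{C}$ is $n+1$ plus the length of $\widetilde{\mathcal{C}}_f$ recorded in Theorem \ref{Th 2}, and since the dual of an LCD code is again LCD, $\mathcal{C}^{\bot}$ is an LCD code of that same length and of dimension the length of $\widetilde{\mathcal{C}}_f$.

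The remaining point, and the only one needing genuine work, is that $d(\mathcal{C}^{\bot})=3$. Writing a codeword of $\mathcal{C}^{\bot}$ as $w=(w_1,w_2)$ with $w_1\in\mathbb{F}_p^{n+1}$ and $w_2\in\mathbb{F}_p^{m}$ (where $m$ is the length of $\widetilde{\mathcal{C}}_f$), the equation $G'w^{T}=0$ forces $w_1=-w_2G^{T}$, so $\mathcal{C}^{\bot}=\{(-w_2G^{T},w_2):w_2\in\mathbb{F}_p^{m}\}$ and $\mathrm{wt}(w)=\mathrm{wt}(w_2G^{T})+\mathrm{wt}(w_2)$, with $w_2G^{T}=0$ precisely when $w_2\in\widetilde{\mathcal{C}}_f^{\bot}$. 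I would then argue by cases. If $w_2\in\widetilde{\mathcal{C}}_f^{\bot}\setminus\{0\}$, then $\mathrm{wt}(w)=\mathrm{wt}(w_2)\ge d(\widetilde{\mathcal{C}}_f^{\bot})=3$ by Theorem \ref{Th 2}, with equality for a minimum-weight codeword of $\widetilde{\mathcal{C}}_f^{\bot}$ embedded as $(0,w_2)$; if $w_2\notin\widetilde{\mathcal{C}}_f^{\bot}$ with $\mathrm{wt}(w_2)\ge 2$, then $\mathrm{wt}(w_2G^{T})\ge 1$ yields $\mathrm{wt}(w)\ge 3$; and if $\mathrm{wt}(w_2)=1$, with single nonzero entry $\mu$ in position $j$, then $w_2G^{T}$ is $\mu$ times the $j$-th column of $G$, which equals $(f(x),\langle\beta_1,x\rangle_n,\dots,\langle\beta_n,x\rangle_n)^{T}$ for the corresponding $x\in V_n^{(p)}\setminus D_{f,0}$, where $f(x)\ne 0$ (since $x\notin D_{f,0}$) and $(\langle\beta_1,x\rangle_n,\dots,\langle\beta_n,x\rangle_n)\ne 0$ (since $x\ne 0$ and $\langle\cdot\rangle_n$ is non-degenerate), hence this column has Hamming weight at least $2$ and $\mathrm{wt}(w)\ge 3$. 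Combining the cases, $d(\mathcal{C}^{\bot})=3$. The odd-$n+s$ case is identical except for the length and dimension formulas.

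The main obstacle is the minimum-distance computation in the second paragraph: one must exclude weight-$1$ and weight-$2$ codewords of $\mathcal{C}^{\bot}$, and this rests on two facts — the already-proved value $d(\widetilde{\mathcal{C}}_f^{\bot})=3$ (so no vector of weight $\le 2$ lies in $\widetilde{\mathcal{C}}_f^{\bot}$) and the structural feature that, because the coordinates of $\widetilde{\mathcal{C}}_f$ are indexed by $V_n^{(p)}\setminus D_{f,0}$, no column of the defining matrix $G$ is zero or a scalar multiple of a standard basis vector. Everything else is the one-line identity $G'(G')^{T}=I_{n+1,n+1}$ together with Propositions \ref{Po 3}, \ref{Po 4} and the observation that the dual of an LCD code is LCD.
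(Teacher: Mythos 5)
Your proposal is correct and follows essentially the same route as the paper: self-orthogonality of $\widetilde{\mathcal{C}}_f$ from Theorem \ref{Th 2} gives $G'(G')^{T}=I_{n+1,n+1}$, hence LCD via Proposition \ref{Po 4}, and the minimum distance of $\mathcal{C}^{\bot}$ is pinned down using $d(\widetilde{\mathcal{C}}_f^{\bot})=3$ together with the fact that each column of $G$ has a nonzero $f(x)$-entry and a nonzero inner-product part. Your weight case analysis on $w_2$ is just the dual formulation of the paper's "any two columns of $G'$ are linearly independent" argument (and is in fact slightly more explicit about the cases the paper dismisses with "as easily seen").
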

	\begin{proof} When $n+s$ is even, by Theorem \ref{Th 2}, we have that $\widetilde{\mathcal{C}}_f$ is self-orthogonal. Thus, $G'{G'}^{T}=I_{n+1,n+1}$. According to Proposition \ref{Po 4}, we have that the linear code $\mathcal{C}$ generated by $G'$ is an LCD code, and its dual code $\mathcal{C}^{\bot}$ is also an LCD code. The lengths and dimensions of $\mathcal{C}$ and $\mathcal{C}^{\bot}$ directly follow from Theorem \ref{Th 2}. Now, we prove that the minimum distance of $\mathcal{C}^{\bot}$, denoted by $d^{\bot}$, is $3$. Let $\alpha_1, \alpha_2,\cdots,\alpha_n$ be a basis of $V_{n}^{(p)}$, 
	{\small 	\[\beta=(\underbrace{1,\cdots,1,}_{N_1(f)\ \text{times}}\underbrace{2, \cdots, 2,}_{N_2(f)\ \text{times}}\underbrace{3,\cdots,3,}_{N_3(f)\ \text{times}}\cdots,\underbrace{p-1,\cdots,p-1}_{N_{p-1}(f)\ \text{times}}),\]}
	then {\small $G'=\begin{bmatrix}
			1&0&\cdots&0&\beta\\
			0&1&\cdots&0&(\langle \alpha_1,x\rangle_n)_{x\in V_{n}^{(p)}\setminus D_{f,0}}\\
			\vdots&\vdots&\ddots&0&	\vdots\\
			0&0&\cdots&1&	(\langle \alpha_n,x\rangle_n)_{x\in V_{n}^{(p)}\setminus D_{f,0}}
		\end{bmatrix}.$}
	
	Since the minimum distance of ${\widetilde{\mathcal{C}}_f}^{\bot}$ is 3, then $d^{\bot}\le 3$. Thus, we can prove that $d^{\bot}\ge 3$, i.e., any two columns of $G'$ are linearly independent. As easily seen, we need to prove that for any $i\in \mathbb{F}_p^*$ and $x\in D_{f,i}$, $(1,0,\cdots,0)$ and $(i,\langle \alpha_1, x\rangle_n,\cdots,\langle\alpha_n, x\rangle _n)$  are linearly independent. If there exist $k_1,$ $ k_2\in\mathbb{F}_p$ such that $k_1(1,0,\cdots,0)+k_2(i,\langle \alpha_1, x\rangle_n, \cdots, \langle\alpha_n, x\rangle _n)=0$, then $k_1+k_2 i=0$ and $k_2\langle\alpha_j,x\rangle_n=0$ for any $1\le j\le n$. Since $x\ne 0$, then $k_1=k_2=0$. Thus, $d^{\bot}=3$.
	
	When $n+s$ is odd, the proof is similar, so we omit it.
	\end{proof}
	\begin{remark}
		According to the sphere packing bound, when $n+s$ is even, if $p=3$, $0\in B_+(f)$ and $n-s=2$, for fixed length and dimension, $\mathcal{C}^{\bot}$ is at least almost optimal; in other cases, for fixed length and minimum distance, $\mathcal{C}^{\bot}$ is optimal. When $n+s$ is odd, for fixed length and minimum distance, $\mathcal{C}^{\bot}$ is optimal.
	\end{remark}
	
	In the following, we give some LCD codes constructed by $\widetilde{\mathcal{C}}_{D_{f,0}\setminus\{0\}}$, $\widetilde{\mathcal{C}}_{D_{f,sq}}$ and $\widetilde{\mathcal{C}}_{D_{f,nsq}}$. Firstly, we present a lemma.
	\begin{lemma}\label{Le 16}
		Let $\alpha_1, \alpha_2,\cdots, \alpha_n$ be a basis of $V_{n}^{(p)}$, then the matrix
		{\small \begin{equation}H=\begin{bmatrix}
					H_1\\
					H_2\\
					\vdots\\
					H_n
				\end{bmatrix}=\begin{bmatrix}
					\langle \alpha_1,\alpha_1\rangle_n& \langle \alpha_1,\alpha_2\rangle _n&\cdots& \langle\alpha_1,\alpha_n\rangle_n\\
					\langle \alpha_2,\alpha_1\rangle_n& \langle \alpha_2,\alpha_2\rangle _n&\cdots& \langle\alpha_2,\alpha_n\rangle_n\\
					\vdots&\vdots&\ddots&\vdots\\
					\langle \alpha_n,\alpha_1\rangle_n& \langle \alpha_n,\alpha_2\rangle _n&\cdots& \langle\alpha_n,\alpha_n\rangle_n
				\end{bmatrix}
		\end{equation}}
		is nonsingular.
	\end{lemma}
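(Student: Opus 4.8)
The plan is to deduce nonsingularity of $H$ directly from the non-degeneracy of the inner product $\langle\cdot,\cdot\rangle_n$ on $V_n^{(p)}$, using only that $\alpha_1,\dots,\alpha_n$ form a basis. First I would argue by contradiction: suppose $H$ is singular. Then its columns are linearly dependent over $\mathbb{F}_p$, so there exists a nonzero vector $(c_1,\dots,c_n)^{T}\in\mathbb{F}_p^{n}$ with $\sum_{j=1}^{n}c_j\langle\alpha_i,\alpha_j\rangle_n=0$ for every $1\le i\le n$. By linearity of $\langle\cdot,\cdot\rangle_n$ in the second argument, this is exactly $\langle\alpha_i,\,\sum_{j=1}^{n}c_j\alpha_j\rangle_n=0$ for all $i$.

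Next, put $v=\sum_{j=1}^{n}c_j\alpha_j$. Since $(c_1,\dots,c_n)\neq 0$ and $\alpha_1,\dots,\alpha_n$ is a basis of $V_n^{(p)}$, we have $v\neq 0$. On the other hand, every $w\in V_n^{(p)}$ is an $\mathbb{F}_p$-linear combination of $\alpha_1,\dots,\alpha_n$, and $\langle\cdot,v\rangle_n$ is linear in the first argument, so the relations $\langle\alpha_i,v\rangle_n=0$ for $1\le i\le n$ propagate to $\langle w,v\rangle_n=0$ for all $w\in V_n^{(p)}$. This contradicts the non-degeneracy of $\langle\cdot,\cdot\rangle_n$. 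Hence $H$ is nonsingular.

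There is essentially no obstacle here; the only mild point is which side of non-degeneracy is invoked, but every inner product used in this paper (the standard dot product on $\mathbb{F}_p^{n}$, the trace form $Tr_1^{n}(ab)$ on $\mathbb{F}_{p^n}$, and orthogonal sums of these) is symmetric, so the one-sided and two-sided formulations coincide. Alternatively, one may phrase the argument through a change-of-basis matrix $P$ carrying a fixed reference basis (in which the Gram matrix is an invertible matrix $H_0$, e.g.\ $H_0$ the identity for the dot product) to $\alpha_1,\dots,\alpha_n$; then $H=PH_0P^{T}$, whence $\det H=(\det P)^2\det H_0\neq 0$. This records the determinant explicitly but rests on the same non-degeneracy input, so I would present the short contradiction argument above.
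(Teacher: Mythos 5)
Your proposal is correct and is essentially the paper's own argument: the paper also assumes a linear dependence among the rows of $H$, rewrites it as $\langle \sum_i k_i\alpha_i,\alpha_j\rangle_n=0$ for all $j$, and invokes non-degeneracy (implicitly) together with the basis property to force $k_i=0$. Your version merely works with columns instead of rows and spells out the non-degeneracy step more explicitly, which is a harmless presentational difference.
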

	\begin{proof} We only need to prove that the row vectors of $H$ are linearly independent. Assume that there exist $k_1,k_2,\cdots, k_n\in\mathbb{F}_p$ such that $\sum_{i=1}^nk_i H_i=0$, i.e., $\sum_{i=1}^nk_i\langle \alpha_i,\alpha_j\rangle_n=0$ for any $1\le j\le n$. Thus, $\sum_{i=1}^n{k_i}\alpha_i=0$, which implies that $k_i=0$ for any $1\le i\le n$. This completes the proof. \end{proof}
	
	In the sequel,  according to the Walsh transform of the quadratic bent function given in \cite[Theorem 1]{Ozbudak}, let $f(x,y,z,u)$ be an $s$-plateaued function defined by (10) satisfying $f^{(0)}(x)=u_1x_1^2+u_2x_2^2+\cdots+u_{n_1}x_{n_1}^2$ and $f^{(z)}(x)=v_1x_1^2+v_2x_2^2+\cdots+v_{n_1}x_{n_1}^2$ for any $z\in\mathbb{F}_p^{n_2}\setminus\{0\}$, where $u_1,\dots, u_{n_1}, v_1,\dots,v_{n_1}\in\mathbb{F}_p^*$. \hspace{0.5cm}   $(*)$

	Let $H^{(1)}$ be the matrix obtained by Equation (11), where $\alpha_1=e_1, \alpha_2=e_2, \dots, \alpha_{n_1}=e_{n_1}, \alpha_{n_1+1}=e_1+e_{n_1+1}, \alpha_{n_1+2}=e_1+e_{n_1+2},\dots, \alpha_{n}=e_1+e_n$. In the following theorem, we construct some LCD codes from 
	$\widetilde{\mathcal{C}}_{D_{f,0}\setminus\{0\}}$.
	\begin{theorem}\label{Th 11}
		Let $p=3$, $f(x,y,z,u)$ be an $s$-plateaued function given by $(*)$. Let $G$ be the generator matrix of $\widetilde{\mathcal{C}}_{D_{f,0}\setminus\{0\}}$ given by
		
		{\small \[G=\begin{bmatrix}
				(\langle \alpha_1,x\rangle_n)_{x\in\widetilde{D}_{f,0}\setminus\{0\}}\\
				(\langle \alpha_2,x\rangle_n)_{x\in\widetilde{D}_{f,0}\setminus\{0\}}\\
				\vdots\\	
				(\langle \alpha_n,x\rangle_n)_{x\in\widetilde{D}_{f,0}\setminus\{0\}}
			\end{bmatrix},\]}
		where $\alpha_1,\alpha_2,\dots,\alpha_n$ are given above. When $n+s$ is even with $0\le s\le n-4$ and $n+s\ge 6$, $G'=[H^{(1)},G]$ generates a ternary $[\frac{3^{n-1}+2\epsilon_0\times 3^{\frac{n+s}{2}-1}-1}{2}+n, n]$ LCD code $\mathcal{C}$, and its  dual code $\mathcal{C}^{\bot}$ is a ternary $[\frac{3^{n-1}+2\epsilon_0\times 3^{\frac{n+s}{2}-1}-1}{2}+n, \frac{3^{n-1}+2\epsilon_0\times 3^{\frac{n+s}{2}-1}-1}{2},3]$ LCD code. When $n+s$ is odd with $0\le s\le n-3$ and $n+s\ge 5$, $G'=[H^{(1)},G]$ generates a ternary $[\frac{3^{n-1}-1}{2}+n, n]$ LCD  code $\mathcal{C}$, and its dual code $\mathcal{C}^{\bot}$ is a ternary $[\frac{3^{n-1}-1}{2}+n, \frac{3^{n-1}-1}{2},3]$ LCD code.
	\end{theorem}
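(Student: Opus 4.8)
The plan is to read off the LCD property from Proposition \ref{Po 4} and then pin down the parameters and the dual minimum distance by a column-independence argument of the same flavour as in the proof of Theorem \ref{Th 10}. Write $G'=[H^{(1)},G]$; since $\langle\cdot,\cdot\rangle_n$ is symmetric, $H^{(1)}$ is a symmetric matrix, so $G'{G'}^{T}=H^{(1)}{H^{(1)}}^{T}+GG^{T}=(H^{(1)})^{2}+GG^{T}$. Both hypothesis sets force $n+s\ge 5$, so Theorem \ref{Th 5} applies and $\widetilde{\mathcal{C}}_{D_{f,0}\setminus\{0\}}$ is self-orthogonal over $\mathbb{F}_3$; hence every codeword $c$ of this code satisfies $c\cdot c=0$, and since $2$ is invertible in $\mathbb{F}_3$ the identity $2\,c\cdot c'=(c+c')\cdot(c+c')-c\cdot c-c'\cdot c'$ gives $c\cdot c'=0$ for all pairs, in particular $GG^{T}=0$. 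Thus $G'{G'}^{T}=(H^{(1)})^{2}$, which is nonsingular by Lemma \ref{Le 16}; Proposition \ref{Po 4} then shows $\mathcal{C}$ is an LCD code and so is $\mathcal{C}^{\bot}$.

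For the parameters, the nonsingularity of $H^{(1)}$ makes the rows of $G'$ linearly independent, so $\dim\mathcal{C}=n$, and the length of $\mathcal{C}$ is $n$ plus the length of $\widetilde{\mathcal{C}}_{D_{f,0}\setminus\{0\}}$, which by Theorem \ref{Th 5} equals $\frac{3^{n-1}+2\epsilon_0 3^{\frac{n+s}{2}-1}-1}{2}$ for even $n+s$ and $\frac{3^{n-1}-1}{2}$ for odd $n+s$; the dimension of $\mathcal{C}^{\bot}$ is complementary. It then remains to show $d(\mathcal{C}^{\bot})=3$. For the upper bound, the hypotheses exclude the exceptional cases of Proposition \ref{Po 6} ($s=0,n=4$ for even $n+s$; $s=0,n=3$ for odd $n+s$), so ${\widetilde{\mathcal{C}}_{D_{f,0}\setminus\{0\}}}^{\bot}$ has minimum distance $3$; a weight-$3$ codeword there gives three linearly dependent columns of $G$, hence a weight-$3$ codeword of $\mathcal{C}^{\bot}$.

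For the lower bound I would check that any two columns of $G'$ are linearly independent. If both come from $G$, say indexed by $x,x'\in\widetilde{D}_{f,0}\setminus\{0\}$, a dependence yields $\lambda x+\mu x'=0$ with $(\lambda,\mu)\ne(0,0)$ (because $\alpha_1,\dots,\alpha_n$ is a basis and $\langle\cdot,\cdot\rangle_n$ is non-degenerate), which makes $x,x'$ proportional and contradicts the defining property of $\widetilde{D}_{f,0}$. Two columns of $H^{(1)}$ are independent since $H^{(1)}$ is nonsingular. For a column $(\langle\alpha_i,\alpha_j\rangle_n)_i$ of $H^{(1)}$ and a column $(\langle\alpha_i,x\rangle_n)_i$ of $G$, a dependence forces $\lambda\alpha_j+\mu x=0$; if $\mu\ne0$ this gives $x=a\alpha_j$ for some $a\in\mathbb{F}_p^{*}$, so the claim reduces to showing $a\alpha_j\notin D_{f,0}$, i.e. $f(a\alpha_j)\ne0$, for all $a\in\mathbb{F}_p^{*}$ and all $j$.

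This last verification is the heart of the argument and is exactly where the structure $(*)$ and the choice $\alpha_j=e_j$ ($j\le n_1$), $\alpha_j=e_1+e_j$ ($n_1<j\le n$) enter. Writing the argument of $f$ in the block form $(x,y,z,u)$: for $j\le n_1$ the point $a\alpha_j$ has $z$-part $0$, so $f(a\alpha_j)=f^{(0)}(ae_j)=u_ja^2\ne0$; for $n_1<j\le n$ with the $j$-th coordinate in the $z$-block, $a\alpha_j$ has $x$-part $ae_1$ and nonzero $z$-part, so $f(a\alpha_j)=f^{(z)}(ae_1)=v_1a^2\ne0$; for the remaining $j$ the $z$-part is $0$ and the bilinear term $\sum_iy_iz_i$ vanishes, so $f(a\alpha_j)=f^{(0)}(ae_1)=u_1a^2\ne0$. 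Hence $\mu=0$, then $\lambda=0$, all columns of $G'$ are pairwise independent, and $d(\mathcal{C}^{\bot})\ge3$, so $d(\mathcal{C}^{\bot})=3$. The odd-$n+s$ case runs identically, using the odd parts of Theorem \ref{Th 5} and Proposition \ref{Po 6} and Table \uppercase\expandafter{\romannumeral11} in place of their even-$n+s$ counterparts. The main obstacle is precisely the case analysis showing $f(a\alpha_j)\ne0$ for every $a\in\mathbb{F}_p^{*}$ and every $j$; the rest is a formal consequence of self-orthogonality, Lemma \ref{Le 16}, and the Pless power moments already used in Proposition \ref{Po 6}.
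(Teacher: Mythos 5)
Your proposal follows the paper's proof essentially step for step: self-orthogonality of $\widetilde{\mathcal{C}}_{D_{f,0}\setminus\{0\}}$ from Theorem \ref{Th 5} gives $GG^{T}=0$, Lemma \ref{Le 16} gives nonsingularity of $G'{G'}^{T}=H^{(1)}(H^{(1)})^{T}$ and hence the LCD property via Proposition \ref{Po 4}, the upper bound $d^{\bot}\le 3$ comes from Proposition \ref{Po 6}, and the lower bound reduces to pairwise column independence, which hinges on $f(a\alpha_j)\ne 0$ exactly as in the paper (where it is derived from $k_1^2f(\alpha_j)=k_2^2f(x)=0$ via the $2$-form property). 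Your explicit case analysis verifying $f(a\alpha_j)\ne 0$ for the $x$-, $y$-, $z$-, and $u$-block choices of $\alpha_j$ is a detail the paper asserts without writing out, but the argument is the same.
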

	\begin{proof} When $n+s$ is even, by Theorem \ref{Th 5}, for $p=3$, we have that  $\widetilde{\mathcal{C}}_{D_{f,0}\setminus\{0\}}$ is self-orthogonal. By Lemma \ref{Le 16}, we know that $H^{(1)}$ is nonsingular. Thus, $G'G'^{T}=H^{(1)}(H^{(1)})^T$ is nonsingular. According to Proposition \ref{Po 4}, we have that the linear code $\mathcal{C}$ generated by $G'$ is an LCD code, and its dual code  $\mathcal{C}^{\bot}$ is also an LCD code. The lengths and dimensions of $\mathcal{C}$ and $\mathcal{C}^{\bot}$ directly follow from Theorem \ref{Th 5}. Now, we prove that the minimum distance of $\mathcal{C}^{\bot}$, denoted by $d^{\bot}$, is 3.
	
	Since the minimum distance of ${\widetilde{\mathcal{C}}_{D_{f,0}\setminus\{0\}}}^{\bot}$ is 3, then $d^{\bot}\le 3$. Thus, we can prove that $d^{\bot}\ge 3$, i.e., any two columns of $G'$ are linearly independent. As easily seen, we need to prove that for any $1\le j\le n$, $x\in\widetilde{D}_{f,0}\setminus\{0\}$, $(\langle \alpha_1,\alpha_j\rangle_n, \langle\alpha_2,\alpha_j\rangle_n, \dots, \langle \alpha_n, \alpha_j\rangle_n)$ and $(\langle \alpha_1,x\rangle_n, \langle\alpha_2,x\rangle_n, \dots, \langle \alpha_n, x\rangle_n)$ are linearly independent. If there exist $k_1, k_2\in\mathbb{F}_p$ such that $k_1(\langle \alpha_1,\alpha_j\rangle_n,  \dots, \langle \alpha_n, \alpha_j\rangle_n)+k_2(\langle \alpha_1,x\rangle_n,  \dots, \langle \alpha_n, x\rangle_n)=0$, then $\langle \alpha_i, k_1\alpha_j+k_2x\rangle_n=0$ for any $1\le i\le n$. Thus, $k_1\alpha_j+k_2x=0$, which implies that $k_1^2f(\alpha_j)=k_2^2f(x)$. Since for any $1\le j\le n$, $f(\alpha_j)\ne 0$, then $k_1=k_2=0$. Hence, $d^{\bot}=3$. 
	
	When $n+s$ is odd, the proof is similar, so we omit it.\end{proof}
	\begin{remark}
		According to the sphere packing bound, when $n+s$ is even, if $0\in B_+(f)$, for fixed length and minimum distance, $\mathcal{C}^{\bot}$ is optimal; if $0\in B_-(f)$, for fixed length and dimension, $\mathcal{C}^{\bot}$ is at least almost optimal. When $n+s$ is odd, for fixed length and dimension, $\mathcal{C}^{\bot}$ is at least almost optimal.
	\end{remark}
	Consider the $s$-plateaued function $f(x,y,z,u)$ given by $(*)$, if $n_1=1$, let $\alpha_1=e_1+e_2-v_1e_{2+n_2}, \alpha_2=e_2, \alpha_3=e_3, \dots, \alpha_n=e_n$. If $n_1\ge 2$, let $\alpha_1=w_1e_1, \alpha_2=w_2e_2, \dots,\alpha_{n_1}=w_{n_1}e_{n_1},\alpha_{n_1+1}=e_{n_1+1},\dots,\alpha_n=e_n$, where $w_1,w_2,\dots,w_{n_1}\in \mathbb{F}_p^*$, $w_iu_i\in NSQ$ for $1\le i\le n_1$. Let $H^{(2)}$ be the matrix obtained by Equation (11), where $\alpha_1, \alpha_2, \dots, \alpha_n$ are given above. In the following theorem, we construct some LCD codes from $\widetilde{\mathcal{C}}_{D_{f,sq}}$.
	\begin{theorem}\label{Th 12}
		Let $f(x,y,z,u)$ be an $s$-plateaued function given by $(*)$. Let $G$ be the generator matrix of $\widetilde{\mathcal{C}}_{D_{f,sq}}$ given by
		{\small \[G=\begin{bmatrix}
				(\langle \alpha_1,x\rangle_n)_{x\in\widetilde{D}_{f,sq}}\\
				(\langle \alpha_2,x\rangle_n)_{x\in\widetilde{D}_{f,sq}}\\
				\vdots\\	
				(\langle \alpha_n,x\rangle_n)_{x\in\widetilde{D}_{f,sq}}
			\end{bmatrix},\]}
		where $\alpha_1, \alpha_2, \dots, \alpha_n$ are given above.
		When $n+s$ is an even integer with $0\le s\le n-4$, let $n+s\ge 6$ for $p=3$ and $\widetilde{D}_{f,sq}=\widetilde{D}_{f,i}\ (i\in SQ)$ for $p\ne 3$, then $G'=[H^{(2)},G]$ generates a $p$-ary $[\frac{p^{n-1}-\epsilon_0p^{\frac{n+s}{2}-1}}{2}+n, n]$ LCD code $\mathcal{C}$, and its dual code $\mathcal{C}^{\bot}$ is a $p$-ary $[\frac{p^{n-1}-\epsilon_0p^{\frac{n+s}{2}-1}}{2}+n, \frac{p^{n-1}-\epsilon_0p^{\frac{n+s}{2}-1}}{2},3]$ LCD code. When $n+s$ is an odd integer with $0\le s\le n-3$, let $n+s\ge 5$ for $p=3$ and $\widetilde{D}_{f,sq}=\widetilde{D}_{f,i}\ (i\in SQ)$ for $p\ne 3$, then $G'=[H^{(2)},G]$ generates a $p$-ary $[\frac{p^{n-1}+\epsilon_0p^{\frac{n+s-1}{2}}}{2}+n, n]$ LCD  code $\mathcal{C}$, and its dual code $\mathcal{C}^{\bot}$ is a $p$-ary $[\frac{p^{n-1}+\epsilon_0p^{\frac{n+s-1}{2}}}{2}+n, \frac{p^{n-1}+\epsilon_0p^{\frac{n+s-1}{2}}}{2},3]$ LCD code.
	\end{theorem}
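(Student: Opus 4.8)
The plan is to follow the same route as in the proof of Theorem~\ref{Th 11}, replacing $\widetilde{\mathcal{C}}_{D_{f,0}\setminus\{0\}}$ and $H^{(1)}$ by $\widetilde{\mathcal{C}}_{D_{f,sq}}$ and $H^{(2)}$. The argument breaks into four parts: (i) showing that $\widetilde{\mathcal{C}}_{D_{f,sq}}$ is self-orthogonal, so that $GG^{T}=0$; (ii) showing $G'{G'}^{T}$ is nonsingular, whence $\mathcal{C}$, and therefore $\mathcal{C}^{\bot}$, are LCD; (iii) reading off the length and dimension; (iv) proving $d(\mathcal{C}^{\bot})=3$.

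For (i) I would first observe that the function $f(x,y,z,u)$ in $(*)$ is of the form (10), so it belongs to $\mathscr{F}$ and satisfies $t=t'=2$. Hence the hypotheses of Theorem~\ref{Th 6} hold under those of the present theorem ($n+s\ge 5$ for $p=3$; for $p\ne 3$ we have $t=t'=2$ and, by assumption, $\widetilde{D}_{f,sq}=\widetilde{D}_{f,i}$ with $i\in SQ$), so $\widetilde{\mathcal{C}}_{D_{f,sq}}$ is self-orthogonal and $GG^{T}=0$ over $\mathbb{F}_p$. For (ii), since $G'=[H^{(2)},G]$ we get $G'{G'}^{T}=H^{(2)}(H^{(2)})^{T}+GG^{T}=H^{(2)}(H^{(2)})^{T}$; by Lemma~\ref{Le 16} the matrix $H^{(2)}$ is nonsingular, hence so is $H^{(2)}(H^{(2)})^{T}$, and Proposition~\ref{Po 4} gives that $\mathcal{C}$ is LCD, whence $\mathcal{C}^{\bot}$ is LCD as well. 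Because $H^{(2)}$ is nonsingular, $G'$ has full row rank $n$, so for (iii) the code $\mathcal{C}$ has dimension $n$; by Theorem~\ref{Th 6} the length of $\widetilde{\mathcal{C}}_{D_{f,sq}}$ is $\frac{p^{n-1}-\epsilon_0 p^{\frac{n+s}{2}-1}}{2}$ when $n+s$ is even and $\frac{p^{n-1}+\epsilon_0 p^{\frac{n+s-1}{2}}}{2}$ when $n+s$ is odd, so $\mathcal{C}$ has the claimed length and $\mathcal{C}^{\bot}$ the claimed parameters.

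For (iv), the bound $d(\mathcal{C}^{\bot})\le 3$ is immediate from Proposition~\ref{Po 7}: under the standing hypotheses $\widetilde{\mathcal{C}}_{D_{f,sq}}^{\bot}$ has minimum distance $3$ (the exceptional small cases in Proposition~\ref{Po 7} are ruled out by $n+s\ge 5$ for $p=3$), so some three columns of $G$, which are columns of $G'$ too, are linearly dependent. For the lower bound I would prove that any two columns of $G'$ are linearly independent, in three cases. Two columns of $H^{(2)}$ are independent since $H^{(2)}$ is nonsingular; two columns of $G$ are independent since $d(\widetilde{\mathcal{C}}_{D_{f,sq}}^{\bot})=3$. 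In the mixed case, suppose $k_1(\langle\alpha_i,\alpha_j\rangle_n)_{1\le i\le n}+k_2(\langle\alpha_i,x\rangle_n)_{1\le i\le n}=0$ with $k_1,k_2\in\mathbb{F}_p$ and $x\in\widetilde{D}_{f,sq}$. Then $\langle\alpha_i,k_1\alpha_j+k_2x\rangle_n=0$ for all $i$, and since the $\alpha_i$ form a basis, $k_1\alpha_j+k_2x=0$. If $k_2=0$ this forces $k_1=0$ as $\alpha_j\ne 0$. If $k_2\ne 0$ then $x=-k_1k_2^{-1}\alpha_j$, and, using $t=t'=2$ (so $f(ax)=a^2f(x)$ for $a\in\mathbb{F}_p^*$) together with $f(-x)=f(x)$, we get $f(x)=(k_1k_2^{-1})^{2}f(\alpha_j)$; but $f(x)\in SQ$ while the $\alpha_j$ are chosen so that $f(\alpha_j)\notin SQ$, a contradiction. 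Hence $d(\mathcal{C}^{\bot})\ge 3$, so $d(\mathcal{C}^{\bot})=3$. The case of odd $n+s$ is entirely analogous.

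The part that genuinely uses the specific basis $\alpha_1,\dots,\alpha_n$, and the main point to be checked carefully, is the claim $f(\alpha_j)\notin SQ$ for every $j$. For $j>n_1$ the vector $\alpha_j$ has zero $x$-component and lies in a single one of the $y$-, $z$- or $u$-blocks, so $f(\alpha_j)=f^{(z)}(0)+\sum_i y_iz_i=0$. When $n_1=1$ the extra generator $\alpha_1=e_1+e_2-v_1e_{2+n_2}$ is designed so that the contribution of $f^{(z)}(x)$ and that of $\sum_i y_iz_i$ cancel, giving $f(\alpha_1)=0$. When $n_1\ge 2$ one computes $f(\alpha_j)=u_jw_j^{2}$ for $1\le j\le n_1$, which is a non-square by the condition imposed on $w_j$ relative to $u_j$. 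Once this is established, the remaining steps are the routine bookkeeping indicated above.
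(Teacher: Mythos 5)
Your overall route is exactly the paper's: the paper's proof of Theorem~\ref{Th 12} consists of the single observation that $f(\alpha_i)=0$ for all $i$ when $n_1=1$ and $f(\alpha_i)\notin SQ$ when $n_1\ge 2$, followed by ``similar to the proof of Theorem~\ref{Th 11}.'' Your parts (i)--(iii), the upper bound $d(\mathcal{C}^{\bot})\le 3$ via Proposition~\ref{Po 7} (with the exceptional cases correctly excluded by the hypotheses), the two-column independence argument, and the computations $f(\alpha_1)=v_1-v_1=0$ for $n_1=1$ and $f(\alpha_j)=0$ for $j>n_1$ are all correct and are precisely what the paper intends.

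The genuine problem is the step you yourself flag as ``the main point to be checked carefully'': for $n_1\ge 2$ you claim $f(\alpha_j)=u_jw_j^2$ is a nonsquare ``by the condition imposed on $w_j$ relative to $u_j$,'' i.e.\ by $w_ju_j\in NSQ$. This does not follow: $\eta(u_jw_j^2)=\eta(u_j)\eta(w_j)^2=\eta(u_j)$, so the quadratic character of $f(\alpha_j)$ is that of $u_j$ alone and cannot be altered by any choice of $w_j$ (as must be the case, since $f$ is a $2$-form and rescaling a vector cannot move its $f$-value between $SQ$ and $NSQ$). The condition $w_ju_j\in NSQ$ only forces $\eta(w_j)=-\eta(u_j)$, which is compatible with $u_j\in SQ$; and if some $u_j\in SQ$ then $w_je_j\in D_{f,sq}$, so some scalar multiple of $\alpha_j$ lies in $\widetilde{D}_{f,sq}$ and the corresponding column of $G$ is proportional to the $j$-th column of $H^{(2)}$, giving a weight-$2$ codeword in $\mathcal{C}^{\bot}$ and hence $d(\mathcal{C}^{\bot})=2$, contradicting the stated conclusion. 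To be fair, this defect is inherited from the paper, whose proof asserts $f(\alpha_i)\notin SQ$ without justification under the same hypothesis $w_iu_i\in NSQ$; the argument actually needs $u_j\in NSQ$ for all $1\le j\le n_1$ (which makes the scalars $w_j$ superfluous), or else a different basis of the $x$-block on which $f^{(0)}$ takes only nonsquare values. As written, your justification of the crucial claim is incorrect and the gap is real.
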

	\begin{proof}
	Note that if $n_1=1$, then $f(\alpha_i)=0$ for $1\le i\le n$, and if $n_1\ge 2$, then $f(\alpha_i)\notin SQ$ for $1\le i\le n$. Similar to the proof of Theorem \ref{Th 11}, we easily get the result.\end{proof}
	\begin{remark}
		$(1)$ According to the sphere packing bound,  when $n+s$ is even, if $0\in B_+(f)$ and $p>3$, or $0\in B_-(f)$, then for fixed length and minimum distance, $\mathcal{C}^{\bot}$ is optimal; if $0\in B_+(f)$ and $p=3$, then for fixed length and dimension, $\mathcal{C}^{\bot}$ is at least almost optimal. When $n+s$ is odd, if $0\in B_-(f)$ and $p>3$, or $0\in B_+(f)$, then for fixed length and minimum distance, $\mathcal{C}^{\bot}$ is optimal; if $0\in B_-(f)$ and $p=3$, for fixed length and dimension, $\mathcal{C}^{\bot}$ is at least almost optimal.\\
		$(2)$ We can also consider using $\widetilde{\mathcal{C}}_{D_{f,nsq}}$ to construct LCD codes, in this case, we just need to make $w_iu_i\in SQ$ for $1\le i\le n_1$ when $n_1\ge 2$.
	\end{remark}
	\begin{remark}
		In Table \uppercase\expandafter{\romannumeral17} in the Appendix, we present some ternary LCD codes constructed by Theorems \ref{Th 10}, \ref{Th 11} and \ref{Th 12}, which are optimal according to the Code Table at http://www.codetables.de/.
	\end{remark}
	\section{Conclusion}
	
	In this paper, we constructed several families of linear codes from plateaued functions. The parameters and weight distributions of them were completely determined. Under certain conditions, those codes were proved to be self-orthogonal. Furthermore, using the constructed self-orthogonal codes, we obtained some families of at least almost optimal quantum codes and optimal LCD codes. Compared with the self-orthogonal codes, quantum codes and LCD codes constructed from functions in \cite{Cakmak,Heng1,Heng2,Heng3,Mesnager4,Wang2,Wang3,Heng4,Xie}, ours have different parameters. Besides, we also obtained some minimal codes which can be used to construct secret sharing schemes.
To conclude this paper, some open problems and potential directions for future research are listed.

\noindent$\bullet$	We gave a construction of plateaued functions belonging to $\mathscr{F}$ in Section 2. A natural question is whether other constructions exist.\\
\noindent$\bullet$ In this paper, using $p$-ary $s$-plateaued functions, we constructed some $p$-ary LCD codes and quantum codes from self-orthogonal codes. Thus, it is worth to study whether our construction methods can be extended to vectorial plateaued functions.\\
\noindent $\bullet$  Locally recoverable codes (LRCs) play a significant role in distributed and cloud storage systems and are widely studied. In \cite{Heng2}, Heng \emph{et al.} obtained several infinite families of optimal or almost optimal LRCs from some functions over finite fields.  It is also worthwhile to investigate the locality of the constructed linear codes to determine whether they are optimal LRCs.
	
	\section*{Appendix}
	\vspace{-0.7cm}
	\begin{table}[h]
		\centering
		\caption{The weight distribution of $\mathcal{C}_f$ in Theorem 1 when $n+s$ is even}
		\renewcommand\arraystretch{0.74}	
		\resizebox{\textwidth}{!}{
			\begin{tabular}{|c|c|c|}
				\hline
				\scriptsize{Weight} & \scriptsize{Multiplicity  ($0\in B_+(f^*)$)}&\scriptsize{Multiplicity ($0\in B_-(f^*)$)}\\ \hline
				\scriptsize{0} & \scriptsize{1}&\scriptsize{1} \\ \hline
				\scriptsize{	$(p-1)p^{n-1}$}&\scriptsize{$p^{n+1}-(p-1)p^{n-s}-1$}&\scriptsize{$p^{n+1}-(p-1)p^{n-s}-1$}\\ \hline
				\scriptsize{	$(p-1)(p^{n-1}-p^{\frac{n+s}{2}-1})$}&\scriptsize{$(p-1)(\frac{k}{p}+(p-1)p^{\frac{n-s}{2}-1})$}&\scriptsize{$(p-1)\frac{k}{p}$}\\ \hline
				\scriptsize{	$(p-1)(p^{n-1}+p^{\frac{n+s}{2}-1})$}&\scriptsize{$(p-1)(p^{n-s-1}-\frac{k}{p})$}&\scriptsize{$(p-1)(p^{n-s-1}-\frac{k}{p}-(p-1)p^{\frac{n-s}{2}-1})$}\\\hline
				\scriptsize{$(p-1)p^{n-1}+p^{\frac{n+s}{2}-1}$}&\scriptsize{$(p-1)^2(\frac{k}{p}-p^{\frac{n-s}{2}-1})$}&\scriptsize{$(p-1)^2\ \frac{k}{p}$}\\ \hline
				\scriptsize{	$(p-1)p^{n-1}-p^{\frac{n+s}{2}-1}$}&\scriptsize{$(p-1)^2(p^{n-s-1}-\frac{k}{p})$}&\scriptsize{$(p-1)^2(p^{n-s-1}-\frac{k}{p}
					+p^{\frac{n-s}{2}-1})$}\\\hline
		\end{tabular}}
	\end{table}
	\vspace{-1cm}
	\begin{table}[h]
		\centering
		\caption{The weight distribution of $\mathcal{C}_f$ in Theorem 1 when $n+s$ is odd}
		\renewcommand\arraystretch{1}	
		\begin{tabular}{|c|c|}
			\hline
			\scriptsize{Weight} & \scriptsize{Multiplicity}\\ \hline
			\scriptsize{0}&\scriptsize{1}\\ \hline
			\scriptsize{	$(p-1)p^{n-1}$}&\scriptsize{$p^{n+1}-(p-1)^2p^{n-s-1}-1$}\\ \hline
			\scriptsize{	$(p-1)p^{n-1}-p^{\frac{n+s-1}{2}}$}&\scriptsize{$\frac{(p-1)^2}{2}(p^{\frac{n-s-1}{2}}+p^{n-s-1})$}\\ \hline
			\scriptsize{	$(p-1)p^{n-1}+p^{\frac{n+s-1}{2}}$}&\scriptsize{$\frac{(p-1)^2}{2}(p^{n-s-1}-p^{\frac{n-s-1}{2}})$}\\ \hline
		\end{tabular}
	\end{table}
	\vspace{-1cm}
	\begin{table}[h]
		\centering
		\caption{The weight distribution of $\widetilde{\mathcal{C}}_{f}$ in Theorem 2 when $n+s$ is even}
		\renewcommand\arraystretch{2.5}	
		\resizebox{\textwidth}{!}{
			\begin{tabular}{|c|c|c|}
				\hline
				\Huge{Weight}&	\Huge{Multiplicity($0\in B_+(f)$)}&	\Huge{Multiplicity($0\in B_-(f)$)}\\\hline
				\Huge{	0}&\Huge{1}&\Huge{1}\\\hline
				\Huge{	$(p-1)^2(p^{n-2}-\epsilon_0p^{\frac{n+s}{2}-2})$}&	\Huge{$p^{n+1}-p^{n-s+1}$}&\Huge{$p^{n+1}-p^{n-s-1}$}\\\hline
				\Huge{	$(p-1)^2p^{n-2}$}&	\Huge{$\frac{k}{p}+(p-1)p^{\frac{n-s}{2}-1}-1$}&	\Huge{$p^{n-s-1}-(p-1)p^{\frac{n-s}{2}-1}-\frac{k}{p}-1$}\\\hline
				\Huge{	$(p-1)^2(p^{n-2}-2\epsilon_0p^{\frac{n+s}{2}-2})$}&	\Huge{$p^{n-s-1}-\frac{k}{p}$}&	\Huge{$\frac{k}{p}$}\\\hline
				\Huge{$(p-1)(p^{n-1}-p^{n-2}-\epsilon_0p^{\frac{n+s}{2}-1})$}&	\Huge{$(p-1)(\frac{2k}{p}+(p-2)p^{\frac{n-s}{2}-1}-1)$}&	\Huge{$(p-1)(2p^{n-s-1}-(p-2)p^{\frac{n-s}{2}-1}-\frac{2k}{p}-1)$}\\\hline
				\Huge{	$(p-1)((p-1)p^{n-2}-\epsilon_0(p-2)p^{\frac{n+s}{2}-2})$}&	\Huge{$2(p-1)(p^{n-s-1}-\frac{k}{p})$}&	\Huge{$2(p-1)\frac{k}{p}$}\\\hline
				\Huge{	$(p-1)(p^{n-1}-\epsilon_0p^{\frac{n+s}{2}-1})$}&	\Huge{$p-1$}&	\Huge{$p-1$}\\\hline
				\Huge{	$(p-1)^2p^{n-2}-\epsilon_0(p-2)p^{\frac{n+s}{2}-1}$}&	\Huge{$(p-1)^2(\frac{k}{p}-p^{\frac{n-s}{2}-1})$}&	\Huge{$(p-1)^2(p^{n-s-1}+p^{\frac{n-s}{2}-1}-\frac{k}{p})$}\\\hline
				\Huge{	$(p-1)^2p^{n-2}-\epsilon_0(p^2-2p+2)p^{\frac{n+s}{2}-2}$}&	\Huge{$(p-1)^2(p^{n-s-1}-\frac{k}{p})$}&	\Huge{$(p-1)^2\frac{k}{p}$}\\\hline
		\end{tabular}}
	\end{table}
	\vspace{-1cm}
	\begin{table}[h]
		\centering
		\caption{The weight distribution of $\widetilde{\mathcal{C}}_{f}$ in Theorem 2 when $n+s$ is odd}
		\renewcommand\arraystretch{0.9}	
		{
			\begin{tabular}{|c|c|}
				\hline
				\scriptsize{Weight}&\scriptsize{Multiplicity}\\\hline
				\scriptsize{0}&\scriptsize{1}\\\hline
				\scriptsize{$(p-1)^2p^{n-2}$}&\scriptsize{$p^{n+1}-p^{n-s+1}+p^{n-s}-p$}\\\hline
				\scriptsize{$(p-1)(p^{n-1}-p^{n-2}+p^{\frac{n+s-3}{2}})$}&\scriptsize{$\frac{(p-1)}{2}(p^{n-s-1}+p^{\frac{n-s-1}{2}})$}\\\hline
				\scriptsize{$(p-1)(p^{n-1}-p^{n-2}-p^{\frac{n+s-3}{2}})$}&\scriptsize{$\frac{(p-1)}{2}(p^{n-s-1}-p^{\frac{n-s-1}{2}})$}\\\hline
				\scriptsize{$(p-1)^2p^{n-2}-p^{\frac{n+s-3}{2}}$}&\scriptsize{$\frac{(p-1)^2}{2}(p^{n-s-1}+p^{\frac{n-s-1}{2}})$}\\\hline
				\scriptsize{$(p-1)^2p^{n-2}+p^{\frac{n+s-3}{2}}$}&\scriptsize{$\frac{(p-1)^2}{2}(p^{n-s-1}-p^{\frac{n-s-1}{2}})$}\\\hline
				\scriptsize{$(p-1)p^{n-1}$}&\scriptsize{$p-1$}\\\hline
		\end{tabular}}
	\end{table}
	\vspace{-1cm}
	\begin{table}[h]
		\centering
		\caption{The weight distribution of $\mathcal{C}_{D_{f,0}\setminus\{0\}}$ in Theorem 3 when $n+s$ is even}
		\renewcommand\arraystretch{0.9}	
		\resizebox{\textwidth}{!}{
			\begin{tabular}{|c|c|c|}
				\hline
				{\footnotesize Weight} &{\footnotesize Multiplicity($0\in B_+(f)$)}&{\footnotesize Multiplicity($0\in B_-(f)$)}\\\hline
				{\footnotesize 0}&{\footnotesize 1}&{\footnotesize 1}\\
				\hline
				{\footnotesize $(p-1)(p^{n-2}+\epsilon_0(p-1)p^{\frac{n+s}{2}-2})$}&{\footnotesize $p^n-p^{n-s}$}&{\footnotesize $p^n-p^{n-s}$}\\
				\hline
				{\footnotesize $(p-1)p^{n-2}$}&{\footnotesize $\frac{k}{p}+(p-1)p^{\frac{n-s}{2}-1}-1$}&{\footnotesize $p^{n-s-1}-(p-1)p^{\frac{n-s}{2}-1}-\frac{k}{p}-1$}\\
				\hline
				{\footnotesize $(p-1)(p^{n-2}+2\epsilon_0(p-1)p^{\frac{n+s}{2}-2})$}&{\footnotesize $p^{n-s-1}-\frac{k}{p}$}&{\footnotesize $\frac{k}{p}$}\\
				\hline
				{\footnotesize $(p-1)(p^{n-2}+\epsilon_0p^{\frac{n+s}{2}-1})$}&{\footnotesize $(p-1)(\frac{k}{p}-p^{\frac{n-s}{2}-1})$}&{\footnotesize $(p-1)(p^{n-s-1}+p^{\frac{n-s}{2}-1}-\frac{k}{p})$}\\
				\hline
				{\footnotesize $(p-1)(p^{n-2}+\epsilon_0(p-2)p^{\frac{n+s}{2}-2})$}&{\footnotesize $ (p-1)(p^{n-s-1}-\frac{k}{p})$}&{\footnotesize $(p-1)\frac{k}{p}$}\\
				\hline
		\end{tabular}}
		
	\end{table}
	\vspace{-1cm}
	\begin{table}[h]
		\centering
		\caption{The weight distribution of $\mathcal{C}_{D_{f,0}\setminus\{0\}}$ in Theorem 3 when $n+s$ is odd}
		\renewcommand\arraystretch{1}	
		{
			\begin{tabular}{|c|c|}
				\hline
				{\scriptsize Weight}&{\scriptsize Multiplicity}\\ \hline
				{\scriptsize 0}&{\scriptsize 1}\\ \hline
				{\scriptsize $(p-1)p^{n-2}$}&{\scriptsize $p^n-(p-1)p^{n-s-1}-1$}\\ \hline
				{\scriptsize $(p-1)(p^{n-2}-p^{\frac{n+s-3}{2}})$}&{\scriptsize $\frac{(p-1)}{2}(p^{n-s-1}+p^{\frac{n-s-1}{2}})$}\\ \hline
				{\scriptsize $(p-1)(p^{n-2}+p^{\frac{n+s-3}{2}})$}&{\scriptsize $\frac{(p-1)}{2}(p^{n-s-1}-p^{\frac{n-s-1}{2}})$}\\ \hline
		\end{tabular}}
	\end{table}
	\vspace{-1cm}
	\begin{table}[h]
		\caption{The weight distributions of $\mathcal{C}_{D_{f,sq}}$ and $\mathcal{C}_{D_{f,nsq}}$  in Theorem 4 when $n+s$ is even}
		\renewcommand\arraystretch{1.8}	
		\resizebox{\textwidth}{!}{
			\begin{tabular}{|c|c|c|}
				\hline
				{\Large Weight} &{\Large Multiplicity($0\in B_+(f)$)}& {\Large Multiplicity($0\in B_-(f)$)}\\\hline
				{\Large 0}&{\Large 1}&{\Large 1}\\
				\hline
				{\Large $\frac{(p-1)^2}{2}(p^{n-2}-\epsilon_0p^{\frac{n+s}{2}-2})$}&{\Large $p^n-p^{n-s}$}&{\Large $p^n-p^{n-s}$}\\
				\hline
				{\Large $\frac{(p-1)^2}{2}p^{n-2}$}&{\Large $\frac{(p+1)}{2}\frac{k}{p}+\frac{(p-1)}{2}p^{\frac{n-s}{2}-1}-1$}&{\Large $\frac{(p+1)}{2}(p^{n-s-1}-\frac{k}{p})-\frac{(p-1)}{2}p^{\frac{n-s}{2}-1}-1$}\\
				\hline
				{\Large $\frac{(p-1)^2}{2}(p^{n-2}-2\epsilon_0p^{\frac{n+s}{2}-2} )$}&{\Large $\frac{(p+1)}{2}(p^{n-s-1}-\frac{k}{p})$}&{\Large $\frac{(p+1)}{2}\frac{k}{p}$}\\
				\hline
				{\Large $\frac{(p-1)}{2}(p^{n-1}-p^{n-2}-2\epsilon_0p^{\frac{n+s}{2}-1})$}&{\Large $\frac{(p-1)}{2}(\frac{k}{p}-p^{\frac{n-s}{2}-1})$}&{\Large $\frac{(p-1)}{2}(p^{n-s-1}-\frac{k}{p}+p^{\frac{n-s}{2}-1})$}\\
				\hline
				{\Large 	$\frac{(p-1)}{2}(p^{n-1}-p^{n-2}+2\epsilon_0p^{\frac{n+s}{2}-2})$}&{\Large $ \frac{(p-1)}{2}(p^{n-s-1}-\frac{k}{p})$}&{\Large $\frac{(p-1)}{2}\frac{k}{p}$}\\
				\hline
		\end{tabular}}
	\end{table}
	\vspace{-1cm}
	\begin{table}[h]
		\centering
		\caption{The weight distribution of $\mathcal{C}_{D_{f,sq}}$ in Theorem 4 when $n+s$ is odd and $0\in B_+(f)$ and the weight distribution of $\mathcal{C}_{D_{f,nsq}}$ in Theorem 4 when $n+s$ is odd and $0\in B_-(f)$}
		\renewcommand\arraystretch{2}	
		\resizebox{\textwidth}{!}{
			\begin{tabular}{|c|c|c|}
				\hline
				{\Large Weight} & { \Large Multiplicity of $\mathcal{C}_{D_{f,sq}}$($0\in B_+(f))$} &{\Large Multiplicity of $\mathcal{C}_{D_{f,nsq}}$($0\in B_-(f))$} \\ \hline
				{\Large 0}&{\Large 1}&{\Large 1}\\ \hline
				{ \Large$\frac{(p-1)^2}{2}(p^{n-2}+p^{\frac{n+s-3}{2}})$}&{\Large $p^n-p^{n-s}+\frac{(p-1)}{2}(p^{n-s-1}-p^{\frac{n-s-1}{2}})$}&{\Large $p^n-p^{n-s}+\frac{(p-1)}{2}(p^{n-s-1}-p^{\frac{n-s-1}{2}})$}\\ 
				\hline
				{\Large $\frac{(p-1)^2}{2}p^{n-2}$}&{\Large $\frac{k}{p}-1$}&{\Large $p^{n-s-1}-\frac{k}{p}-1$}\\ \hline
				{\Large $\frac{(p-1)^2}{2}(p^{n-2}+2p^{\frac{n+s-3}{2}})$}&{\Large $p^{n-s-1}-\frac{k}{p}$}&{\Large $\frac{k}{p}$}\\ \hline
				{\Large $\frac{(p-1)^2}{2}p^{n-2}+\frac{(p^2-1)}{2}p^{\frac{n+s-3}{2}}$}&{\Large $\frac{(p-1)}{2}(\frac{k}{p}+p^{\frac{n-s-1}{2}})$}&{\Large $\frac{(p-1)}{2}(p^{n-s-1}-\frac{k}{p}+p^{\frac{n-s-1}{2}})$}\\\hline
				{\Large $\frac{(p-1)^2}{2}p^{n-2}+\frac{(p-1)(p-3)}{2}p^{\frac{n+s-3}{2}}$}&{\Large $\frac{(p-1)}{2}(p^{n-s-1}-\frac{k}{p})$}&{\Large $\frac{(p-1)}{2}\frac{k}{p}$}\\\hline
		\end{tabular}}
	\end{table}
	\vspace{-1cm}
	\begin{table}[h]
		\caption{The weight distribution of $\mathcal{C}_{D_{f,sq}}$ in Theorem 4 when $n+s$ is odd and $0\in B_-(f)$ and the weight distribution of $\mathcal{C}_{D_{f,nsq}}$ in Theorem 4 when $n+s$ is odd and $0\in B_+(f)$}
		\renewcommand\arraystretch{2}	
		\resizebox{\textwidth}{!}{
			\begin{tabular}{|c|c|c|}
				\hline
				{\Large 	Weight} &{\Large Multiplicity of  $\mathcal{C}_{D_{f,sq}}$($0\in B_-(f)$) }&{\Large Multiplicity of $\mathcal{C}_{D_{f,nsq}}$($0\in B_+(f)$)}\\\hline
				{\Large 	0}&{\Large 1}&{\Large 1}\\ \hline
				{\Large $\frac{(p-1)^2}{2}(p^{n-2}-p^{\frac{n+s-3}{2}})$}&{\Large $p^n-p^{n-s}+\frac{(p-1)}{2}(p^{n-s-1}+p^{\frac{n-s-1}{2}})$}&{\Large $p^n-p^{n-s}+\frac{(p-1)}{2}(p^{n-s-1}+p^{\frac{n-s-1}{2}})$}\\
				\hline
				{\Large $\frac{(p-1)^2}{2}(p^{n-2}-2p^{\frac{n+s-3}{2}})$}&{\Large $\frac{k}{p}$}&{\Large $p^{n-s-1}-\frac{k}{p}$}\\ \hline
				{\Large $\frac{(p-1)^2}{2}p^{n-2}$}&{\Large $p^{n-s-1}-\frac{k}{p}-1$}&{\Large $\frac{k}{p}-1$}\\ \hline
				{\Large 	$\frac{(p-1)^2}{2}p^{n-2}-\frac{(p-1)(p-3)}{2}p^{\frac{n+s-3}{2}}$}&{\Large $\frac{(p-1)k}{2p}$}&{\Large $\frac{(p-1)}{2}(p^{n-s-1}-\frac{k}{p})$}\\\hline
				{\Large 	$\frac{(p-1)^2}{2}p^{n-2}-\frac{(p^2-1)}{2}p^{\frac{n+s-3}{2}}$}&{\Large $\frac{(p-1)}{2}(p^{n-s-1}-p^{\frac{n-s-1}{2}}-\frac{k}{p})$}&{\Large $\frac{(p-1)}{2}(\frac{k}{p}-p^{\frac{n-s-1}{2}})$}\\
				\hline
		\end{tabular}}
	\end{table}
	\vspace{-1cm}
	\begin{table}[h]
		\centering
		\caption{The weight distribution of {\scriptsize $\widetilde{\mathcal{C}}_{D_{f,0}\setminus\{0\}}$} in Theorem 5 when $n+s$ is even}
		\renewcommand\arraystretch{0.9}	
		\resizebox{\textwidth}{!}{
			\begin{tabular}{|c|c|c|}
				\hline
				{\scriptsize Weight} &{\scriptsize Multiplicity($0\in B_+(f)$)}&{\scriptsize Multiplicity($0\in B_-(f)$)}\\\hline
				{\scriptsize0}&{\scriptsize1}&{\scriptsize1}\\
				\hline
				{\scriptsize$p^{n-2}+\epsilon_0(p-1)p^{\frac{n+s}{2}-2}$}&{\scriptsize$p^n-p^{n-s}$}&{\scriptsize$p^n-p^{n-s}$}\\
				\hline
				{\scriptsize$p^{n-2}$}&{\scriptsize$\frac{k}{p}+(p-1)p^{\frac{n-s}{2}-1}-1$}&{\scriptsize$p^{n-s-1}-(p-1)p^{\frac{n-s}{2}-1}-\frac{k}{p}-1$}\\
				\hline
				{\scriptsize$p^{n-2}+2\epsilon_0(p-1)p^{\frac{n+s}{2}-2}$}&{\scriptsize$p^{n-s-1}-\frac{k}{p}$}&{\scriptsize$\frac{k}{p}$}\\
				\hline
				{\scriptsize$p^{n-2}+\epsilon_0p^{\frac{n+s}{2}-1}$}&{\scriptsize$(p-1)(\frac{k}{p}-p^{\frac{n-s}{2}-1})$}&{\scriptsize$(p-1)(p^{n-s-1}+p^{\frac{n-s}{2}-1}-\frac{k}{p})$}\\
				\hline
				{\scriptsize$p^{n-2}+\epsilon_0(p-2)p^{\frac{n+s}{2}-2}$}&{\scriptsize$ (p-1)(p^{n-s-1}-\frac{k}{p})$}&{\scriptsize$(p-1)\frac{k}{p}$}\\
				\hline
		\end{tabular}}
	\end{table}
	\vspace{-2cm}
	\begin{table}[h]
		\centering
		\caption{The weight distribution of {\scriptsize $\widetilde{\mathcal{C}}_{D_{f,0}\setminus\{0\}}$} in Theorem 5 when $n+s$ is odd}
		\renewcommand\arraystretch{0.9}	
		{
			\begin{tabular}{|c|c|}
				\hline
				{\scriptsize Weight}&{\scriptsize Multiplicity}\\ \hline
				{\scriptsize 0}&{\scriptsize 1}\\ \hline
				{\scriptsize $p^{n-2}$}&{\scriptsize $p^n-(p-1)p^{n-s-1}-1$}\\ \hline
				{\scriptsize $p^{n-2}-p^{\frac{n+s-3}{2}}$}&{\scriptsize $\frac{(p-1)}{2}(p^{n-s-1}+p^{\frac{n-s-1}{2}})$}\\ \hline
				{\scriptsize $p^{n-2}+p^{\frac{n+s-3}{2}}$}&{\scriptsize $\frac{(p-1)}{2}(p^{n-s-1}-p^{\frac{n-s-1}{2}})$}\\ \hline
		\end{tabular}}
	\end{table}
	\vspace{-1cm}
	\begin{table}[h]
		\caption{The weight distributions of {\scriptsize $\widetilde{\mathcal{C}}_{D_{f,sq}}$} and {\scriptsize $\widetilde{\mathcal{C}}_{D_{f,nsq}}$}  in Theorem 6 when $n+s$ is even}
		\renewcommand\arraystretch{1.1}	
		\resizebox{\textwidth}{!}{
			\begin{tabular}{|c|c|c|}
				\hline
				{\large Weight} &{\large Multiplicity($0\in B_+(f)$)}& {\large Multiplicity($0\in B_-(f)$)}\\\hline
				{\large 0}&{\large 1}&{\large 1}\\
				\hline
				{\large $\frac{(p-1)}{2}(p^{n-2}-\epsilon_0p^{\frac{n+s}{2}-2})$}&{\large $p^n-p^{n-s}$}&{\large $p^n-p^{n-s}$}\\
				\hline
				{\large $\frac{(p-1)}{2}p^{n-2}$}&{\large $\frac{(p+1)}{2}\frac{k}{p}+\frac{(p-1)}{2}p^{\frac{n-s}{2}-1}-1$}&{\large $\frac{(p+1)}{2}(p^{n-s-1}-\frac{k}{p})-\frac{(p-1)}{2}p^{\frac{n-s}{2}-1}-1$}\\
				\hline
				{\large $\frac{(p-1)}{2}(p^{n-2}-2\epsilon_0p^{\frac{n+s}{2}-2} )$}&{\large $\frac{(p+1)}{2}(p^{n-s-1}-\frac{k}{p})$}&{\large $\frac{(p+1)}{2}\frac{k}{p}$}\\
				\hline
				{\large $\frac{p^{n-1}-p^{n-2}-2\epsilon_0p^{\frac{n+s}{2}-1}}{2}$}&{\large $\frac{(p-1)}{2}(\frac{k}{p}-p^{\frac{n-s}{2}-1})$}&{\large $\frac{(p-1)}{2}(p^{n-s-1}-\frac{k}{p}+p^{\frac{n-s}{2}-1})$}\\
				\hline
				{\large	$\frac{p^{n-1}-p^{n-2}+2\epsilon_0p^{\frac{n+s}{2}-2}}{2}$}&{\large $ \frac{(p-1)}{2}(p^{n-s-1}-\frac{k}{p})$}&{\large $\frac{(p-1)}{2}\frac{k}{p}$}\\
				\hline
		\end{tabular}}
	\end{table}
	
	\begin{table}[h]
		\centering
		\caption{The weight distribution of {\scriptsize $\widetilde{\mathcal{C}}_{D_{f,sq}}$} in Theorem 6 when $n+s$ is odd and $0\in B_+(f)$ and the weight distribution of {\scriptsize $\widetilde{\mathcal{C}}_{D_{f,nsq}}$} in Theorem 6 when $n+s$ is odd and $0\in B_-(f)$}
		\renewcommand\arraystretch{1.2}	
		\resizebox{\textwidth}{!}{
			\begin{tabular}{|c|c|c|}
				\hline
				{\large Weight} & { \large Multiplicity of $\widetilde{\mathcal{C}}_{D_{f,sq}}$($0\in B_+(f))$} &{\large Multiplicity of $\widetilde{\mathcal{C}}_{D_{f,nsq}}$($0\in B_-(f))$} \\ \hline
				{\large 0}&{\large 1}&{\large 1}\\ \hline
				{ \large$\frac{(p-1)}{2}(p^{n-2}+p^{\frac{n+s-3}{2}})$}&{\large $p^n-p^{n-s}+\frac{(p-1)}{2}(p^{n-s-1}-p^{\frac{n-s-1}{2}})$}&{\large $p^n-p^{n-s}+\frac{(p-1)}{2}(p^{n-s-1}-p^{\frac{n-s-1}{2}})$}\\ 
				\hline
				{\large $\frac{(p-1)}{2}p^{n-2}$}&{\large $\frac{k}{p}-1$}&{\large $p^{n-s-1}-\frac{k}{p}-1$}\\ \hline
				{\large $\frac{(p-1)}{2}(p^{n-2}+2p^{\frac{n+s-3}{2}})$}&{\large $p^{n-s-1}-\frac{k}{p}$}&{\large $\frac{k}{p}$}\\ \hline
				{\large $\frac{(p-1)}{2}p^{n-2}+\frac{p+1}{2}p^{\frac{n+s-3}{2}}$}&{\large $\frac{(p-1)}{2}(\frac{k}{p}+p^{\frac{n-s-1}{2}})$}&{\large $\frac{(p-1)}{2}(p^{n-s-1}-\frac{k}{p}+p^{\frac{n-s-1}{2}})$}\\\hline
				{\large $\frac{(p-1)}{2}p^{n-2}+\frac{p-3}{2}p^{\frac{n+s-3}{2}}$}&{\large $\frac{(p-1)}{2}(p^{n-s-1}-\frac{k}{p})$}&{\large $\frac{(p-1)}{2}\frac{k}{p}$}\\\hline
		\end{tabular}}
	\end{table}
	\vspace{-15pt}
	\begin{table}[h]
		\caption{The weight distribution of {\scriptsize $\widetilde{\mathcal{C}}_{D_{f,sq}}$ }in Theorem 6 when $n+s$ is odd and $0\in B_-(f)$ and the weight distribution of {\scriptsize $\widetilde{\mathcal{C}}_{D_{f,nsq}}$} in Theorem 6 when $n+s$ is odd and $0\in B_+(f)$}
		\renewcommand\arraystretch{1.1}	
		\resizebox{\textwidth}{!}{
			\begin{tabular}{|c|c|c|}
				\hline
				{\large 	Weight} &{\large Multiplicity of  $\widetilde{\mathcal{C}}_{D_{f,sq}}$($0\in B_-(f)$) }&{\large Multiplicity of $\widetilde{\mathcal{C}}_{D_{f,nsq}}$($0\in B_+(f)$)}\\\hline
				{\large 	0}&{\large 1}&{\large 1}\\ \hline
				{\large $\frac{(p-1)}{2}(p^{n-2}-p^{\frac{n+s-3}{2}})$}&{\large $p^n-p^{n-s}+\frac{(p-1)}{2}(p^{n-s-1}+p^{\frac{n-s-1}{2}})$}&{\large $p^n-p^{n-s}+\frac{(p-1)}{2}(p^{n-s-1}+p^{\frac{n-s-1}{2}})$}\\
				\hline
				{\large $\frac{(p-1)}{2}(p^{n-2}-2p^{\frac{n+s-3}{2}})$}&{\large $\frac{k}{p}$}&{\large $p^{n-s-1}-\frac{k}{p}$}\\ \hline
				{\large $\frac{(p-1)}{2}p^{n-2}$}&{\large $p^{n-s-1}-\frac{k}{p}-1$}&{\large $\frac{k}{p}-1$}\\ \hline
				{\large 	$\frac{(p-1)}{2}p^{n-2}-\frac{p-3}{2}p^{\frac{n+s-3}{2}}$}&{\large $\frac{(p-1)k}{2p}$}&{\large $\frac{(p-1)}{2}(p^{n-s-1}-\frac{k}{p})$}\\\hline
				{\large 	$\frac{(p-1)}{2}p^{n-2}-\frac{p+1}{2}p^{\frac{n+s-3}{2}}$}&{\large $\frac{(p-1)}{2}(p^{n-s-1}-p^{\frac{n-s-1}{2}}-\frac{k}{p})$}&{\large $\frac{(p-1)}{2}(\frac{k}{p}-p^{\frac{n-s-1}{2}})$}\\
				\hline
		\end{tabular}}
	\end{table}
	\begin{table}[h]
		\centering
		\caption{ Optimal codes constructed by Theorems \ref{Th 2}, \ref{Th 5} and \ref{Th 6}}
		\renewcommand\arraystretch{0.8}	
		\resizebox{\textwidth}{!}{\begin{tabular}{|c|c|c|c|}
				\hline
				{ \scriptsize Conditions}&{\scriptsize Code}&{\scriptsize Parameters}&{\scriptsize Reference}\\
				\hline
				{\scriptsize $p=3,n=3,s=1,\epsilon_0=1,k=9$}&{\scriptsize $\widetilde{\mathcal{C}}_f$}&{\scriptsize $[12,4,6]$}&{\scriptsize Theorem \ref{Th 2}}\\
				\hline
				{\scriptsize $p=3, n=3,s=1,\epsilon_0=1$}&{\scriptsize ${\widetilde{\mathcal{C}}_f}^{\bot}$}&{\scriptsize $[12,8,3]$}&{\scriptsize Theorem \ref{Th 2}}\\
				\hline
				{\scriptsize $p=3, n=3,s=1,\epsilon_0=-1$}&{\scriptsize ${\widetilde{\mathcal{C}}_f}^{\bot}$}&{\scriptsize $[24,20,3]$}&{\scriptsize Theorem \ref{Th 2}}\\
				\hline
				{\scriptsize $p=3,n=4,s=0,\epsilon_0=1,k=81$}&{\scriptsize $\widetilde{\mathcal{C}}_f$}&{\scriptsize $[48,5,30]$}&{\scriptsize Theorem \ref{Th 2}}\\
				\hline
				{\scriptsize $p=3, n=4,s=0,\epsilon_0=1$}&{\scriptsize ${\widetilde{\mathcal{C}}_f}^{\bot}$}&{\scriptsize $[48,43,3]$}&{\scriptsize Theorem \ref{Th 2}}\\
				\hline
				{\scriptsize $p=3, n=4,s=0,\epsilon_0=-1$}&{\scriptsize ${\widetilde{\mathcal{C}}_f}^{\bot}$}&{\scriptsize $[60,55,3]$}&{\scriptsize Theorem \ref{Th 2}}\\
				\hline
				{\scriptsize $p=3, n=4,s=2,\epsilon_0=1$}&{\scriptsize ${\widetilde{\mathcal{C}}_f}^{\bot}$}&{\scriptsize $[36,31,3]$}&{\scriptsize Theorem \ref{Th 2}}\\
				\hline
				{\scriptsize $p=3,n=4,s=2,\epsilon_0=-1$}&{\scriptsize ${\widetilde{\mathcal{C}}_f}^{\bot}$}&{\scriptsize $[72,67,3]$}&{\scriptsize Theorem \ref{Th 2}}\\
				\hline
				{\scriptsize $p=3,n=5,s=1,\epsilon_0=1$}&{\scriptsize ${\widetilde{\mathcal{C}}_f}^{\bot}$}&{\scriptsize $[144,138,3]$}&{\scriptsize Theorem \ref{Th 2}}\\
				\hline
				{\scriptsize $p=3,n=5,s=1,\epsilon_0=-1$}&{\scriptsize ${\widetilde{\mathcal{C}}_f}^{\bot}$}&{\scriptsize $[180,174,3]$}&{\scriptsize Theorem \ref{Th 2}}\\
				\hline
				{\scriptsize $p=3,n=5,s=3,\epsilon_0=1$}&{\scriptsize ${\widetilde{\mathcal{C}}_f}^{\bot}$}&{\scriptsize $[108,102,3]$}&{\scriptsize Theorem \ref{Th 2}}\\
				\hline
				{\scriptsize $p=3,n=5,s=3,\epsilon_0=-1$}&{\scriptsize ${\widetilde{\mathcal{C}}_f}^{\bot}$}&{\scriptsize $[216,210,3]$}&{\scriptsize Theorem \ref{Th 2}}\\
				\hline
				{\scriptsize $p=5,n=3,s=1,\epsilon_0=1$}&{\scriptsize ${\widetilde{\mathcal{C}}_f}^{\bot}$}&{\scriptsize $[80,76,3]$}&{\scriptsize Theorem \ref{Th 2}}\\
				\hline
				{\scriptsize $p=5,n=3,s=1,\epsilon_0=-1$}&{\scriptsize ${\widetilde{\mathcal{C}}_f}^{\bot}$}&{\scriptsize $[120,116,3]$}&{\scriptsize Theorem \ref{Th 2}}\\
				\hline
				{\scriptsize $p=3,n=3,s=0 \ or\ s=2$}&{\scriptsize ${\widetilde{\mathcal{C}}_f}^{\bot}$}&{\scriptsize $[18,14,3]$}&{\scriptsize Theorem \ref{Th 2}}\\
				\hline
				{\scriptsize $p=3,n=4,s=1\ or\ s=3 $}&{\scriptsize ${\widetilde{\mathcal{C}}_f}^{\bot}$}&{\scriptsize $[54,49,3]$}&{\scriptsize Theorem \ref{Th 2}}\\
				\hline
				{\scriptsize $p=3,n=5,s=0\ or\ s=2\ or\ s=4$}&{\scriptsize ${\widetilde{\mathcal{C}}_f}^{\bot}$}&{\scriptsize $[162,156,3]$}&{\scriptsize Theorem \ref{Th 2}}\\
				\hline
				{\scriptsize $p=5,n=2,s=1$}&{\scriptsize ${\widetilde{\mathcal{C}}_f}^{\bot}$}&{\scriptsize $[20,17,3]$}&{\scriptsize Theorem \ref{Th 2}}\\
				\hline
		\end{tabular}}
	\end{table}
	\begin{table}[h]
		\centering
		\caption*{TABLE \uppercase\expandafter{\romannumeral15} continued}
		\resizebox{\textwidth}{0.7\linewidth}{\begin{tabular}{|c|c|c|c|}
				\hline
				{\scriptsize Conditions}&{\scriptsize Code}&{\scriptsize Parameters}&{\scriptsize Reference}\\
				\hline
					{\scriptsize $p=5,n=3,s=0\ or \ s=2 $}&{\scriptsize ${\widetilde{\mathcal{C}}_f}^{\bot}$}&{\scriptsize $[100,96,3]$}&{\scriptsize Theorem \ref{Th 2}}
				\\
				\hline
				{\scriptsize $p=7,n=2,s=1$}&{\scriptsize ${\widetilde{\mathcal{C}}_f}^{\bot}$}&{\scriptsize $[42,39,3]$}&{\scriptsize Theorem \ref{Th 2}}\\
				\hline
				
				{\scriptsize $p=3,n=4,s=0,\epsilon_0=1$}&{\scriptsize ${\widetilde{\mathcal{C}}_{D_{f,0}\setminus\{0\}}}^{\bot}$}&{\scriptsize $[16,12,3]$}&{\scriptsize Theorem \ref{Th 5}}\\		
				\hline
				{\scriptsize $p=3,n=4,s=0,\epsilon_0=-1,k=0$}&{\scriptsize ${\widetilde{\mathcal{C}}_{D_{f,0}\setminus\{0\}}}$}&{\scriptsize $[10,4,6]$}&{\scriptsize Theorem \ref{Th 5}}\\		
				\hline
				{\scriptsize $p=3,n=4,s=0,\epsilon_0=-1,k=0$}&{\scriptsize ${\widetilde{\mathcal{C}}_{D_{f,0}\setminus\{0\}}}^{\bot}$}&{\scriptsize $[10,6,4]$}&{\scriptsize Theorem \ref{Th 5}}\\		
				\hline
				{\scriptsize $p=3,n=5,s=1,\epsilon_0=-1$}&{\scriptsize ${\widetilde{\mathcal{C}}_{D_{f,0}\setminus\{0\}}}^{\bot}$}&{\scriptsize $[31,26,3]$}&{\scriptsize Theorem \ref{Th 5}}\\		
				\hline
				{\scriptsize $p=3,n=5,s=1,\epsilon_0=-1,k=0$}&{\scriptsize ${\widetilde{\mathcal{C}}_{D_{f,0}\setminus\{0\}}}$}&{\scriptsize $[31,5,18]$}&{\scriptsize Theorem \ref{Th 5}}\\		
				\hline
				{\scriptsize $p=3,n=5,s=1,\epsilon_0=1$}&{\scriptsize ${\widetilde{\mathcal{C}}_{D_{f,0}\setminus\{0\}}}^{\bot}$}&{\scriptsize $[49,44,3]$}&{\scriptsize Theorem \ref{Th 5}}\\
				\hline
				{\scriptsize $p=3,n=6,s=0,\epsilon_0=-1$}&{\tiny ${\widetilde{\mathcal{C}}_{D_{f,0}\setminus\{0\}}}^{\bot}$}&{\scriptsize $[112,106,3]$}&{\scriptsize Theorem \ref{Th 5}}\\		
				\hline
				{\scriptsize $p=3,n=6,s=0,\epsilon_0=-1,k=0$}&{\tiny ${\widetilde{\mathcal{C}}_{D_{f,0}\setminus\{0\}}}$}&{\scriptsize $[112,6,72]$}&{\scriptsize Theorem \ref{Th 5}}\\		
				\hline
				{\scriptsize $p=3,n=6,s=0,\epsilon_0=1$}&{\tiny ${\widetilde{\mathcal{C}}_{D_{f,0}\setminus\{0\}}}^{\bot}$}&{\scriptsize $[130,124,3]$}&{\scriptsize Theorem \ref{Th 5}}\\		
				\hline
				{\scriptsize 	$p=3,n=6,s=2,\epsilon_0=-1$}&{\tiny ${\widetilde{\mathcal{C}}_{D_{f,0}\setminus\{0\}}}^{\bot}$}&{\scriptsize $[94,88,3]$}&{\scriptsize Theorem \ref{Th 5}}\\
				\hline
				{\scriptsize $p=3,n=6,s=2,\epsilon_0=1$}&{\tiny ${\widetilde{\mathcal{C}}_{D_{f,0}\setminus\{0\}}}^{\bot}$}&{\scriptsize $[148,142,3]$}&{\scriptsize Theorem \ref{Th 5}}\\
				\hline
				{\scriptsize $p=5,n=4,s=0,\epsilon_0=1$}&{\tiny ${\widetilde{\mathcal{C}}_{D_{f,0}\setminus\{0\}}}^{\bot}$}&{\scriptsize $[36,32,3]$}&{\scriptsize Theorem \ref{Th 5}}\\		
				\hline
				{\scriptsize $p=5,n=4,s=0,\epsilon_0=-1,k=0$}&{\tiny ${\widetilde{\mathcal{C}}_{D_{f,0}\setminus\{0\}}}$}&{\scriptsize $[26,4,20]$}&{\scriptsize Theorem \ref{Th 5}}\\		
				\hline
				{\scriptsize $p=5,n=4,s=0,\epsilon_0=-1,k=0$}&{\tiny ${\widetilde{\mathcal{C}}_{D_{f,0}\setminus\{0\}}}^{\bot}$}&{\scriptsize $[26,22,4]$}&{\scriptsize Theorem \ref{Th 5}}\\		
				\hline
				
				{\scriptsize $p=3,n=3,s=0$}&{\tiny $\widetilde{\mathcal{C}}_{D_{f,0}\setminus\{0\}}$}&{\scriptsize $[4,3,2]$}&{\scriptsize Theorem \ref{Th 5}}\\
				\hline
				{\scriptsize $p=3,n=3,s=0$}&{\tiny ${\widetilde{\mathcal{C}}_{D_{f,0}\setminus\{0\}}}^{\bot}$}&{\scriptsize $[4,1,4]$}&{\scriptsize Theorem \ref{Th 5}}\\
				\hline
				{\scriptsize $p=3,n=4,s=1$}&{\tiny ${\widetilde{\mathcal{C}}_{D_{f,0}\setminus\{0\}}}^{\bot}$}&{\scriptsize $[13,9,3]$}&{\scriptsize Theorem \ref{Th 5}}\\
				\hline
				{\scriptsize 	$p=3,n=5,s=0\ or\ s=2$}&{\tiny ${\widetilde{\mathcal{C}}_{D_{f,0}\setminus\{0\}}}^{\bot}$}&{\scriptsize $[40,35,3]$}&{\scriptsize Theorem \ref{Th 5}}\\
				\hline
				{\scriptsize $p=3,n=6,s=1,\ or\ s=3$}&{\tiny ${\widetilde{\mathcal{C}}_{D_{f,0}\setminus\{0\}}}^{\bot}$}&{\scriptsize $[121,115,3]$}&{\scriptsize Theorem \ref{Th 5}}\\
				\hline
				{\scriptsize $p=5,n=3,s=0$}&{\tiny $\widetilde{\mathcal{C}}_{D_{f,0}\setminus\{0\}}$}&{\scriptsize $[6,3,4]$}&{\scriptsize Theorem \ref{Th 5}}\\
				\hline
				{\scriptsize $p=5,n=4,s=1$}&{\tiny ${\widetilde{\mathcal{C}}_{D_{f,0}\setminus\{0\}}}^{\bot}$}&{\scriptsize $[31,27,3]$}&{\scriptsize Theorem \ref{Th 5}}\\
				\hline
				{\scriptsize $p=7,n=3,s=0$}&{\tiny $\widetilde{\mathcal{C}}_{D_{f,0}\setminus\{0\}}$}&{\scriptsize $[8,3,6]$}&{\scriptsize Theorem \ref{Th 5}}\\
				\hline
				{\scriptsize $p=7,n=3,s=0$}&{\tiny ${\widetilde{\mathcal{C}}_{D_{f,0}\setminus\{0\}}}^{\bot}$}&{\scriptsize $[8,5,4]$}&{\scriptsize Theorem \ref{Th 5}}\\
				\hline
				
		\end{tabular}}
	\end{table}
	\begin{table}[h]
		\centering
		\caption*{TABLE \uppercase\expandafter{\romannumeral15} continued}
		\renewcommand\arraystretch{0.8}	
		\resizebox{\textwidth}{!}{\begin{tabular}{|c|c|c|c|}
				\hline
				{\scriptsize Conditions}&{\scriptsize Code}&{\scriptsize Parameters}&{\scriptsize Reference}\\
				\hline
				{\scriptsize 	$p=3,n=4,s=0,\epsilon_0=-1,k=0$}&{\tiny $\widetilde{\mathcal{C}}_{D_{f,sq}}$}&{\scriptsize $[15,4,9]$}&{\scriptsize Theorem \ref{Th 6}}\\
				\hline
				{\scriptsize $p=3,n=4,s=0,\epsilon_0=-1$}&{\tiny ${\widetilde{\mathcal{C}}_{D_{f,sq}}}^{\bot}$}&{\scriptsize $[15,11,3]$}&{\scriptsize Theorem \ref{Th 6}}\\
				\hline
				{\scriptsize 	$p=3,n=5,s=1,\epsilon_0=-1$}&{\tiny ${\widetilde{\mathcal{C}}_{D_{f,sq}}}^{\bot}$}&{\scriptsize $[45,40,3]$}&{\scriptsize Theorem \ref{Th 6}}\\
				\hline
				{\scriptsize $p=3,n=6,s=0,\epsilon_0=1$}&{\tiny ${\widetilde{\mathcal{C}}_{D_{f,sq}}}^{\bot}$}&{\scriptsize $[117,111,3]$}&{\scriptsize Theorem \ref{Th 6}}\\
				\hline
				{\scriptsize $p=3,n=6,s=0,\epsilon_0=-1$}&{\tiny ${\widetilde{\mathcal{C}}_{D_{f,sq}}}^{\bot}$}&{\scriptsize $[126,120,3]$}&{\scriptsize Theorem \ref{Th 6}}\\
				\hline
				{\scriptsize $p=3,n=6,s=0,\epsilon_0=-1,k=0$}&{\tiny $\widetilde{\mathcal{C}}_{D_{f,sq}}$}&{\scriptsize $[126,6,81]$}&{\scriptsize Theorem \ref{Th 6}}\\
				\hline
				{\scriptsize 	$p=3,n=6,s=2,\epsilon_0=-1$}&{\tiny ${\widetilde{\mathcal{C}}_{D_{f,sq}}}^{\bot}$}&{\scriptsize $[135,129,3]$}&{\scriptsize Theorem \ref{Th 6}}\\
				\hline
				{\scriptsize 	$p=5,n=4,s=0,\epsilon_0=1$}&{\tiny ${\widetilde{\mathcal{C}}_{D_{f,sq}}}^{\bot}$}&{\scriptsize $[60,56,3]$}&{\scriptsize Theorem \ref{Th 6}}\\
				\hline
				{\scriptsize 	$p=5,n=4,s=0,\epsilon_0=-1$}&{\tiny ${\widetilde{\mathcal{C}}_{D_{f,sq}}}^{\bot}$}&{\scriptsize $[65,61,3]$}&{\scriptsize Theorem \ref{Th 6}}\\
				\hline
				
				{\scriptsize $p=3,n=5,s=2,\epsilon_0=-1$}&{\tiny ${\widetilde{\mathcal{C}}_{D_{f,sq}}}^{\bot}$}&{\scriptsize $[27,22,3]$}&{\scriptsize Theorem \ref{Th 6}}\\
				\hline
				{\scriptsize $p=3,n=6,s=3,\epsilon_0=-1$}&{\tiny ${\widetilde{\mathcal{C}}_{D_{f,sq}}}^{\bot}$}&{\scriptsize $[81,75,3]$}&{\scriptsize Theorem \ref{Th 6}}\\
				\hline
				{\scriptsize $p=3,n=7,s=4,\epsilon_0=-1$}&{\tiny ${\widetilde{\mathcal{C}}_{D_{f,sq}}}^{\bot}$}&{\scriptsize $[243,236,3]$}&{\scriptsize Theorem \ref{Th 6}}\\
				\hline
				{\scriptsize $p=5,n=3,s=0,\epsilon_0=1$}&{\tiny ${\widetilde{\mathcal{C}}_{D_{f,sq}}}^{\bot}$}&{\scriptsize $[15,12,3]$}&{\scriptsize Theorem \ref{Th 6}}\\
				\hline
				{\scriptsize $p=5,n=3,s=0,\epsilon_0=-1,k=0$}&{\tiny $\widetilde{\mathcal{C}}_{D_{f,sq}}$}&{\scriptsize $[10,3,7]$}&{\scriptsize Theorem \ref{Th 6}}\\
				\hline
				{\scriptsize $p=5,n=3,s=0,\epsilon_0=1$}&{\tiny ${\widetilde{\mathcal{C}}_{D_{f,sq}}}^{\bot}$}&{\scriptsize $[10,7,3]$}&{\scriptsize Theorem \ref{Th 6}}\\
				\hline
				{\scriptsize $p=5,n=4,s=1,\epsilon_0=1$}&{\tiny ${\widetilde{\mathcal{C}}_{D_{f,sq}}}^{\bot}$}&{\scriptsize $[75,71,3]$}&{\scriptsize Theorem \ref{Th 6}}\\
				\hline
				{\scriptsize $p=5,n=4,s=1,\epsilon_0=-1$}&{\tiny ${\widetilde{\mathcal{C}}_{D_{f,sq}}}^{\bot}$}&{\scriptsize $[50,46,3]$}&{\scriptsize Theorem \ref{Th 6}}\\
				\hline	
				{\scriptsize $p=7,n=3,s=0,\epsilon_0=1$}&{\tiny ${\widetilde{\mathcal{C}}_{D_{f,sq}}}^{\bot}$}&{\scriptsize $[28,25,3]$}&{\scriptsize Theorem \ref{Th 6}}\\
				\hline
				{\scriptsize $p=7,n=3,s=0,\epsilon_0=-1,k=0$}&{\tiny $\widetilde{\mathcal{C}}_{D_{f,sq}}$}&{\scriptsize $[21,3,17]$}&{\scriptsize Theorem \ref{Th 6}}\\
				\hline
				{\scriptsize $p=7,n=3,s=0,\epsilon_0=1$}&{\tiny ${\widetilde{\mathcal{C}}_{D_{f,sq}}}^{\bot}$}&{\scriptsize $[21,18,3]$}&{\scriptsize Theorem \ref{Th 6}}\\
				\hline
				\end{tabular}}
			\end{table}
	
	\vspace{-5cm}
	\begin{table}[h]
		\centering
		\caption{Comparing our $7$-ary pure quantum codes given in Theorems \ref{Th 7} and \ref{Th 9} with that in \cite{Edel}}
		\renewcommand\arraystretch{0.85}	
		\resizebox{\textwidth}{!}{
			\begin{tabular}{|c|c|c|c|c|c|}
				\hline
				{\scriptsize Conditions}&{\scriptsize Our quantum codes}&{\scriptsize Rate}&{\scriptsize Quantum codes in \cite{Edel}}&{\scriptsize Rate}&{\scriptsize Reference}\\
				\hline
				{\scriptsize $n=3,s=1,\epsilon_0=1$}&{\scriptsize $[[252,247,3]]_7$}&{\scriptsize $0.980$}&{\scriptsize $[[255,237,3]]_7$}&{\scriptsize $0.929$}&{\scriptsize Theorem \ref{Th 7}}\\
				\hline
				{\scriptsize 	$n=3,s=1,\epsilon_0=-1$}&{\scriptsize $[[336,331,3]]_7$}&{\scriptsize $0.985$}&{\scriptsize $[[340,324,3]]_7$}&{\scriptsize $0.953$}&{\scriptsize Theorem \ref{Th 7}}\\
				\hline
				{\scriptsize $n=3,s=0$}&{\scriptsize $[[294,289,3]]_7$}&{\scriptsize $0.983$}&{\scriptsize $[[300,294,3]]_7$}&{\scriptsize $0.98$}&{\scriptsize Theorem \ref{Th 7}}\\
				\hline
				{\scriptsize 	$n=2,s=1$}&{\scriptsize $[[42,38,3]]_7$}&{\scriptsize $0.905$}&{\scriptsize $[[43,37,3]]_7$}&{\scriptsize $0.860$}&{\scriptsize Theorem \ref{Th 7}}\\
				\hline
				{\scriptsize $n=4,s=0,\epsilon_0=-1$}&{\scriptsize $[[175,169,3]]_7$}&{\scriptsize $0.966$}&{\scriptsize $[[180,170,3]]_7$}&{\scriptsize $0.944$}&{\scriptsize Theorem \ref{Th 9}}\\
				\hline
				{\scriptsize 	$n=3,s=0,\epsilon_0=1$}&{\scriptsize $[[28,23,3]]_7$}&{\scriptsize $0.821$}&{\scriptsize $[[30,24,3]]_7$}&{\scriptsize $0.8$}&{\scriptsize Theorem \ref{Th 9}}\\
				\hline
				{\scriptsize $n=3,s=0,\epsilon_0=-1$}&{\scriptsize $[[21,16,3]]_7$}&{\scriptsize $0.762$}&{\scriptsize $[[20,15,3]]_7$}&{\scriptsize $0.75$}&{\scriptsize Theorem \ref{Th 9}}\\
				\hline 
				{\scriptsize $n=4,s=1,\epsilon_0=1$}&{\scriptsize $[[196,190,3]]_7$}&{\scriptsize $0.969$}&{\scriptsize $[[200,192,3]]_7$}&{\scriptsize $0.96$}&{\scriptsize Theorem \ref{Th 9}}\\
				\hline
				{\scriptsize $n=4,s=1,\epsilon_0=-1$}&{\scriptsize $[[147,141,3]]_7$}&{\scriptsize $0.959$}&{\scriptsize $[[144,137,3]]_7$}&{\scriptsize $0.951$}&{\scriptsize Theorem \ref{Th 9}}\\
				\hline
		\end{tabular}}
	\end{table}
	\vspace{-5cm}
	\begin{table}[h]
		\centering
		\caption{Optimal ternary LCD codes constructed by Theorems \ref{Th 10}, \ref{Th 11} and \ref{Th 12}}
			\renewcommand\arraystretch{0.95}	
		\begin{tabular}{|c|c|c|}
			\hline
			{\scriptsize Conditions}&{\scriptsize Parameters}&{\scriptsize Reference}\\
			\hline
			{\scriptsize $ n=4,s=2,\epsilon_0=1$}&{\scriptsize $[41,36,3]$}&{\scriptsize Theorem \ref{Th 10}}\\
			\hline
			{\scriptsize $n=4,s=2,\epsilon_0=-1$}&{\scriptsize $[77,72,3]$}&{\scriptsize Theorem \ref{Th 10}}\\
			\hline
			{\scriptsize $n=5,s=1,\epsilon_0=1$}&{\scriptsize $[150,144,3]$}&{\scriptsize Theorem \ref{Th 10}}\\
			\hline
			{\scriptsize $n=5,s=1,\epsilon_0=-1$}&{\scriptsize $[186,180,3]$}&{\scriptsize Theorem \ref{Th 10}}\\
			\hline
			{\scriptsize $n=5,s=3,\epsilon_0=1$}&{\scriptsize $[114,108,3]$}&{\scriptsize Theorem \ref{Th 10}}\\
			\hline
			{\scriptsize $n=5,s=3,\epsilon_0=-1$}&{\scriptsize $[222,216,3]$}&{\scriptsize Theorem \ref{Th 10}}\\
			\hline
			{\scriptsize $n=3,s=2$}&{\scriptsize $[22,18,3]$}&{\scriptsize Theorem \ref{Th 10}}\\
			\hline
			{\scriptsize $n=4,s=1\ or\ s=3$}&{\scriptsize $[59,54,3]$}&{\scriptsize Theorem \ref{Th 10}}\\
			\hline
			{\scriptsize $n=5,s=0\ or\ s=2\ or\ s=4$}&{\scriptsize $[168,162,3]$}&{\scriptsize Theorem \ref{Th 10}}\\
			\hline
			{\scriptsize $n=5,s=1,\epsilon_0=-1$}&{\scriptsize $[36,31,3]$}&{\scriptsize Theorem \ref{Th 11}}\\		
			\hline
			{\scriptsize $n=5,s=1,\epsilon_0=1$}&{\scriptsize $[54,49,3]$}&{\scriptsize Theorem \ref{Th 11}}\\
			\hline
			{\scriptsize $n=6,s=0,\epsilon_0=-1$}&{\scriptsize $[118,112,3]$}&{\scriptsize Theorem \ref{Th 11}}\\
			\hline
			{\scriptsize $n=6,s=0,\epsilon_0=1$}&{\scriptsize $[136,130,3]$}&{\scriptsize Theorem \ref{Th 11}}\\
			\hline
			{\scriptsize $n=6,s=2,\epsilon_0=-1$}&{\scriptsize $[100,94,3]$}&{\scriptsize Theorem \ref{Th 11}}\\
			\hline
			{\scriptsize $n=6,s=2,\epsilon_0=1$}&{\scriptsize $[154,148,3]$}&{\scriptsize Theorem \ref{Th 11}}\\
			\hline
			{\scriptsize $n=4,s=1$}&{\scriptsize $[17,13,3]$}&{\scriptsize Theorem \ref{Th 11}}\\
			\hline
			{\scriptsize 	$n=5,s=0\ or\ s=2$}&{\scriptsize $[45,40,3]$}&{\scriptsize Theorem \ref{Th 11}}\\
			\hline
			{\scriptsize $n=6,s=1,\ or\ s=3$}&{\scriptsize $[127,121,3]$}&{\scriptsize Theorem \ref{Th 11}}\\
			\hline
			{\scriptsize 	$n=5,s=1,\epsilon_0=-1$}&{\scriptsize $[50,45,3]$}&{\scriptsize Theorem \ref{Th 12}}\\
			\hline
			{\scriptsize $n=6,s=0,\epsilon_0=1$}&{\scriptsize $[123,117,3]$}&{\scriptsize Theorem \ref{Th 12}}\\
			\hline
			{\scriptsize $n=6,s=0,\epsilon_0=-1$}&{\scriptsize $[132,126,3]$}&{\scriptsize Theorem \ref{Th 12}}\\
			\hline
			{\scriptsize 	$n=6,s=2,\epsilon_0=-1$}&{\scriptsize $[141,135,3]$}&{\scriptsize Theorem \ref{Th 12}}\\
			\hline
			{\scriptsize $n=5,s=2,\epsilon_0=-1$}&{\scriptsize $[32,27,3]$}&{\scriptsize Theorem \ref{Th 12}}\\
			\hline
			{\scriptsize $n=6,s=3,\epsilon_0=-1$}&{\scriptsize $[87,81,3]$}&{\scriptsize Theorem \ref{Th 12}}\\
			\hline
		\end{tabular}
	\end{table}

	\end{document}